\documentclass[11pt]{article}
\usepackage{graphicx} 
\usepackage{fullpage}
\usepackage{amsthm}
\usepackage{amsmath}
\usepackage{amssymb}
\usepackage{bm}
\usepackage{bbm}
\usepackage{mathtools}
\usepackage{hyperref}
\usepackage{physics}
\usepackage{subcaption}
\usepackage{multirow}
\usepackage{float} 
\usepackage{placeins} 

\usepackage{tikz}
\usetikzlibrary{positioning,arrows.meta,calc}
\usepackage{autonum} 

\usepackage{float} 
\usepackage[linesnumbered, ruled]{algorithm2e}
\SetKwInput{KwModel}{Model input}
\SetKwInput{KwTune}{Tuning input}

\usepackage{natbib}

\newtheorem{theorem}{Theorem}
\newtheorem{proposition}{Proposition}

\theoremstyle{plain}
\newtheorem{condition}{Condition}

\theoremstyle{plain}

\theoremstyle{plain}

\theoremstyle{definition}
\newtheorem{definition}{Definition}

\DeclareMathOperator*{\argmax}{arg\,max}
\DeclareMathOperator*{\argmin}{arg\,min}
\renewcommand{\dd}{\,\mathrm{d}}

\title{Covariate-dependent Nonparametric $g$-modeling for regression \\ via infinite Mixture-of-Expertizing class}
\author{%
  Akira Okazaki and Keisuke Yano\\
  The Institute of Statistical Mathematics, \\ 
  10-3 Midori-cho, Tachikawa, Tokyo, 190-8562, Japan.\\[2.5em]
  Contributing authors: \textcolor{blue}{okazaki@ism.ac.jp}; \textcolor{blue}{yano@ism.ac.jp};
}
\date{}

\begin{document}

\maketitle

\begin{abstract}
    Empirical Bayes $g$-modeling captures unit-level heterogeneity by estimating a latent prior distribution from observed data. In the existing formulations, however, the prior is shared by all units. In this paper, we develop a covariate-dependent $g$-modeling framework for regression in which the entire prior distribution of the regression coefficients is allowed to depend on covariates. We formulate the estimation of the prior
    as nonparametric maximum likelihood estimation (NPMLE) of the covariate-dependent prior, and show that the unrestricted problem is ill-posed. 
    To resolve this, we introduce the infinite Mixture-of-Expertizing class of conditional priors, under which the NPMLE is precisely a softmax-gated Mixture of Experts (MoE) whose number of experts is not fixed in advance but is determined by the data. 
    Building on a first-order optimality condition, we propose two exemplar-based estimation algorithms that select experts automatically, together with a post-hoc aggregation of experts for interpretation. On the theoretical side, we show that every conditional prior in the class is Lipschitz continuous in the covariates, and that aggregated softmax gates can approximate any continuous gate function.  The effectiveness of the proposed NPMLE is shown through application to synthetic datasets and real datasets. 
\end{abstract}

\section{Introduction}
Assume that we observe $n$ independent pairs $\{(x_i,y_i):i=1,\ldots,n\}$,
where $y_i$ is a scalar response and $x_i$ is a $p$-dimensional covariate vector of explanatory variables. 
We consider the unit-level generalized linear regression model (unit-level GLM) formulated as
\begin{equation}
    \label{model: unit_level_GLM_ob}
    \begin{split}
        y_i \sim \mathrm{EF}(\mu_i,\phi), \
        \mu_{i}= g^{-1}(\eta_i),\   \eta_i = x_i^{\top}\beta_i,
        \ \beta_i\overset{\mathrm{i.i.d}}{\sim} G^{\ast}(\beta), \quad  \quad i=1,\ldots,n,
    \end{split}
\end{equation}
where $\beta_i\in\mathbb R^{p}$ is a latent regression coefficient vector for unit $i$ drawn from a prior probability measure $G^{\ast}$, $\eta_i$ is a linear predictor,
$g^{-1}(\cdot)$ is a known inverse link function, and $\mathrm{EF}(\mu_i,\phi)$ denotes an exponential family distribution with conditional mean $\mu_i=\mathbb{E}[Y|x_i,\beta_i]$
and dispersion parameter $\phi$.

In the unit-level GLM \eqref{model: unit_level_GLM_ob}, each unit has its own latent regression coefficients, which allows the linear relationship between the response and the explanatory variables to vary across units.
This setting, in which unit-level parameters are drawn from an unknown prior, is the classical setting of empirical Bayes (EB) and compound decision theory \citep{Robbins1950-fh,Brown2009-th}; see \cite{Ignatiadis2025-nz} for a recent overview.
The paradigm of estimating the prior $G^{\ast}$ from the observations is known as $g$-modeling \cite[e.g.,][]{Efron2014-yr,Efron2016-ep}.
 The estimated prior $\hat{G}$ enables unit-level posterior inference, and provides an interpretation of the heterogeneity across units through $\hat{G}$.

A classical nonparametric approach to $g$-modeling is known as the nonparametric maximum likelihood estimator (NPMLE) \citep[e.g.,][]{Kiefer1956-dj,Laird1978-rf,Lindsay1995-mi} that is obtained by maximizing the marginal likelihood over prior measures.
For the unit-level GLM \eqref{model: unit_level_GLM_ob}, the NPMLE is defined as
\begin{equation}
    \label{def:NPMLE_regression}
    \hat{G} \in \argmax_{G\in\mathcal{G}} \sum_{i=1}^{n}\log f_G(y_i|x_i),
\end{equation}
where $f_{G}(y|x)$ is the conditional marginal likelihood integrated over  $G$ and $\mathcal{G}$ is a set of candidate prior measures.
For the unit-level Gaussian linear model, \cite{Jiang2025-am} recently provided a theoretical justification of NPMLE together with a practical estimation procedure called the exemplar method.

There is, however, 
a structural limitation in the framework \eqref{model: unit_level_GLM_ob}: that is, the same prior $G^{\ast}$ is shared by all units, so that the model assumes
the distribution of the heterogeneity to be identical regardless of the observed attributes of the units. In many applications, however, 
the heterogeneity is structured by covariates.
In housing market analysis, a metropolitan market is considered to consist of \emph{submarkets}, within each of which the relationship between the price
and the housing characteristics is distinct (\cite{Goodman2003-ed}),
and the submarket to which a house belongs is closely tied to its location and characteristics.
In epidemiology and the life sciences, the association between a risk factor and a health outcome, or the effect of a treatment, often differs across subpopulations \citep{Kent2018-hte}, and part of this heterogeneity is systematically explained by demographic and socioeconomic attributes such as age, sex, and income, while the remainder is idiosyncratic to each individual.
Capturing such covariate-structured heterogeneity requires the prior to vary with the covariates.

In this paper, we develop a covariate-dependent $g$-modeling framework in which the prior of the latent regression coefficients depends on covariates, as in the model
\begin{equation}
    \label{model: unit_level_GLM_proposed}
    \begin{split}
        y_i \sim \mathrm{EF}(\mu_i,\phi), \
        \mu_{i}= g^{-1}(\eta_i),\   \eta_i = x_i^{\top}\beta_i,
        \ \beta_i\overset{\mathrm{ind}}{\sim} G^{\ast}(\beta|z_i), \quad  \quad i=1,\ldots,n,
    \end{split}
\end{equation}
where $z_i$ is an observed $d$-dimensional covariate vector for unit $i$.
The covariate $z_i$ used in the prior may not be identical to the covariate $x_i$ in the linear regression term. 
For instance, geographic coordinates such as latitude and longitude serve as $z_i$ when spatial variation of the coefficients is of interest. 
This extension decomposes the heterogeneity across units into a component explained by the covariates captured by the movement of the conditional prior along $z$
and an irreducible component among units with similar covariates captured by the conditional prior at each $z$. 

\begin{figure}[htbp]
\centering
\begin{tikzpicture}[
  >=Stealth, font=\small,
  inp/.style={draw=gray!70, fill=gray!10, rounded corners=3pt, minimum width=1.0cm, minimum height=0.7cm, align=center},
  gate/.style={draw=violet!60, fill=violet!8, rounded corners=3pt, minimum width=4.6cm, minimum height=1.5cm, align=center},
  expert/.style={draw=teal!60, fill=teal!8, rounded corners=3pt, minimum width=4.6cm, minimum height=0.75cm},
  sumnode/.style={draw, circle, minimum size=0.8cm, inner sep=0pt},
]
  \node[gate]   (gate) at (0,2.7)  {Softmax gate\\[2pt] $g_j(z)=\dfrac{\alpha_j e^{\gamma_j^{\top}z}}{\sum_{k=1}^{\hat J}\alpha_k e^{\gamma_k^{\top}z}}$};
  \node[expert] (e1)   at (0,0.9)  {Expert $1$: $\psi\bigl(y;\theta(x^{\top}\beta_1),\phi\bigr)$};
  \node[expert] (e2)   at (0,0)    {Expert $2$: $\psi\bigl(y;\theta(x^{\top}\beta_2),\phi\bigr)$};
  \node                at (0,-0.75){$\vdots$};
  \node[expert] (eJ)   at (0,-1.5) {Expert $\hat J$: $\psi\bigl(y;\theta(x^{\top}\beta_{\hat J}),\phi\bigr)$};

  \node[inp] (z) at (-3.9,2.7)  {$z$};
  \node[inp] (x) at (-3.9,-0.3) {$x$};
  \draw[->] (z) -- (gate);
  \foreach \e in {e1,e2,eJ} \draw[->] (x.east) -- (\e.west);

  \node[sumnode] (sum) at (3.9,-0.3) {$+$};
  \draw[->] (gate.east) -| node[pos=0.72, right, align=left] {$g_1(z),\ldots,g_{\hat J}(z)$} (sum.north);
  \foreach \e in {e1,e2,eJ} \draw[->] (\e.east) -- (sum);
  \node[inp, right=0.6cm of sum] (out) {$f_{\hat G}(y\mid x,z)$};
  \draw[->] (sum) -- (out);

  \node[draw=gray!70, dashed, rounded corners=3pt, text width=12.4cm, align=center, inner sep=5pt]
    at (1.2,-2.65)
    {The number of experts $\hat J$ is determined by the data.};
\end{tikzpicture}
\caption{Schematic of the iMoE-NPMLE. The covariate $z$ enters the softmax gate and the explanatory variable $x$ enters the GLM experts. The marginal conditional density \eqref{def:proposed_NPMLE_discrete} is the gate-weighted sum of the expert densities. Unlike the conventional MoE, the number of experts is not fixed in advance but is given by the number of support points of the estimated prior probability measure. Notations follow Section \ref{sec:npmle}.}
\label{fig:schematic_moe}
\end{figure}

\begin{figure}[h]
    \centering
    \begin{subfigure}[b]{0.32\textwidth}\centering\includegraphics[width=\linewidth]{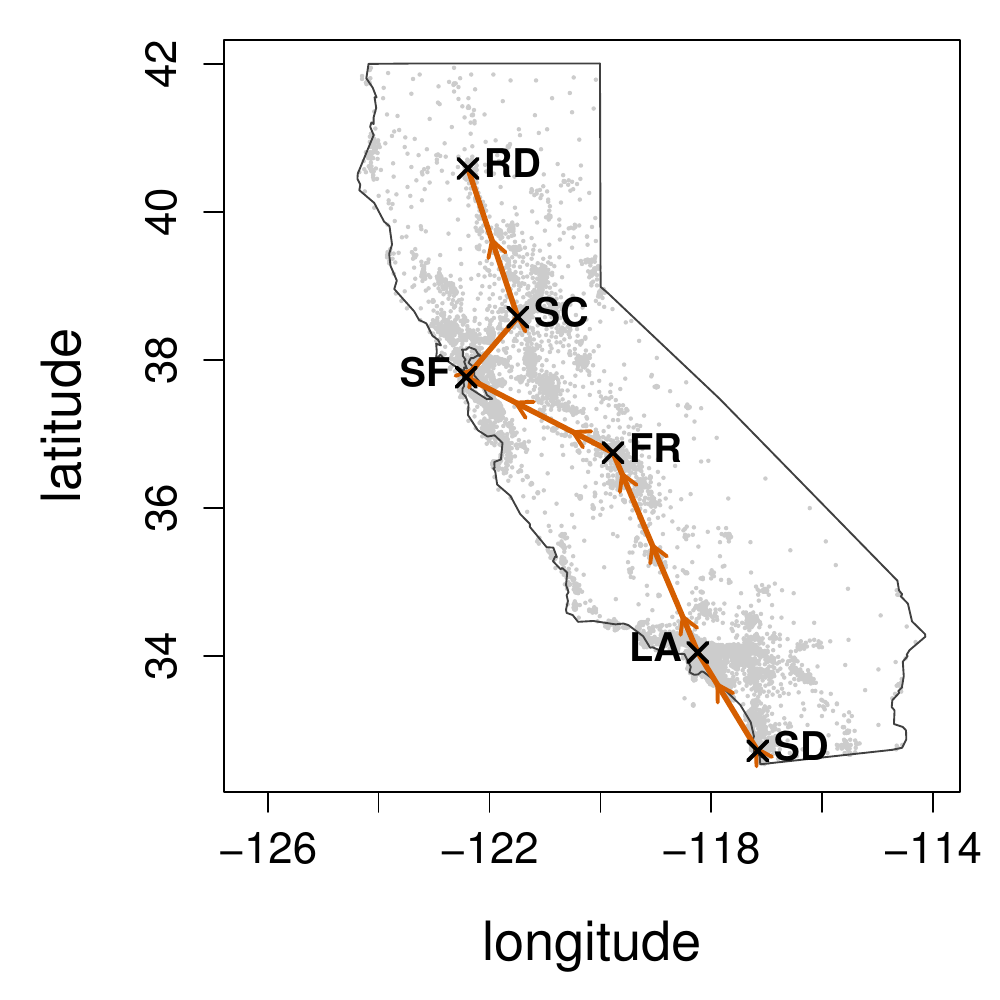}\caption{the path $z(t)$}
    \label{fig:intro_path}
    \end{subfigure}\hfill
    \begin{subfigure}[b]{0.32\textwidth}\centering\includegraphics[width=\linewidth]{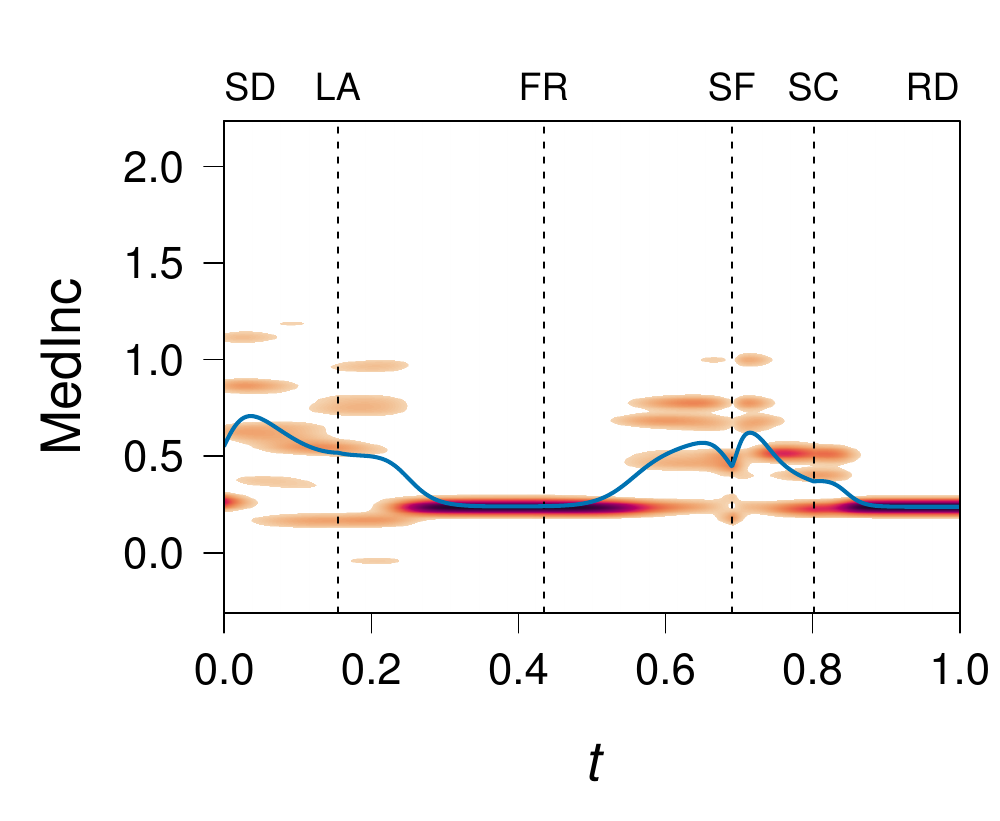}\caption{MedInc}
    \label{fig:intro_medinc}\end{subfigure}\hfill
    \begin{subfigure}[b]{0.32\textwidth}\centering\includegraphics[width=\linewidth]{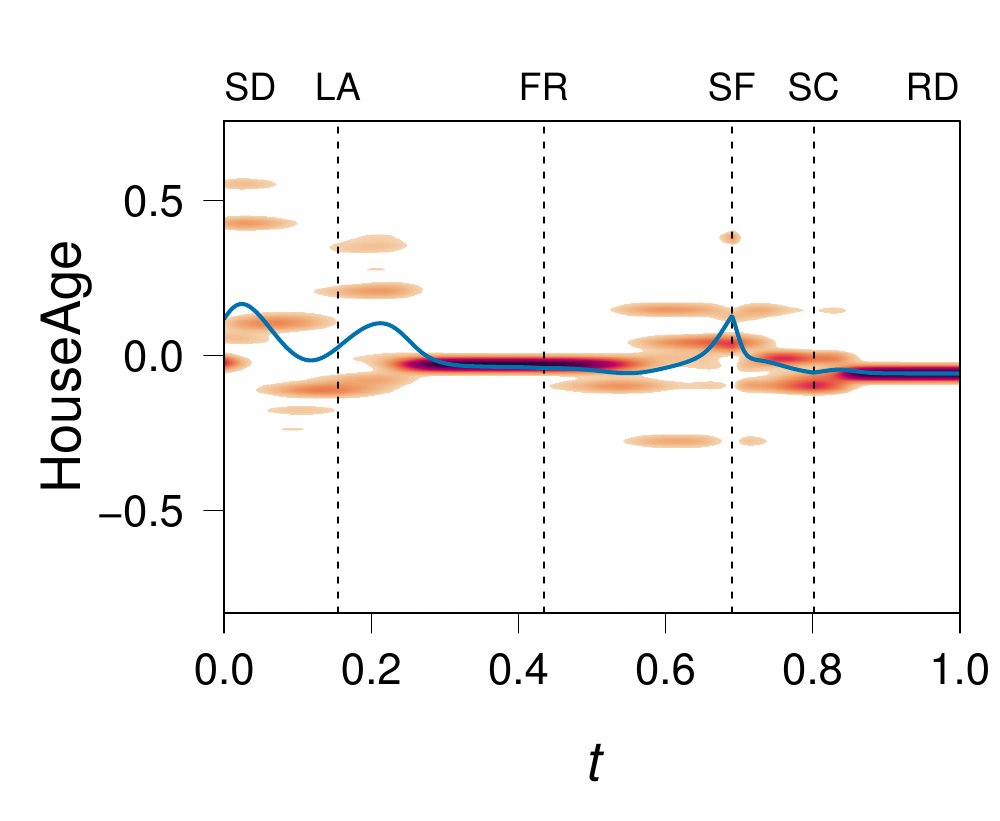}\caption{HouseAge}\label{fig:intro_houseage}\end{subfigure}
    \caption{Visualization of the estimated covariate-dependent prior $\hat{G}(\beta|z)$ from the California housing analysis in Section \ref{sec:california_housing}, where the covariate $z_i$ is the geographic coordinates for each unit. (a): a path $z(t):[0,1]\mapsto \mathbb{R}^2$ through the six major cities with arrows in the direction of increasing $t$, where SD, LA, FR, SF, SC, and RD stand for San Diego, Los Angeles, Fresno, San Francisco, Sacramento, and Redding. (b), (c): the marginal prior $\hat{G}(\beta_c|z(t))$ of the coefficient of the median income (MedInc) and of the median house age (HouseAge) along the path. The horizontal axis is $t$, the vertical axis is the value of the coefficient, and the shading represents the probability mass of the regression coefficient. The blue line is the conditional mean of the coefficient and the dashed vertical lines represent the cities.}
    \label{fig:intro_estimated_prior}
\end{figure}

Estimating $G^{\ast}(\beta|z)$ nonparametrically raises a difficulty. The direct extension of the NPMLE \eqref{def:NPMLE_regression} that maximizes over all conditional priors is ill-posed, since degenerate solutions interpolating each observation attain the maximum (see Section 2). Therefore, we restrict conditional probability measures to a reasonable subclass.
In this paper, we employ the \emph{infinite Mixture-of-Expertizing class} (iMoE class) that consists of an infinite mixture of softmax functions with a normalizing constant. 
The resulting NPMLE (iMoE-NPMLE; Figure \ref{fig:schematic_moe}) is the discrete conditional measure, and the conditional marginal density is precisely a Mixture-of-Experts (MoE) model \citep{Jacobs1991-zl} with a number of experts not specified, in which the covariate-dependent weight functions are called gates and the component densities are experts. Building on the exemplar methods \citep[e.g.,][]{Bohning2000-gq, Lashkari2007-qc,Li2019-mj,Soloff2024-ep,Jiang2025-am}, we propose two estimation algorithms based on a specific parameterization of a \emph{softmax gate}. The parameterization is naturally inherited from our construction of NPMLE, and helps perform support selection automatically during the optimization procedure. To further enhance interpretability, we propose an aggregation procedure that reduces the estimated prior to a few support points equipped with aggregated softmax gates.

Figure \ref{fig:intro_estimated_prior} displays the estimated conditional prior $\hat{G}(\beta|z)$ obtained by applying the proposed method to the California housing data, where the response is the median house value of a census block with six explanatory variables, and the geographic coordinates $z_i$ (with details in Section \ref{sec:california_housing}). The panels show the varying prior along a path $z(t)$ going through six major cities in the state (Figure \ref{fig:intro_path}).
The estimated prior clearly shifts its mass as the covariate values vary (Figure \ref{fig:intro_medinc} and \ref{fig:intro_houseage}). The linear relationship between the response and the explanatory variables differs across covariate-defined subgroups, and this structure cannot be represented by a covariate-independent prior.

On the theoretical side,
we provide theoretical support for the proposed framework. The iMoE class arises naturally as the posterior distribution of $\beta$ when $z$ is random (Proposition \ref{prop:covariate_representation}). Furthermore, the aggregated prior remains in the iMoE class, and its gates can approximate any weight function that varies continuously in $z$ (Theorem \ref{thm:uat_gate}). This shows that the iMoE class can approximate any discrete covariate-dependent prior with continuously varying weights.

\subsection{Literature review and contributions}
Our work relates to three strands of literature.

\emph{NPMLE and $g$-modeling.} The NPMLE goes back to \cite{Kiefer1956-dj} and \cite{Laird1978-rf}, with the geometry of the mixture likelihood developed by \cite{Lindsay1995-mi}. Modern computational and theoretical advances include \cite{Koenker2014-og}, \cite{Saha2020-ms}, and \cite{Soloff2024-ep}. The relative merits of $g$-modeling over $f$-modeling are discussed in \cite{Shen2022-db}. For regression with unit-level coefficients, nonparametric estimation in random-coefficient models has been studied by \cite{Dunker2025-uq}, \cite{Welbaum2025-tz}, and \cite{Jiang2025-am}. In particular, \cite{Jiang2025-am} established the NPMLE theory based on \cite{Lindsay1995-mi} and the exemplar method for the Gaussian unit-level linear model. In all of these works, the prior probability measure is common to all units, whereas we let the entire prior probability measure depend on covariates.

\emph{Covariates and side information in empirical Bayes.} Several recent works incorporate covariates into EB analysis. \cite{Ignatiadis2019-pz} considered the setting where the mean of the prior Gaussian distribution is a function of covariates. \cite{Scott2015-ge} proposed false discovery rate regression in which the null probability of a two-group model is modulated by covariate information. Side-information-based $f$-modeling (\cite{Cohen2013-pi}; \cite{Zhao2022-ev}; \cite{Luo2023-co}) is also studied. In these approaches, the covariate parameterizes a latent location parameter or drives a probability weight for a two-component mixture. In contrast, our framework allows the covariate to drive the full nonparametric prior probability measure.

\emph{Mixture of Experts.} The MoE model was introduced by \cite{Jacobs1991-zl}; refer to \cite{Yuksel2012-jr} an \cite{Masoudnia2014-it} as surveys. Recently, the MoE structure has been a focus of the machine learning community, especially for efficiently learning large language models \citep{shazeer2017outrageously,fedus2022switch,dai2024deepseekmoe}. From the perspective of the statistical model, a large body of work establishes approximation and estimation theory at the level of the conditional mean or conditional density (\cite{Jiang1999-cn}; \cite{Jiang1999-tc}; \cite{Norets2014-gk}; \cite{Nguyen2017-sq}; \cite{Nguyen2021-ev}; \cite{Nguyen2023-hy}), and recent studies address the selection of the number of experts and post-hoc merging for softmax-gated MoE (\cite{Hai2025-jk}; \cite{Tran2026-lb}).  Our results differ in the estimation target: our target is the conditional prior measure $G^{\ast}(\beta|z)$ itself. \cite{Oh2026-gz} also considered NPMLE for MoE, but they focus on nonparametric estimation of the noise distribution rather than the prior probability measure.

\subsection{Organization}

The rest of the paper is organized as follows. 
Section \ref{sec:npmle} formulates the covariate-dependent prior model and its $g$-modeling. The ill-posedness of the unrestricted NPMLE is discussed, and the iMoE class of conditional priors is introduced.
 Section \ref{Sec:estimation_procedure} presents estimation procedures. It describes the generation of exemplars, and exemplar-informed estimation algorithms, and the aggregation procedure for interpretable priors.
  Section \ref{sec:Theory} gives the generative interpretation of the iMoE class, its Lipschitz continuity in the covariate under a bounded set of gate parameters, and the universal approximation theorem for the aggregated softmax gates. Section \ref{sec:Simulation} reports the simulation studies, Section \ref{sec:real_data} the applications to the California housing data, the NHANES depression scores, and Section \ref{sec:discussion} concludes with a discussion of future work.

\section{Covariate-dependent NPMLE}
\label{sec:npmle}

In this paper, we work with
Model \eqref{model: unit_level_GLM_proposed},
where an exponential family distribution $ \mathrm {EF} (\mu, \phi)$ has a probability density function defined by
\begin{equation}
    \psi_{
    }(y;\theta,\phi) \coloneq   \exp\left(\frac{y\theta-b(\theta)}{a(\phi)}+c(y,\phi)\right).
\end{equation}
Here $\theta\in \Theta$ and $\phi\in \Phi$ are a canonical parameter and a dispersion parameter, respectively. $a(\cdot),b(\cdot)$ and $c(\cdot,\cdot)$ are known functions. The canonical parameter of unit $i$ is
determined by the linear predictor through
$\theta_i \coloneq \theta(x_i^{\top}\beta_i)$, where
$\theta(\cdot) := (b')^{-1}\!\left(g^{-1}(\cdot)\right)$.
Accordingly, we write the density of $y$ given $x$ and $\beta$ as
$\psi\!\left(y; \theta(x^{\top}\beta), \phi\right)$.
Throughout the paper, we assume that $\psi(y;\theta,\phi)>0$ for every $\theta\in\Theta$ and every $y$ in the support of the distribution, and that the map $\theta(\cdot)=(b')^{-1}(g^{-1}(\cdot))$ is continuous. Consequently, $\psi(y;\theta(x^{\top}\beta),\phi)$ is continuous and strictly positive in $\beta$.
These conditions are satisfied by the standard exponential families, such as the Gaussian, Bernoulli, Poisson, and gamma distributions, with their canonical or other commonly used link functions.

\subsection{Infinite Mixture-of-Expertizing (iMoE) Class}

We start with discussing the ill-posedness of a naive NPMLE of the conditional probability measure for the model \eqref{model: unit_level_GLM_proposed}.
The naive NPMLE is given
by
\begin{equation}
    \label{def:conditional_NPMLE}
    \begin{split}
        &\hat{G}(\cdot|z) \in \argmax_{G_z\in\mathcal{G}_{z}} \sum_{i=1}^n\log{f_{G_{z}}}(y_i|x_i,z_i), \\
        &\text{with}\quad
        f_{G_z}(y|x,z) \coloneq  \int\psi(y;\theta(x^{\top}\beta),\phi)
        \dd G(\beta|z),
    \end{split}
\end{equation}
where $\mathcal{G}_{z}$ is the class of all conditional distributions. Hereafter we assume that the constant part of $z_i$ is excluded.

The naive NPMLE $\hat{G}(\cdot|z)$ \eqref{def:conditional_NPMLE} with distinct $z_{i}$s is ill-posed, because we have infinitely many degenerate trivial solutions given as
\begin{equation}
    \label{def:ill_posed_solution}
    \hat{G}_{\mathrm{ill}}(\cdot|z) \coloneq
    \begin{cases}
        \delta_{\beta^\circ}\ \text{with any}\ \beta^\circ \in\mathcal{B}_i  & \text{for}\ z= z_i,\ i=1,\ldots,n, \\
        \text{arbitrary probability measure of $\beta$}                                                                                              & \text{for}\ z\neq z_i,\ i=1,\ldots,n,
    \end{cases}
\end{equation}
where 
 \begin{equation}
    \mathcal{B}_i=
\argmax_{\beta\in\mathbb{R}^{p}}\psi\!\left(y_i;\theta(x_i^{\top}\beta),\phi\right).
\end{equation}
 Indeed, for any conditional prior $G(\cdot|z_i)$ we have \[
 f_{G_{z}}(y_i|x_i,z_i)=\int\psi(y_i;\theta(x_i^{\top}\beta),\phi)\dd G(\beta|z_i)\le \max_{\beta}\psi(y_i;\theta(x_i^{\top}\beta),\phi),
 \]
 and the degenerate prior $\delta_{\beta^\circ}$ with $\beta^\circ \in\mathcal{B}_i$ attains this upper bound for each unit separately.

To address this degeneracy, we employ the following \emph{infinite Mixture-of-Expertizing} (iMoE) class of conditional probability measures.
Let $B\subset\mathbb{R}^{p}$ and $\Gamma\subset\mathbb{R}^{d}$ be compact sets, and $\mathcal{P}(B\times\Gamma)$ denote the set of probability measures whose support is contained in $B\times\Gamma$. For a mixing probability measure $H\in\mathcal{P}(B\times\Gamma)$ and a fixed $z\in\mathbb{R}^{d}$, we define the conditional probability measure $G_H(\cdot|z)$ on $B$ by
\begin{equation}
    \label{def:infinite_soft_max_prior}
    \dd G_H(\beta|z)\coloneq\frac{\int_{\Gamma} e^{\gamma^{\top}z}\dd H(\beta,\gamma)}{\int_{B\times\Gamma} e^{\gamma^{\top}z}\dd H(\tilde{\beta},\gamma)}.
\end{equation}
As the collection of these measures over the mixing measure $H$, we define the iMoE class of conditional priors by
\begin{equation}
    \label{def:infinite_soft_max_class}
    \mathcal{G}_{\mathrm{iMoE}}\coloneq\left\{z\mapsto G_H(\cdot|z) : H\in\mathcal{P}(B\times\Gamma)\right\}\subset\mathcal{G}_{z}.
\end{equation}
When $H=\sum_{j=1}^{J}\alpha_j\delta_{(\beta_j,\gamma_j)}$ with $\alpha_j>0$ and $\sum_{j=1}^{J}\alpha_j=1$, the conditional prior \eqref{def:infinite_soft_max_prior} reduces to a discrete prior with a softmax weight
\begin{equation}
    \label{def:discrete_soft_max_prior}
    G_H(\cdot|z)=\sum_{j=1}^{J}\frac{\alpha_j e^{\gamma_j^{\top}z}}{\sum_{k=1}^{J}\alpha_k e^{\gamma_k^{\top}z}}\,\delta_{\beta_j},
\end{equation}
where the support points of $G_{H}$ are shared across $z$, and the input for the softmax weight is modeled by $\gamma_{j}^\top z + \log \alpha_{j}$.
The class $\mathcal{G}_{\mathrm{iMoE}}$ extends this softmax-weighted structure to a general mixing measure $H$ that is not necessarily finite.

We define the iMoE-NPMLE as the maximizer of the conditional marginal likelihood over $\mathcal{G}_{\mathrm{iMoE}}$
\begin{equation}
    \label{def:proposed_NPMLE}
    \begin{split}
    \sum_{i=1}^n\log{f_{\hat{G}(\cdot|z)}}(y_i|x_i,z_i)
          = \max_{G_{H}\in\mathcal{G}_{\mathrm{iMoE}}}\sum_{i=1}^n\log{f_{G_{H}}}(y_i|x_i,z_i) 
          = \max_{H\in \mathcal{P}(B\times \Gamma)}\sum_{i=1}^{n}\log{f_{G_{H}}}(y_i|x_i,z_i),
    \end{split}
\end{equation}
where 
we use the fact that
every conditional prior in $\mathcal{G}_{\mathrm{iMoE}}$ is indexed by its mixing measure $H$.

\subsection{Existence and representation}
\label{sec:Existence}

We next discuss the existence and the representation of iMoE-NPMLE.
Let 
\[u_i(\gamma)= e^{\gamma^{\top}z_i}\quad \text{and}\quad v_i(\beta,\gamma)=e^{\gamma^{\top}z_i}\psi(y_i;\theta(x_i^{\top}\beta),\phi),
\quad i=1,\ldots,n,
\]
respectively.
Let us define the kernel and the normalization constant of the unit-level marginal likelihoods as functions of $H$ by
\begin{equation}
    \label{def:Lambda_Xi_tau}
 \Xi_{i}(H)\coloneq\int_{B\times\Gamma} v_{i}(\beta,\gamma)\dd H(\beta,\gamma),\quad   \Lambda_{i}{(H)}\coloneq\int_{B\times \Gamma} u_{i}(\gamma) \dd H(\beta,\gamma), 
\end{equation}
respectively.
Let
\begin{equation}
    \begin{split}
    \kappa(\beta,\gamma)&\coloneq(v_1(\beta,\gamma),\ldots,v_n(\beta,\gamma),\,
    u_1(\gamma),\ldots,u_n(\gamma))^{\top}\in\mathbb{R}^{2n},\\
    \zeta(H)&\coloneq (\Xi_{1}(H),\ldots,\Xi_{n}(H),\,
    \Lambda_1(H),\ldots,\Lambda_n(H))^{\top}\in\mathbb{R}^{2n},\\
    \ell(H)&\coloneq \sum_{i=1}^{n}\log f_{G_H}(y_i|x_i,z_i),
    \end{split}
\end{equation}
respectively.

We derive the discrete representation of the mixing measure that maximizes the marginal likelihood
\[
\hat{H}=\sum_{j=1}^{\hat{J}}
\hat{\alpha}_j\delta_{(\hat{\beta}_j,
\hat{\gamma}_j)},
\]
where $\hat{J}$ is some integer less than $2n$,
$\hat{\alpha} \in \Delta^{\hat{J}-1}\coloneq
\{\alpha\in\mathbb{R}^{\hat{J}}:\alpha_j\geq 0,\ \sum_{j=1,\ldots,\hat{J}}\alpha_j=1
\}$,
$(\hat{\beta}_{j},
\hat{\gamma}_{j})\in B\times \Gamma$ ($j=1,\ldots,\hat{J}$).
Observe that the marginal log-likelihood $\ell(H)$ has the representation
\begin{equation}
    \begin{split}
            \ell(H)&=\sum_{i=1}^{n}(\log\Xi_{i}(H)-\log\Lambda_{i}(H)) = \sum_{i=1}^{n}(\log \zeta_i(H)-\log\zeta_{i+n}(H)).          
    \end{split}
\end{equation}
Thus, iMoE-NPMLE \eqref{def:proposed_NPMLE} is equivalent to the maximum of the log-likelihood with respect to the vector $\zeta(H)$:
\begin{equation}
    \label{def:proposed_NPMLE_H}
\sum_{i=1}^{n}(\log \hat{\zeta}_i-\log \hat{\zeta}_{i+n})
\coloneq \max_{
\zeta \in \{\zeta=\zeta(H)\,:\, H\in \mathcal{P}(B\times\Gamma)\}} \sum_{i=1}^{n}(\log \zeta_i-\log\zeta_{i+n}).
\end{equation}
Since the vector $\zeta(H)$ is a convex combination of $\kappa(\beta,\gamma)$, 
\begin{equation}
    \label{eq:tau_convex_combination}
    \zeta(H)=\int_{B\times\Gamma}\kappa(\beta,\gamma)\dd H(\beta,\gamma),\quad H\in\mathcal{P}(B\times\Gamma),
\end{equation}
the solution of optimization problem \eqref{def:proposed_NPMLE_H} becomes
\begin{equation}
    \label{def:proposed_NPMLE_H_convex}
    \hat{\zeta}\coloneq \argmax_{\zeta \in \mathrm{conv}\{\kappa(\beta,\gamma)\;:\;(\beta,\gamma)\in B\times\Gamma\}} \sum_{i=1}^{n}(\log \zeta_i-\log\zeta_{i+n}).
\end{equation}
Here the solution $\hat{\zeta}$ is on the boundary of the convex hull of $\kappa(\beta,\gamma)$, because the objective function is increasing with respect to $\zeta_i$ and decreasing with respect to $\zeta_{i+n}$. 
Thus, from the Carathéodory's theorem, the solution $\hat{\zeta}$ is represented as a convex combination of at most $2n$ points in $\{\kappa(\beta,\gamma)\,:\,(\beta,\gamma)\in B\times\Gamma\}$:
\begin{equation}
    \hat{\zeta}=
    \sum_{j=1}^{\hat{J}}\hat{\alpha}_j \kappa(\hat{\beta}_j,
    \hat{\gamma}_j)
    ,\quad 
    \hat{\alpha}\in \Delta^{\hat{J}-1}
    ,\quad 
    (\hat{\beta}_j,
    \hat{\gamma}_j)
    \in B\times\Gamma,\quad \hat{J}\leq 2n.
\end{equation}
From \eqref{eq:tau_convex_combination}, the corresponding mixing measure $\hat{H}\in\mathcal{P}(B\times\Gamma)$ has the discrete representation as
\begin{equation}
    \label{def:solution_of_NPMLE_discrete_mixing_measure}
    \hat{H}=\sum_{j=1}^{\hat{J}}
    \hat{\alpha}_j
    \delta_{(\hat{\beta}_j
    ,
    \hat{\gamma}_j)}.
\end{equation}
Thus, iMoE-NPMLE \eqref{def:proposed_NPMLE} has a discrete form \eqref{def:discrete_soft_max_prior} with at most $2n$ support points.

On the basis of the above discrete representation,
we derive the softmax gating MoE representation of the marginal conditional density.
To do this, 
we prepare a softmax gate.
\begin{definition}[Softmax gate]
\label{def:extended_softmax_gate}
Let $J$ be an integer.
Let $\alpha=(\alpha_1,\ldots,\alpha_J)^{\top}\in\Delta^{J-1}\coloneq\{\alpha\in\mathbb{R}^{J}:\alpha_j\geq 0,\ \sum_{j=1}^{J}\alpha_j=1\}$ and $\gamma=(\gamma_1,\ldots,\gamma_J)$ with $\gamma_j\in\Gamma$. 
The \emph{softmax gate} is the vector-valued function $g(z;\alpha,\gamma)\coloneq(g_1(z;\alpha,\gamma),\ldots,g_J(z;\alpha,\gamma))^{\top}$ with
\begin{equation}
    \label{def:extended_softmax_gate_eq}
    g_j(z;\alpha,\gamma)\coloneq\frac{\alpha_j e^{\gamma_j^{\top}z}}{\sum_{k=1}^{J}\alpha_k e^{\gamma_k^{\top}z}},\qquad j=1,\ldots,J.
\end{equation}
\end{definition}
With this notation, the conditional prior produced by the mixing measure $\hat{H}$ is written as 
\[\hat{G}(\cdot|z)=\sum_{j=1}^{\hat{J}}g_j(z;\hat{\alpha},\hat{\gamma})\,\delta_{\hat{\beta}_j}.\]
The corresponding marginal conditional density is
\begin{equation}
    \label{def:proposed_NPMLE_discrete}
    f_{\hat{G}_{z}}(y|x,z) = \sum_{j=1}^{\hat{J}}g_j(z;\hat{\alpha},\hat{\gamma})
    \,
    \psi(y;\theta(x^{\top}\hat{\beta}_j),\phi)=\sum_{j=1}^{\hat{J}}\frac{\hat{\alpha}_j e^{z^{\top}\hat{\gamma}_j}}{\sum_{k=1}^{\hat{J}}\hat{\alpha}_k e^{z^{\top}\hat{\gamma}_k}}\,\psi(y;\theta(x^{\top}\hat{\beta}_j),\phi),
\end{equation}
and is exactly the softmax gating MoE model with the expert densities $\psi(y;\theta(x^{\top}\beta_j),\phi)$. 
Remark that 
$G_{H}(\cdot\mid z)$ in the above parameterization can be transformed into the standard affine parameterization:
\begin{equation}
    \label{def:discrete_soft_max_prior_intercept}
        G_H(\cdot|z)=\sum_{j=1}^{J}\frac{e^{\gamma_j^{\top}z+ \gamma_{0j}}}{\sum_{k=1}^{J} e^{{\gamma_k^{\top}z}+\gamma_{0k}}}\,\delta_{\beta_j}, \quad \text{with}\;\gamma_{0j}\coloneq \log \frac{\alpha_j}{\alpha^{\dagger}}\in\mathbb{R},\qquad0<\alpha^{\dagger}\leq1.
\end{equation}

\subsection{The first-order optimality condition}
\label{sec:First order optimality}

We proceed to the discussion on the first-order optimality condition of iMoE-NPMLE \eqref{def:proposed_NPMLE}
through the directional derivative with respect to $H$.
Let $H_{\epsilon}\coloneq (1-\epsilon)H+\epsilon\delta_{(\beta,\gamma)}$. The directional derivative of the marginal log-likelihood $\ell(H)$ with respect to $H$ along with point mass $\delta_{(\beta,\gamma)}$ is given by 
\begin{equation}
    \label{cal:directional_derivative}
    \begin{split}
    D(H;(\beta,\gamma)) & \coloneq 
    \frac{\partial \ell(H_{\epsilon})}{\partial \epsilon}\Bigg{|}_{\epsilon=+0}
    = \sum_{i=1}^{n}\left(\frac{v_i(\beta,\gamma)}{\Xi_i(H)}-\frac{u_i(\gamma)}{\Lambda_i(H)}\right). \\ 
    \end{split}
\end{equation}

The following theorem states the first-order optimality condition with respect to $\hat{G}(\cdot|z)$.

\begin{theorem} [The first-order optimality condition of iMoE-NPMLE \eqref{def:proposed_NPMLE}]
    \label{thm:necessary condition}
Let $\hat{H}$ be a mixing measure giving the iMoE-NPMLE
\eqref{def:proposed_NPMLE}. Then, for all $(\beta,\gamma)\in B\times\Gamma$, we have
\begin{equation}
    \label{eq:necessary_condition}
    \begin{split}
    &(i) \quad \sup_{(\beta,\gamma)\in B\times\Gamma}D(\hat{H};(\beta,\gamma))\leq 0, \\
    &(ii) \quad D(\hat{H};(\beta,\gamma)) = 0 \quad \text{for all}\ (\beta,\gamma)\in\mathrm{supp}(\hat{H}).
    \end{split}
\end{equation}
\end{theorem}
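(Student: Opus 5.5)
The plan is the standard Lindsay-type argument: use the directional derivative \eqref{cal:directional_derivative} along the mixture path $H_\epsilon=(1-\epsilon)\hat H+\epsilon\delta_{(\beta,\gamma)}$, which stays in $\mathcal{P}(B\times\Gamma)$ for $\epsilon\in[0,1]$. First I would check that $\ell$ is finite at $\hat H$ and that the derivative formula holds. Since $B\times\Gamma$ is compact, $u_i(\gamma)=e^{\gamma^\top z_i}$ is bounded above and below by positive constants. By the standing positivity and continuity assumptions, $v_i$ is continuous and strictly positive on the compact set, hence also bounded away from $0$ and $\infty$. Therefore $\Xi_i(H),\Lambda_i(H)\in(0,\infty)$ for every $H$. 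Along the path, $\Xi_i(H_\epsilon)=(1-\epsilon)\Xi_i(\hat H)+\epsilon v_i(\beta,\gamma)$ and $\Lambda_i(H_\epsilon)=(1-\epsilon)\Lambda_i(\hat H)+\epsilon u_i(\gamma)$ are affine in $\epsilon$. So $\epsilon\mapsto\ell(H_\epsilon)=\sum_i[\log\Xi_i(H_\epsilon)-\log\Lambda_i(H_\epsilon)]$ is smooth on $[0,1]$, and its right derivative at $0$ is
\[
\sum_i\Bigl(\frac{v_i-\Xi_i(\hat H)}{\Xi_i(\hat H)}-\frac{u_i-\Lambda_i(\hat H)}{\Lambda_i(\hat H)}\Bigr)=D(\hat H;(\beta,\gamma)),
\]
because the constant terms $-1+1$ cancel. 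This confirms \eqref{cal:directional_derivative}.

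For part (i), maximality of $\hat H$ gives $\ell(H_\epsilon)\le\ell(\hat H)$ for all small $\epsilon>0$. Dividing by $\epsilon$ and letting $\epsilon\downarrow0$ yields $D(\hat H;(\beta,\gamma))\le0$ for every $(\beta,\gamma)$, and taking the supremum gives (i).

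For part (ii), first observe the identity
\[
\int D(\hat H;(\beta,\gamma))\,\dd\hat H(\beta,\gamma)=\sum_i\Bigl(\frac{\Xi_i(\hat H)}{\Xi_i(\hat H)}-\frac{\Lambda_i(\hat H)}{\Lambda_i(\hat H)}\Bigr)=0,
\]
using Fubini with the bounded integrands. Since $D(\hat H;\cdot)\le0$ everywhere by (i) and integrates to zero against $\hat H$, it vanishes $\hat H$-almost everywhere. To upgrade this to every point of $\mathrm{supp}(\hat H)$, I would use that $D(\hat H;\cdot)$ is continuous on $B\times\Gamma$, since $v_i$ and $u_i$ are continuous and the denominators are fixed positive numbers. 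Suppose $D(\hat H;\cdot)<0$ at some $(\beta_0,\gamma_0)\in\mathrm{supp}(\hat H)$. Then $D<0$ on an open neighbourhood of that point, and this neighbourhood has positive $\hat H$-mass by the definition of support. That would force $\int D\,\dd\hat H<0$, a contradiction. In the discrete case \eqref{def:solution_of_NPMLE_discrete_mixing_measure}, this simply says that each atom with $\hat\alpha_j>0$ has $D=0$.

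The main obstacle is not conceptual. Unlike the classical NPMLE, $\ell$ is not concave in $H$, because of the $-\log\Lambda_i$ term, so the argument only yields necessary conditions. The care needed is in justifying the limit and the interchange of integrals, which rests entirely on compactness of $B\times\Gamma$ together with the positivity and continuity of $\psi$ and $e^{\gamma^\top z}$. Continuity of $D$ is the key point for passing from "almost everywhere" to "everywhere on the support."
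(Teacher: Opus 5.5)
Your proposal is correct and follows the paper's proof essentially step for step. Part (i) comes from the one-sided difference quotient of $\ell$ along $(1-\epsilon)\hat H+\epsilon\delta_{(\beta,\gamma)}$, and part (ii) comes from $\int D(\hat H;\cdot)\dd\hat H=\sum_i(1-1)=0$ together with (i), with continuity of $D(\hat H;\cdot)$ passing from $\hat H$-almost everywhere to every point of $\mathrm{supp}(\hat H)$; your checks that $\Xi_i,\Lambda_i\in(0,\infty)$ and your explicit derivation of the formula for $D$ fill in details the paper leaves implicit.
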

The proof is given in Appendix \ref{app:proofs}.
The conditions in the theorem correspond to a covariate-dependent prior version of the sufficient and necessary conditions of the NPMLE in \cite{Jiang2025-am}, which is stated for the Gaussian unit-level linear model with a non-conditional prior. Remark that our result is stated only for the necessary because of the normalizing constant $\Lambda_i(H)$. In the classical NPMLE, which corresponds to the restriction $\gamma=0$, we have $u_i(\gamma)/\Lambda_i(H)=1$, and integrating the directional derivative against any mixing measure $H'$ gives \[\int D(\hat{H};(\beta,\gamma))\dd H'(\beta,\gamma)=\sum_{i=1}^{n}\{\Xi_i(H')/\Xi_i(\hat{H})-1\}.\] Then the inequality $\log x\le x-1$ turns the first-order conditions into 
\[\ell(H')-\ell(\hat{H})=\sum_{i=1}^{n}\log\{\Xi_i(H')/\Xi_i(\hat{H})\}\le0.\] Without the restriction, the integral becomes $\sum_{i=1}^{n}\{\Xi_i(H')/\Xi_i(\hat{H})-\Lambda_i(H')/\Lambda_i(\hat{H})\}$, so that the same argument does not apply.

The following proposition converts
Theorem \ref{thm:necessary condition} into an explicit form in the exponential family.
For $\beta\in B$, let $\mu_i(\beta)\coloneq g^{-1}(x_i^{\top}\beta)$ denote the conditional mean of unit $i$, and let $V(\mu)\coloneq b''\!\left((b')^{-1}(\mu)\right)$ denote the variance function of the exponential family.
\begin{proposition}[The first-order optimality condition in the exponential family]
    \label{prop:necessary_condition}
    Suppose that $b$ is twice differentiable with $b''>0$ on $\Theta$ and that $g$ is differentiable with $g'\neq 0$ on the mean space.
    Let $\hat{H}$ be a mixing measure giving the solution of the iMoE-NPMLE \eqref{def:proposed_NPMLE}.
    Then, for every $(\beta,\gamma)\in\mathrm{supp}(\hat{H})$ in the interior of $B\times\Gamma$, we have
    \begin{align}
        \nabla_{\beta}D(\hat{H};(\beta,\gamma))
        &=\sum_{i=1}^{n}\frac{v_i(\beta,\gamma)}{\Xi_i(\hat{H})}\,
          \frac{y_i-\mu_i(\beta)}{a(\phi)\,V(\mu_i(\beta))\,g'(\mu_i(\beta))}\,x_i
        =0,
        \label{eq:gradient_condition_beta}\\
        \nabla_{\gamma}D(\hat{H};(\beta,\gamma))
        &=\sum_{i=1}^{n}\left(\frac{v_i(\beta,\gamma)}{\Xi_i(\hat{H})}-\frac{u_i(\gamma)}{\Lambda_i(\hat{H})}\right)z_i
        =0.
        \label{eq:gradient_condition_gamma}
    \end{align}
    In particular, for the canonical link $g=(b')^{-1}$, the weight in \eqref{eq:gradient_condition_beta} reduces to $(y_i-\mu_i(\beta))/a(\phi)$.
\end{proposition}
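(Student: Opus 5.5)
The plan is to combine Theorem \ref{thm:necessary condition} with an elementary interior-maximum argument, and then compute the two gradients explicitly.

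\textbf{Step 1 (interior maximum).} By part (i) of Theorem \ref{thm:necessary condition}, the function $(\beta,\gamma)\mapsto D(\hat{H};(\beta,\gamma))$ is bounded above by $0$ on $B\times\Gamma$. By part (ii), it equals $0$ at every $(\beta,\gamma)\in\mathrm{supp}(\hat{H})$. Hence every support point is a global maximizer of $D(\hat{H};\cdot)$ over $B\times\Gamma$. If such a point lies in the interior and $D(\hat{H};\cdot)$ is differentiable there, Fermat's rule gives $\nabla_{\beta}D=0$ and $\nabla_{\gamma}D=0$.

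\textbf{Step 2 (differentiability).} Recall
\[
D(\hat{H};(\beta,\gamma))=\sum_{i}\Bigl(\frac{v_i(\beta,\gamma)}{\Xi_i(\hat{H})}-\frac{u_i(\gamma)}{\Lambda_i(\hat{H})}\Bigr),
\]
where $\Xi_i(\hat{H})$ and $\Lambda_i(\hat{H})$ are positive constants that do not depend on $(\beta,\gamma)$. Positivity holds because $\psi>0$ and the exponential factor is positive. So only the dependence of $v_i$ and $u_i$ on $(\beta,\gamma)$ matters. The factor $u_i(\gamma)=e^{\gamma^{\top}z_i}$ is smooth. For $v_i$ we need $\beta\mapsto\psi(y_i;\theta(x_i^{\top}\beta),\phi)$ to be differentiable. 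Write $\theta(\eta)=(b')^{-1}(g^{-1}(\eta))$. Since $b''>0$, the inverse function theorem makes $(b')^{-1}$ differentiable with derivative $1/b''$. Since $g'\neq0$, $g^{-1}$ is differentiable with derivative $1/g'(\mu)$. Therefore
\[
\theta'(\eta)=\frac{1}{b''(\theta)\,g'(\mu)}=\frac{1}{V(\mu)\,g'(\mu)}, \qquad \mu=g^{-1}(\eta),
\]
using $V(\mu)=b''((b')^{-1}(\mu))$.

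\textbf{Step 3 (the $\gamma$-gradient).} Using $\nabla_\gamma u_i=u_i z_i$ and $\nabla_\gamma v_i=v_i z_i$, we get
\[
\nabla_\gamma D=\sum_i\Bigl(\frac{v_i}{\Xi_i}-\frac{u_i}{\Lambda_i}\Bigr)z_i,
\]
which is \eqref{eq:gradient_condition_gamma}.

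\textbf{Step 4 (the $\beta$-gradient).} Only $v_i$ depends on $\beta$. By the chain rule,
\[
\nabla_\beta \log\psi=\frac{y_i-b'(\theta)}{a(\phi)}\,\theta'(x_i^{\top}\beta)\,x_i,
\qquad b'(\theta)=\mu_i(\beta).
\]
Multiplying by $v_i/\Xi_i$ and summing over $i$ gives \eqref{eq:gradient_condition_beta}.

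\textbf{Step 5 (canonical link).} For the canonical link $g=(b')^{-1}$ we have $g'(\mu)=1/b''((b')^{-1}(\mu))=1/V(\mu)$. Hence $V g'=1$, and the weight reduces to $(y_i-\mu_i(\beta))/a(\phi)$.

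\textbf{Main obstacle.} The only delicate point is Step 1: confirming that $D(\hat{H};\cdot)$ attains its maximum at support points. This needs both parts of the Theorem together with the interiority hypothesis. Boundary support points are excluded because there only one-sided conditions would hold. The remaining steps are routine calculus.
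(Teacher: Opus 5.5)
Your proposal is correct and follows essentially the same route as the paper's proof. Both use parts (i) and (ii) of Theorem~\ref{thm:necessary condition} to make every support point a global maximizer of $D(\hat{H};\cdot)$, so the gradient vanishes at interior support points, and both get the explicit formulas from $\nabla_\gamma u_i=u_iz_i$, $\nabla_\gamma v_i=v_iz_i$ and the chain rule with $\nabla_\beta\theta(x_i^{\top}\beta)=x_i/\{b''(\theta)\,g'(\mu)\}$. Your extra justification of differentiability via the inverse function theorem, and your check that $V g'=1$ under the canonical link, fill in details the paper leaves implicit.
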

The proof is given in Appendix \ref{app:proofs}.
The condition \eqref{eq:gradient_condition_beta} says that each support point $\beta$ solves a weighted GLM score equation. The weight of unit $i$ is $v_i(\beta,\gamma)/\Xi_i(\hat{H})$, which is the posterior probability that unit $i$ belongs to the support point $(\beta,\gamma)$. Remark that these weights depend on the unknown $\hat{H}$, so the equation is not a standard GLM score equation. 

The following proposition states the usefulness of the first-order optimality condition to design the estimation algorithm of the iMoE-NPMLE.
For a probability vector $w \in \Delta^{n-1}$, let
\begin{equation}
    \label{def:S_beta}
    S_{\beta}(w)\coloneq\left\{\beta\in B:\sum_{i=1}^{n}w_i\,(y_i-\mu_i(\beta))\,x_i=0\right\}
\end{equation}
be the set of solutions of the GLM score equation in which unit $i$ has weight $w_i$.

\begin{proposition}[sparse representation of support points]
    \label{prop:sparse_representation}
    Suppose that the assumptions of Proposition \ref{prop:necessary_condition} hold. Let $(\beta,\gamma)$ be a support point of $\hat{H}$ that lies in the interior of $B\times\Gamma$. Then, there is a probability vector $w\in\Delta^{n-1}$ with at most $p+1$ nonzero entries such that $\beta\in S_{\beta}(w)$. In words, $\beta$ is the solution of a GLM score equation that uses at most $p+1$ units.
\end{proposition}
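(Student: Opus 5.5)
Starting from Proposition \ref{prop:necessary_condition}, I will first rewrite the $\beta$-gradient condition \eqref{eq:gradient_condition_beta} as a weighted GLM score equation. Fix a support point $(\beta,\gamma)$ of $\hat{H}$ in the interior of $B\times\Gamma$ and define, for each $i$,
\[
c_i \coloneq \frac{v_i(\beta,\gamma)}{\Xi_i(\hat{H})}\,\frac{1}{a(\phi)\,V(\mu_i(\beta))\,g'(\mu_i(\beta))}.
\]
Under the standing assumptions, $v_i>0$ and $\Xi_i(\hat H)>0$, $a(\phi)$ is a fixed constant, $V>0$ because $b''>0$, and $g'\neq0$. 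So every $c_i$ is finite and nonzero. In this notation, \eqref{eq:gradient_condition_beta} reads $\sum_{i=1}^n c_i (y_i-\mu_i(\beta))x_i=0$.

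There is a sign issue: $g'$ need not be positive. For a monotone link it has a constant sign, and then I can absorb that sign and the factor $a(\phi)$ into an overall constant. This yields nonnegative weights $\tilde c_i>0$. Setting $w=\tilde c/\sum_i\tilde c_i\in\Delta^{n-1}$ then gives $\beta\in S_\beta(w)$ with all $n$ entries positive. If the sign of $g'$ were not constant I would note this as a limitation; since $g'\neq0$ on a connected mean space, continuity of $g'$ should force a constant sign.

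The sparsification is a Carathéodory argument. Put $a_i\coloneq (y_i-\mu_i(\beta))x_i\in\mathbb{R}^p$ with $\beta$ held fixed. The previous step shows $0\in\mathrm{conv}\{a_1,\ldots,a_n\}$. By Carathéodory's theorem in $\mathbb{R}^p$, the point $0$ is a convex combination of at most $p+1$ of the $a_i$. Let $w'\in\Delta^{n-1}$ be the vector supported on those indices. Then $\sum_i w'_i(y_i-\mu_i(\beta))x_i=0$, that is, $\beta\in S_\beta(w')$ with at most $p+1$ nonzero entries. This is the step on which the whole statement rests, and it is routine once the weights are known to be nonnegative. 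The key observation is that, with $\beta$ fixed, the score equation is linear in $w$. So the sparsity argument runs in the $p$-dimensional space of score vectors and never depends on $\hat H$.

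I expect the main obstacle to be the sign and positivity bookkeeping, namely guaranteeing that the weights coming from \eqref{eq:gradient_condition_beta} share a common sign. Positivity of $v_i/\Xi_i$ is immediate. Positivity of $V$ follows from $b''>0$. The remaining factor is $g'$, which I will handle by the connectedness argument above: $g$ is a differentiable function with nonvanishing derivative on an interval, so by Darboux's theorem $g'$ has constant sign. I also need the interior assumption, so that the gradient condition of Proposition \ref{prop:necessary_condition} applies. The degenerate case in which all $a_i=0$ is covered trivially by taking $w$ to be any single unit vector.
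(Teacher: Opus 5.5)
Your proposal is correct and follows the paper's proof essentially step for step. You rewrite \eqref{eq:gradient_condition_beta} as $\sum_i c_i (y_i-\mu_i(\beta))x_i=0$ with weights of a common sign, normalize to exhibit $0$ as a convex combination of the score vectors, and apply Carath\'eodory's theorem in $\mathbb{R}^p$. Your Darboux-theorem argument that $g'$ has constant sign on the interval-valued mean space is a small improvement, since the paper simply asserts that $g'$ does not change sign.
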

The proof is given in Appendix \ref{app:proofs}.
Proposition \ref{prop:sparse_representation} ensures that if we know the non-zero entries and the corresponding weights, we do not need to use the whole dataset to find the support point. This part is directly extension of the result in \cite{Jiang2025-am} to the covariate-dependent prior setting and GLM.

In contrast, no such sparse representation holds for $\gamma$. In the condition \eqref{eq:gradient_condition_gamma}, the coefficient of $z_i$ is $e^{z_i^{\top}\gamma}\{\psi(y_i;\theta(x_i^{\top}\beta),\phi)/\Xi_i(\hat{H})-1/\Lambda_i(\hat{H})\}$, whose sign varies with $i$, so the argument based on Carath\'{e}odory's theorem does not apply. Consequently, the support point $\gamma$ cannot be determined from a small subset of units, and has to be estimated from all the units through \eqref{eq:gradient_condition_gamma}.

\section{Estimation procedure}
\label{Sec:estimation_procedure}

In this section, we propose estimation procedures for iMoE-NPMLE \eqref{def:proposed_NPMLE}. Since iMoE-NPMLE has non-convexity arising from the normalizing constant, our goal is to find a local maximizer of the marginal likelihood satisfying the first-order optimality condition shown in Theorem \ref{thm:necessary condition} and Proposition \ref{prop:necessary_condition}.

Our estimation target $\hat{H}$ is a discrete measure characterized by at most $2n$ support points $(\hat{\beta}_j,\hat{\gamma}_j)$ and the weights $\hat{\alpha}_j$ as in \eqref{def:solution_of_NPMLE_discrete_mixing_measure}. Thus, the estimation is achieved by fitting the MoE model whose number of experts $\hat{J}$ is potentially large and unknown. Although the EM algorithm is a popular estimation procedure for the MoE model, we need to sweep the number of experts $\hat{J}$ to find the optimal number of experts, which is computationally infeasible for large $n$.

To avoid the sweep for $\hat{J}$, we propose two exemplar-based algorithms for fitting the MoE, Free-EXMoE and Grid-EXMoE, which are collectively referred to as EXMoE (EXemplar-based algorithms for fitting the Mixture of Experts). 
Based on the sparse representation of the support points of $\beta$ in Proposition \ref{prop:sparse_representation}, both algorithms start from \emph{exemplars}, candidates for the support points computed from randomly selected small subsamples.
 Free-EXMoE uses the exemplars as the initial values and treats the support points as free parameters, updating the exemplars and the gate parameters together with the weights $\alpha$. Grid-EXMoE fixes the support points on the product of the exemplars and a grid of the gate parameters and updates only $\alpha$, as in the exemplar method \citep{Bohning2000-gq,Lashkari2007-qc,Soloff2024-ep,Jiang2025-am}.

These two algorithms differ mainly in the trade-off between the accuracy of $\hat{\beta}$ and the computational costs. Free-EXMoE is suitable in high dimensions, where the number of fixed exemplars needed to cover the true points is too large, and the grid of $\gamma$ grows exponentially with the dimension of $z$. Grid-EXMoE is computationally cheaper per candidate point, so that we can take many candidate points beyond the $2n$ bound.

\begin{figure}[htbp]
\centering
\begin{tikzpicture}[
  >=Stealth, font=\small,
  cand/.style={draw=teal!60, fill=teal!8, rounded corners=2pt, minimum width=1.8cm, minimum height=0.42cm, inner sep=1pt},
  pruned/.style={draw=gray!60, dashed, rounded corners=2pt, minimum width=1.8cm, minimum height=0.42cm, inner sep=1pt, text=gray},
  agg/.style={draw=violet!60, fill=violet!8, rounded corners=3pt, minimum width=2.3cm, minimum height=0.95cm, align=center},
  foot/.style={text width=3.7cm, align=center, font=\footnotesize},
]
  \node[font=\small\bfseries] at (0,0) {$M$ candidate experts};
  \node[font=\small\bfseries] at (4,0) {$\hat J$ active experts};
  \node[font=\small\bfseries] at (8,0) {$K$ aggregated experts};

  \foreach \k in {1,...,8} \node[cand] (c\k) at (0,-0.55*\k) {$E_{\k}$};

  \foreach \k in {1,3,4,6,8} \node[cand]   (a\k) at (4,-0.55*\k) {$E_{\k}$};
  \foreach \k in {2,5,7}     \node[pruned] (a\k) at (4,-0.55*\k) {$\alpha_{\k}=0$};
  \draw[->, thick] (1.2,-2.475) -- (2.8,-2.475);

  \node[agg] (g1) at (8,-1.10) {$(\bar\beta_1,g^{\dagger}_1)$\\[-1pt]{\scriptsize $E_1+E_3$}};
  \node[agg] (g2) at (8,-2.75) {$(\bar\beta_2,g^{\dagger}_2)$\\[-1pt]{\scriptsize $E_4+E_6$}};
  \node[agg] (g3) at (8,-4.40) {$(\bar\beta_3,g^{\dagger}_3)$\\[-1pt]{\scriptsize $E_8$}};
  \foreach \a/\g in {1/1,3/1,4/2,6/2,8/3} \draw[->] (a\a.east) -- (g\g.west);

  \node[foot] at (0,-5.7) {Algorithm~\ref{alg:exemplar}:\\ exemplars $\beta^{(j)}$ from local subsamples};
  \node[foot] at (4,-5.7) {Algorithm~\ref{alg:exmoe} or \ref{alg:fixed_dictionary}:\\ pairwise Frank--Wolfe sets $\alpha_j=0$};
  \node[foot] at (8,-5.7) {Section~\ref{sec:experts_aggregation}:\\ merge experts with similar $\hat\beta_j$, sum gates};
\end{tikzpicture}
\caption{Schematic of the estimation procedure in terms of experts. Exemplars provide $M$ candidate experts; Free-EXMoE or Grid-EXMoE removes redundant experts by setting their weights exactly to zero, without specifying the number of experts; the aggregation merges experts with similar regression coefficients into $K$ interpretable experts with aggregated gates.}
\label{fig:schematic_pipeline}
\end{figure}

The rest of this section follows the steps of the procedure described in Figure \ref{fig:schematic_pipeline}. Both algorithms first need a collection of exemplars, and Section \ref{subsec:anchored_exemplars} describes how they are generated from small subsamples. Starting from these exemplars, Free-EXMoE updates the weights, the gate parameters, and the regression coefficients by block coordinate descent (Section \ref{subsec:exmoe}), whereas Grid-EXMoE fixes the support points on the exemplars and a grid of the gate parameters, and estimates only the weights (Section \ref{subsec:fixed_dictionary}). Since both algorithms return a potentially local solution, Section \ref{subsec:posterior_check} gives a diagnostic that checks the first-order optimality condition for the output. Finally, the estimated measure typically has many support points, and Section \ref{sec:experts_aggregation} gives a post hoc procedure that aggregates them into a few interpretable clusters.

\subsection{Generation of exemplars}
\label{subsec:anchored_exemplars}

We first discuss how we generate exemplars for the candidates of the support points of $\beta$.
Motivated by Proposition \ref{prop:sparse_representation}, we generate exemplars from the solutions of GLM score equations on small subsets of units. Although the non-zero entries and the corresponding weights in Proposition \ref{prop:sparse_representation} are unknown, the solution of $p+1$ units with uniform weights is a good empirical approximation of the support point $\hat{\beta}_j$, as shown in \cite{Jiang2025-am} for classical NPMLE. However, our estimation target, the conditional covariate-dependent prior, changes the prior of the latent regression coefficients $\beta_i$ across covariates $z_i$, which means that the different units are possibly generated from distributions not sharing the same support points. Eventually, the noisy exemplars from different support points would be generated. Therefore, we need to generate exemplars from units that are more likely from the same support point. 

To incorporate this intuition, we consider drawing a subset around an \emph{anchor} unit. We pick a unit $i_0$ at random. Then, we collect the $P$ units closest to $i_0$ in the covariate space of $z$, and draw $m$ units from them, where $m\geq p+1$. The anchor makes the exemplars cover the local range of $z$ evenly, and the pool size $P$ and the subsample size $m$ control how local and how noisy each exemplar is. The procedure is summarized in Algorithm \ref{alg:exemplar}.

\begin{algorithm}[H]
    \caption{Anchored generation of exemplars}
    \label{alg:exemplar}
    \KwIn{Data $\{(x_i,y_i,z_i)\}_{i=1}^{n}$; support bound $B$; number of exemplars $M$; pool size $P\leq n$ and subsample size $m$ with $p+1\leq m\leq P$}
    \KwOut{Exemplars $\beta^{(1)},\ldots,\beta^{(M)}\in B$}
    \BlankLine
    \For{$j=1,\ldots,M$}{
        Draw an anchor unit $i_0$ uniformly from $\{1,\ldots,n\}$\;
        Let $\mathcal{P}_j\subset\{1,\ldots,n\}$ be the $P$ units closest to $i_0$ in $z$ (with $\mathcal{P}_j=\{1,\ldots,n\}$ when $P=n$)\;
        Draw a subset $S_j\subset\mathcal{P}_j$ with $|S_j|=m$ uniformly at random without replacement\;
        Solve the GLM score equation with uniform weights on $S_j$, $\sum_{i\in S_j}\bigl(y_i-\mu_i(\beta)\bigr)x_i=0$, and set the solution to $\beta^{(j)}$\;
        \If{$\beta^{(j)}\notin B$ or the score equation has no solution}{
            Discard $S_j$ and redraw\;
        }
    }
\end{algorithm}

The pool size $P$ and the subsample size $m$ in Algorithm \ref{alg:exemplar} control globality and locality of the generated exemplars. With $P=n$ and $m=p+1$, the anchor plays no role and $S_j$ is a uniformly random subset of $p+1$ units, so that $\beta^{(j)}$ is the exact solution of the score equation on $S_j$. This is the exemplar generation of \cite{Jiang2025-am}. With $P<n$ and $m>p+1$, each exemplar is a local GLM estimate around the anchor, which is less variable than the exact solution of $p+1$ units and remains close to the support points whose gates are concentrated near $z_{i_0}$. This strategy is suitable for the situation, where $p$ is larger and the responses are noisier such as in the real data analyses.

\subsection{Free-EXMoE: exemplars as initial values of free support points}
\label{subsec:exmoe}

We next present the first estimation algorithm for the mixing measure on the basis of examplers.
For the exemplars $\beta^{(1)},\ldots,\beta^{(M)}$ generated by Algorithm \ref{alg:exemplar}, the mixing measure is restricted to the form $H=\sum_{j=1}^{M}\alpha_j\delta_{(\beta_j,\gamma_j)}$ with $M$ support points, where the exemplars serve as the initial values $\beta_j=\beta^{(j)}$ of the regression coefficients and are updated during the estimation. Writing $\psi_{ij}(\beta_j)\coloneq\psi(y_i;\theta(x_i^{\top}\beta_j),\phi)$ and $\beta\coloneq(\beta_1,\ldots,\beta_M)$, we estimate $(\alpha,\gamma,\beta)$ by minimizing the penalized negative log-likelihood
\begin{equation}
    \label{def:exmoe_objective}
    \begin{split}
    &F_{\lambda}(\alpha,\gamma,\beta)\coloneq\mathcal{L}(\alpha,\gamma,\beta)+\frac{\lambda}{2}\sum_{j=1}^{M}\|\gamma_j\|_2^{2},
    \qquad \alpha\in\Delta^{M-1},\quad \beta_j\in B,
    \\
    &\quad\text{where}\quad
    \mathcal{L}(\alpha,\gamma,\beta)\coloneq-\ell(H)=\sum_{i=1}^{n}\left\{\log\sum_{j=1}^{M}\alpha_j e^{z_i^{\top}\gamma_j}-\log\sum_{j=1}^{M}\alpha_j e^{z_i^{\top}\gamma_j}\psi_{ij}(\beta_j)\right\}
    \end{split}
\end{equation}
with a regularization parameter $\lambda > 0$. The ridge term keeps the minimizer of $\gamma_j$ bounded, which replaces the explicit constraint $\gamma_j\in\Gamma$ in the implementation. 
Letting
\begin{equation}
    \label{def:exmoe_weights}
    \pi_{ij}\coloneq\frac{\alpha_j e^{z_i^{\top}\gamma_j}}{\sum_{k=1}^{M}\alpha_k e^{z_i^{\top}\gamma_k}},\qquad
    \text{and}\qquad
    w_{ij}\coloneq\frac{\alpha_j e^{z_i^{\top}\gamma_j}\psi_{ij}(\beta_j)}{\sum_{k=1}^{M}\alpha_k e^{z_i^{\top}\gamma_k}\psi_{ik}(\beta_k)},
\end{equation}
the gradients of $F_{\lambda}$ are given by
\begin{equation}
    \label{eq:exmoe_gradients}
    \begin{split}
    \frac{\partial F_{\lambda}}{\partial\alpha_j}&=-\frac{1}{\alpha_j}\sum_{i=1}^{n}(w_{ij}-\pi_{ij})=-D_j(\alpha,\gamma,\beta)\coloneq -D(H;(\beta_j,\gamma_j)),\\
    \nabla_{\gamma_j}F_{\lambda}&=-\sum_{i=1}^{n}(w_{ij}-\pi_{ij})\,z_i+\lambda\gamma_j,\\
    \nabla_{\beta_j}F_{\lambda}&=-\sum_{i=1}^{n}w_{ij}\,\nabla_{\beta_j}\log\psi\bigl(y_i;\theta(x_i^{\top}\beta_j),\phi\bigr)=-\alpha_j\nabla_{\beta_j}D(H;(\beta_j,\gamma_j)).
    \end{split}
\end{equation}

Then, given the exemplars, we minimize $F_{\lambda}$ by a block coordinate descent over $\alpha$, $\gamma$, and $\beta$ via the above gradient formula. The process is summarized in Algorithm \ref{alg:exmoe} and discussed in the following.
\begin{algorithm}[!htp]
    \caption{Free-EXMoE}
    \label{alg:exmoe}
    \KwModel{Data $\{(x_i,y_i,z_i)\}_{i=1}^{n}$; hyperparameters  $\phi$, $B$, and $\lambda$; exemplars $\beta^{(1)},\ldots,\beta^{(M)}$;}
    \KwTune{Update rule $\mathcal{U}_{\Delta}$ ;  $T_{\max}$,$\epsilon_{\mathrm{tol}}$, $R$, and $s_{\gamma}$;}
    \KwOut{$\hat{G}(\cdot|z)=\sum_{j\in\hat{\mathcal{S}}}\hat{g}_j(z)\delta_{\hat{\beta}_j}$ with $\hat{g}_j(z)=\hat{\alpha}_j e^{z^{\top}\hat{\gamma}_j}/\sum_{k\in\hat{\mathcal{S}}}\hat{\alpha}_k e^{z^{\top}\hat{\gamma}_k}$;}
    \BlankLine
    \tcp{Step 1: block coordinate descent for $(\alpha,\gamma,\beta)$ with $R$ restarts}
    \For{$r=1,\ldots,R$}{
        Initialize $\alpha=(1/M,\ldots,1/M)^{\top}$ and $\beta_j=\beta^{(j)}$ for all $j$\; 
        Set $\gamma_j=0$ for all $j$ if $r=1$, and draw $\gamma_j\sim N(0,s_{\gamma}^{2}I)$ independently otherwise\;         \For{$t=1,\ldots,T_{\max}$}{
            Compute $F_t\coloneq F_{\lambda}(\alpha,\gamma,\beta)$; \textbf{break} if  $F_{t-1}-F_t<\epsilon_{\mathrm{tol}}\,|F_t|$\;
            \tcp{$\alpha$-update: descent step on the simplex}$\alpha\leftarrow\mathcal{U}_{\Delta}(\alpha;D,\gamma,\beta)$, an update on $\Delta^{M-1}$ \;
            \tcp{$\gamma$-update: minimization over the active support points}
            Set $\mathcal{A}\leftarrow\{j:\alpha_j>0\}$\;
                $\gamma_{\mathcal{A}}=(\gamma_j)_{j\in\mathcal{A}}\leftarrow\argmin_{\gamma_{\mathcal{A}}}F_{\lambda}(\alpha,\gamma,\beta)$ by L-BFGS-B with the gradient in \eqref{eq:exmoe_gradients}\;
            Set $\gamma_j\leftarrow 0$ for $j\notin\mathcal{A}$\;
            \tcp{$\beta$-update: weighted GLM fits on the active support points}
            \For{$j\in\mathcal{A}$}{
                Let $\tilde{\beta}_j\leftarrow\argmin_{b}Q_j(b)$ in \eqref{eq:exmoe_beta_majorization}, computed by IRLS started at $\beta_j$ \;
                \If{$\tilde{\beta}_j\in B$ }{
                    $\beta_j\leftarrow\tilde{\beta}_j$\;
                }
            }
        }
        Set $\alpha\leftarrow\mathcal{U}_{\Delta}(\alpha;D,\gamma,\beta)$\;
        Record $(\alpha^{[r]},\gamma^{[r]},\beta^{[r]})\leftarrow(\alpha,\gamma,\beta)$, $\mathcal{L}^{[r]}\leftarrow\mathcal{L}(\alpha,\gamma,\beta)$, and $\hat{\mathcal{S}}^{[r]}\leftarrow\{j:\alpha_j>0\}$\;
    }
    \BlankLine
    \tcp{Step 2: selection of the restart}
    Set $\hat{r}\leftarrow\argmin_{r}\mathcal{L}^{[r]}$, $\hat{\mathcal{S}}\leftarrow\hat{\mathcal{S}}^{[\hat{r}]}$, $\hat{\alpha}_j\leftarrow\alpha_j^{[\hat{r}]}/\sum_{k\in\hat{\mathcal{S}}}\alpha_k^{[\hat{r}]}$, $\hat{\gamma}_j\leftarrow\gamma_j^{[\hat{r}]}$, and $\hat{\beta}_j\leftarrow\beta_j^{[\hat{r}]}$ for $j\in\hat{\mathcal{S}}$\;
\end{algorithm}

\textbf{Update of $\alpha$ given $(\gamma,\beta)$}.
This update is an update rule $\mathcal{U}_{\Delta}$ on the \textit{simplex} $\Delta^{M-1}$. For fixed $(\gamma,\beta)$, the negative log-likelihood $\mathcal{L}(\cdot,\gamma,\beta)$ is a difference of two convex functions of $\alpha$, and the first-order conditions on the simplex are $D_j=0$ for $\alpha_j>0$ and $D_j\leq 0$ for $\alpha_j=0$, which is the candidate-points version of Theorem \ref{thm:necessary condition}. Our default update rule is the pairwise Frank--Wolfe method \citep{Lacoste-Julien2015-fw} with reparametrization (see Appendix \ref{sec:PFW_repara}), which sets redundant $\alpha_j$ exactly to zero. As a result, the support is selected without specifying the number of experts. Alternatives include the projected gradient method onto the simplex \citep{Duchi2008-aa} and the other variants of the Frank--Wolfe method summarized in \cite{Lacoste-Julien2015-fw}.

\textbf{Update of $\gamma$ given $(\alpha,\beta)$}. 
For this update,
we minimize $F_{\lambda}$ with respect to the gate parameters $\gamma_{\mathcal{A}}=(\gamma_j)_{j\in\mathcal{A}}$ of the active support points $\mathcal{A}=\{j:\alpha_j>0\}$, holding $\alpha$, $\beta$, and the gates outside $\mathcal{A}$ fixed. This inner problem is solved by the L-BFGS-B \citep{Byrd1995-bj} method applied to $\gamma_{\mathcal{A}}$.

\textbf{Update of $\beta$ given $(\gamma,\alpha)$}. 
For this update, we use the majorization function. With the weights $w_{ij}$ fixed at the current $\beta=(\beta_1,\ldots,\beta_M)$, 
the definition \eqref{def:exmoe_weights}
and
Jensen's inequality give, for any $\beta'=(\beta'_1,\ldots,\beta'_M)$,
\begin{equation}
    \label{eq:exmoe_beta_majorization}
    \mathcal{L}(\alpha,\gamma,\beta')-\mathcal{L}(\alpha,\gamma,\beta)\leq\sum_{j=1}^{M}\bigl\{Q_j(\beta'_j)-Q_j(\beta_j)\bigr\},\qquad
    Q_j(b)\coloneq-\sum_{i=1}^{n}w_{ij}\log\psi\bigl(y_i;\theta(x_i^{\top}b),\phi\bigr),
\end{equation}
so that any $\beta'$ that decreases the separable surrogates $Q_j$ decreases $\mathcal{L}$. The minimization of $Q_j$ is a weighted GLM fit with the weights $w_{1j},\ldots,w_{nj}$, and its minimizer satisfies the weighted score equation 
\[\sum_{i=1}^{n}w_{ij}\,\nabla_{\beta}\log\psi(y_i;\theta(x_i^{\top}b),\phi)=0
\]
with the weights $w_{ij}$ in place of $v_i(\beta,\gamma)/\Xi_i(\hat{H})$. 
We solve it for each active support point $j\in\mathcal{A}$ with iteratively reweighted least squares (IRLS) started at the current $\beta_j$. We accept the update only if the minimizer lies in $B$.

Since the $\alpha$-update does not increase $\mathcal{L}$, the $\gamma$-update is accepted only if it does not increase $F_{\lambda}$, and the $\beta$-update does not increase $F_{\lambda}$, the objective $F_{\lambda}$ is non-increasing along the outer iterations. A final $\alpha$-update is taken so that the returned weights satisfy the first-order conditions.

Because $F_{\lambda}$ is not convex in $(\alpha,\gamma,\beta)$, Algorithm \ref{alg:exmoe} converges to a stationary point that depends on the initial value. We therefore run Step 1 of Algorithm \ref{alg:exmoe} from $R$ initial values, the first with $\gamma=0$ and the others with $\gamma_j$ drawn from $N(0,s_{\gamma}^{2}I)$, and select the run with the smallest negative log-likelihood $\mathcal{L}$.  The dispersion parameter $\phi$ is not estimated inside the algorithm and is chosen by $K$-fold cross-validation evaluating the log-likelihood.

Our default values of tuning parameters are summarized in Appendix \ref{app:tunings}.

\subsection{Grid-EXMoE: support points on the exemplars and a grid}
\label{subsec:fixed_dictionary}

When the dimension of $\gamma$ is relatively small,
we can also employ the following estimation procedure for the weights $\alpha$ 
with the gate parameters $\gamma$ fixed.
Let $\beta^{(1)},\ldots,\beta^{(M_{\beta})}$ be the exemplars generated by Algorithm \ref{alg:exemplar}, and let $\mathcal{G}_h=\{\gamma^{(1)},\ldots,\gamma^{(d_{\mathrm{grid}})}\}\subset\Gamma$ be a grid of spacing $h$ on the compact set $\Gamma$. We define the candidate dictionary given by the product set
\begin{equation}
    \label{def:dictionary}
    \mathcal{D}_M \coloneq\bigl\{(\beta^{(m)},\gamma^{(l)}):\ m=1,\ldots,M_{\beta},\ l=1,\ldots,d_{\mathrm{grid}}\bigr\},\qquad M=M_{\beta} d_{\mathrm{grid}},
\end{equation}
whose $j$-th point is $(\beta^{(m_j)},\gamma_j)$ with $\gamma_j=\gamma^{(l_j)}$,
and the mixing measure is restricted to $H=\sum_{j=1}^{M}\alpha_j\delta_{(\beta^{(m_j)},\gamma_j)}$ with $\alpha\in\Delta^{M-1}$.
With $(\beta^{(m_j)},\gamma_j)$ fixed, the mixing measure $H$ is determined by $\alpha$, so that the negative log-likelihood $\mathcal{L}(\alpha)\coloneq-\ell(H)=\sum_{i=1}^{n}\{\log \Lambda_i(H)-\log \Xi_i(H)\}$, with $\Xi_i$ and $\Lambda_i$ defined in \eqref{def:Lambda_Xi_tau}, is a function of $\alpha$ alone. The estimation is its minimization over the simplex. 

The weights are estimated by a two-stage procedure: the first stage is the MM update, which is used to accelerate the decrease in negative log-likelihoods for initial iterations. The second stage is the active-set refinement, which combines the Pairwise Frank-Wolfe method with reparametrization to induce sparsity and a step of adding candidate points. These are summarized in Algorithm \ref{alg:fixed_dictionary}, and those details, including the definition of some symbols, are described in Appendix \ref{app:alpha_update}.
Since $\mathcal{L}$ is not convex, we run the procedure from several initial values with same manner as in Algorithm \ref{alg:exmoe}.

Our default values of tuning parameters are summarized in Appendix \ref{app:tunings}.

\begin{algorithm}[h]
    \caption{Grid-EXMoE}
    \label{alg:fixed_dictionary}
    \KwModel{Data $\{(x_i,y_i,z_i)\}_{i=1}^{n}$; dispersion $\phi$; exemplars $\beta^{(1)},\ldots,\beta^{(M_{\beta})}$; grid $\mathcal{G}_h\subset\Gamma$}
    \KwTune{$K_1$, $\epsilon_0$, $\epsilon_{\mathrm{tol}}$, and $R$}
    \KwOut{ $\hat{G}(\cdot|z)=\sum_{j\in\hat{\mathcal{S}}}\hat{g}_j(z)\delta_{\beta^{(m_j)}}$ with $\hat{g}_j(z)=\hat{\alpha}_j e^{z^{\top}\gamma_j}/\sum_{k\in\hat{\mathcal{S}}}\hat{\alpha}_k e^{z^{\top}\gamma_k}$}
    \BlankLine
    Form the dictionary $\mathcal{D}_M$ in \eqref{def:dictionary}\;
    \BlankLine
    \For{$r=1,\ldots,R$}{
        Initialize $\alpha=(1/M,\ldots,1/M)^{\top}$ if $r=1$, and draw $\alpha$ at random otherwise\;
        \tcp{Step 1: warm start by the MM update}
        \For{$k=1,\ldots,K_1$}{
            $\alpha_j\leftarrow\alpha_j P_j(\alpha)/S_j(\alpha)$ for $j=1,\ldots,M$ as in \eqref{eq:app_mm_update}\;
        }
        Set $\mathcal{S}\leftarrow\{j:\tilde{\alpha}_j\geq\epsilon_0\}$ with the average gate weights $\tilde{\alpha}_j=s_j\alpha_j$, $s_j=S_j(\alpha)/n$, of Appendix \ref{sec:PFW_repara}, and $\alpha_j\leftarrow 0$ for $j\notin\mathcal{S}$\;
        \tcp{Step 2: active-set refinement}
        \Repeat{$\max_{j}\tilde{D}_j(\alpha)\leq\epsilon_{\mathrm{tol}}$}{
           Updata $\left\{\alpha_{j}:j\in \mathcal{S}\right\}$ by the pairwise Frank--Wolfe method with the reparametrization of Appendix \ref{sec:PFW_repara} until $\max_{j\in\mathcal{S}}|\tilde{D}_j(\alpha)|\leq\epsilon_{\mathrm{tol}}$, and set $\mathcal{S}\leftarrow\{j:\alpha_j>0\}$\;
            Compute $\tilde{D}_j(\alpha)$ for all $j=1,\ldots,M$ and set $\mathcal{T}\leftarrow\{j\notin\mathcal{S}:\tilde{D}_j(\alpha)>\epsilon_{\mathrm{tol}}\}$\;
            \If{$\mathcal{T}\neq\emptyset$}{
                Set $\tilde{\alpha}_j\leftarrow\epsilon_0$ for $j\in\mathcal{T}$, normalize $\tilde{\alpha}$, return to $\alpha_j=\tilde{\alpha}_j/s_j$, and set $\mathcal{S}\leftarrow\mathcal{S}\cup\mathcal{T}$\;
            }
        }
        Record $\alpha^{[r]}\leftarrow\alpha$, $\mathcal{S}^{[r]}\leftarrow\mathcal{S}$, and $\mathcal{L}^{[r]}\leftarrow\mathcal{L}(\alpha)$\;
    }
    \BlankLine
    \tcp{Step 3: selection of the restart}
    Set $\hat{r}\leftarrow\argmin_{r}\mathcal{L}^{[r]}$, $\hat{\mathcal{S}}\leftarrow\mathcal{S}^{[\hat{r}]}$, and $\hat{\alpha}_j\leftarrow\alpha_j^{[\hat{r}]}$ for $j\in\hat{\mathcal{S}}$\;
\end{algorithm}

\subsection{A posteriori check of the first-order condition}
\label{subsec:posterior_check}

Once $\hat{H}$ is computed from a fit,
the derivative can be evaluated at any candidate point, and thus the first-order optimality conditions are checked as follows.

Let $\{(\beta_j,\gamma_j)\}_{j=1}^{M}\subset B\times\Gamma$ be a set of candidate points, and write $D_j(\hat{H})\coloneq D(\hat{H};(\beta_j,\gamma_j))$. For a fit $\hat{H}$ with weights $\hat{\alpha}$, we report the two conditions
\begin{equation}
    \label{eq:dictionary_kkt}
    \text{(i)}\quad \max_{j=1,\ldots,M} D_j(\hat{H})\leq\epsilon_{\mathrm{tol}},\qquad
    \text{(ii)}\quad \max_{{j:\hat\alpha_j>0}}|D_j(\hat{H})|\leq\epsilon_{\mathrm{tol}}.
\end{equation}
The first requires that no candidate point increases the marginal log-likelihood, and the second that the directional derivative vanishes on the estimated support.

\subsection{Aggregation of experts}
\label{sec:experts_aggregation}

We finally present an aggregation procedure for the experts.
The output of EXMoE (Algorithm \ref{alg:exmoe}) tends to exhibit redundant support points even when the true measure $H^{\ast}$ in a simulation setting is a finite discrete measure. This overestimation phenomenon is pointed out for the classical NPMLE by \cite{Jiang2025-am}, in which the estimated prior measure is modified post-hoc by the Bayesian information criterion (BIC; \cite{schwarz1978estimating}).
Inspired by this modification, we consider the aggregation of support points based on hierarchical clustering of the estimated support points with respect to regression coefficients $\hat{\beta}_j$.

Our aggregation follows an observation in practice. The redundancy of $\hat{H}$ appears as groups of support points whose regression coefficients $\hat{\beta}_j$ are close to each other. Therefore, we aggregate the experts with similar regression coefficients $\hat{\beta}_j$ to the center of cluster $\bar{\beta}_{k}$ (Figure \ref{fig:agg_beta_after}) and combine the corresponding gate functions to form the aggregated gate function $g^{\dagger}_{k}$.

\begin{figure}[htbp]
    \centering
    \begin{subfigure}[b]{0.4\textwidth}
    \centering\includegraphics[width=\linewidth]{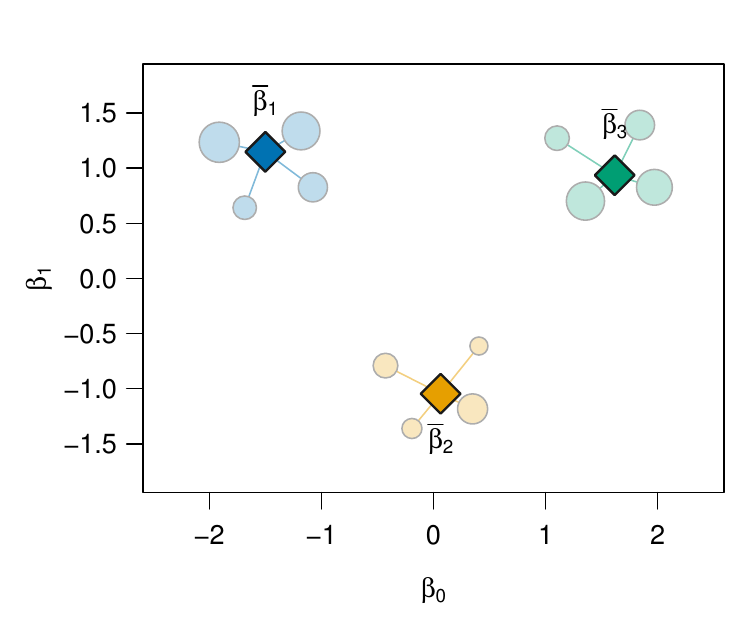}
    \caption{$(\beta_0,\beta_{1})$ plane}
        \label{fig:agg_beta_after}
     \end{subfigure}
     \begin{subfigure}[b]{0.4\textwidth}
     \centering\includegraphics[width=\linewidth]{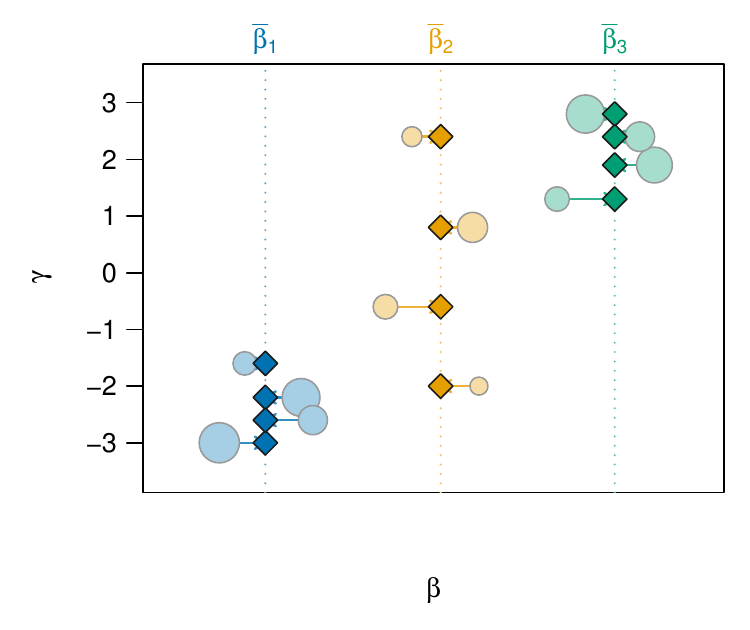}
     \caption{$(\beta,\gamma)$ hyperplane}
     \label{fig:aggregation_mixing}
     \end{subfigure}
    \caption{Illustration of the aggregation of the regression coefficients on a toy estimate $\hat{H}$ with $12$ support points (circles). The size of a circle is proportional to its average gate weight $\bar{g}_j$ in \eqref{def:average_gate_weight}. The $K=3$ clusters $\mathcal{C}_1,\mathcal{C}_2,\mathcal{C}_3$ (colors) and the aggregated coefficients $\bar{\beta}_k$ of \eqref{def:aggregated_gate} (diamonds).  (a): Aggregation to the three clusters with $p=2$ on the plane of $(\beta_0,\beta_1)$.
    (b): Geometrical illustration of the aggregation from $\hat{H}$ to $\bar{H}$ on $(\beta,\gamma)$ hyperplane.}
    \label{fig:aggregation_beta}
\end{figure}

Write the output of either algorithm as
\begin{equation}
    \label{def:fitted_gate}
    \hat{H}=\sum_{j\in\hat{\mathcal{S}}}\hat{\alpha}_j\,\delta_{(\hat{\beta}_j,\hat{\gamma}_j)},\qquad
    \hat{G}(\cdot|z)=\sum_{j\in\hat{\mathcal{S}}}\hat{g}_j(z)\,\delta_{\hat{\beta}_j},\qquad
    \hat{g}_j(z)\coloneq g_j(z;\hat{\alpha},\hat{\gamma})=\frac{\hat{\alpha}_j e^{z^{\top}\hat{\gamma}_j}}{\sum_{k\in\hat{\mathcal{S}}}\hat{\alpha}_k e^{z^{\top}\hat{\gamma}_k}},
\end{equation}
where $\hat{\mathcal{S}}$ is the index set of the support points with positive weights and $\hat{g}_j$ is the softmax gate \eqref{def:extended_softmax_gate_eq} of the $j$-th expert evaluated at the estimate. Relabeling the indices in $\hat{\mathcal{S}}$ as $1,\ldots,\hat{J}$ with $\hat{J}\coloneq|\hat{\mathcal{S}}|$ recovers the representation \eqref{def:proposed_NPMLE_discrete} in Section 2.

Firstly, we perform clustering of the support points $\{\hat{\beta}_j:j\in\hat{\mathcal{S}}\}$ to obtain $K$ distinct clusters $\{\mathcal{C}_1,\ldots,\mathcal{C}_K\}$ that partition $\hat{\mathcal{S}}$. Our default clustering method is Ward's method \citep{Ward1963-hierarchical} with the weights
\begin{equation}
    \label{def:average_gate_weight}
    \bar{g}_j\coloneq\frac{1}{n}\sum_{i=1}^{n}\hat{g}_j(z_i),\qquad j\in\hat{\mathcal{S}}.
\end{equation}
The weight is used to incorporate the importance of the gate on the observations. We define the aggregated support point, and the \emph{aggregated gate} by
\begin{equation}
    \label{def:aggregated_gate}
\bar{\beta}_k\coloneq\frac{\sum_{j\in\mathcal{C}_k}\bar{g}_j\hat{\beta}_j}{\sum_{j\in\mathcal{C}_k}\bar{g}_j},\qquad
    g^{\dagger}_k(z)\coloneq\sum_{j\in\mathcal{C}_k}\hat{g}_j(z),
    \qquad k=1,\ldots,K,
\end{equation}
respectively.
 The aggregated conditional prior and the corresponding marginal density are given by
\begin{equation}
    \label{def:aggregated_prior}
    \bar{G}(\cdot|z)\coloneq\sum_{k=1}^{K}g^{\dagger}_k(z)\,\delta_{\bar{\beta}_k},\qquad
    f_{\bar{G}}(y|x,z)=\sum_{k=1}^{K}g^{\dagger}_k(z)\,\psi\bigl(y;\theta(x^{\top}\bar{\beta}_k),\phi\bigr),
\end{equation}
respectively.
The aggregated gate $g^{\dagger}_k(z)$ is the probability that a unit with covariate $z$ belongs to the $k$-th regression relationship, and $\bar{\beta}_k$ is the regression coefficient of that relationship.

The number of clusters $K$ is determined through the elbow method with the Kneedle rule (see Appendix \ref{sec:keedle}) or BIC calculated as

\begin{equation}
    \label{def:bic_aggregation}
    \mathrm{BIC}(K)\coloneq-2\sum_{i=1}^{n}\log f_{\bar{G}_K}(y_i|x_i,z_i)+pK\log n,\qquad \hat{K}\coloneq\argmin_{K=1,\ldots,K_{\max}}\mathrm{BIC}(K),
\end{equation}
where $f_{\bar{G}_K}$ is the marginal density \eqref{def:aggregated_prior} of the aggregated model with $K$ clusters. The elbow method is our default, and BIC is reported for comparison. In our experience, BIC tends to select a larger $K$ than the elbow method when the true prior is not a finite discrete measure.

From the perspective of $\hat{H}$, the aggregation moves the regression coefficient of each support point of $\hat{H}$ to the center of its cluster and keeps the weight and the gate parameter (Figure \ref{fig:aggregation_mixing}), that is,
\begin{equation}
    \label{def:aggregated_H}
    \bar{H}\coloneq\sum_{k=1}^{K}\sum_{j\in\mathcal{C}_k}\hat{\alpha}_j\,\delta_{(\bar{\beta}_k,\hat{\gamma}_j)}.
\end{equation}
Thus $G_{\bar{H}}(\cdot|z)=\bar{G}(\cdot|z)$ remains in the infinite Mixture-of-Expertizing class $\mathcal{G}_{\mathrm{iMoE}}$.

\section{Theoretical results}
In this section, we show several theoretical results for the iMoE-NPMLE \eqref{def:proposed_NPMLE}.

\label{sec:Theory}
\subsection{Posterior representation in the hierarchical unit-level GLM}
\label{subsec:Theory_interpretation}

We first derive a representation of the conditional covariate-dependent prior in a hierarchical unit-level GLM in which the covariate $z_{i}$ is random.
Consider the hierarchical unit-level GLM
\begin{equation}
    \label{model:generative_representation}
    \begin{split}
        y_i \sim \mathrm{EF}(\mu_i,\phi), \
        \mu_{i}= g^{-1}(\eta_i),\   \eta_i = x_i^{\top}\beta_i, \quad z_i \overset{\mathrm{ind}}{\sim} p(z|\gamma_i),
        \quad (\beta_i,\gamma_i) \overset{\mathrm{i.i.d}}{\sim} \tilde{H}, \quad  \quad i=1,\ldots,n,
    \end{split}
\end{equation}
where $\tilde{H}\in \mathcal{P}(B\times\Gamma)$ is a prior probability measure on the pairs of the regression coefficient and the covariate parameter, and $p(z|\gamma)$ is the distribution of the covariate $z_i$ given its parameter $\gamma_i$. The explanatory variable $x_i$ affects only through the linear predictor and is treated as given, as in the model \eqref{model: unit_level_GLM_proposed}. 

Here we assume that the covariate distribution is an exponential family whose density takes the standard form
\begin{equation}
    \label{def:explanatory distribution}
    p(z|\gamma) = \frac{h(z)\exp(z^{\top}\gamma)}{c(\gamma)}, \quad  c(\gamma)\coloneq \int h(z) \exp(z^{\top}\gamma) \dd z,
\end{equation}
where $h(z)$ is a known non-negative function and $c(\gamma)$ is the normalizing constant, so that $\gamma$ is the canonical parameter and $z$ is the sufficient statistic. Then Bayes' theorem gives the posterior distribution of $(\beta_i,\gamma_i)$ given $z$ as
\begin{equation}
    \label{eq:posterior_interpretation_beta_gamma}
    \begin{split}
        \dd G(\beta,\gamma|z) &= \frac{p(z|\gamma)\dd \tilde{H}(\beta,\gamma)}{\int_{B\times\Gamma} p(z|\gamma)\dd \tilde{H}(\beta,\gamma)} \\
        &= \frac{e^{z^{\top}\gamma}\,c(\gamma)^{-1}\dd \tilde{H}(\beta,\gamma)}{\int_{B\times\Gamma} e^{z^{\top}\gamma}\,c(\gamma)^{-1}\dd \tilde{H}(\beta,\gamma)}.
    \end{split}
\end{equation}
Define the probability measure $H$ by 
\[
\dd H(\beta,\gamma)\coloneq 
\dd \tilde{H}(\beta,\gamma)/\{c(\gamma)\,d\} \quad\text{with}\quad
d\coloneq\int_{B\times\Gamma} c(\gamma)^{-1}d\tilde{H}(\beta,\gamma).\]
Marginalizing \eqref{eq:posterior_interpretation_beta_gamma} over $\gamma\in\Gamma$, we obtain the posterior distribution of the regression coefficient given the covariate,
\begin{equation}
    \label{eq:posterior_interpretation_gamma}
    \dd G(\beta|z) = \frac{\int_{\Gamma} e^{z^{\top}\gamma}\dd H(\beta,\gamma)}{\int_{B\times\Gamma} e^{z^{\top}\gamma}\dd H(\beta,\gamma)},
\end{equation}
which is exactly the conditional prior $G_H(\cdot|z)$ in $\mathcal{G}_{\mathrm{iMoE}}$ with mixing measure $H$.

Conversely, given any $H\in\mathcal{P}(B\times\Gamma)$ with $\int c(\gamma)\dd H<\infty$, the measure $\dd \tilde{H}\propto c(\gamma)\dd H$ generates $G_H(\cdot|z)$ as its posterior under \eqref{model:generative_representation}. Therefore, our estimation target can be viewed as the posterior distribution of the regression coefficient in the hierarchical model \eqref{model:generative_representation}, and vice versa.

The following proposition summarizes the above correspondence formally.

\begin{proposition}[Posterior representation of $G_{H}$]
\label{prop:covariate_representation}
Let $p^{\ast}$ be any probability density on $\mathbb{R}^{d}$ and let $H\in\mathcal{P}(B\times\Gamma)$ with $\Gamma$ bounded. Define
\begin{equation}
    \label{def:covariate_construction}
    h(z)\coloneq\frac{p^{\ast}(z)}{\int_{B\times\Gamma} e^{z^{\top}\gamma}\dd H(\beta,\gamma)},\qquad
    \text{and}\qquad
    \dd \tilde{H}(\beta,\gamma)\coloneq c(\gamma)\dd H(\beta,\gamma).
\end{equation}
Then 
$p^{\ast}$ becomes the marginal density of $z_{i}$
and $G_{H}(\cdot|z)$ becomes the posterior distribution of the regression coefficient, respectively:
that is,
\begin{equation}
    \label{eq:covariate_representation}
    \int_{B\times\Gamma} p(z|\gamma)\dd \tilde{H}(\beta,\gamma) = p^{\ast}(z)\quad\text{for all } z\in\mathbb{R}^{d},
    \quad
    \text{and}
    \quad
    \dd G(\beta|z) = \dd G_{H}(\beta|z)\quad\text{for } p^{\ast}\text{-a.e.} z.
\end{equation}
\end{proposition}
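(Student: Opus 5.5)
Write $\Lambda(z)\coloneq\int_{B\times\Gamma}e^{z^{\top}\gamma}\dd H(\beta,\gamma)$ for the normalizer appearing in $h$. The plan is a direct computation: first make sure all the objects in \eqref{def:covariate_construction} are well defined and finite, then substitute them into the exponential-family form \eqref{def:explanatory distribution} and watch the factors $c(\gamma)$ cancel.

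\textbf{Well-definedness.} Since $\Gamma$ is bounded, say $\|\gamma\|\le R$ on $\Gamma$, we have $e^{-R\|z\|}\le e^{z^{\top}\gamma}\le e^{R\|z\|}$ for every $\gamma\in\Gamma$. Integrating against the probability measure $H$ gives $0<\Lambda(z)<\infty$ for every $z$, so $h(z)=p^{\ast}(z)/\Lambda(z)$ is a well-defined non-negative function.

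\textbf{Finiteness of $c(\gamma)$.} We need $c(\gamma)=\int h(z)e^{z^{\top}\gamma}\dd z<\infty$ on $\mathrm{supp}(H)$. A pointwise bound $e^{z^{\top}\gamma}/\Lambda(z)\le e^{2R\|z\|}$ does not help, because $p^{\ast}$ may have heavy tails. I would instead integrate over $H$ and use Tonelli:
\[
\int_{B\times\Gamma}c(\gamma)\dd H(\beta,\gamma)=\int p^{\ast}(z)\,\frac{\int e^{z^{\top}\gamma}\dd H(\beta,\gamma)}{\Lambda(z)}\dd z=\int p^{\ast}(z)\dd z=1 .
\]
Hence $c(\gamma)<\infty$ for $H$-a.e.\ $(\beta,\gamma)$. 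Two further facts follow at once: $\dd\tilde H=c(\gamma)\dd H$ is itself a probability measure, and $p(z|\gamma)$ is a bona fide density for $\tilde H$-a.e.\ $\gamma$. Points where $c(\gamma)=\infty$ form an $H$-null set, and I would remove them, or equivalently note that they carry no $\tilde H$-mass.

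\textbf{First identity.} With these facts in hand the marginal is a one-line computation:
\[
\int p(z|\gamma)\dd\tilde H=\int \frac{h(z)e^{z^{\top}\gamma}}{c(\gamma)}\,c(\gamma)\dd H=h(z)\Lambda(z)=p^{\ast}(z),
\]
and this holds for every $z$.

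\textbf{Second identity.} Bayes' rule \eqref{eq:posterior_interpretation_beta_gamma} gives the joint posterior. In the numerator, $p(z|\gamma)\dd\tilde H=h(z)e^{z^{\top}\gamma}\dd H$. The denominator is $p^{\ast}(z)$ by the first identity, and it is positive wherever $p^{\ast}(z)>0$. Hence on that set
\[
\dd G(\beta,\gamma|z)=\frac{e^{z^{\top}\gamma}\dd H(\beta,\gamma)}{\Lambda(z)},
\]
because the factor $h(z)$ cancels with $h(z)\Lambda(z)=p^{\ast}(z)$. Marginalizing over $\gamma$ yields exactly \eqref{def:infinite_soft_max_prior}. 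The qualifier ``$p^{\ast}$-a.e.'' simply discards the set $\{p^{\ast}=0\}$, where the posterior is undefined.

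The main obstacle is the finiteness of $c(\gamma)$ and the normalization of $\tilde H$. The Tonelli argument above handles both, with the caveat that finiteness holds only $H$-almost everywhere; this caveat should be stated explicitly.
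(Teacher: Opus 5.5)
Your proposal is correct and follows essentially the same route as the paper's proof. Boundedness of $\Gamma$ gives $0<\Lambda(z)<\infty$, Tonelli gives $\int c(\gamma)\dd H=1$, so that $c(\gamma)<\infty$ for $H$-a.e.\ $(\beta,\gamma)$ and $\tilde H$ is a probability measure, and both identities then follow by cancelling $c(\gamma)$ and $h(z)$ after discarding the $p^{\ast}$-null set $\{p^{\ast}=0\}$ (you leave $c(\gamma)>0$ implicit, which holds because $h>0$ wherever $p^{\ast}>0$ and is needed for $p(z|\gamma)$ to be a density).
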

The proof is given in Appendix \ref{app:proofs}.
Note that $h(z)$ and $c(\gamma)$ do not appear in the likelihood of the iMoE-NPMLE, and hence are not estimated by it. Nevertheless, they can be obtained from \eqref{def:covariate_construction} by combining $\hat{H}$ with a density estimate of $p^{\ast}(z)$. 

\subsection{Representation ability}
\label{subsec:Theory_representaiton}

We proceed to a proposition and a theorem on the representation ability of the infinite Mixture-of-Expertizing class and of the discrete solution of the iMoE-NPMLE.

The following proposition shows that the bounded support $\Gamma$ of the gate parameter makes every member of $\mathcal{G}_{\mathrm{iMoE}}$ Lipschitz continuous in the covariate $z$, so that the size of $\Gamma$ controls how sharply the conditional prior can change with $z$.
For probability measures $\mu$ and $\nu$ on $B$, let $\|\mu-\nu\|_{\mathrm{TV}}\coloneq\sup_{A}|\mu(A)-\nu(A)|$ be the total variation distance, where the supremum is taken over Borel sets $A\subset B$, and let $W_1(\mu,\nu)\coloneq\inf_{\pi}\int\|u-v\|_2\,d\pi(u,v)$ be the $1$-Wasserstein distance, where the infimum is taken over couplings $\pi$ of $\mu$ and $\nu$. Since $B$ is bounded, $W_1$ metrizes the weak convergence on $\mathcal{P}(B)$. Let
\begin{equation}
    \label{def:R_gamma_diam_B}
    R_{\Gamma}\coloneq\sup_{\gamma\in\Gamma}\|\gamma\|_2,\qquad
    \mathrm{diam}(B)\coloneq\sup_{\beta,\beta'\in B}\|\beta-\beta'\|_2,
\end{equation}
both of which are finite because $B$ and $\Gamma$ are bounded.

\begin{proposition}[Lipschitz continuity of the iMoE class]
\label{prop:lipschitz_gamma}
Let $H\in\mathcal{P}(B\times\Gamma)$ and $z,z'\in\mathbb{R}^{d}$. Then the conditional priors 
$G_H(\cdot|z), G_H(\cdot|z') \in \mathcal{G}_{\mathrm{iMoE}}$
at $z$ and $z'$ satisfy
\begin{equation}
    \label{eq:lipschitz_w1}
    \begin{split}
        \|G_H(\cdot|z)-G_H(\cdot|z')\|_{\mathrm{TV}}&\leq 1-e^{-2R_{\Gamma}\|z-z'\|_2},\\
        W_1\bigl(G_H(\cdot|z),G_H(\cdot|z')\bigr)&\leq 2R_{\Gamma}\,\mathrm{diam}(B)\,\|z-z'\|_2.
    \end{split}
\end{equation}
In particular, $z\mapsto G_H(\cdot|z)$ is Lipschitz continuous from $(\mathbb{R}^{d},\|\cdot\|_2)$ to $(\mathcal{P}(B),W_1)$ with constant $2R_{\Gamma}\,\mathrm{diam}(B)$ independent from $H$.
\end{proposition}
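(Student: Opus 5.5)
The plan is to write both conditional priors as reweightings of one common measure and to control the ratio of their densities pointwise. Define the finite measures $\nu_z(A)\coloneq\int_{A\times\Gamma}e^{\gamma^{\top}z}\dd H(\beta,\gamma)$ on $B$, so that $G_H(\cdot|z)=\nu_z/\nu_z(B)$. Both $\nu_z$ and $\nu_{z'}$ are absolutely continuous with respect to $\nu_0$, the $\beta$-marginal of $H$. Their densities are $q_z(\beta)=\int_\Gamma e^{\gamma^\top z}\dd H(\gamma|\beta)$, obtained by disintegrating $H$ over $\beta$. Since $|\gamma^\top(z-z')|\le R_\Gamma\|z-z'\|_2\eqqcolon\delta$ for every $\gamma\in\Gamma$, we have $e^{-\delta}e^{\gamma^\top z'}\le e^{\gamma^\top z}\le e^{\delta}e^{\gamma^\top z'}$ pointwise in $\gamma$. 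Integrating gives $e^{-\delta}\nu_{z'}(A)\le\nu_z(A)\le e^{\delta}\nu_{z'}(A)$ for every Borel $A\subset B$, and in particular the same bounds hold for the normalizers $\nu_z(B)$ and $\nu_{z'}(B)$.

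Dividing, I obtain the key two-sided bound
\[
e^{-2\delta}\,G_H(A|z')\le G_H(A|z)\le e^{2\delta}\,G_H(A|z'),\qquad A\subset B \text{ Borel}.
\]
The TV bound then follows. For any $A$, $G_H(A|z)-G_H(A|z')=G_H(A^c|z')-G_H(A^c|z)\le G_H(A^c|z')(1-e^{-2\delta})\le 1-e^{-2\delta}$, using the lower bound applied to $A^c$. Symmetrizing in $z,z'$ gives $\|G_H(\cdot|z)-G_H(\cdot|z')\|_{\mathrm{TV}}\le 1-e^{-2R_\Gamma\|z-z'\|_2}$. The main subtlety is avoiding the weaker $e^{2\delta}-1$ bound, which is why I use the lower inequality on the complement rather than the upper inequality on $A$.

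For $W_1$, I use the standard coupling bound on a bounded space: $W_1(\mu,\nu)\le\mathrm{diam}(B)\,\|\mu-\nu\|_{\mathrm{TV}}$. This holds because the maximal coupling keeps mass $1-\|\mu-\nu\|_{\mathrm{TV}}$ on the diagonal and moves the remainder by at most $\mathrm{diam}(B)$. The elementary inequality $1-e^{-x}\le x$ then gives $W_1\le\mathrm{diam}(B)\cdot 2R_\Gamma\|z-z'\|_2$. The constant depends only on $R_\Gamma$ and $\mathrm{diam}(B)$, not on $H$, which yields the Lipschitz statement.

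The only step needing care is the measure-theoretic setup, namely that $\nu_z$ is a well-defined finite positive measure with $\nu_z(B)\in[e^{-R_\Gamma\|z\|}, e^{R_\Gamma\|z\|}]$, so all divisions are legitimate. This follows directly from the compactness of $\Gamma$. I expect no real obstacle beyond this and the choice of the complement trick above.
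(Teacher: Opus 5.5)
Your proposal is correct and follows essentially the same route as the paper. The pointwise bound $|\gamma^{\top}(z-z')|\le R_{\Gamma}\|z-z'\|_2$ is integrated to the ratio bound $e^{-2\delta}G_H(A|z')\le G_H(A|z)\le e^{2\delta}G_H(A|z')$; the total variation bound uses only the lower inequality (your complement step is equivalent to the paper's direct use of $G_H(A|z)\ge e^{-2\delta}G_H(A|z')$ followed by exchanging $z$ and $z'$); and $W_1$ is controlled by the maximal coupling together with $1-e^{-x}\le x$, with the disintegration of $H$ over $\beta$ being harmless but unnecessary.
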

The proof is given in Appendix \ref{app:proofs}. 
The quantity $1/R_{\Gamma}$ is the scale on which the prior can change. 
By the first bound in \eqref{eq:lipschitz_w1}, a total variation distance of at least $1-\epsilon$ requires $\|z-z'\|_2\geq\log(1/\epsilon)/(2R_{\Gamma})$. 
Hence no member of the class can switch its support over a distance in $z$ shorter than order $1/R_{\Gamma}$, whatever the number and the location of its support points. The bound $R_{\Gamma}$ therefore plays the role of a bandwidth.

The following Theorem \ref{thm:uat_gate} shows that the vectorized aggregated softmax gate function
$g^{\dagger}(z;\alpha,\gamma)\coloneq(g^{\dagger}_1,\ldots,g^{\dagger}_C)^{\top}:\mathbb{R}^{d}\to \mathring\Delta^{C-1}$ can approximate any continuous gate function on a compact set. 
The theorem is an existence result with a limiting case of the slopes in the gates. One characteristic structure in the proof is that gates share their slopes within groups, and the intercepts needed to represent a partition of the covariate space are recovered from the weights $\alpha$ by a reparametrization.

\begin{theorem}[Universal approximation theorem]
    \label{thm:uat_gate}
    Let $C\geq2$ be a fixed integer, let $\mathcal{Z}\subset\mathbb{R}^{d}$ be a compact set contained in $\mathrm{ball}(0,\rho)$ for some $\rho>0$, and let $\Pi^{\ast}_{C}\coloneq\left\{\pi^{\ast}:\mathcal{Z}\rightarrow \mathring\Delta^{C-1} \;\middle|\; \pi^{\ast}\;\text{is continuous} \right\}$ denote the class of continuous gate functions valued in the open simplex $\mathring\Delta^{C-1}\coloneq\{\pi\in\mathbb{R}^{C}: \pi_c>0,\;\sum_{c=1}^{C}\pi_c=1\}$. 
    
    Then, for every $q\in[1,\infty)$, every $\epsilon>0$, and every $\pi^{\ast}\in\Pi^{\ast}_{C}$, there exist a positive integer $L$, $J=LC$ softmax gates \eqref{def:extended_softmax_gate_eq} with parameters $\alpha\in\Delta^{J-1}$ and $\gamma_1,\ldots,\gamma_{J}\in\mathbb{R}^{d}$, and $C$ pairwise disjoint index subsets $\mathcal{S}_{c}\subset\{1,\ldots,J\}$, $c=1,\ldots,C$, such that
    \begin{equation}
        \left(
        \int_{\mathcal{Z}} \| g^{\dagger}(z;\alpha,\gamma)-\pi^{\ast}(z)\|_{\infty}^{q}\dd z
        \right)^{1/q}  <\epsilon,
    \end{equation}
    where $g^{\dagger}(z;\alpha,\gamma)\coloneq (g^{\dagger}_1(z;\alpha,\gamma),\ldots,g^{\dagger}_C(z;\alpha,\gamma))^{\top}$ and $g^{\dagger}_{c}(z;\alpha,\gamma)\coloneq\sum_{j\in \mathcal{S}_{c}}g_{j}(z;\alpha,\gamma)$, $c=1,\ldots,C$. 
\end{theorem}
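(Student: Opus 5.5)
Our plan is to approximate $\pi^{\ast}$ by a piecewise-constant gate on a fine partition of the bounded region, and then to realize each piece by sharpening softmax gates whose sum over a group $\mathcal{S}_c$ reproduces the constant value $\pi^{\ast}_c$ on that piece. The $L^q$ (rather than sup) norm is what makes this work: the sharp softmax transitions fail only on thin boundary layers whose Lebesgue measure tends to zero.

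\textbf{Step 1 (discretization).} Since $\mathcal{Z}$ is compact, $\pi^{\ast}$ is uniformly continuous on it. Choose a partition of $\mathrm{ball}(0,\rho)$ into $L$ cells $A_1,\ldots,A_L$ with piecewise-linear boundaries, each of diameter small enough that $\|\pi^{\ast}(z)-\pi^{\ast}(z_l)\|_\infty<\epsilon/2$ for $z\in A_l\cap\mathcal{Z}$, where $z_l$ is a representative point of the cell. The natural choice is a Voronoi partition generated by distinct centers $c_1,\ldots,c_L$, because a Voronoi cell is exactly the region where the affine function $c_l^{\top}z-\|c_l\|^2/2$ is maximal among $l=1,\ldots,L$. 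Set $\pi^{(l)}\coloneq\pi^{\ast}(z_l)\in\mathring\Delta^{C-1}$.

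\textbf{Step 2 (construction).} Index the gates by pairs $(l,c)$, which gives $J=LC$, and put $\mathcal{S}_c\coloneq\{(l,c):l=1,\ldots,L\}$. Within the block $l$, all gates share the slope $\gamma_{(l,c)}=t\,c_l$ with $t>0$ large; this is the shared-slope structure announced before the theorem. The intercepts must be
\[
\log\alpha_{(l,c)}=-t\|c_l\|^2/2+\log\pi^{(l)}_c ,
\]
followed by normalization of $\alpha$. A common positive rescaling of all $\alpha_j$ cancels in \eqref{def:extended_softmax_gate_eq}, and since $\pi^{(l)}_c>0$, every such $\alpha$ lies in $\Delta^{J-1}$ with strictly positive entries. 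With these choices,
\[
g^{\dagger}_c(z)=\frac{\sum_l \pi^{(l)}_c e^{t(c_l^{\top}z-\|c_l\|^2/2)}}{\sum_l e^{t(c_l^{\top}z-\|c_l\|^2/2)}},
\]
because $\sum_c\pi^{(l)}_c=1$ for each $l$. As $t\to\infty$, this converges to $\pi^{(l)}_c$ at every interior point of the Voronoi cell $A_l$.

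\textbf{Step 3 (convergence in $L^q$).} The Voronoi boundaries are finite unions of hyperplane pieces and therefore have Lebesgue measure zero. Convergence thus holds almost everywhere on $\mathcal{Z}$. Both $g^{\dagger}$ and $\pi^{\ast}$ take values in the simplex, so the integrand is bounded by $1$ and $\mathcal{Z}$ has finite measure. Dominated convergence then gives
\[
\Bigl(\int_{\mathcal{Z}}\|g^{\dagger}-\bar\pi\|_\infty^q\dd z\Bigr)^{1/q}<\frac{\epsilon}{2}
\]
for $t$ large, where $\bar\pi$ is the piecewise-constant function taking the value $\pi^{(l)}$ on $A_l$.

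\textbf{Step 4 (error budget).} By Minkowski's inequality, combine this bound with the Step 1 bound $\|\bar\pi-\pi^{\ast}\|_{L^q(\mathcal{Z})}\le(\epsilon/2)\,|\mathcal{Z}|^{1/q}$. To make the total strictly below $\epsilon$, choose the Step 1 tolerance as $\epsilon/(2\max(1,|\mathcal{Z}|^{1/q}))$ instead of $\epsilon/2$.

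The main obstacle is Step 2: the softmax gates must produce a partition with arbitrary interior values $\pi^{(l)}$, even though the slopes alone only select a winning cell. The shared-slope, split-intercept trick resolves this, because within a block the ratio of gates is $\pi^{(l)}_c$ independently of $z$. The only remaining care is that the intercepts involve $t\|c_l\|^2$, so every $\alpha_j$ must stay positive after normalization, and they do. We also expect the representation to be non-uniform near cell boundaries, which is exactly why the statement uses $q<\infty$.
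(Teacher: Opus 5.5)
Your proposal is correct and is essentially the paper's proof. It uses the same Voronoi partition from a net of distinct centers, slopes shared within each block of $C$ gates, intercepts absorbed into $\alpha$, and within-block weights equal to $\pi^{\ast}(z_l)$; the paper's choice $\gamma_j=2z_l/\tau$, $w_{j_{l,c}}=\pi_{lc}/L$ is exactly your construction with $t=2/\tau$.

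The only real difference is the limiting step. You use dominated convergence, relying on the bisector hyperplanes being Lebesgue-null. The paper instead proves the explicit bound $L\exp(-2d_{\min}\eta/\tau)$ outside an $\eta$-neighborhood of those hyperplanes and bounds that neighborhood's measure by $\eta L^{2}\omega_{d-1}\rho^{d-1}$, which is longer but quantitative.

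Two small details should be made explicit. Take the centers to be a $\delta$-net of the ball, so that each cell has diameter at most $2\delta$. Choose each representative $z_l$ in $A_l\cap\mathcal{Z}$ when that set is nonempty, so that $\pi^{\ast}(z_l)$ is defined.
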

The proof is given in Appendix \ref{app:proofs}. The gates constructed in the proof have the fused slope structure of Condition \ref{cond:fusion} in Appendix \ref{app:fused}, with $L$ groups of $C$ gates sharing the slope $\xi_l/\tau$ and with weights $\alpha$ given by the reparametrization \eqref{eq:alpha_reparametrization}. 

\section{Simulation studies}
\label{sec:Simulation}

This section provides several simulation studies.

\subsection{Data generating settings}
We illustrate the estimator on three synthetic data sets with a Gaussian response and a one-dimensional covariate. Throughout this section, $x_i=(1,s_i)^{\top}$ with $s_i\sim\mathrm{Unif}(-3,3)$, the gate covariate is $z_i=s_i$ (so that $z$ contains no constant coordinate, $p=2$ and $d=1$), and $y_i=x_i^{\top}\beta_i+\varepsilon_i$ with $\varepsilon_i\sim N(0,\sigma^{2})$ and $\beta_i\sim G^{\ast}(\cdot|s_i)$, where the dispersion $\sigma$ is treated as unknown. The three conditional priors $G^{\ast}(\cdot|s)$ are chosen to illustrate different types of covariate dependence. The first is a two-point measure whose weights oscillate in $s$. The second is a six-point measure whose weights are smooth functions of $s$, where two dominant support points switch at $s=-2$ and $s=2$. The third is a uniform distribution supported on a circle whose center and radius switch at $s=0$. In other words, the prior is a hybrid case of continuous measure in $\beta$ and discontinuous changepoint in $s$. The three data generating processes are summarized as follows.
\begin{enumerate}
    \item[(a)] \emph{sin}. The prior is a two-point measure whose weights oscillate in $s$:
        \begin{equation}
        \label{def:dgp_sin}
        G^{\ast}(\cdot|s)=\pi^{\ast}(s)\,\delta_{(1,\,2)}+\{1-\pi^{\ast}(s)\}\,\delta_{(-1,\,-2)},\qquad
        \pi^{\ast}(s)=\frac{1+\sin(\pi s)}{2},
    \end{equation}
    with $n=800$ and $\sigma=0.5$. The gate completes three periods on $(-3,3)$, so that the dominant support point changes six times. This setting is a simple example of a continuous conditional prior with a small number of support points and may not be in $\mathcal{G}_{\mathrm{iMoE}}$. From the perspective of Theorem \ref{thm:uat_gate}, the gate function $\pi^{\ast}(s)$ can be approximated by a sum of many  softmax gates with slope of limits. Thus, the simulation is to certify how well the finite-discrete gates with limited slope can approximate the true gate function in practice.
    \item[(b)] \emph{smooth Voronoi}. The prior is a six-point measure whose weights are the product of a smooth membership of three regions and fixed within-region weights:
    \begin{equation}
        \label{def:dgp_smooth_voronoi}
        G^{\ast}(\cdot|s)=\sum_{r=1}^{3}\sum_{a=1}^{2}w_{ra}\phi_r(s)\,\delta_{\beta_{ra}},\qquad
        \phi_r(s)=\frac{\exp\{-(s-\mu_r)^{2}/(2h_s^{2})\}}{\sum_{r^{\prime}=1}^{3}\exp\{-(s-\mu_{r^{\prime}})^{2}/(2h_s^{2})\}},
    \end{equation}
    with the region centers $(\mu_1,\mu_2,\mu_3)=(-2,0,2)$, the bandwidth $h_s=0.6$, and the support points and weights
    \begin{equation}
        \label{def:dgp_smooth_voronoi_atoms}
        \begin{array}{lll}
            \beta_{11}=(2,\,1.5)^{\top},  & \beta_{12}=(1,\,-0.5)^{\top}, & (w_{11},w_{12})=(0.7,\,0.3), \\
            \beta_{21}=(0,\,2)^{\top},    & \beta_{22}=(-1,\,-1)^{\top},  & (w_{21},w_{22})=(0.5,\,0.5), \\
            \beta_{31}=(-2,\,2)^{\top},   & \beta_{32}=(3,\,-1.5)^{\top}, & (w_{31},w_{32})=(0.6,\,0.4),
        \end{array}
    \end{equation}
    with $n=1500$ and $\sigma=0.3$. 

    By transforming the inside of exponent in $\phi_r(s)$, we have
    \begin{equation}
        \label{eq:dgp_smooth_voronoi_softmax}
        \phi_r(s)=\frac{\exp\{-\mu_r^{2}/(2h_s^{2})\}\exp\{(\mu_r/h_s^{2})\,s\}}{\sum_{r^{\prime}=1}^{3}\exp\{-\mu_{r^{\prime}}^{2}/(2h_s^{2})\}\exp\{(\mu_{r^{\prime}}/h_s^{2})\,s\}}.
    \end{equation}
    Then, with reparameterizations $\gamma_{ra}=\mu_r/h_s^{2}$, and  
    \begin{equation}
        \label{eq:dgp_smooth_voronoi_alpha}
        \alpha_{ra}=\frac{w_{ra}\,e^{-\mu_r^{2}/(2h_s^{2})}}{\sum_{r^{\prime}=1}^{3}\sum_{a^{\prime}=1}^{2}w_{r^{\prime}a^{\prime}}\,e^{-\mu_{r^{\prime}}^{2}/(2h_s^{2})}},
    \end{equation}
    gives 
    \begin{equation}
        w_{ra}\phi_r(s)=
        \frac{\alpha_{ra}\,e^{\gamma_{ra}s}}{\sum_{r^{\prime}=1}^{3}\sum_{a^{\prime}=1}^{2}\alpha_{r^{\prime}a^{\prime}}\,e^{\gamma_{r^{\prime}a^{\prime}}s}}.
    \end{equation}
    Hence, $G^{\ast}$ belongs to the infinite Mixture-of-Expertizing class with six support points.

    \item[(c)] \emph{switching circle}. The prior is the uniform distribution only supported on a circle whose center and radius switch at $s=0$:
    \begin{equation}
        \label{def:dgp_threshold_circle}
        G^{\ast}(\cdot|s)=
        \begin{cases}
            \mathrm{Unif}\bigl(\mathcal{C}(c_A,r_A)\bigr), & s\leq 0,\\
            \mathrm{Unif}\bigl(\mathcal{C}(c_B,r_B)\bigr), & s>0,
        \end{cases}
        \qquad
        \mathcal{C}(c,r)\coloneq\bigl\{c+r(\cos\vartheta,\sin\vartheta)^{\top}:\vartheta\in[0,2\pi)\bigr\},
    \end{equation}
    where $\mathrm{Unif}(\mathcal{C}(c,r))$ is the distribution of $c+r(\cos\vartheta,\sin\vartheta)^{\top}$ with $\vartheta\sim\mathrm{Unif}(0,2\pi)$, $c_A=(-1.5,\,0.5)^{\top}$, $r_A=0.6$, $c_B=(1.5,\,-0.6)^{\top}$, $r_B=1.0$, $n=2500$, and $\sigma=0.3$. The prior is continuous in $\beta$ but discontinuous in $s$ with a jump at $s=0$. Thus, the true prior lies outside the $\mathcal{G}_{\mathrm{iMoE}}$ for any fixed $\Gamma$ by Proposition \ref{prop:lipschitz_gamma}. Our interest is how well the estimator can approximate both the continuous distribution in $\beta$ and the discontinuity at $s=0$.

\end{enumerate}
Figure \ref{fig:sim_true} shows the true gate functions of the sin and smooth Voronoi priors and the support of the switching circle prior in the top row, and the three simulated data sets as scatter plots in the bottom row.

\begin{figure}[!htb]
    \centering
    \begin{subfigure}[b]{0.32\textwidth}\centering\includegraphics[width=\linewidth]{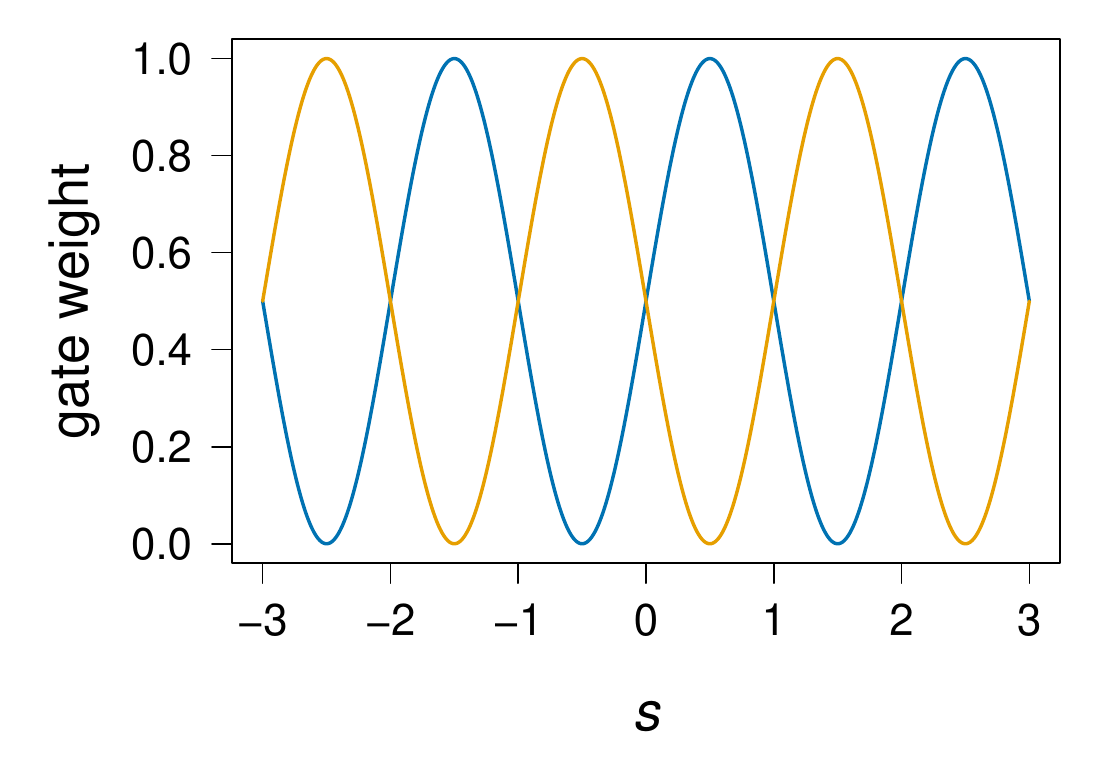}\caption{sin}\end{subfigure}\hfill
    \begin{subfigure}[b]{0.32\textwidth}\centering\includegraphics[width=\linewidth]{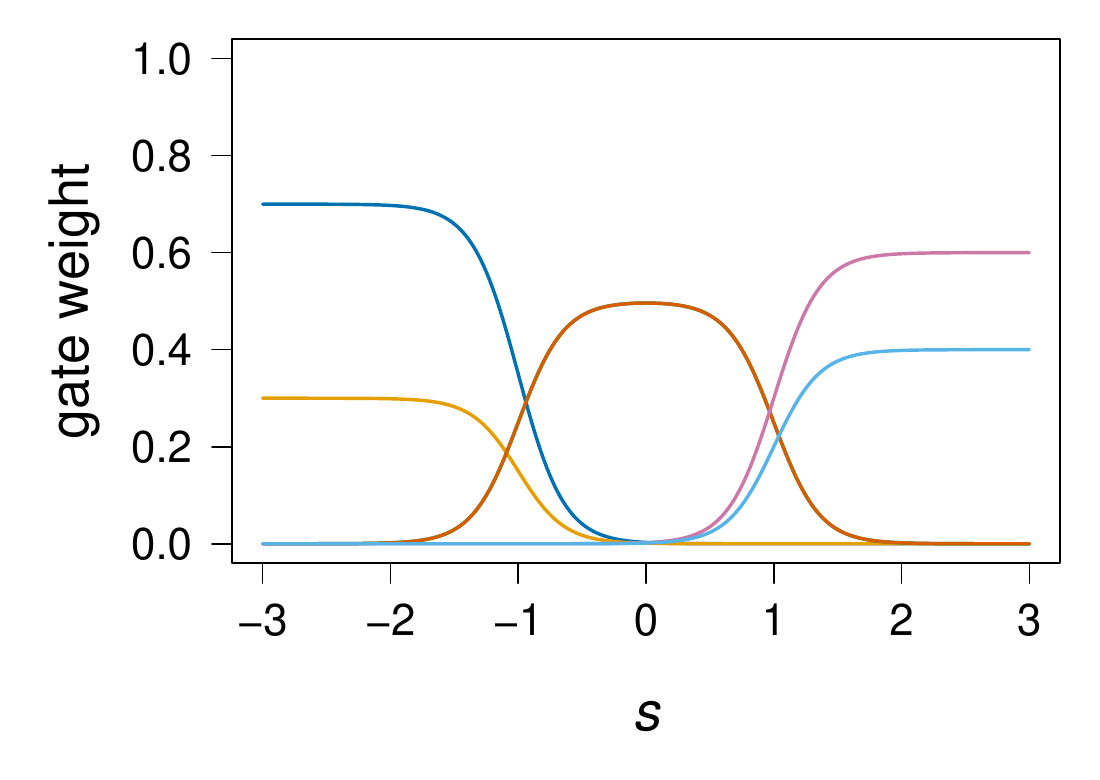}\caption{smooth Voronoi}\end{subfigure}\hfill
    \begin{subfigure}[b]{0.32\textwidth}\centering\includegraphics[width=\linewidth]{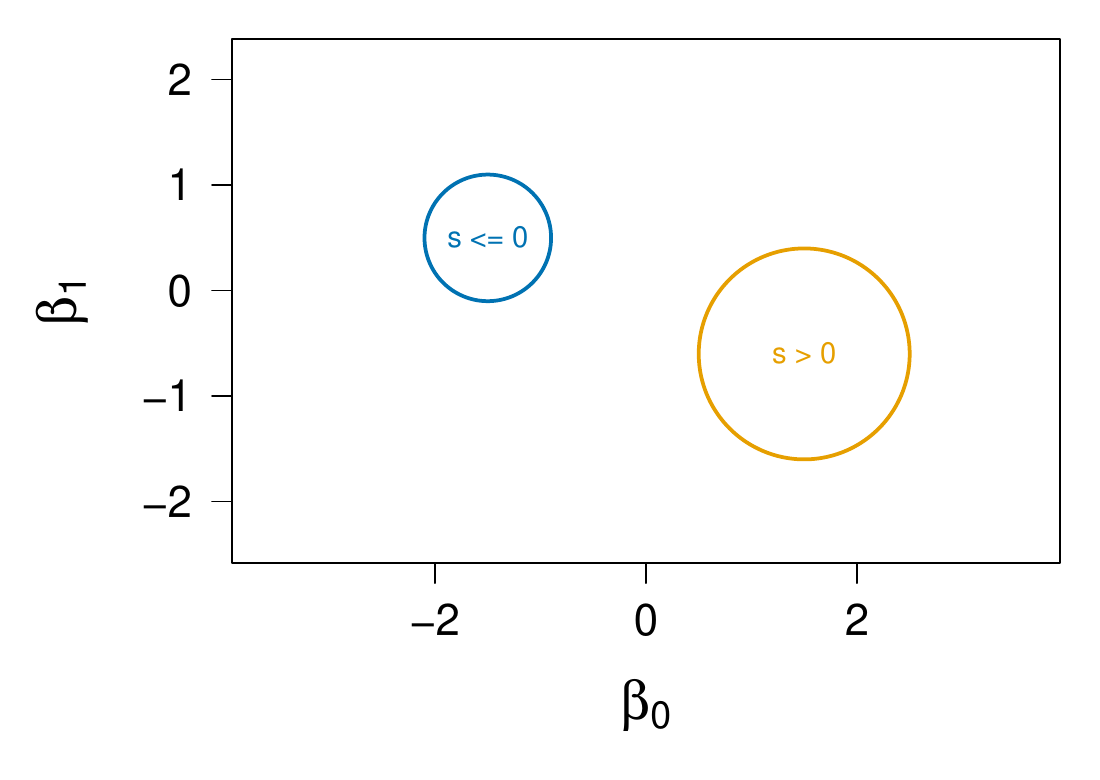}\caption{switching circle}\end{subfigure}\\[1ex]
    \begin{subfigure}[b]{0.32\textwidth}\centering\includegraphics[width=\linewidth]{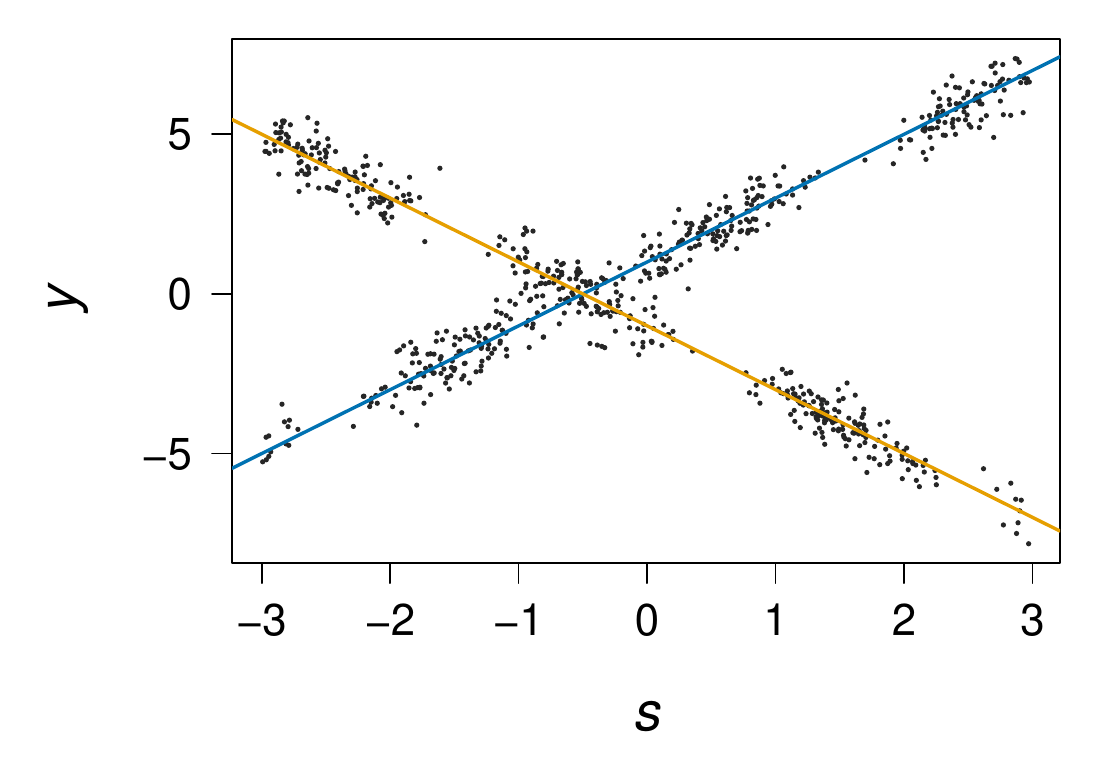}\caption{sin}\end{subfigure}\hfill
    \begin{subfigure}[b]{0.32\textwidth}\centering\includegraphics[width=\linewidth]{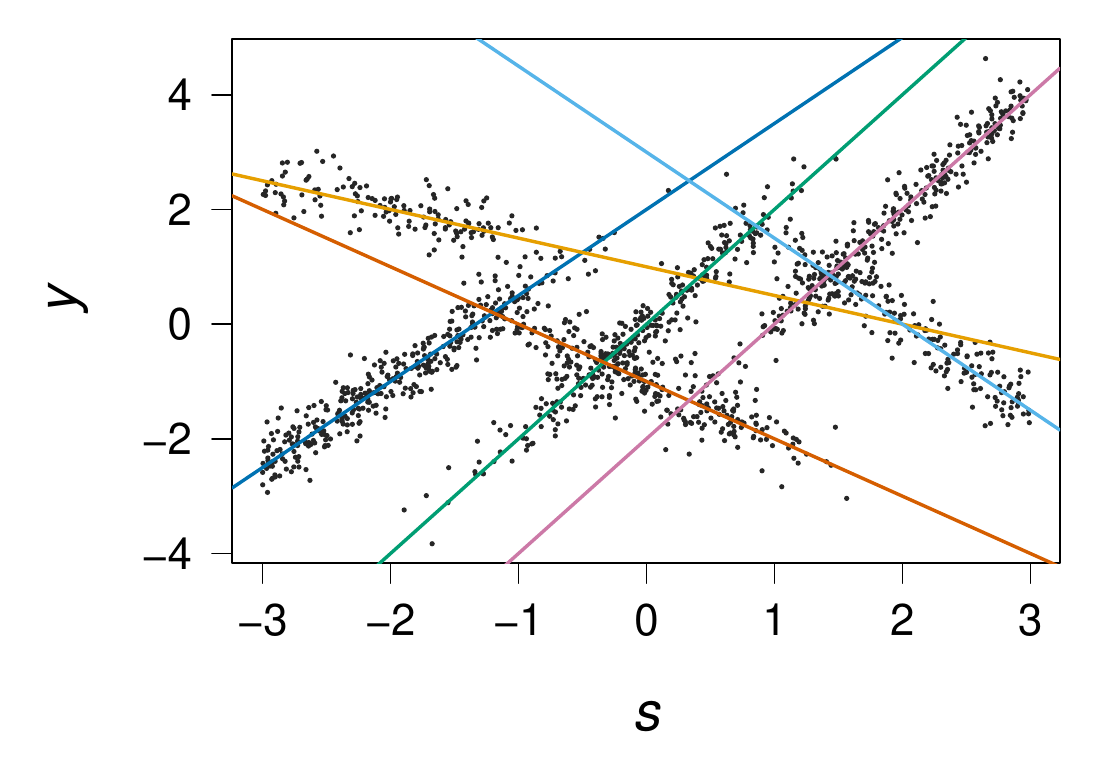}\caption{smooth Voronoi}\end{subfigure}\hfill
    \begin{subfigure}[b]{0.32\textwidth}\centering\includegraphics[width=\linewidth]{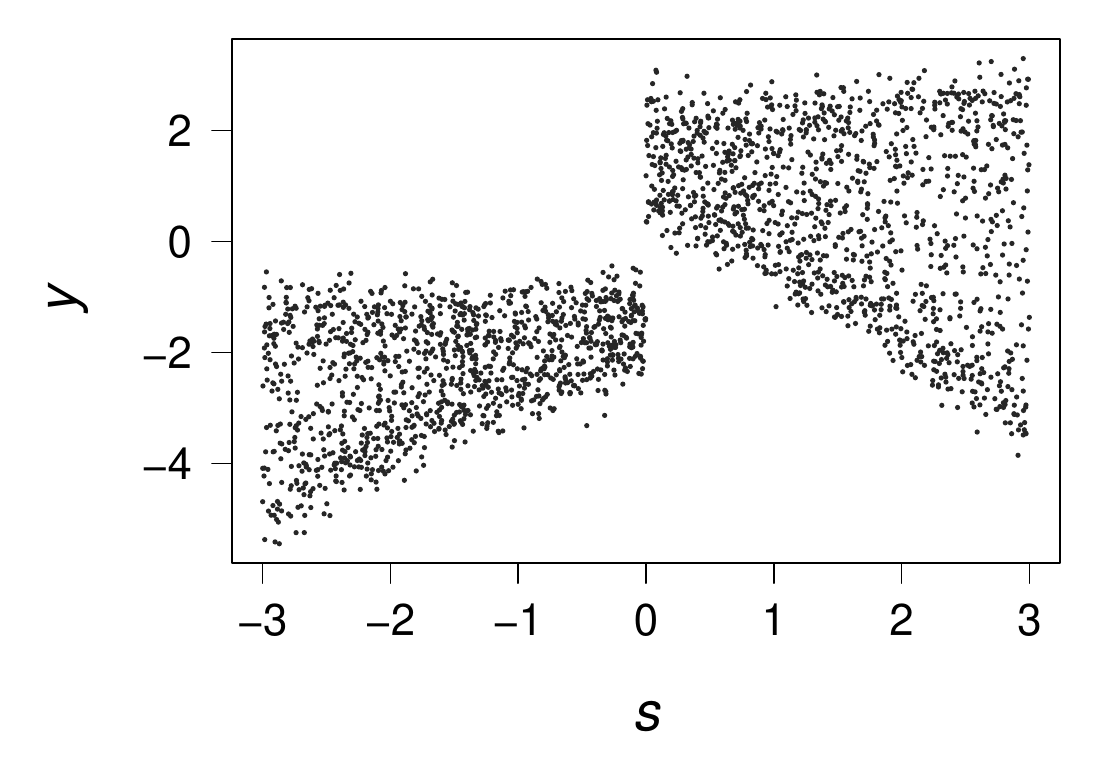}\caption{switching circle}\end{subfigure}
    \caption{The true conditional priors (top) and the three simulated data sets (bottom). (a), (b): the true gate functions of the components of the sin and smooth Voronoi priors. Note that the two gate functions in (b) take the same value at every $s$, so that the two components are not identifiable in this figure.
    (c): the support of the switching circle prior in the $(\beta_0,\beta_1)$ plane, the circle $\mathcal{C}(c_A,r_A)$ for $s\leq0$ and the circle $\mathcal{C}(c_B,r_B)$ for $s>0$. (d)--(f): the simulated data sets as scatter plots. For the sin and smooth Voronoi data, the true regression lines of the components of the prior are drawn in the colors of the components in (a) and (b).}
    \label{fig:sim_true}
    \label{fig:sim_data}
\end{figure}

\subsection{Estimation settings}

We use Grid-EXMoE (Algorithm \ref{alg:fixed_dictionary}) as the default procedure. The exemplars are generated by Algorithm \ref{alg:exemplar} with $P=n$, $m=p+1=3$, $B=[-25,25]^{2}$, and $M_{\beta}=500$, and the gate grid is $\mathcal{G}_h=\{-10,-9,\ldots,10\}$ on $\Gamma=[-10,10]$, so that the dictionary \eqref{def:dictionary} has $M=500\times21=10{,}500$ candidate points. The size of the dictionary is chosen to exceed $2n$, which is the bound on the number of support points of the iMoE-NPMLE in \eqref{def:solution_of_NPMLE_discrete_mixing_measure}. Since Algorithm \ref{alg:fixed_dictionary} only updates the weights $\alpha$ on the fixed dictionary, it is computationally efficient for a large number of candidate points. This is why we use Algorithm \ref{alg:fixed_dictionary} as the default.  We also perform Algorithm \ref{alg:exmoe}, which starts from the $500$ exemplars and updates all parameters at a considerably higher computational cost. The result achieves a slightly higher likelihood on a smaller support, as reported in Appendix \ref{app:alg2_comparison}.

The standard deviation $\sigma$ of the Gaussian experts is chosen by 5-fold cross-validation from the grid $\{0.1,0.15,0.2,0.3,0.4,0.5,0.7,1.0\}$. The exemplars are regenerated from the training data. All the results below are obtained at $\hat{\sigma}$ with the exemplars regenerated from the full sample.

\subsection{Results}
\subsubsection{Summary of estimated models}
Table \ref{tab:sim_alg3} summarizes the estimated values for each data set. 
The cross-validation selects the true value $\sigma=0.3$ for the smooth Voronoi and switching circle data, and $\hat{\sigma}=0.4$ for the sin data. The last two columns report that the first-order conditions \eqref{eq:dictionary_kkt} are satisfied on the estimated support and off the support within the tolerance.
 On the other hand, the estimated measures have $41$ to $67$ support points, which are far fewer than the bound $2n$ and the size of the dictionary. For classical NPMLE, the estimated support points exhibit such a sparse structure, and we observe the same in our simulations.

\begin{table}[!htb]
    \centering
    \caption{Algorithm \ref{alg:fixed_dictionary} on the three simulated data sets. The columns report the size $\hat{J}=|\hat{\mathcal{S}}|$ of the estimated support, the first-order conditions \eqref{eq:dictionary_kkt} on the estimated support $\hat{\mathcal{S}}$ and off the support $\hat{\mathcal{S}}^{c}$ per unit, where $D_j$ is replaced by the reparametrized gradient $\tilde{D}_j$ of Appendix \ref{sec:PFW_repara}. The estimates are at the selected $\hat{\sigma}$. }
    \label{tab:sim_alg3}
    \begin{tabular}{lrrrr}
        \hline
        Data  & $\hat{\sigma}$ & $\hat{J}$ & $\max_{j\in\hat{\mathcal{S}}}|\tilde{D}_j|/n$ & $\max_{j\in\hat{\mathcal{S}}^{c}}\tilde{D}_j/n$  \\
        \hline
        sin                & 0.4 & 43 & $9.8\times10^{-6}$ & $-4.3\times10^{-5}$  \\
        smooth Voronoi    & $\bm{0.3}$ & 41 & $9.8\times10^{-6}$ & $-1.1\times10^{-4}$  \\
        switching circle  & $\bm{0.3}$ & 67 & $9.5\times10^{-6}$ & $-1.5\times10^{-5}$  \\
        \hline
    \end{tabular}
\end{table}

\subsubsection{Estimated prior measure on the predictor scale}
\label{sec:pushforward}

Figure \ref{fig:sim_heat} displays the pushforward of the estimated conditional prior 
onto the predictor scale $(1,s)^{\top}\beta_{j}$:
\begin{equation}
    \label{def:pushforward}
    \hat{G}(\eta|s)\coloneq\sum_{j=1}^{\hat{J}} \hat{g}_j(s) \delta_{(1,s)^{\top}\hat{\beta}_{j}}(\eta).
\end{equation}
For display, the discrete measure \eqref{def:pushforward} is smoothed in $\eta$ by a Gaussian kernel. We compute
\begin{equation}
    \label{def:pushforward_heat}
    \hat{f}_{h}(\eta|s)\coloneq\sum_{j=1}^{\hat{J}}\hat{g}_j(s)\,\varphi_{h}\bigl(\eta-(1,s)^{\top}\hat{\beta}_j\bigr),
\end{equation}
where $\varphi_{h}$ is the Gaussian density with standard deviation $h$, and $h$ is set to one percent of the range of the observed responses. The heat map displays $\hat{f}_{h}(\eta|s)^{1/2}$, with the square root taken to compress the dynamic range of the color scale so that the support points with small gate weights remain visible. The kernel serves only the display and is not part of the estimate. When taking $h=\hat{\sigma}$, the smoothed pushforward $\hat{f}_{\hat{\sigma}}(\eta|s)$ \eqref{def:pushforward_heat} is equivalent to the conditional density of the linear regression at $s$. Because our purpose is to visualize only the linear relationship induced by the estimated conditional prior, we take a smaller bandwidth $h$ to avoid oversmoothing. The same panels show the estimated conditional mean $\hat{E}[y|s]=\int \eta\,d\hat{G}(\eta|s)=\sum_{j=1}^{\hat{J}}\hat{g}_j(s)\,(1,s)^{\top}\hat{\beta}_j$ together with the true conditional mean $E[y|s]$.

The heat map shows that the estimated pushforward follows the true regression lines over the whole range of $s$, and the gate weights change their dominant regression lines with $s$. Although the number of estimated support points is too large compared to the true ones in the sin data and smooth Voronoi data, the shown pushforward almost captures the true regression structures. For the switching circle data, the true prior is continuous with fixed $s$, thus the pushforward also has a continuous structure. Our method approximates such a continuous structure with many regression lines, equivalently corresponding to support points $\hat{J}=67$. Moreover, gate functions switch their dominant regression lines at $s=0$ to approximate the discontinuous change. In addition, for all three datasets, the estimated conditional mean follows the true conditional mean.

\begin{figure}[!htb]
    \centering
    \begin{subfigure}[b]{0.32\textwidth}\centering\includegraphics[width=\linewidth]{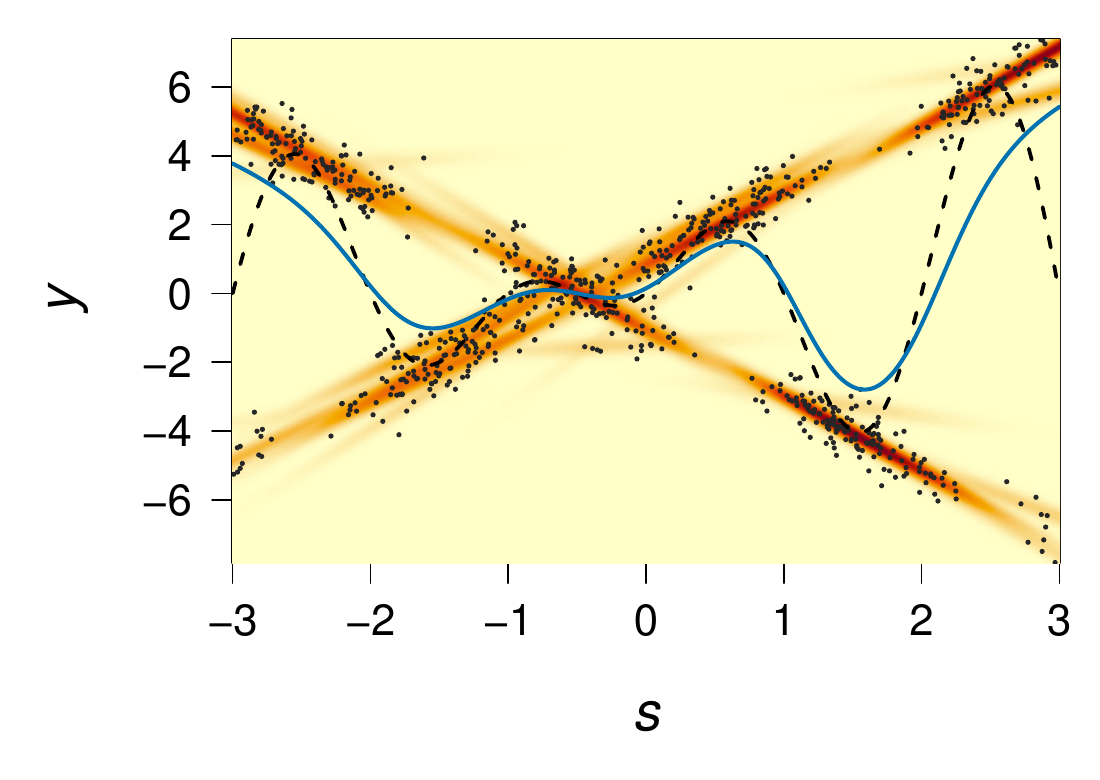}\caption{sin data}\label{fig:sim_heat_sin}\end{subfigure}\hfill
    \begin{subfigure}[b]{0.32\textwidth}\centering\includegraphics[width=\linewidth]{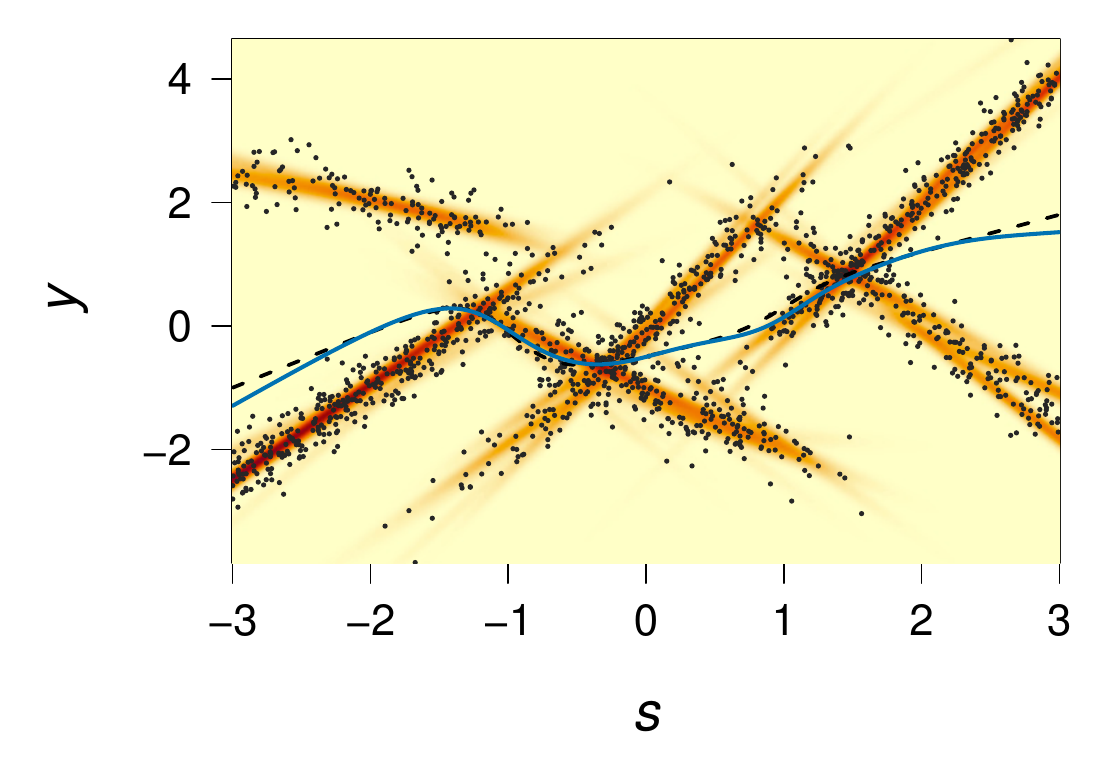}\caption{smooth Voronoi data}\label{fig:sim_heat_sv}\end{subfigure}\hfill
    \begin{subfigure}[b]{0.32\textwidth}\centering\includegraphics[width=\linewidth]{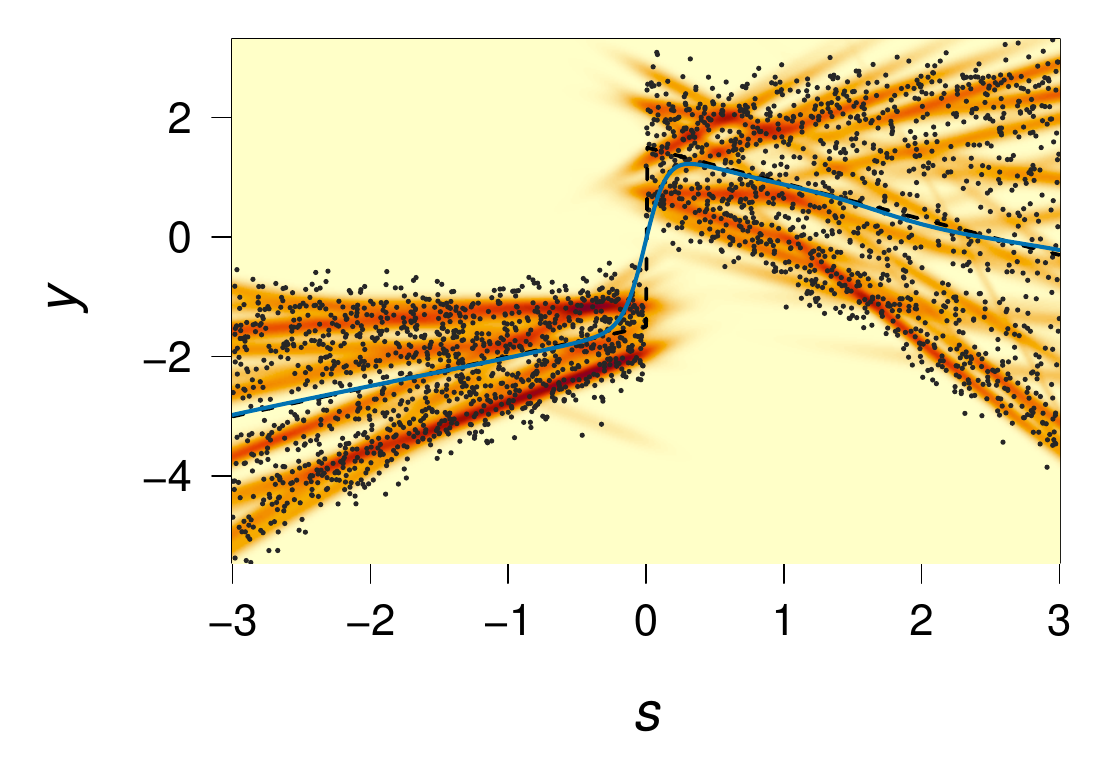}\caption{switching circle data}\label{fig:sim_heat_tc}\end{subfigure}
    \caption{Estimates of Algorithm \ref{alg:fixed_dictionary} for the three simulated data sets. Each panel shows the data (black dots), the heat map of the smoothed pushforward $\hat{f}_{h}(\eta|s)^{1/2}$ of the estimated conditional prior \eqref{def:pushforward_heat}, the estimated conditional mean $\hat{E}[y|s]$ (blue solid line), and the true conditional mean $E[y|s]$ (black dashed line).}
    \label{fig:sim_heat}
\end{figure}

\subsubsection{Prior measure}

Next, we illustrate the estimated conditional prior measure $\hat{G}(\beta|s)$ for each dataset, respectively.

\textbf{Sin data:}
Figures \ref{fig:sim_sin_prior_a} and \ref{fig:sim_sin_prior_b} show the estimated support points of $\hat{G}(\beta|s)$ with their weight probability $\hat{g}_{j}(s)$ for the sin data at $s=-1.5$, where the true prior is a point mass, and at $s=2$, where it is an equal mixture of the two true support points. The common observation over $s$ is that the support points with large weights are located on the two true support points. In particular, the gate weight is spread over a few support points close to the true point. 

Figures \ref{fig:sim_sin_gates_separate} and \ref{fig:sim_sin_gates_elbow} show how the gate is built. Because we only consider the linear term in the softmax function, each gate can only represent one modal structure, which is not enough to approximate the sin function's oscillating structure. Figure \ref{fig:sim_sin_gates_separate} would indicate that the softmax gate functions are divided into two sets colored in blue and yellow, with their roles: one tends to play the role of one mode out of the multiple modes in $\pi^{\ast}(s)$, and the other also does so in $1-\pi^{\ast}(s)$. Thus, the cloudy support points around the true point would get together to approximate the true sin gate function. Indeed, the aggregated gate function $g^{\dagger}_k(s)$ \eqref{def:aggregated_gate} obtained by the aggregation procedure in Section \ref{sec:experts_aggregation} follows the sin structure as shown in Figure \ref{fig:sim_sin_gates_elbow}. In this case, the two clusters selected by the elbow method coincide with the groups of support points assigned to the nearest \emph{oracle} support points.

A clearer approximation result of the sin gate function is shown in Figure \ref{fig:sim_alg2_gates_sin}, which is computed by Algorithm \ref{alg:exmoe}. The rather compressed structure in Figure \ref{fig:sim_sin_gates_elbow} would come from the constraint $\Gamma=[-10,10]$.

\begin{figure}[!htb]
    \centering
    \begin{subfigure}[b]{0.4\textwidth}\centering\includegraphics[width=\linewidth]{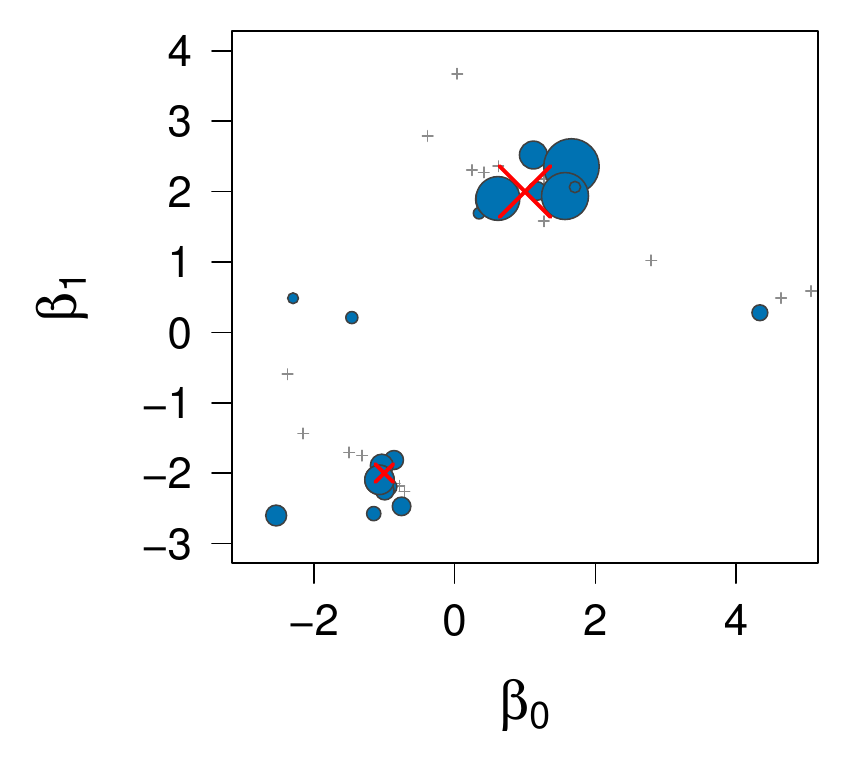}\caption{$\hat{G}(\cdot|s)$, $s=-1.5$}\label{fig:sim_sin_prior_a}\end{subfigure}\hspace{0.05\textwidth}
    \begin{subfigure}[b]{0.4\textwidth}\centering\includegraphics[width=\linewidth]{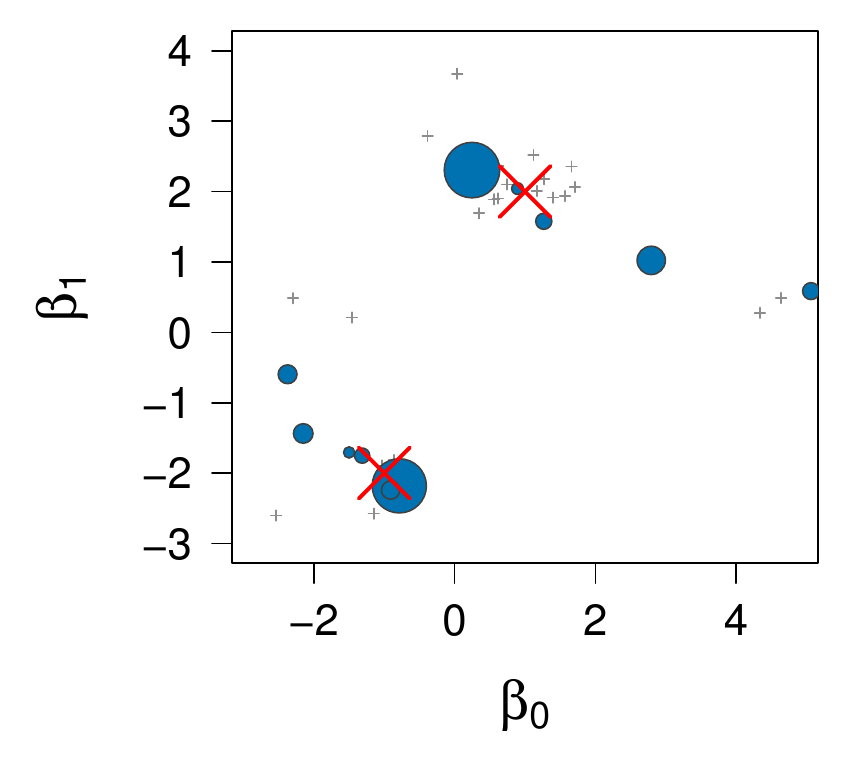}\caption{$\hat{G}(\cdot|s)$, $s=2$}\label{fig:sim_sin_prior_b}\end{subfigure}\\[1ex]
    \begin{subfigure}[b]{0.4\textwidth}\centering\includegraphics[width=\linewidth]{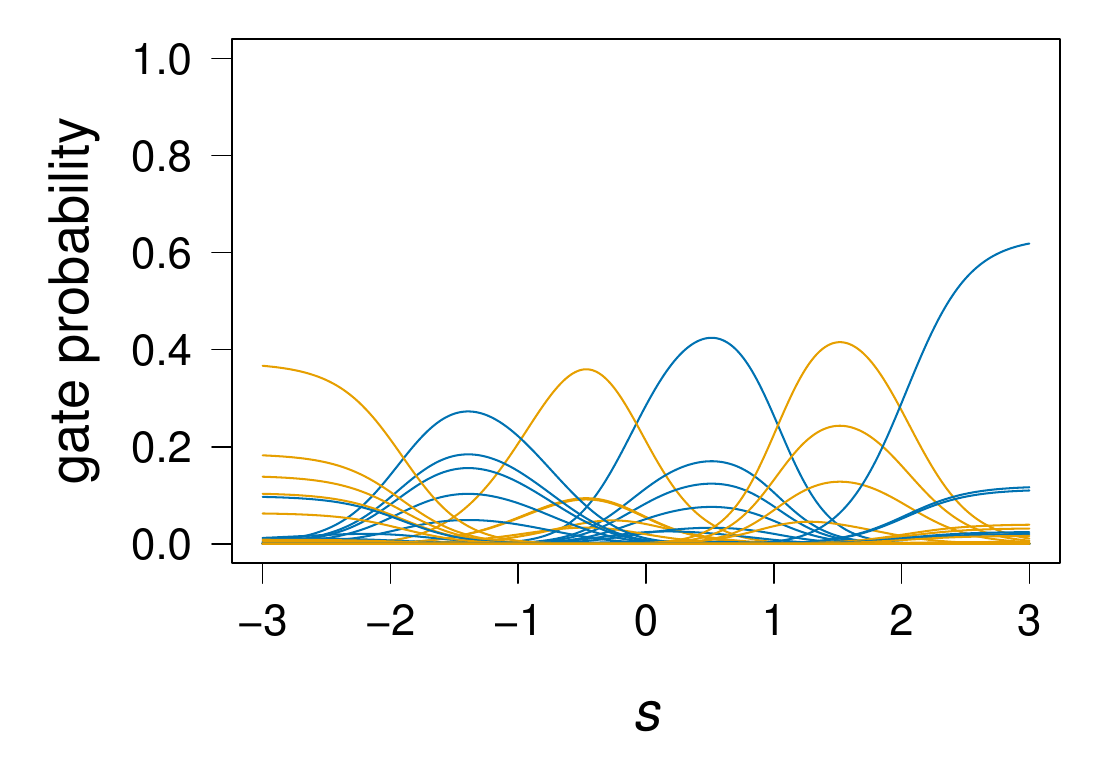}\caption{individual gates}\label{fig:sim_sin_gates_separate}\end{subfigure}\hspace{0.05\textwidth}
    \begin{subfigure}[b]{0.4\textwidth}\centering\includegraphics[width=\linewidth]{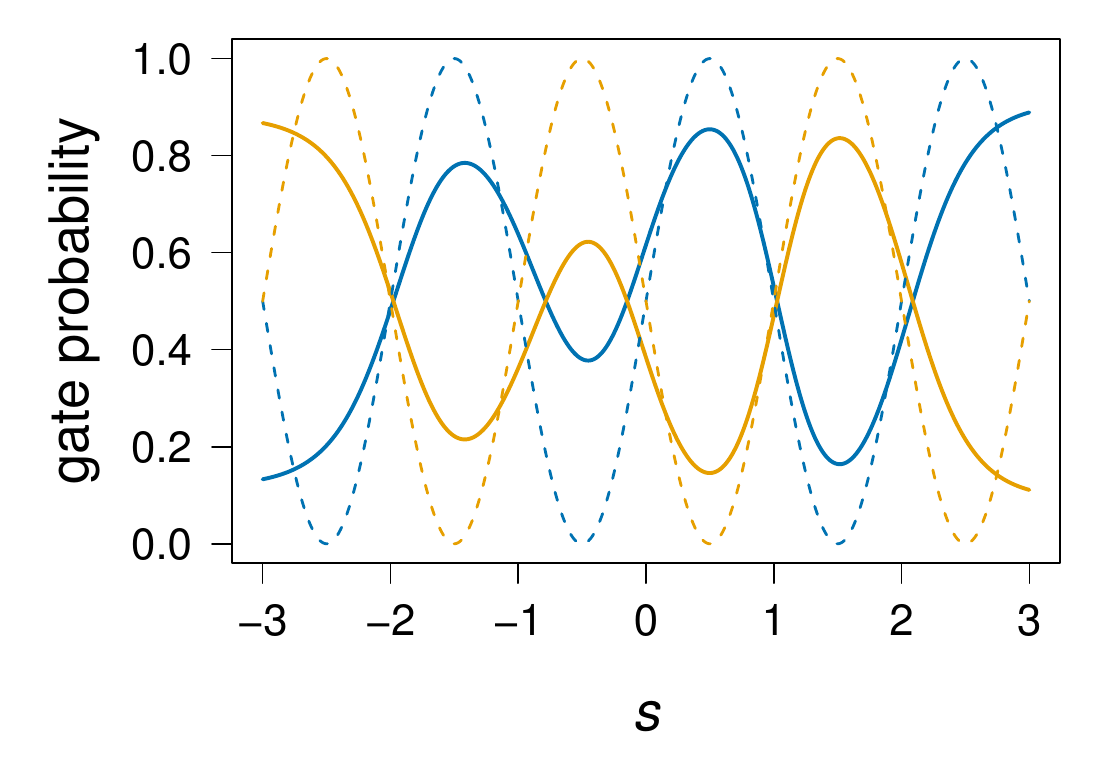}\caption{aggregated gates}\label{fig:sim_sin_gates_elbow}\end{subfigure}
    \caption{Estimate of Algorithm \ref{alg:fixed_dictionary} for the sin data. (a), (b): the conditional prior $\hat{G}(\cdot|s)=\sum_j\hat{g}_j(s)\delta_{\hat{\beta}_j}$ at $s=-1.5$, where the true prior is a point mass, and at $s=2$, where it puts equal weights on the two true support points. The size of a support point is proportional to its gate weight $\hat{g}_j(s)$ (points with gate weight below $0.001$ are drawn as small crosses), and the true prior $G^{\ast}(\cdot|s)$ is drawn as red crosses whose size is proportional to the true weight. (c): the individual gates $\hat{g}_j(s)$ of the $43$ estimated support points, colored by the nearest true support point with the colors of Figure \ref{fig:sim_data}. (d): the aggregated gates $g^{\dagger}_k(s)$ of \eqref{def:aggregated_gate} for the $\hat{K}=2$ clusters selected by the elbow method with the weights $\bar{g}_j$ (solid) and the true gates $\pi^{\ast}(s)$ and $1-\pi^{\ast}(s)$ (dashed). The two clusters coincide with the groups of support points nearest to the two true support points.}
    \label{fig:sim_sin}
    \label{fig:sim_sin_gates}
\end{figure}

\textbf{Smooth Voronoi data}:
For the smooth Voronoi data, the true gate is in $\mathcal{G}_{\mathrm{iMoE}}$. Figure \ref{fig:sim_smooth_voronoi} shows that the conditional prior $\hat{G}(\cdot|s)$ places its large and few masses near the two true support points that dominate at $s$. However, Figure \ref{fig:sim_sv_gates_separate} shows that each gate does not approximate the true six gate functions, and mass is dispersed across gates. Whereas, similar to the sin data, the nearest oracle aggregated gates recover the true structure as shown in Figure \ref{fig:sim_sv_gates_aggregated}. The elbow rule with the weights $\bar{g}_j$ selects $\hat{K}=4$ clusters, which merge two of the six true regression coefficients into neighboring clusters. While the number of components is misspecified, the gate structure still resembles the true one.

\textbf{Switching circle data}:
 Figure \ref{fig:sim_threshold_circle} displays the switching structure of the gates in the $(\beta_0,\beta_1)$ plane. At each fixed $s$, the estimated conditional prior $\hat{G}(\cdot|s)=\sum_{j}\hat{g}_j(s)\delta_{\hat{\beta}_j}$ is drawn with the size of each support point proportional to its gate weight $\hat{g}_j(s)$, together with the two true circles. At $s=-2$ and $2$, the gate weight is carried almost entirely by the support points near the active circle. At $s=-0.1$ and $0.1$, near the switching boundary $s=0$, $0.79$ and $0.78$ of the mass has already moved to the active circle on either side of the switch. Within each side, the continuous prior is approximated by $18$ to $25$ support points with gate weight above $0.01$, which are scattered on and around the active circle. Moreover, the active gate for each side of the circle has almost the same sign as $\hat{\gamma}_{j}$. Therefore, the partition of the covariate space would be realized by gates of opposite slopes, and the prior within the same side is realized by a discrete measure on the exemplars, as in the classical NPMLE.

\subsubsection{The number of aggregated gates}
Table \ref{tab:sim_K_selection} reports the number of clusters $\hat{K}$ selected by the elbow method and by $\mathrm{BIC}(K)$ in \eqref{def:bic_aggregation} when the aggregation of Section \ref{sec:experts_aggregation} is applied to the estimate of Algorithm \ref{alg:fixed_dictionary}. 

For the sin data, the elbow method and BIC both select the true number of clusters for the estimate of Algorithm \ref{alg:fixed_dictionary}, since its support consists of two tight clouds around the true coefficients and a few support points of negligible weight (Figure \ref{fig:sim_sin}), and the resulting aggregated gates are shown in Figure \ref{fig:sim_sin_gates_elbow}. For the smooth Voronoi data, the elbow method selects $4$ clusters, and BIC selects $10$, which is too large. For the switching circle data, BIC reaches $11$, whereas the elbow method returns $3$ clusters. Overall, the elbow method gives a parsimonious summary and BIC a fine one. The two agree only when the estimated support is concentrated on a few well-separated coefficients, as for the sin data.

\begin{table}[!htb]
    \centering
    \caption{Number of aggregated gates $\hat{K}$ selected by the elbow method and by BIC for the estimate of Algorithm \ref{alg:fixed_dictionary}, with the average gate weights $\bar{g}_j$ of Section \ref{sec:experts_aggregation}, over $K=1,\ldots,K_{\max}=12$. The last column is the number of distinct regression coefficients of the true prior. The switching circle prior has a continuous support and no finite number of clusters.}
    \label{tab:sim_K_selection}
    \small
    \begin{tabular}{lccc}
        \hline
        Data & elbow & BIC & true \\
        \hline
        sin              & $\bm{2}$ & $\bm{2}$ & 2 \\
        smooth Voronoi   & 4 & 10 & 6 \\
        switching circle & 3 & 11 & -- \\
        \hline
    \end{tabular}
\end{table}

\section{Application to real datasets}
\label{sec:real_data}
We report the results of application to real datasets. Throughout this section, the following process is common.  

First, we use Free-EXMoE (Algorithm \ref{alg:exmoe}) as the algorithm in this section. We choose it for the following reasons: Multi-dimensional grid points of $\Gamma$ make the dictionary easily vast, which yields high computational costs. Furthermore, the hundreds of fixed exemplars would not be enough to search for true support points of $\beta$ in multivariate settings. Free-EXMoE 
automatically updates $\beta$ to search for support points. Although the hundreds of experts may also not be enough to approximate the iMoE-NPMLE bound by $2n$ points, the practical solution exhibits sparsity in Section \ref{sec:Simulation}. Thus, Algorithm \ref{alg:exmoe} may produce a parsimonious solution, but it is a reasonable choice for real-data analysis. Through this section, we collectively refer to the Free-EXMoE and obtained solution of it as EXMoE for brevity.

Second, the dataset is divided into 80\% of units for the train and $20\%$ of units for the test. For the training dataset, we use $3$-fold CV to choose the hyperparameter $\lambda$ and the dispersion parameter $\phi$ via 2D-grid search. In the CV, each fit starts from the single initial value $\gamma=0$ and is stopped by the rule of Algorithm \ref{alg:exmoe} with $T_{\max}=100$ outer iterations. The final fit on the whole training data at the selected $(\lambda,\phi)$ uses $R=3$ restarts, $T_{\max}=2000$, and $\epsilon_{\mathrm{tol}}=10^{-9}$. For the generation of exemplars, parameters are set as $M=250, P=200$ and $m=60$ in Algorithm \ref{alg:exemplar}.

\subsection{California housing dataset}
\label{sec:california_housing}
Our first application is the California housing dataset \citep{Pace1997-housing} obtained from ``scikit-learn'' \citep{Pedregosa2011-sklearn} in Python. This dataset consists of 20,640 census block groups and nine attributes per group for California districts. The response variable is the median house value expressed in hundreds of thousands of dollars (\$100,000). The other eight attributes are median income (MedInc), median house age (HouseAge), average number of rooms (AveRooms), average number of bedrooms (AveBedrms), Population (Population), average number of household members (AveOccup), latitude, and longitude in each census block. 

We set latitude and longitude as the covariates because housing prices naturally vary by submarket based on geographical coordinates. Other six attributes are set as explanatory variables $x_i\in\mathbb{R}^{7}$. We exclude 965 blocks exceeding the upper value of housing price \$500{,}000, and we have a total of $n=19{,}675$ units. The response variables, explanatory variables, and covariates are standardized after the exclusion. For this dataset, we apply our method with Gaussian experts.

As the result of CV, $\hat{\sigma}=0.2$ and $\lambda=0.03$ are selected. The number of active support points is $\hat{J}=65$. We evaluate the predictive performance on the test data, including $3{,}935$ blocks. The predictive mean $\hat{E}[y|x,z]=\sum_{j}\hat{g}_j(z)\,x^{\top}\hat{\beta}_j$ is compared with the OLS and the non-conditional NPMLE \citep{Jiang2025-am} in the coefficient of determination (Table \ref{tab:housing_comparison}). For all methods, the predictive mean is converted and clipped to the observed range of the response, $[0.15, 5]$ in units of \$100,000, before computing $R^{2}$. This process reduces the effect of $11$ test blocks with extreme values of AveOccup. The best is
our EXMoE with $R^{2}=0.719$. This indicates the existence of heterogeneity depending on the geographical coordinates, which cannot be captured by classical NPMLE. In addition, we report EXMoEs with aggregation, and aggregation trades predictive accuracy for a parsimonious description of estimated linear relationships.

\begin{table}[htbp]
    \centering
    \caption{Coefficient of determination $R^{2}$ of the predictive mean for EXMoE, its aggregated versions and compared methods.}
    \label{tab:housing_comparison}
    \begin{tabular}{lr}
        \hline
        Method & $R^{2}$ \\
        \hline
        EXMoE & \underline{0.719} \\
        EXMoE, aggregated ($\hat{K}=3$, elbow) & 0.598 \\
        EXMoE, aggregated ($\hat{K}=12$, BIC) & 0.634 \\
        NPMLE of \cite{Jiang2025-am} & 0.595 \\
        OLS & 0.575 \\
        \hline
    \end{tabular}
\end{table}

Next, we show that assignment of each block to the gate functions.
Figure \ref{fig:housing_atom_map} assigns every block $z_i$ to the largest gate, $\argmax_{j}\hat{g}_j(z_i)$, and Figure \ref{fig:housing_cluster_map} assigns to the cluster with the largest aggregated gate, $\argmax_{k}g^{\dagger}_k(z_i)$ with $k=1,\ldots,3$ determined by elbow method. Before the aggregation, $32$ of the $65$ support points are the largest in some blocks, and the map is fragmented into their territories. Table \ref{tab:housing_clusters} reports the summary of aggregated gates and experts: the number of support points, the total average aggregated gate weight, the share of the blocks assigned to each cluster, and the aggregated regression coefficients $\bar{\beta}_k$ for each attribute. Cluster 1 dominates $68\%$ of the blocks, which are the coastal metropolitan areas around Los Angeles, San Diego, the San Francisco Bay, and Sacramento. Cluster 2 dominates $22\%$ of the blocks, which are the inland areas including the northern part of the state. It has the smallest intercept; that is, the area has the lowest price baseline. Cluster 3 dominates $10\%$ of the blocks, almost all of which are in the San Francisco Bay area, and has the largest intercept and the largest negative coefficient of the average occupancy. 

\begin{figure}[htbp]
    \centering
    \begin{subfigure}[b]{0.48\textwidth}\centering\includegraphics[width=\linewidth]{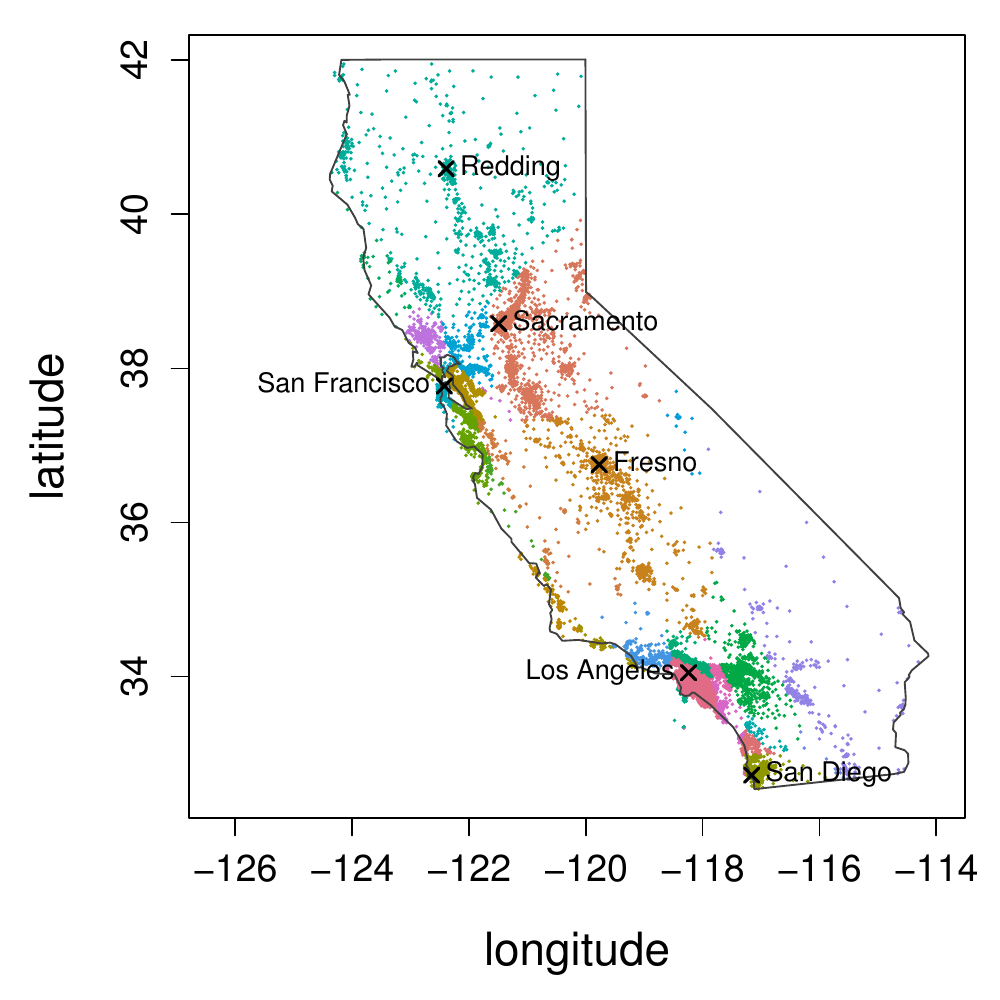}\caption{support points before aggregation}\label{fig:housing_atom_map}\end{subfigure}\hfill
    \begin{subfigure}[b]{0.48\textwidth}\centering\includegraphics[width=\linewidth]{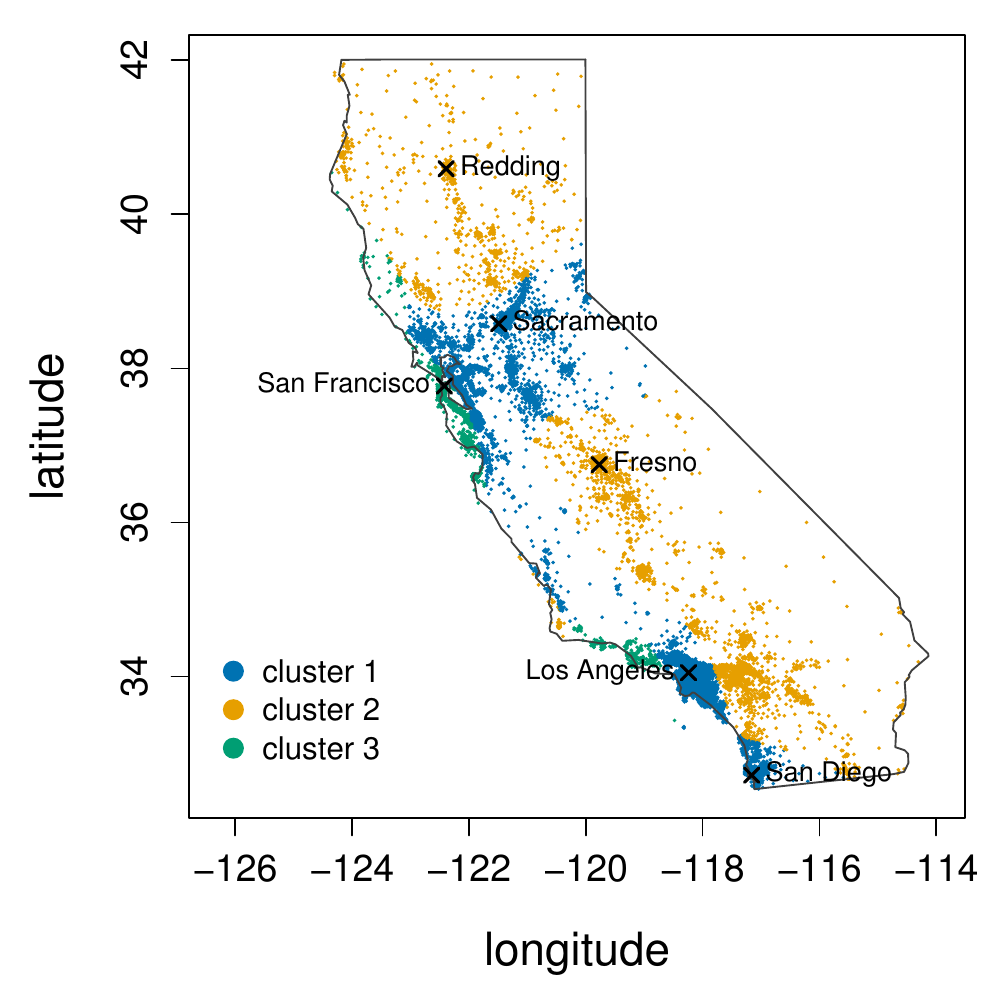}\caption{aggregated clusters}\label{fig:housing_cluster_map}\end{subfigure}
    \caption{California housing data, all $19{,}675$ blocks, with the state boundary and six major cities marked by crosses. (a): each block is colored by the support point with the largest gate $\hat{g}_j(z_i)$ among the $65$ support points with a distinct color. Only $32$ support points are the largest at some block. (b): each block is colored by the cluster with the largest aggregated gate $g^{\dagger}_k(z_i)$ among the $\hat{K}=3$ clusters.}
    \label{fig:housing_assignment}
\end{figure}

\begin{table}[htbp]
    \centering
    \caption{Aggregated clusters of the $65$ support points selected by the elbow method: the number of support points $|\mathcal{C}_k|$, the total average gate weight $\sum_{j\in\mathcal{C}_k}\bar{g}_j$, the share of the blocks assigned to the cluster in Figure \ref{fig:housing_cluster_map}, and the aggregated regression coefficients $\bar{\beta}_k$ of \eqref{def:aggregated_gate} on the standardized scale of the response and the explanatory variables.}
    \label{tab:housing_clusters}
    \small
    \begin{tabular}{crrrrrr}
        \hline
        $k$ & $|\mathcal{C}_k|$ & $\sum_{j\in\mathcal{C}_k}\bar{g}_j$ & share & intercept & MedInc & HouseAge \\
        \hline
        1 & 28 & 0.521 & 0.679 & $-0.101$ & 0.464 & 0.018 \\
        2 & 20 & 0.292 & 0.215 & $-0.603$ & 0.353 & $-0.022$ \\
        3 & 17 & 0.188 & 0.105 & 0.627 & 0.538 & 0.080 \\
        \hline
    \end{tabular}

    \vspace{1.5ex}
    \begin{tabular}{crrrr}
        \hline
        $k$ & AveRooms & AveBedrms & Population & AveOccup \\
        \hline
        1 & 0.013 & 0.244 & 0.036 & $-2.800$ \\
        2 & 0.118 & $-0.037$ & 0.031 & $-0.002$ \\
        3 & 0.042 & 0.454 & 0.011 & $-7.364$ \\
        \hline
    \end{tabular}
\end{table}
Figure \ref{fig:housing_path} visualizes the estimated conditional prior $\hat{G}(\cdot|z)$ with its $65$ support points. The visualization is along with a path $z(t)$, $t\in[0,1]$, that connects the six cities by straight lines shown in Figure \ref{fig:housing_path_map}. The path goes through in the order San Diego (SD; $t=0$), Los Angeles (LA; $t=0.15$), Fresno (FR; $t=0.43$), San Francisco (SF; $t=0.69$), Sacramento (SC; $t=0.80$), and Redding (RD; $t=1$).
For each explanatory variable $c$, the panel shows the marginal prior $\hat{G}(\beta_c|z(t))=\sum_{j\in\hat{\mathcal{S}}}\hat{g}_j(z(t))\,\delta_{\hat{\beta}_{j,c}}$ as a heat map. The horizontal axis is $t\in[0,1]$, and the vertical axis represents support points $\hat{\beta}_j$. Each probability $\hat{G}(\beta_c|z(t))$ is smoothed by a narrow Gaussian kernel as in Section \ref{sec:pushforward}. 
The figure clearly shows the varying structure of the prior measure along with the path $z(t)$. In particular, the Metropolitan Bay Area, such as SD, LA, and SF, clearly shows heterogeneous structures for all regression coefficients. Whereas FR and RD, located in inland areas, show homogeneous structures. SC is in the middle and has three dominant masses. 

Throughout this analysis, EXMoE achieves high prediction performance and inference of complex heterogeneous structure via decomposition of geographical dependence. Furthermore, EXMoE with aggregation also gives a parsimonious interpretation with a trade-off in prediction performance.
\begin{figure}[htbp]
    \centering
    \begin{subfigure}[b]{0.32\textwidth}\centering\includegraphics[width=\linewidth]{figures/housing_v2/housing_path_map.pdf}\caption{the path $z(t)$}\label{fig:housing_path_map}\end{subfigure}\hfill
    \begin{subfigure}[b]{0.32\textwidth}\centering\includegraphics[width=\linewidth]{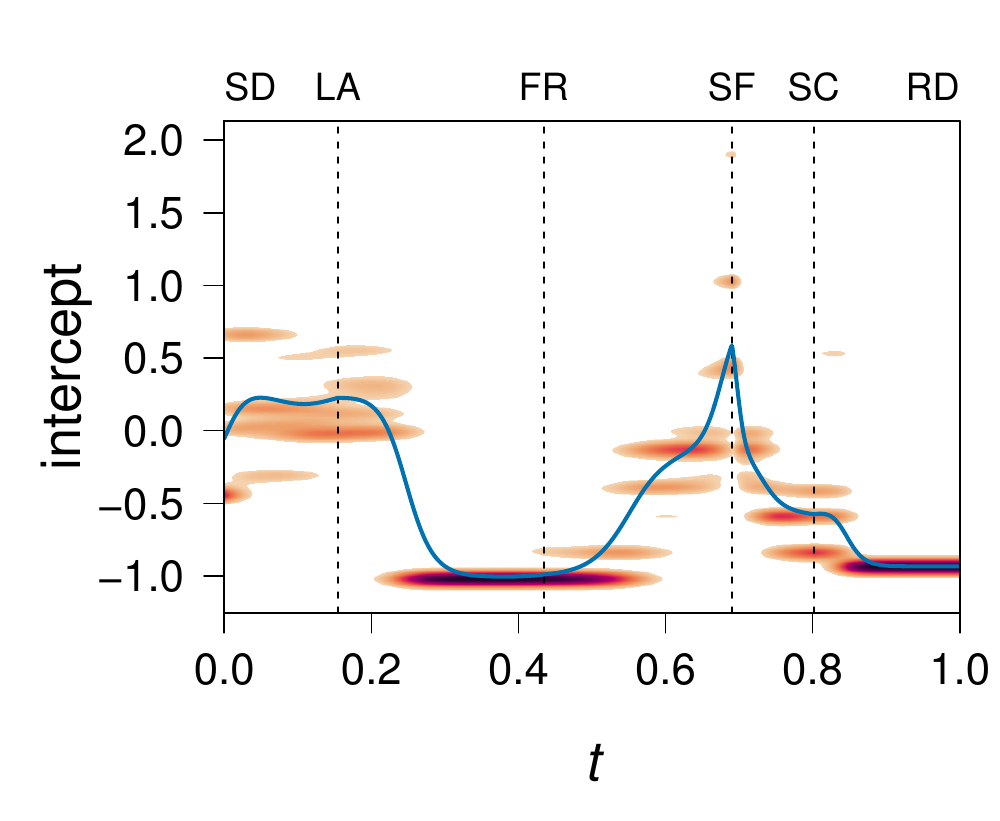}\caption{intercept}\end{subfigure}\hfill
    \begin{subfigure}[b]{0.32\textwidth}\centering\includegraphics[width=\linewidth]{figures/housing_v2/housing_path_heat_MedInc.pdf}\caption{MedInc}\end{subfigure}\\[1ex]
    \begin{subfigure}[b]{0.32\textwidth}\centering\includegraphics[width=\linewidth]{figures/housing_v2/housing_path_heat_HouseAge.pdf}\caption{HouseAge}\end{subfigure}\hfill
    \begin{subfigure}[b]{0.32\textwidth}\centering\includegraphics[width=\linewidth]{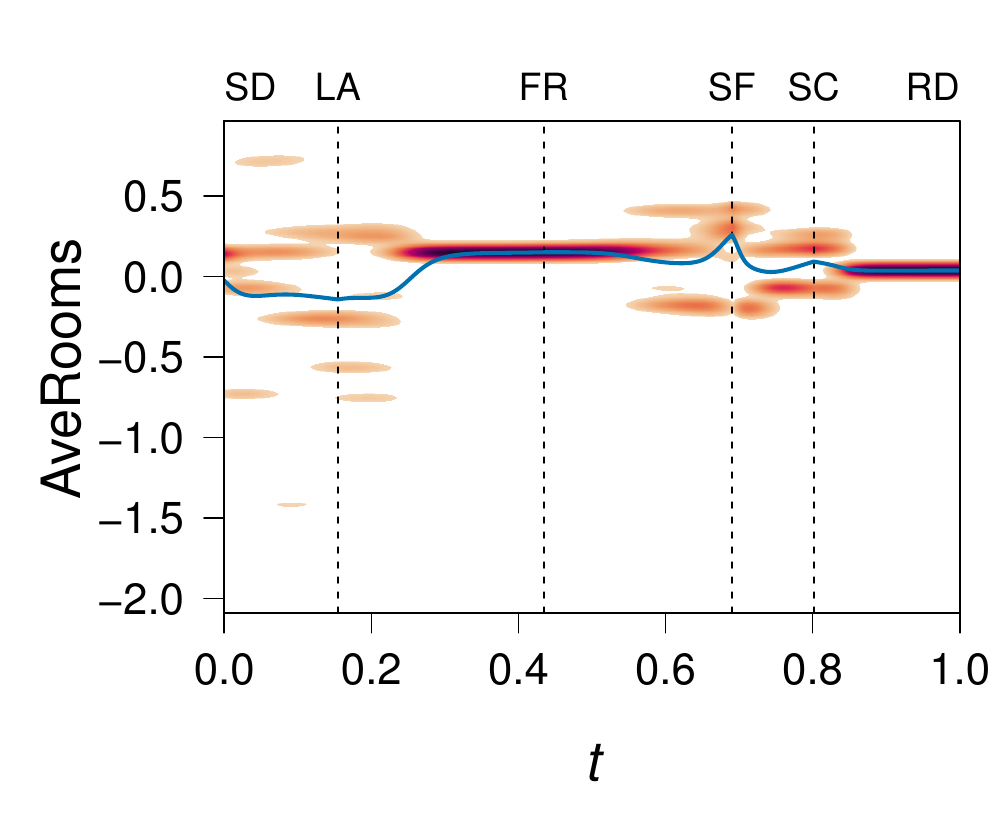}\caption{AveRooms}\end{subfigure}\hfill
    \begin{subfigure}[b]{0.32\textwidth}\centering\includegraphics[width=\linewidth]{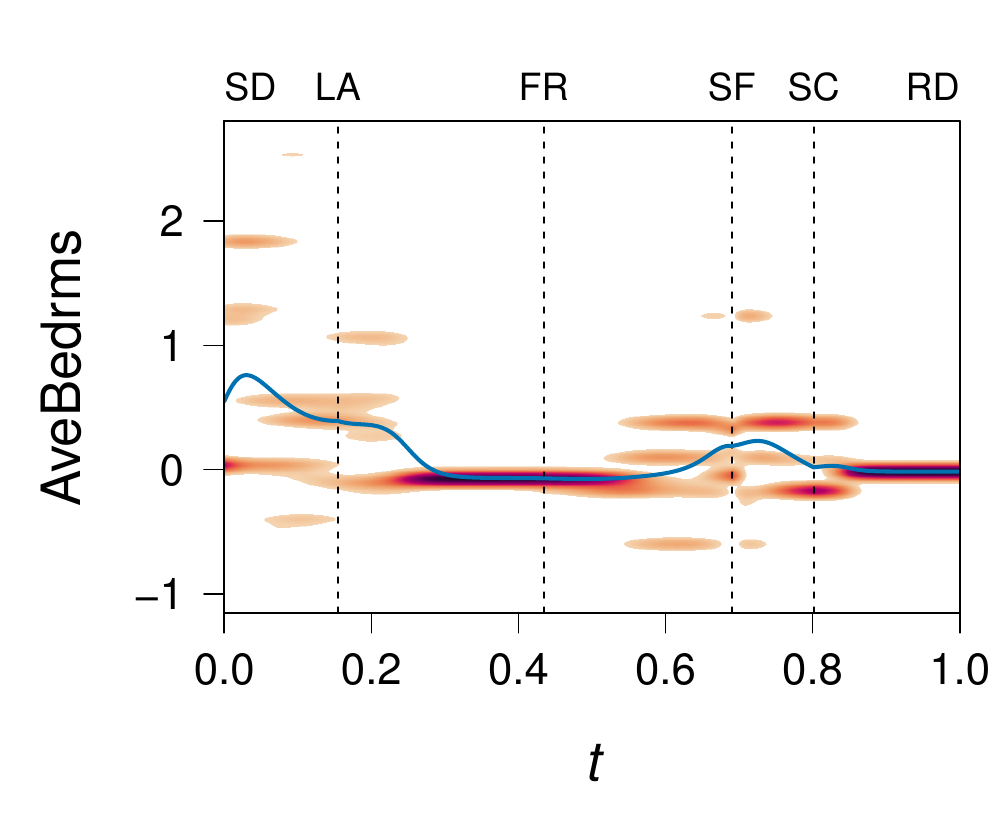}\caption{AveBedrms}\end{subfigure}\\[1ex]
    \begin{subfigure}[b]{0.32\textwidth}\centering\includegraphics[width=\linewidth]{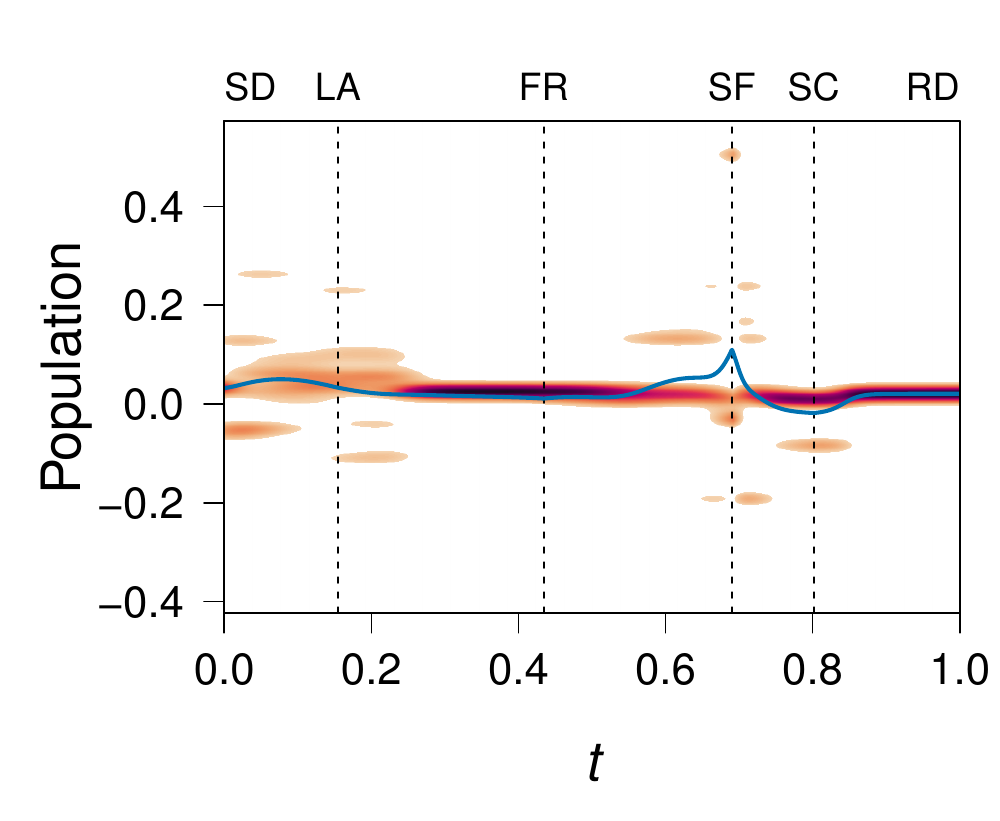}\caption{Population}\end{subfigure}\hspace{0.02\textwidth}
    \begin{subfigure}[b]{0.32\textwidth}\centering\includegraphics[width=\linewidth]{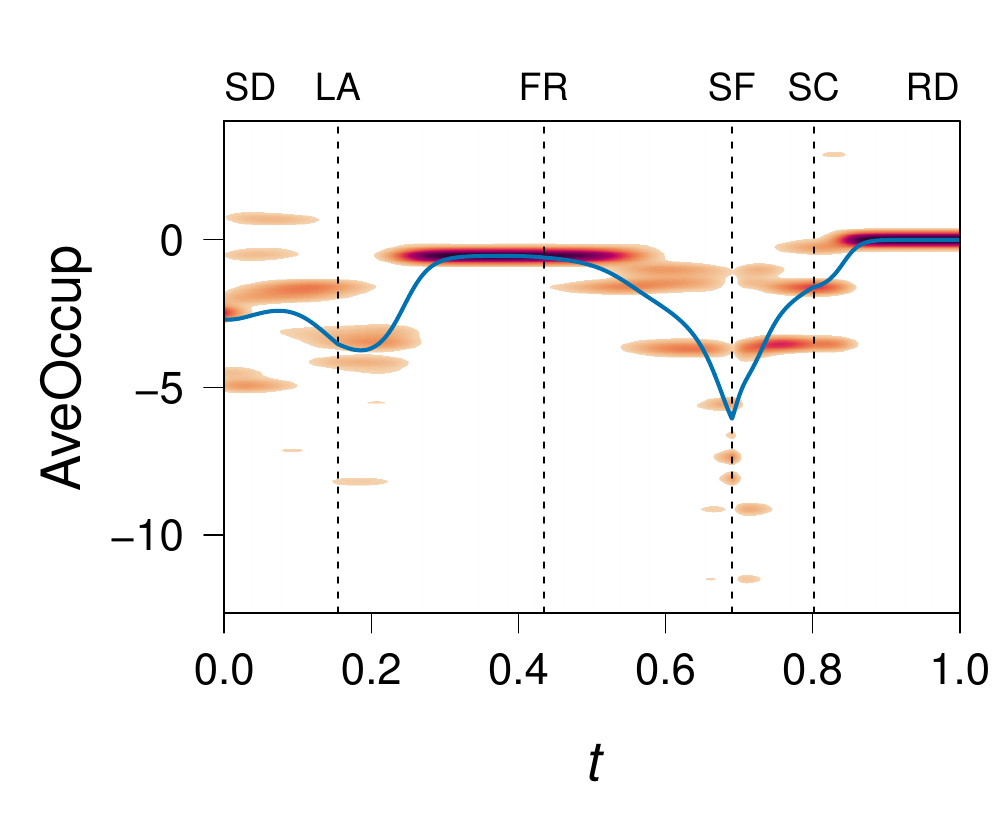}\caption{AveOccup}\end{subfigure}
    \caption{Illustration of geographically varying prior. (a): the path $z(t)$ through the six cities with arrows in the direction of increasing $t$, where SD, LA, FR, SF, SC, and RD stand for San Diego, Los Angeles, Fresno, San Francisco, Sacramento, and Redding. (b)--(h): heat maps of the marginal conditional prior $\hat{G}(\beta_c|z(t))$ of each regression coefficient along the path where the point masses at the $65$ support points are smoothed by a Gaussian kernel and the color scale is linear in the density. The blue line is the conditional mean of the coefficient, the dashed vertical lines represent the cities, and the vertical range is set as to include the support points whose gate exceeds $0.02$ somewhere on the path.}
    \label{fig:housing_path}
\end{figure}

\subsection{NHANES depression score dataset}
\label{sec:nhanes}
Our second application is the analysis of the depression score in the National Health and Nutrition Examination Survey (NHANES). We collect $n=8{,}996$ adults who answered all items of the nine-item Patient Health Questionnaire (PHQ-9) from the 2013--2014 and 2015--2016 cycles.  The response variable is the PHQ-9 total score, which is the sum of nine items scored from $0$ to $3$ and takes values in $\{0,1,\ldots,27\}$. The score of $10$ or above is defined as depression in leading analyses about prevalence of depression from NHANES \citep{Brody2018-nchs} (URL:\url{https://www.cdc.gov/nchs/products/databriefs/db303.htm}).

We use five explanatory variables: age (Age), sex (male), ratio of income to poverty (PIR), hours of sleep (SleepHrs), and body mass index (BMI). We also set these five features as the covariate $z_i\in\mathbb{R}^{5}$. The response is a count with many zeros: $31\%$ of the subjects' scores are $0$. The explanatory variables and covariates are standardized. We apply our method with negative binomial experts with the log link (NB-EXMoE), where the dispersion parameter $\omega$ of the negative binomial distribution, with $\mathrm{Var}(y)=\mu+\mu^{2}/\omega$, is selected by CV. In addition, we also apply the Poisson expert setting (Po-EXMoE), including the covariate-independent case $z_i=0$.

As a result of CV, $\hat{\omega}=2$ and $\lambda=9$ are selected for NB-EXMoE, and the covariate-independent case $z_i=0$ is selected for Po-EXMoE. The numbers of active support points are $\hat{J}=11$ for NB-EXMoE and $\hat{J}=38$ for Po-EXMoE. The $11$ support points of NB-EXMoE are listed in Table \ref{tab:nhanes_clusters}.
Table \ref{tab:nhanes_comparison} shows the comparison of log-likelihood and AUC in test data $n=1{,}799$. We calculate AUC from the classification of depression, determined by the predicted probability $\hat{P}(y_i\geq10|x_i)$. The other compared methods are Poisson regression (Po-GLM), the negative binomial regression (NB-GLM), and the zero-inflated negative binomial regression (ZINB; \cite{Lambert1992-zi}), in which the five covariates are used for the count component and the logit model of the zero inflation. The best is our NB-EXMoE with both log-likelihood $-2.248$ and $\mathrm{AUC}=0.711$. Whereas, the test log-likelihood of Po-EXMoE is largely improved from its one-component version (Po-GLM).
The reason the number of active supports in Po-EXMoE is relatively higher than that of NB-EXMoE may be that the Poisson components tend to approximate overdispersed structure.
We only report the result of NB-EXMoE below, since it shows small active components and best performance.

\begin{table}[htbp]
    \centering
    \caption{Mean log-likelihood of the test data and AUC calculated by the predicted probability of a score of $10$ or above for our methods and competing methods. }
    \label{tab:nhanes_comparison}
    \begin{tabular}{lrr}
        \hline
        Method & log-likelihood & AUC \\
        \hline
        NB-EXMoE  & $-2.248$ & \underline{0.711} \\
        NB-EXMoE, aggregated ($\hat{K}=4$, elbow) & $-2.360$ & 0.646 \\
        NB-EXMoE, aggregated ($\hat{K}=8$, BIC) & $-2.250$ & 0.704 \\
        Po-EXMoE ($z_i=0$) & $-2.254$ & 0.709 \\
        ZINB & $-2.259$ & 0.699 \\
        NB-GLM & $-2.266$ & 0.700 \\
        Po-GLM & $-3.235$ & 0.700 \\
        \hline
    \end{tabular}
\end{table}

Figures \ref{fig:nhanes_prior_intercept} and \ref{fig:nhanes_prior_bmi} show marginal of conditional prior $\hat{G}(\beta_c|z_{\mathrm{sleep}},z_{-\mathrm{sleep}}=0)$, in which SleepHrs $z_{\mathrm{sleep}}$ vary with other fixed covariates 
$z_{-\mathrm{sleep}}$ at sample mean $0$. Since the selected hyperparameter $\lambda=9$ is large, the shown estimated prior structure is rather flat for almost all covariates (summarized in Appendix \ref{app:nhanes_prior} for other covariates). We only pick SleepHrs, which shows a relatively larger heterogeneity structure than other covariates. In Figure \ref{fig:nhanes_prior}, the mass of the intercept moves from the support points with a high baseline to those with a lower baseline as the hours of sleep increase. On the other hand, the mass of BMI is dominant at nearly zero at 8 hours and below. Whereas, the support point at $0.4$ has large mass after $9$ hours and above. These results likely indicate that lower sleep time subgroups indicate a higher score of depression, and BMI does not predict that. For longer sleep time subgroups, BMI predicts a higher depression score.

Next, we discuss the predictive mean conditioned on each sleep time.
Let $\ell$ be the category of SleepHrs, and $\mathcal{Q}(\ell)$ be a subset of test subjects taking $\ell$ for it. Then, we calculate the average conditional mean $\bar{E}_{\ell}= \frac{1}{|\mathcal{Q}(\ell)|}\sum_{i\in \mathcal{Q}(\ell)}\hat{E}[y_i|x_i]$ for NB-EXMoE and the counterparts of the test observation. Figure \ref{fig:nhanes_prediction} shows that 
the predictive mean reproduces the U-shaped pattern of the observed average conditional mean, which is lowest at seven to eight hours of sleep. Although both expert and gates are linear, and no interaction term is included, NB-EXMoE predicts it.

Since the estimated number of support points is $\hat{J}=11$ for this dataset, a similar result could in principle be obtained by a conventional MoE model with linear experts and gates, fitted over a range of candidate numbers of experts $J$. However, our framework also covers more complex situations, where the true conditional distribution is continuous at some $z$, or the true gate function cannot be represented by a single linear gate. The point is that our method estimated this relatively simple structure automatically from the whole class of such complex distributional structures, without fixing the number of experts in advance. Therefore, our method can provide a comprehensive analysis of heterogeneity.

\begin{figure}[htbp]
    \centering
    \begin{subfigure}[b]{0.32\textwidth}\centering\includegraphics[width=\linewidth]{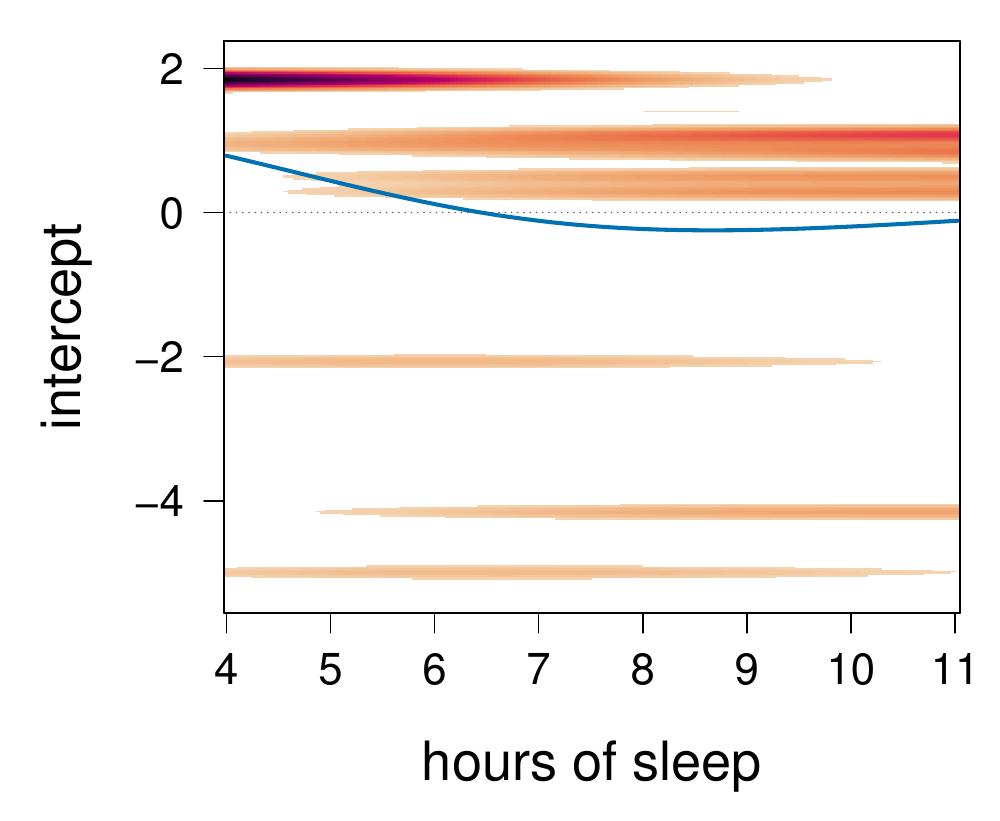}\caption{prior of the intercept}\label{fig:nhanes_prior_intercept}\end{subfigure}\hfill
    \begin{subfigure}[b]{0.32\textwidth}\centering\includegraphics[width=\linewidth]{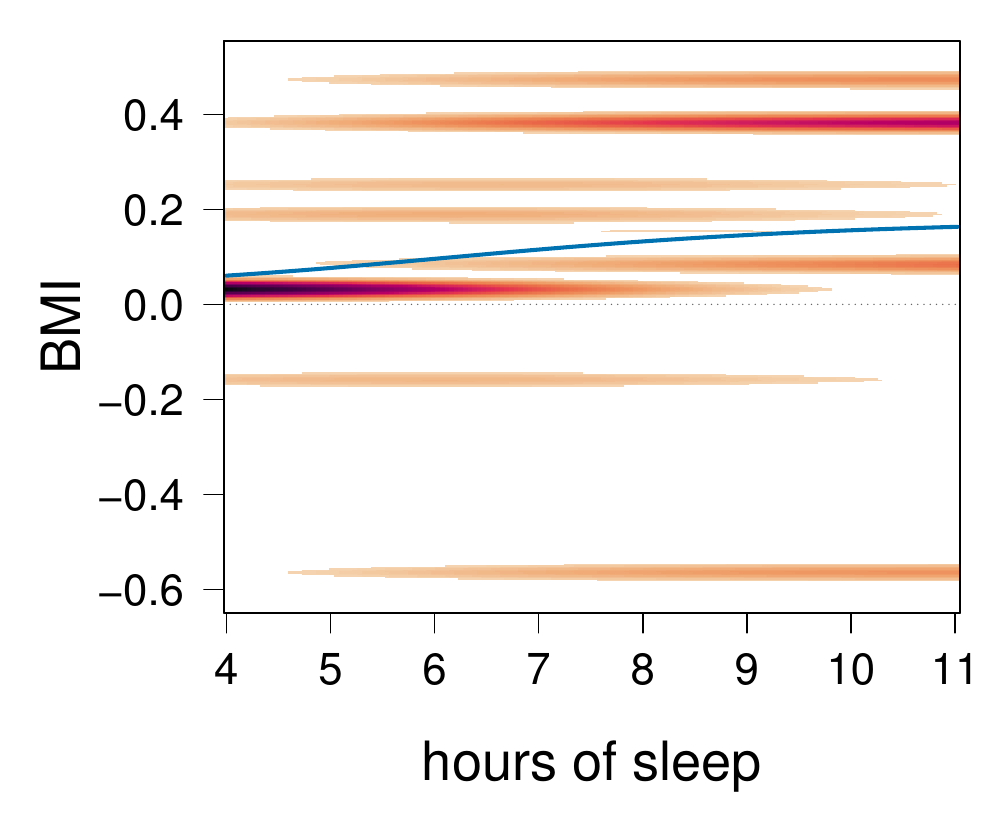}\caption{prior of BMI}\label{fig:nhanes_prior_bmi}\end{subfigure}\hfill
    \begin{subfigure}[b]{0.32\textwidth}\centering\includegraphics[width=\linewidth]{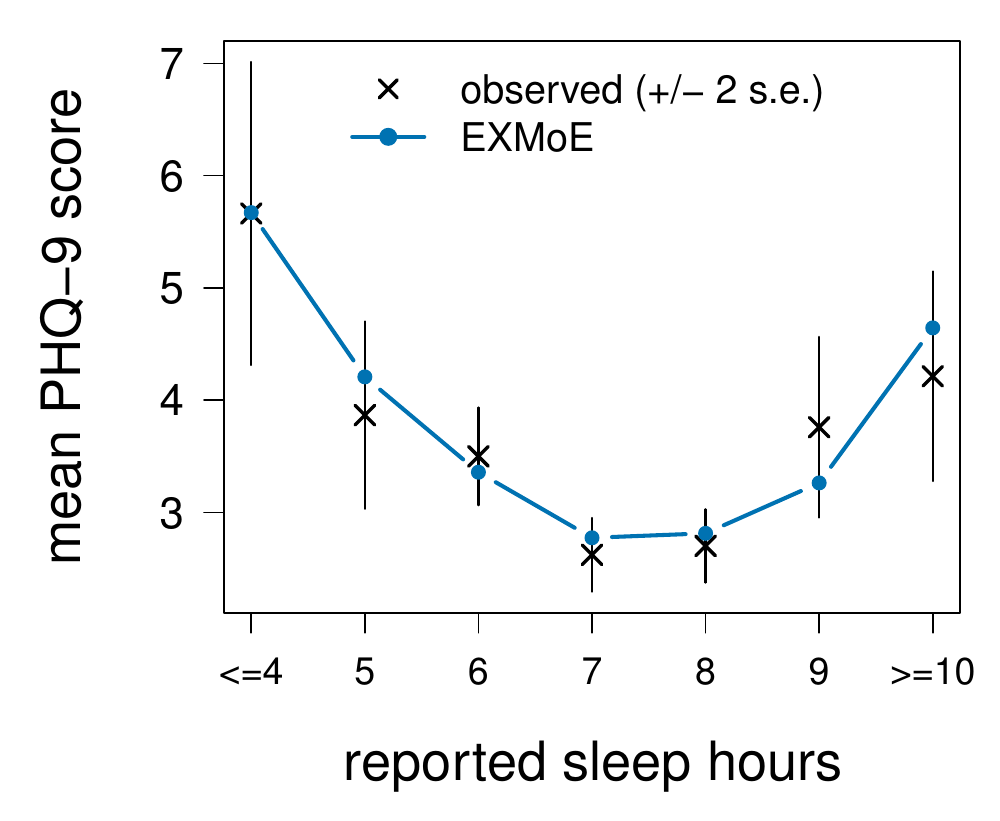}\caption{predictive mean}\label{fig:nhanes_prediction}\end{subfigure}
    \caption{NHANES depression score data along the hours of sleep. (a), (b): heat maps of the marginal conditional prior $\hat{G}(\beta_c|z)$ of the intercept and the coefficient of BMI as the hours of sleep vary with the other covariates fixed at their sample means, where the point masses at the $11$ support points are smoothed by a Gaussian kernel and the blue line is the conditional mean of the coefficient. The other coefficients are shown in Figure \ref{fig:appx_nhanes_sleep}. (c): the observed mean score with $\pm2$ standard errors and the average conditional mean of NB-EXMoE for each hour of sleep on the test data ($n=1{,}799$).}
    \label{fig:nhanes_prior}
\end{figure}

\section{Discussion}
\label{sec:discussion}
In this study, we proposed a nonparametric g-modeling of a covariate-dependent prior measure via NPMLE.
As the estimation class of priors, we introduced the infinite Mixture-of-Expertizing class to avoid ill-posedness. Under this class, the inherited solution precisely yields an MoE model whose number of components is bounded by $2n$. 
We proposed two estimation algorithms, Free-EXMoE and Grid-EXMoE. We also considered aggregating gate functions to obtain a parsimonious solution for simpler interpretation, and the aggregated gate function approximates any continuous gate function in the limit of slopes. 

The simulation studies demonstrate that the EXMoE exhibits a sparse solution, similar to the classical NPMLE problem (e.g., \cite{Polyanskiy2020-go}). Furthermore, the estimated support points of $\beta$ concentrate around a true support point, and those aggregations of linear softmax gates recover the true nonlinear gate function. For an MoE model with an overspecified number of components and linear softmax gates, \cite{nguyen2023demystifying} showed that redundant components concentrate on the true components together with their gate parameters asymptotically. However, our observed phenomenon differs, since the corresponding gates are clearly different among the concentrating components (e.g., Figure \ref{fig:sim_sin_gates_separate}). We expect that the difference comes from the estimation target class. That is, the infinite Mixture-of-Expertizing class is no longer restricted to a conventional linear softmax gate and includes the aggregated gate structure as mentioned in Section \ref{sec:experts_aggregation}.
In this sense, our aggregation of gates may be a reasonable post-hoc procedure when the true gate function has a nonlinear structure. In practice, the trade-off in prediction accuracy via aggregation is problematic, as shown in application to the real datasets. Therefore, development of a theoretically supported aggregation method like post-hoc merging for the MoE model \citep{Hai2025-jk} is an important task.

Through application to real datasets, Free-EXMoE achieves higher prediction accuracy and interpretation of covariate-dependent heterogeneity. Interestingly, the estimated number of support points is substantially different between the two datasets (65 and 11).
Together with the simulation results, this observation indicates that the iMoE-NPMLE behaves differently from an over-specified MoE model. When the data are explained by a few experts, the estimate has few support points, as in the NHANES data. When the gates or the conditional prior are more complex, as in the California housing data, the support points are used to approximate nonlinear gates and continuous prior measures. Viewed as an MoE model, our framework therefore naturally extends the conventional linear softmax-gated MoE to a nonparametric class that contains nonlinear gate functions and infinite mixtures.

From a theoretical perspective, it is still unknown whether our NPMLE converges to the true conditional prior measure asymptotically. Whether the same convergence rates are valid for the fixed component case of MoE is also an interesting topic.
On the other hand, the problem of identifiability remains. When the true prior is discrete, the identifiability is guaranteed from classical discussion in \cite{Jiang1999-tc} for normally used several distributions. However, if the prior is continuous, it is unknown what types of identifiability hold. 

Another practical issue is the interpretation in the high-dimensional case. When the dimension of covariates is greater than three, it is difficult to seek and interpret where the heterogeneity lies in the covariate space from estimated gate functions. Developing a method to solve this problem is essential for practical use.
We leave these topics as future work.

\section*{Acknowledgments}
Akira Okazaki was supported by JSPS KAKENHI Grant Number JP25K24377.
Keisuke Yano was supported by JSPS KAKENHI Grant Numbers
JP24K15120, JP24H00247, JP26K02871, JP26K23255, JP26K00325,
JP26K02869, and by JST CREST Grant Number JPMJCR26E3.
Both authors were supported by MEXT STAR-E NEXT Project (Japan Grant Number JPJ013735).

\bibliographystyle{apalike}
\bibliography{NPMoE}

\appendix
\counterwithin{figure}{section}
\counterwithin{table}{section}

\section{Details of Algorithms}
\label{app:alpha_update}
\subsection{Pairwise Frank-Wolfe with reparametrization in Algorithm \ref{alg:exmoe}}
\label{sec:PFW_repara}

This appendix describes the detail of the pairwise Frank-Wolfe algorithm with reparametrization used as our default update rule $\mathcal{U}_{\Delta}$ for the $\alpha$-block of Algorithm \ref{alg:exmoe}. 

In the following, $\gamma$ and $\beta_j$ are fixed, and we write $u_{ij}\coloneq u_i(\gamma_j)=e^{z_i^{\top}\gamma_j}$, $v_{ij}\coloneq v_i(\beta_{j},\gamma_j)=e^{z_i^{\top}\gamma_j}\psi_{ij}$, $\Lambda_i(\alpha)\coloneq\Lambda_i(H)=\sum_{j=1}^{M}\alpha_j u_{ij}$, and $\Xi_i(\alpha)\coloneq\Xi_i(H)=\sum_{j=1}^{M}\alpha_j v_{ij}$ for $H=\sum_{j=1}^{M}\alpha_j\delta_{(\beta_{j},\gamma_j)}$, so that the negative log-likelihood $\mathcal{L}(\alpha)=-\ell(H)$ and its gradient are
\begin{equation}
    \label{eq:app_alpha_objective}
    \mathcal{L}(\alpha)=\sum_{i=1}^{n}\bigl\{\log \Lambda_i(\alpha)-\log \Xi_i(\alpha)\bigr\},\qquad
    -\frac{\partial\mathcal{L}}{\partial\alpha_j}=D_j(\alpha)=\sum_{i=1}^{n}\left\{\frac{v_{ij}}{\Xi_i(\alpha)}-\frac{u_{ij}}{\Lambda_i(\alpha)}\right\}=D(H;(\beta_{j},\gamma_j)).
\end{equation}
Since $\mathcal{L}(t\alpha)=\mathcal{L}(\alpha)$ for $t>0$, minimizing $\mathcal{L}$ over the simplex is equivalent to minimizing it over the nonnegative orthant. The first-order conditions on the simplex,
\begin{equation}
    \label{eq:app_alpha_kkt}
    D_j(\alpha)=0\ \ (\alpha_j>0),\qquad D_j(\alpha)\leq 0\ \ (\alpha_j=0),
\end{equation}
are the dictionary-restricted version of Theorem \ref{thm:necessary condition}. 

Let $e_j$ be the unit-vector whose $j$-th element is one and other elements are zero. The each unit-vector is viewed as the vertex in the simplex space $\Delta^{M-1}$. The Frank--Wolfe (conditional gradient) method on the simplex space moves toward the vertex with the steepest descent of $\mathcal{L}$ in each iteration. That is, the index of steepest direction is given by $j_F\coloneq\argmax_j D_j(\alpha)$. The update of $\alpha$ to the next value $\alpha^{+}$  is given by
\begin{equation}
    \alpha^{+} = \alpha + td_{\mathrm{FW}}, \qquad d_{\mathrm{FW}} \coloneq e_{j_F}-\alpha,
\end{equation}
where $t$ is a step-size chosen by a line search over $t\in[0,1]$. In each iteration, the current value $\alpha$ is shrunk with scale $1-t$ and the vertex value $te_{j_{F}}$ is added. Thus, Frank-Wolfe method does not induce explicit sparsity. 

To induce sparsity, we use Pairwise Frank-Wolfe method that performs update
\begin{equation}
    \label{eq:PFW}
    \alpha^{+} = \alpha + td_{\mathrm{PFW}}, \qquad d_{\mathrm{PFW}} \coloneq e_{j_F}-e_{j_A},\qquad j_A\coloneq\argmin_{j:\alpha_j>0}D_j(\alpha),
\end{equation}
where $j_{A}$ is an index of the vertex towards Away-step direction, and the step size $t$ is chosen by a line search over $t\in[0,\alpha_{j_A}]$. When $t=\alpha_{j_A}$ is selected, $\alpha_{j_{A}}$ is reduced to exactly zero.
For the present difference-of-convex objective, the line search keeps $\mathcal{L}$ non-increasing. The iteration stops when $\max\{D_{j_F}(\alpha),-D_{j_A}(\alpha)\}$ falls below a tolerance, so that \eqref{eq:app_alpha_kkt} holds up to the tolerance.

In the implementation, the gradient $D_j(\alpha)$ need a modification. Let $\pi_{ij}= \frac{\alpha_ju_{ij}}{\Lambda_{i}(\alpha)}$, $r_{ij}= \frac{\alpha_j v_{ij}}{\Xi_{i}(\alpha)}$, $\Pi_j= \sum_{i=1}^{n}\pi_{ij}$, and $R_j=\sum_{i=1}^{n}r_{ij}$. Then, the gradient written as
\begin{equation}
    D_j(\alpha)= \frac{R_j-\Pi_j}{\alpha_j}.
\end{equation}
When $\alpha_j$ is close to zero, the gradient diverges, making the calculation unstable in some practice. Furthermore, $\alpha_j$ is also tied to probability weight at $z=0$ as the intercept representation $\log \alpha_j$, which would not be natural criteria for selecting active experts. Therefore, we consider reparametrization
\begin{equation}
    \tilde{\alpha}_j\coloneq s_j\alpha_j , \qquad s_j\coloneq n^{-1}\sum_{i=1}^{n}\frac{u_{ij}}{\Lambda_i(\alpha)},
\end{equation}
satisfying $\sum_{j=1}^{M}\tilde{\alpha}_{j}=1$, and $\tilde\alpha_{j}= \bar{g}_{j}$. The $\tilde{\alpha}_{j}$ evaluates the importance of the $j$-th expert by the average of  observation with fixed the current value of $\beta_j$ and $\gamma_{j}$.
In addition, the reparametrized gradient can be written by
\begin{equation}
    \tilde{D}_j (\alpha)\coloneq - \nabla_{\tilde{\alpha}_{j}} \mathcal{L}\left(\frac{\tilde\alpha_j}{s_j}\right)= \frac{D_j(\alpha)}{s_j}= n\left(\frac{R_j}{\Pi_j}-1\right).
\end{equation}

 As a result, in each $\alpha$-block, we fix $s_j$ at the current $\alpha$ and update $\tilde{\alpha}$ by the Pairwise Frank-Wolfe method \eqref{eq:PFW} with the gradient $\tilde{D}_j(\alpha)$ in place of $D_j(\alpha)$ and the step size $t\in[0,\tilde{\alpha}_{j_A}]$. The update stops when $\max\{\tilde{D}_{j_F}(\alpha),-\tilde{D}_{j_A}(\alpha)\}<\tau_{\alpha}$ or after $S_{\alpha}$ steps, and then $\alpha_j=\tilde{\alpha}_j/s_j$ is normalized to the simplex. The tolerance is measured on the scale of $\tilde{D}_j$. The procedure is summarized in Algorithm \ref{alg:pfw_share}. In Algorithm \ref{alg:fixed_dictionary} of the active set refinement, the same procedure is applied to the points in the active set $\mathcal{S}$ and repeated, with $s_j$ recomputed at the start of each repetition, until $\max\{\tilde{D}_{j_F},-\tilde{D}_{j_A}\}<\epsilon_{\mathrm{tol}}$.

\begin{algorithm}[!htp]
    \caption{$\alpha$-update by the pairwise Frank--Wolfe method with the gate-averaged reparametrization}
    \label{alg:pfw_share}
    \KwModel{Current weights $\alpha\in\Delta^{M-1}$; $u_{ij}$ and $v_{ij}$ for $i=1,\ldots,n$ and $j=1,\ldots,M$ at the current $(\gamma,\beta)$}
    \KwTune{$\tau_{\alpha}$, $S_{\alpha}$}
    \KwOut{Updated weights $\alpha\in\Delta^{M-1}$}
    \BlankLine
    Compute $\Lambda_i(\alpha)$ and $s_j=n^{-1}\sum_{i=1}^{n}u_{ij}/\Lambda_i(\alpha)$ for $j=1,\ldots,M$, and set $\tilde{\alpha}_j\leftarrow s_j\alpha_j$, $\tilde{u}_{ij}\leftarrow u_{ij}/s_j$, and $\tilde{v}_{ij}\leftarrow v_{ij}/s_j$\;
    \For{$s=1,\ldots,S_{\alpha}$}{
        Compute $\tilde{D}_j=\sum_{i=1}^{n}\bigl\{\tilde{v}_{ij}/\Xi_i-\tilde{u}_{ij}/\Lambda_i\bigr\}$ for $j=1,\ldots,M$, with $\Lambda_i=\sum_{k}\tilde{\alpha}_k\tilde{u}_{ik}$ and $\Xi_i=\sum_{k}\tilde{\alpha}_k\tilde{v}_{ik}$\;
        Set $j_F\leftarrow\argmax_{j}\tilde{D}_j$ and $j_A\leftarrow\argmin_{j:\tilde{\alpha}_j>0}\tilde{D}_j$\;
        \If{$\max\{\tilde{D}_{j_F},-\tilde{D}_{j_A}\}<\tau_{\alpha}$ or $j_F=j_A$}{\textbf{break}\;}
        Set $t^{\ast}\leftarrow\argmax_{t\in[0,\tilde{\alpha}_{j_A}]}\sum_{i=1}^{n}\bigl\{\log(\Xi_i+t(\tilde{v}_{ij_F}-\tilde{v}_{ij_A}))-\log(\Lambda_i+t(\tilde{u}_{ij_F}-\tilde{u}_{ij_A}))\bigr\}$ by a line search\;
        \If{the line search does not increase the log-likelihood}{\textbf{break}\;}
        Set $\tilde{\alpha}_{j_F}\leftarrow\tilde{\alpha}_{j_F}+t^{\ast}$ and $\tilde{\alpha}_{j_A}\leftarrow\tilde{\alpha}_{j_A}-t^{\ast}$, where $\tilde{\alpha}_{j_A}=0$ exactly if $t^{\ast}=\tilde{\alpha}_{j_A}$\;
    }
    Return $\alpha_j\leftarrow(\tilde{\alpha}_j/s_j)/\sum_{k}(\tilde{\alpha}_k/s_k)$ for $j=1,\ldots,M$\;
\end{algorithm}

\subsection{Details of Algorithm \ref{alg:fixed_dictionary}}

In this section, we provide the detail of the two-stage procedures in Algorighm \ref{alg:fixed_dictionary}.

\subsubsection{MM update for the warm-start}

In Algorithm \ref{alg:fixed_dictionary}, the first stage runs the MM algorithm, which does not increase the negative log-likelihood at each iteration. 

Let  $P_j=\sum_i v_{ij}/\Xi_i$ and $S_j=\sum_i u_{ij}/\Lambda_i$.
For the current fixed $\alpha$, the upper bound of the negative log-likelihood with next update $\alpha^{+}$ is written as
\begin{equation}
    \mathcal{L}(\alpha^{+})=\sum_{i=1}^{n}\{\log\Lambda_i(\alpha^{+})-\log\Xi_i(\alpha^{+}) \}\leq \sum_{i=1}^{n}\left\{ \log\Lambda_i(\alpha)+\frac{\Lambda_i(\alpha^{+})}{\Lambda_i(\alpha)} -1 -\sum_{j=1}^{M} r_{ij}\log \frac{\alpha_j^{+}v_{ij}}{r_{ij}} \right\}.
\end{equation}
Here, we use Jensen's inequality for $-\log \Xi_i(\alpha^{+})=-\log\sum_{j=1}^{M} r_{ij}\frac{\alpha_{j}^{+}v_{ij}}{r_{ij}}$ and inequality $\log x \leq x-1$ for $\log\frac{ \Lambda_{i}(\alpha^{+})}{ \Lambda_{i}(\alpha)}$.
Since the right-hand side is equal to $\mathcal{L}(\alpha)$ with $\alpha^{+}=\alpha$, minimizing it with respect to $\alpha^{+}$ does not increase the negative log-likelihood. The right-hand side terms only depending on $\alpha^{+}$ is written as
\begin{equation}
\sum_{i=1}^{n}\left\{\frac{\Lambda_i(\alpha^{+})}{\Lambda_i(\alpha)}- \sum_{j=1}^{M} r_{ij}\log \alpha_j^{+}\right\} = \sum_{j=1}^{M}(\alpha_j^{+}S_j-R_j\log\alpha_j^{+}).
\end{equation}
Thus, the update of each coordinate is given by

\begin{equation}
    \label{eq:app_mm_update}
    \alpha_j^{+} = \frac{R_j}{S_j} = \alpha_j \frac{P_j}{S_j},\qquad j=1,\ldots,M,
\end{equation} 
 Since it never sets $\alpha_j$ exactly to zero, we use it only for the warm-start.

\subsubsection{Active-set refinement}
\label{app:activeset_refinement}

 After the warm start, the active set is initialized as $\mathcal{S}=\{j:\tilde{\alpha}_j\geq\epsilon_0\}$, where $\tilde{\alpha}_j=s_j\alpha_j$ and $s_j=S_j(\alpha)/n$. On $\mathcal{S}$, the pairwise Frank--Wolfe method is applied to $\tilde{\alpha}$ until $\max_{j\in\mathcal{S}}|\tilde{D}_j(\alpha)|\leq\epsilon_{\mathrm{tol}}$, where $s_j$ is recomputed every $S_{\alpha}$ steps. A support point leaves $\mathcal{S}$ only when the away step sets its weight exactly to zero. Then $\tilde{D}_j$ is computed on the whole dictionary from $\tilde{D}_j/n=P_j(\alpha)/S_j(\alpha)-1$, and the points $\mathcal{T}=\{j\notin\mathcal{S}:\tilde{D}_j>\epsilon_{\mathrm{tol}}\}$ are added with $\tilde{\alpha}_j=\epsilon_0$. Then, $\tilde{\alpha}$ is normalized to lie in the simplex space. These two steps are repeated until $\mathcal{T}$ is empty, that is, until \eqref{eq:app_alpha_kkt} holds up to the tolerance $\epsilon_{\mathrm{tol}}$ on the scale of $\tilde{D}_j$.

\subsubsection{Normalization to preserve numerical stability}
We describe the normalization used for the implemention in practice.
In Algorithm \ref{alg:fixed_dictionary}, $u_{ij}$ and $v_{ij}$ are replaced by
\begin{equation}
    \tilde{u}_{ij}\coloneq\frac{e^{z_i^{\top}\gamma_j-\gamma_{\max}\|z_i\|_1}}{\nu_j},\qquad
    \tilde{v}_{ij}\coloneq\tilde{u}_{ij}\,\psi(y_i;\theta(x_i^{\top}\beta^{(m_j)}),\phi),\qquad
    \nu_j\coloneq\frac{1}{n}\sum_{i=1}^{n}e^{z_i^{\top}\gamma_j-\gamma_{\max}\|z_i\|_1},
\end{equation}
where $\gamma_{\max}$ is the bound of $\Gamma=[-\gamma_{\max},\gamma_{\max}]^{d}$. Since $z_i^{\top}\gamma_j\leq\gamma_{\max}\|z_i\|_1$, the exponent is non-positive, which prevents overflow for large $\|z_i\|$ and for $\gamma_j$ near the boundary of $\Gamma$. This replacement does not change the estimate. The factor $e^{-\gamma_{\max}\|z_i\|_1}$ is common to all the points $j$ for each unit $i$ and cancels in the ratio $\Xi_i/\Lambda_i$, and the division by $\nu_j$ is the change of variables $\alpha'_j\propto \nu_j\alpha_j$ of the weights. Its only effect is that the initial values of the weights in Algorithm \ref{alg:fixed_dictionary} are given for $\alpha'$.

\subsection{Implementation settings in estimation algorithms}
\label{app:tunings}
We summarize our default values of implementation for tuning parameters in our estimation algorithms.

\paragraph{Algorithm \ref{alg:exmoe}}
The default values are $\lambda=5\times10^{-2}$, $\tau_{\alpha}=10^{-5}\,n$, $S_{\alpha}=1000$, $T_{\max}=100$, $\epsilon_{\mathrm{tol}}=10^{-7}$, $R=3$, and $s_{\gamma}=0.3$.

\paragraph{Algorithm \ref{alg:fixed_dictionary}}
The default values are $K_1=1000$, $\epsilon_0=10^{-6}$, $\epsilon_{\mathrm{tol}}=10^{-5}\,n$ (that is, $10^{-5}$ per unit on the scale of $\tilde{D}_j$), $S_{\alpha}=1000$, at most $50$ repetitions of Step 2, and $R=3$, and $\Gamma$ is the box $[-\gamma_{\max},\gamma_{\max}]^{d}$ with $\gamma_{\max}=10$ for standardized $z$.

\section{Supplementary figures for the simulation studies}
\label{app:supp_main}

\begin{figure}[!htb]
    \centering
    \begin{subfigure}[b]{0.31\textwidth}\centering\includegraphics[width=\linewidth]{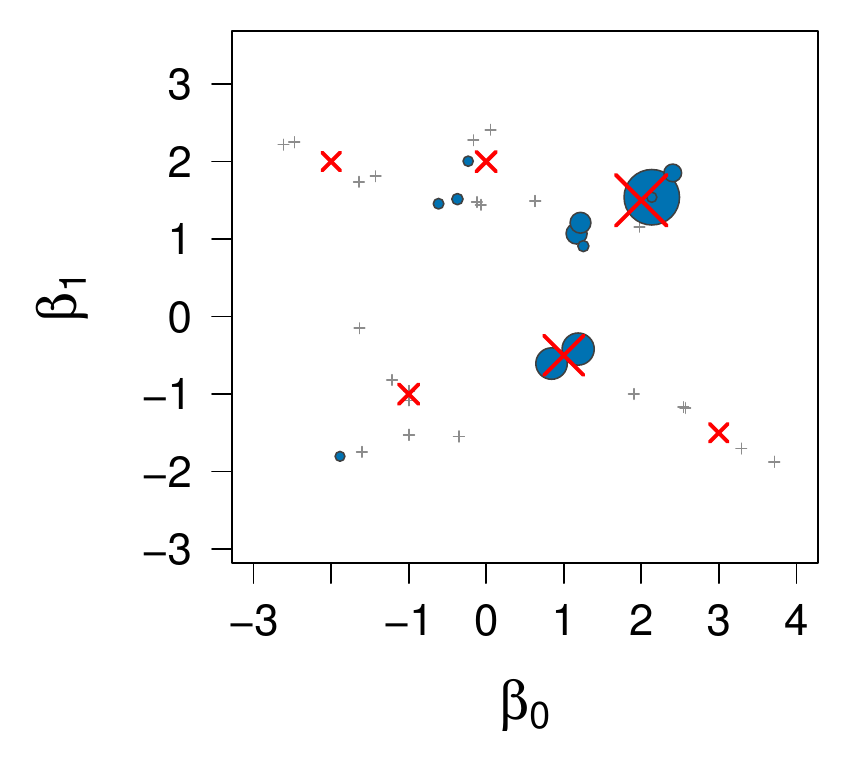}\caption{$s=-2$}\end{subfigure}\hspace{0.04\textwidth}
    \begin{subfigure}[b]{0.31\textwidth}\centering\includegraphics[width=\linewidth]{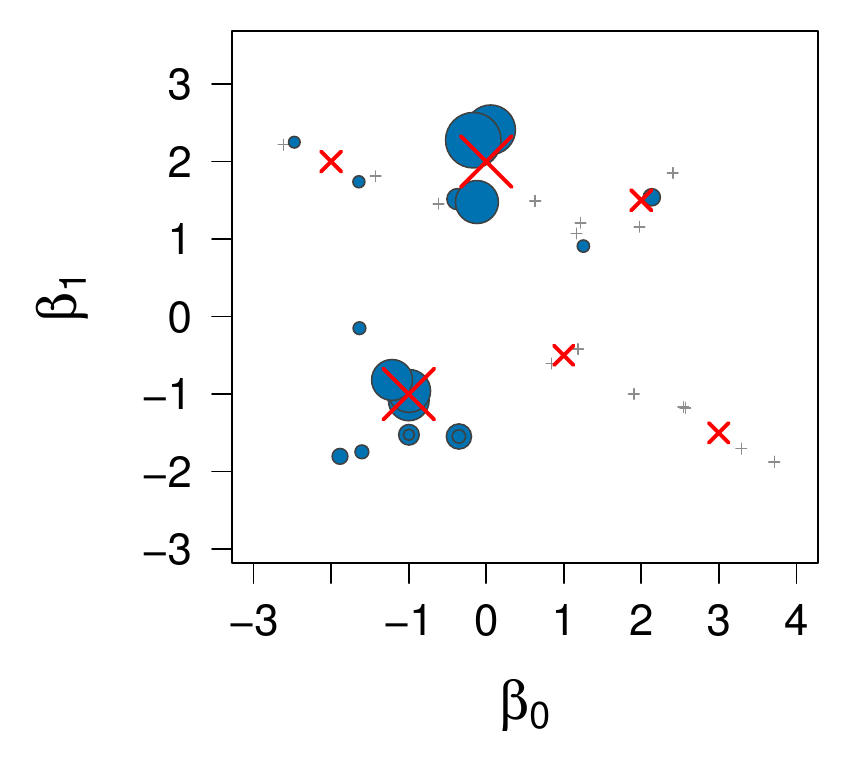}\caption{$s=0$}\end{subfigure}\\[1ex]
    \begin{subfigure}[b]{0.31\textwidth}\centering\includegraphics[width=\linewidth]{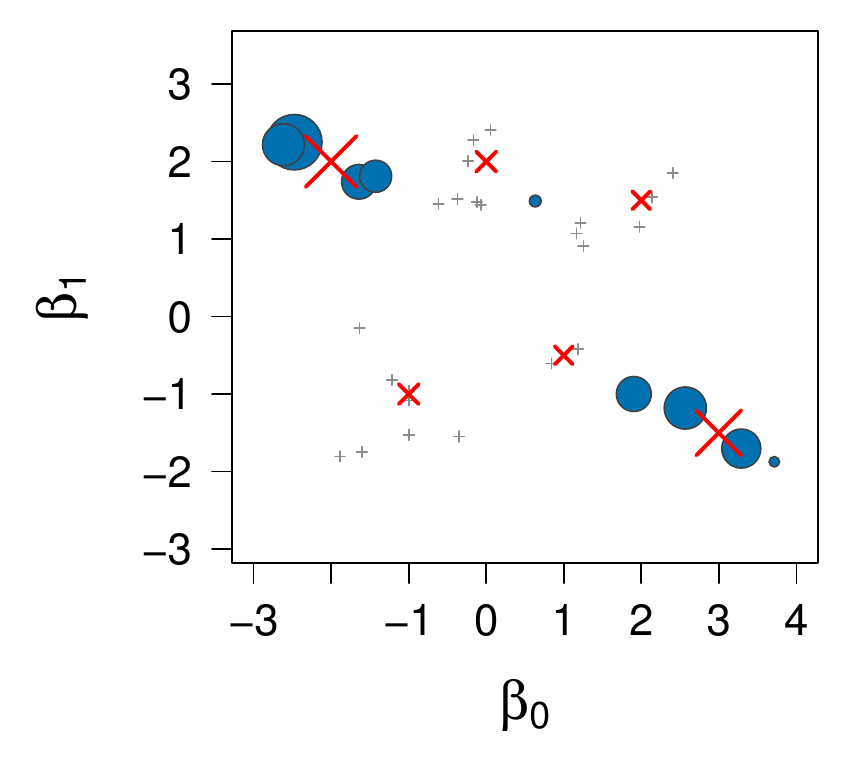}\caption{$s=2$}\end{subfigure}\hspace{0.04\textwidth}
    \begin{subfigure}[b]{0.31\textwidth}\centering\includegraphics[width=\linewidth]{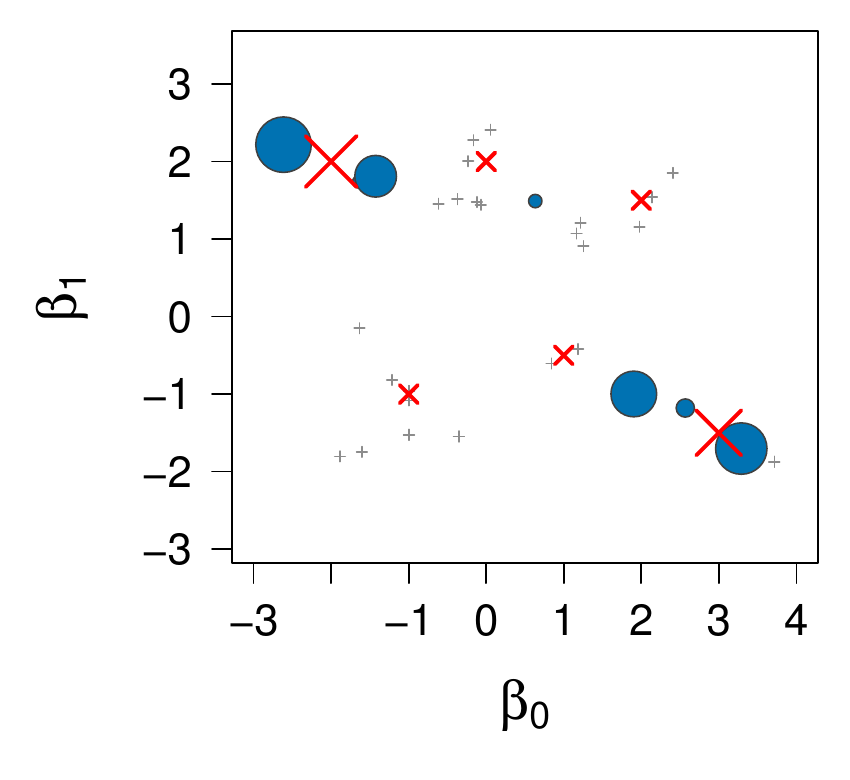}\caption{$s=3$}\end{subfigure}
    \caption{Conditional prior $\hat{G}(\cdot|s)$ estimated by Algorithm \ref{alg:fixed_dictionary} for the smooth Voronoi data at four values of $s$ in the same manner of Figure \ref{fig:sim_sin}.}
    \label{fig:sim_smooth_voronoi}
\end{figure}

\begin{figure}[!htb]
    \centering
    \begin{subfigure}[b]{0.32\textwidth}\centering\includegraphics[width=\linewidth]{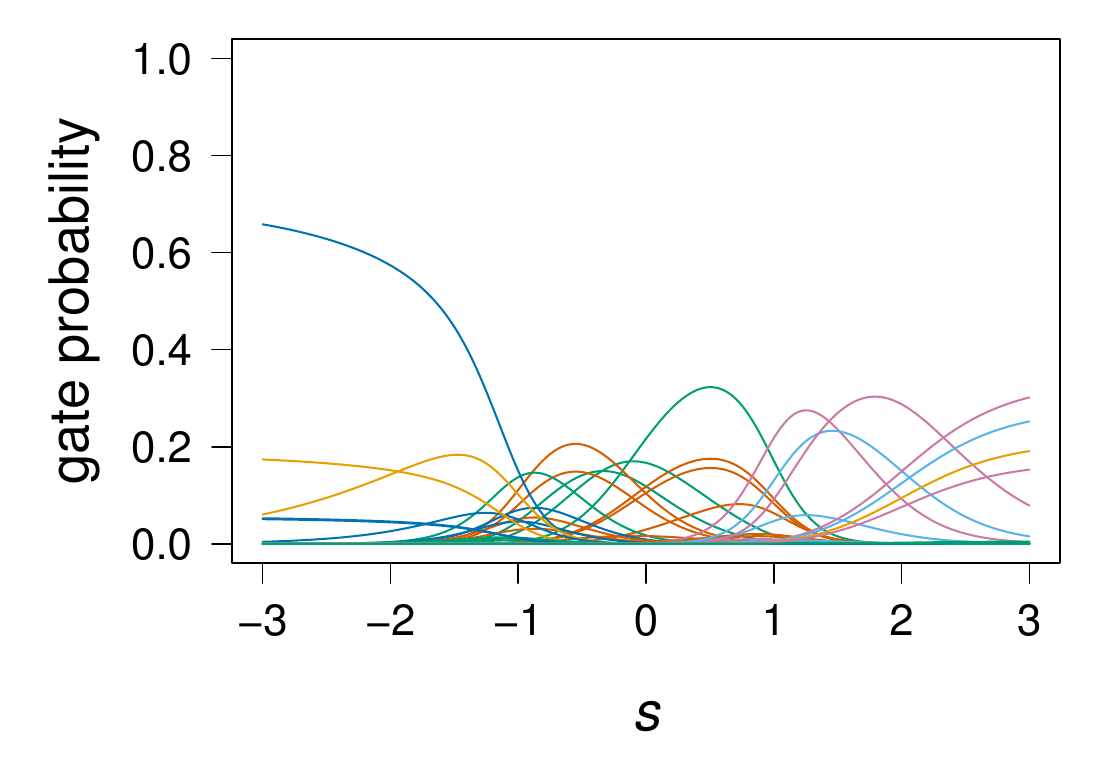}\caption{individual gates}\label{fig:sim_sv_gates_separate}\end{subfigure}\hfill
    \begin{subfigure}[b]{0.32\textwidth}\centering\includegraphics[width=\linewidth]{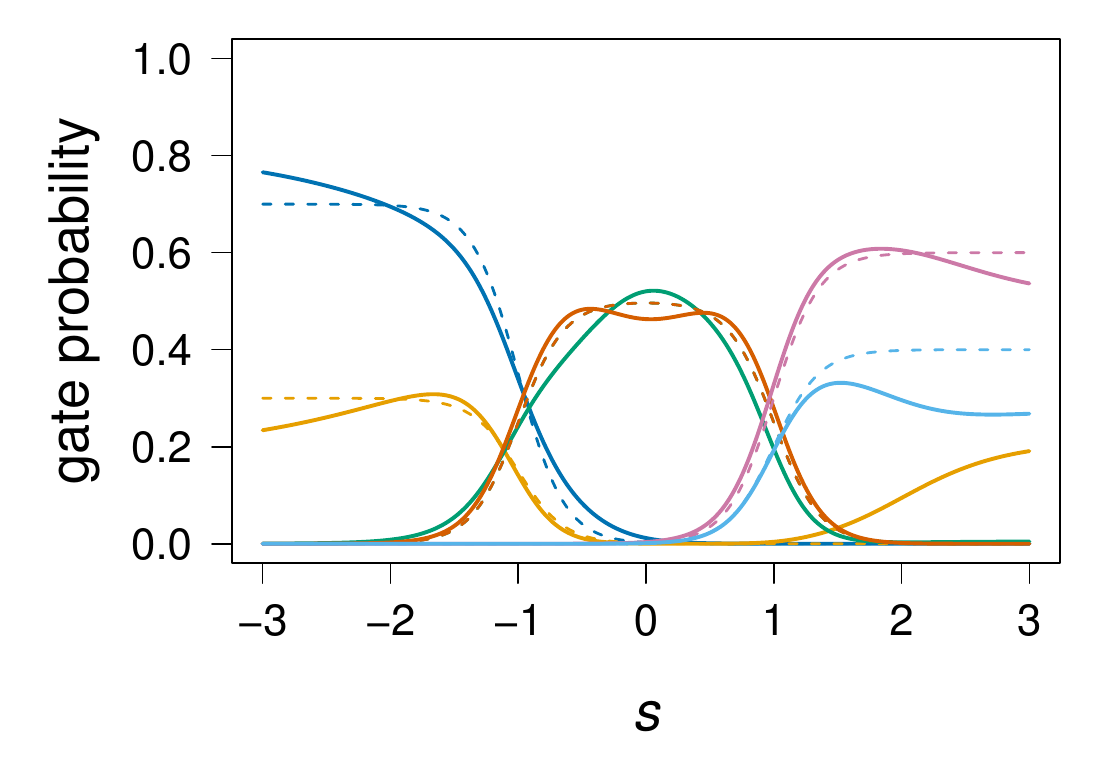}\caption{aggregation by nearest oracle}\label{fig:sim_sv_gates_aggregated}\end{subfigure}\hfill
    \begin{subfigure}[b]{0.32\textwidth}\centering\includegraphics[width=\linewidth]{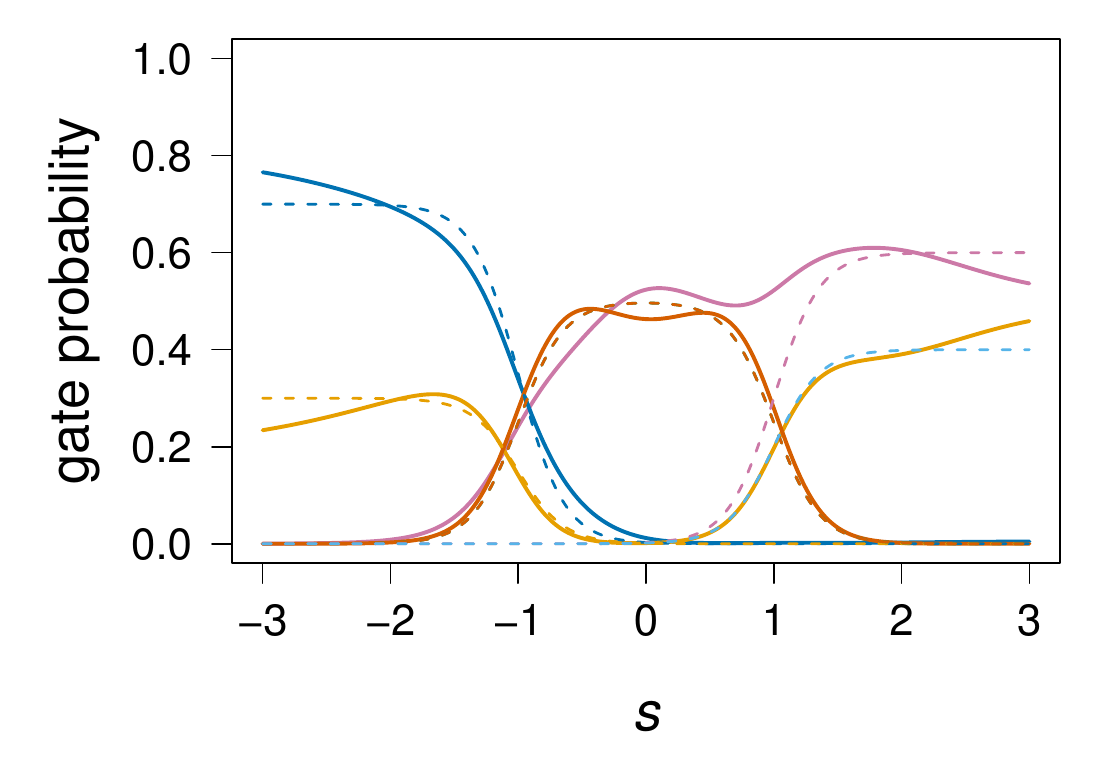}\caption{aggregation by the elbow rule}\label{fig:sim_sv_gates_elbow}\end{subfigure}
    \caption{Gates of the estimate for the smooth Voronoi data in a format similar to Figure \ref{fig:sim_sin_gates}. (a): the individual gates $\hat{g}_j(s)$ of the $41$ estimated support points, colored by the nearest of the six true support points with the colors of Figure \ref{fig:sim_data}. (b): the gates aggregated by the nearest true support point (solid). (c): the aggregated gates $g^{\dagger}_k(s)$ for the $\hat{K}=4$ clusters selected by the elbow method with the weights $\bar{g}_j$ (solid).}
    \label{fig:sim_sv_gates}
\end{figure}

\begin{figure}[!htb]
    \centering
    \begin{subfigure}[b]{0.31\textwidth}\centering\includegraphics[width=\linewidth]{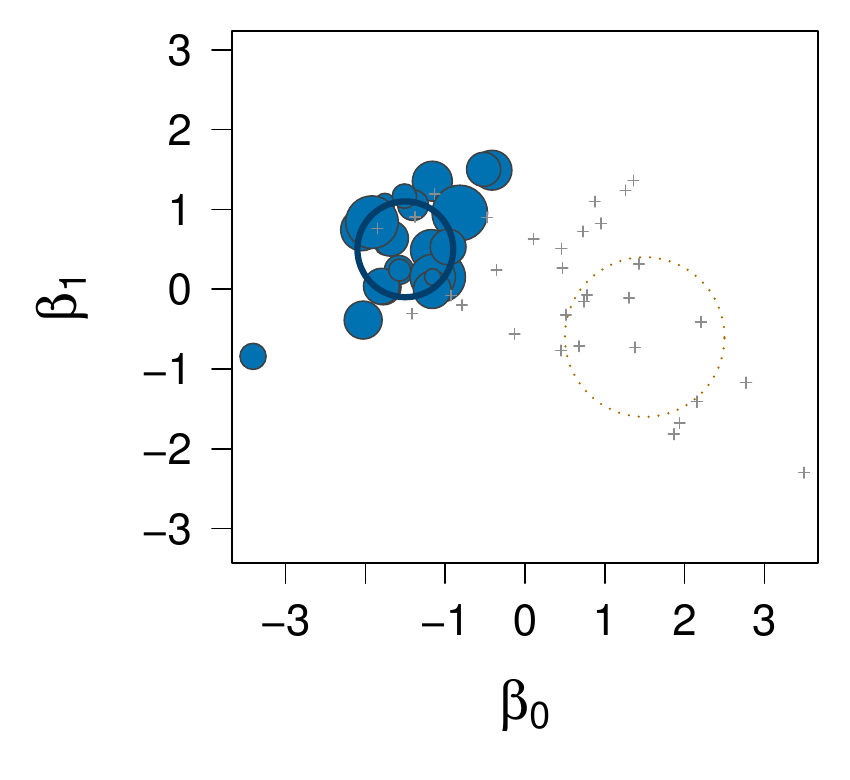}\caption{$s=-2$}\end{subfigure}\hspace{0.04\textwidth}
    \begin{subfigure}[b]{0.31\textwidth}\centering\includegraphics[width=\linewidth]{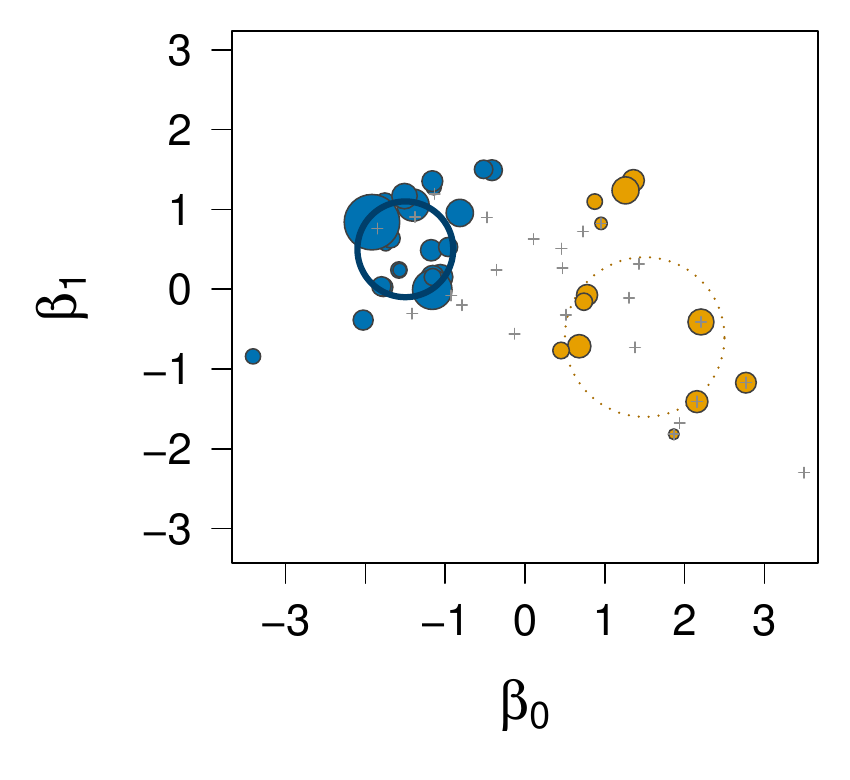}\caption{$s=-0.1$}\end{subfigure}\\[1ex]
    \begin{subfigure}[b]{0.31\textwidth}\centering\includegraphics[width=\linewidth]{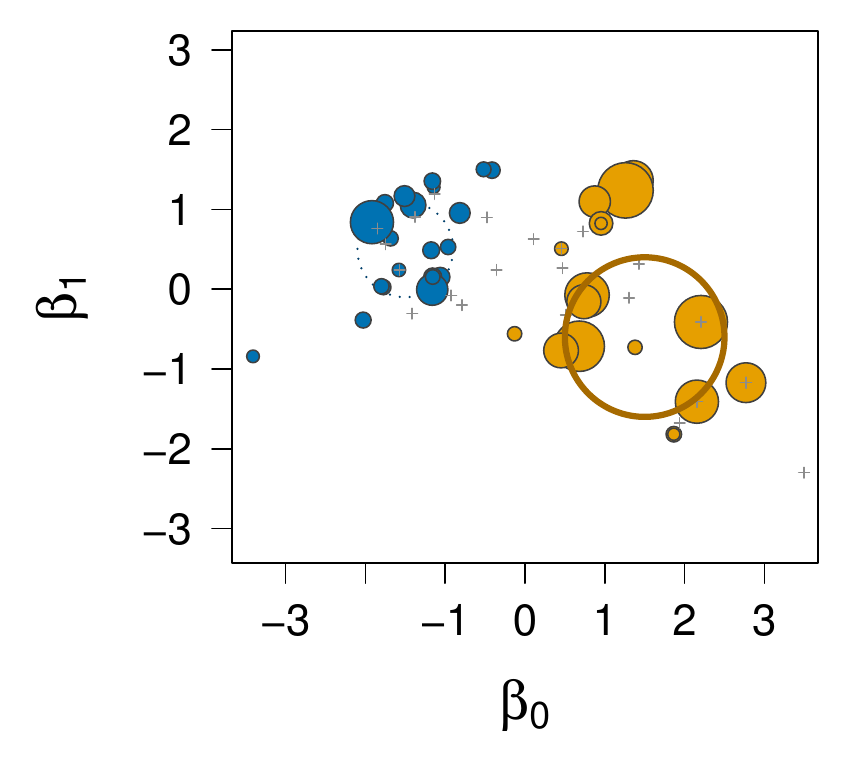}\caption{$s=0.1$}\end{subfigure}\hspace{0.04\textwidth}
    \begin{subfigure}[b]{0.31\textwidth}\centering\includegraphics[width=\linewidth]{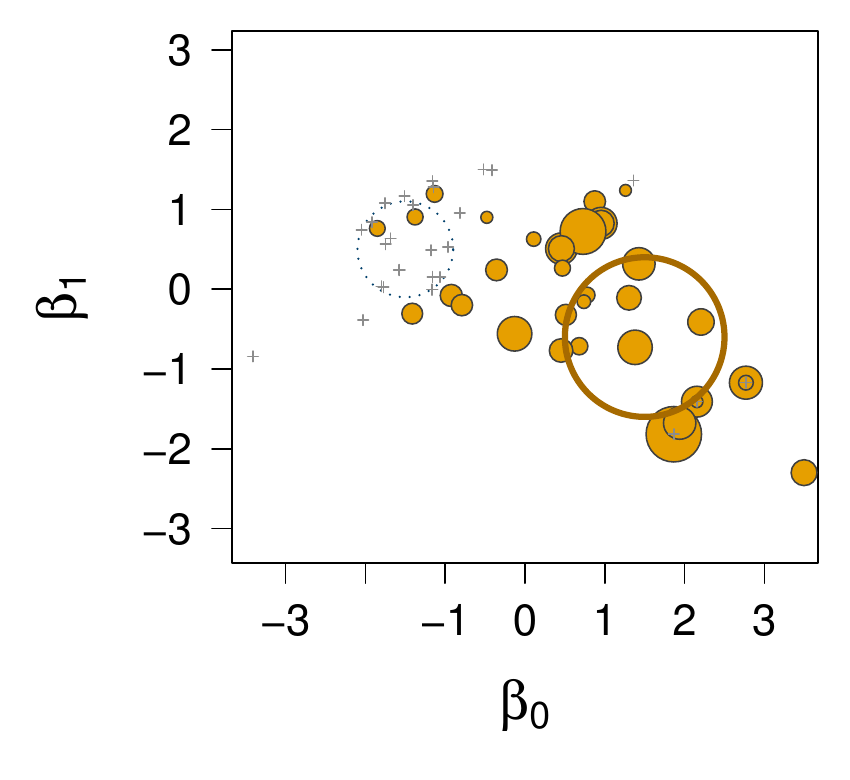}\caption{$s=2$}\end{subfigure}
    \caption{Conditional prior $\hat{G}(\cdot|s)$ estimated by Algorithm \ref{alg:fixed_dictionary} for the switching circle data at four values of $s$ with the same manner of Figure \ref{fig:sim_sin}. As an additional rule, the color of a support point indicates the sign of its gate slope $\hat{\gamma}_j$, blue for negative and orange for positive. The circles are the supports of the true prior \eqref{def:dgp_threshold_circle} in the colors of Figure \ref{fig:sim_true}, blue for $\mathcal{C}(c_A,r_A)$ and orange for $\mathcal{C}(c_B,r_B)$ in darker shades, with the circle active at $s$ drawn as a thick solid line and the other circle as a thin dotted line.}
    \label{fig:sim_threshold_circle}
\end{figure}

\FloatBarrier
\section{Additional results of the real data analyses}
\label{app:additional}
This appendix collects further results of the fits in Section \ref{sec:real_data}.

\subsection{California housing dataset}
Figure \ref{fig:housing_prediction} plots the predictive mean against the observed value on the test blocks, which almost lie along the diagonal under about \$300{,}000 and rather underpredict the upper prices. 
Figure \ref{fig:housing_maps} shows the maps of the observed response, the predictive mean, and the predictive standard deviation 
\begin{equation}
    \widehat{\mathrm{Var}}[y|x,z]^{1/2} = \{\hat{\sigma}^{2}+\mathrm{Var}_{\hat{G}(\cdot|z)}(x^{\top}\beta)\}^{1/2},
\end{equation}
 where the variance of the right-hand side is taken to the pushforward of the estimated prior.
 The predictive mean reproduces the spatial pattern of the observed response. Although the upper prices are underpredicted in Figure \ref{fig:housing_prediction}, this is consistent with the predictive standard deviation, since coastal metropolitan areas with high observed prices also show large predictive standard deviations.

\begin{figure}[htbp]
    \centering
    \includegraphics[width=0.3\linewidth]{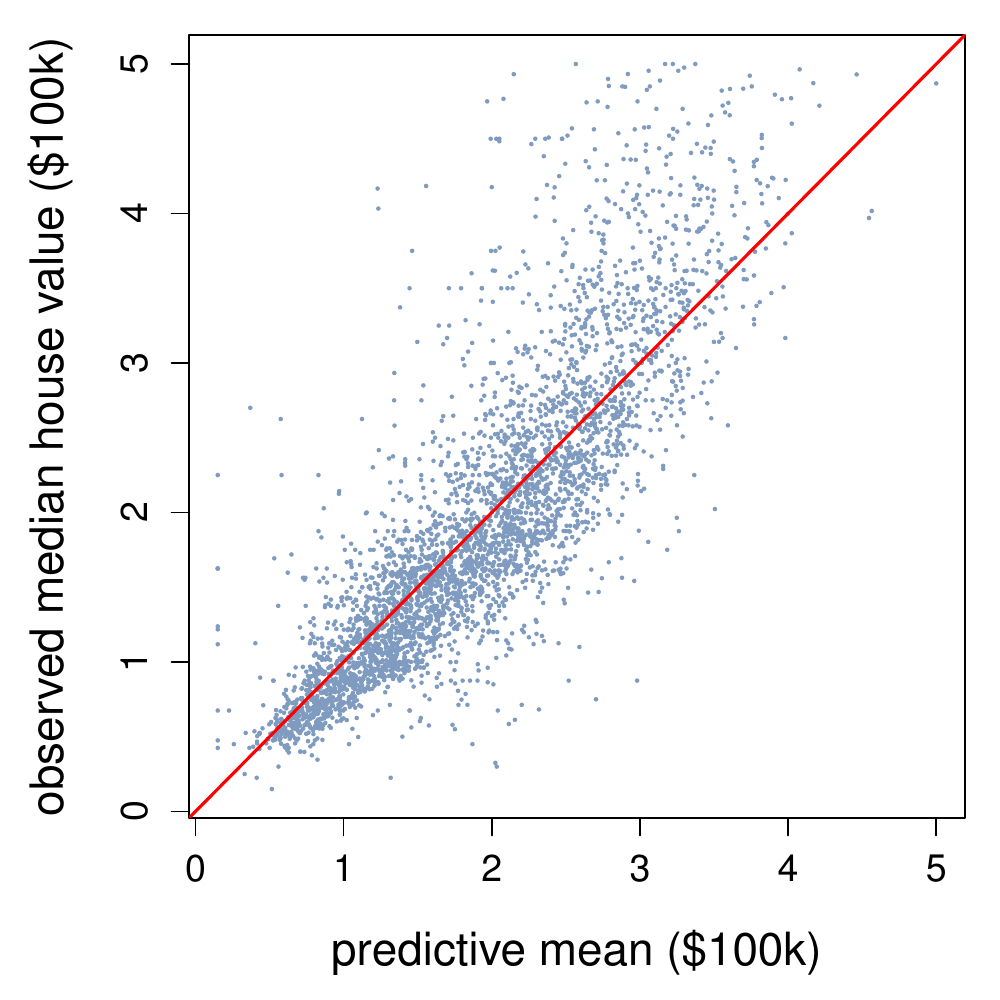}
    \caption{Predictive mean of EXMoE against test data in units of \$100,000.}
    \label{fig:housing_prediction}
\end{figure}

\begin{figure}[htbp]
    \centering
    \begin{subfigure}[b]{0.32\textwidth}\centering\includegraphics[width=\linewidth]{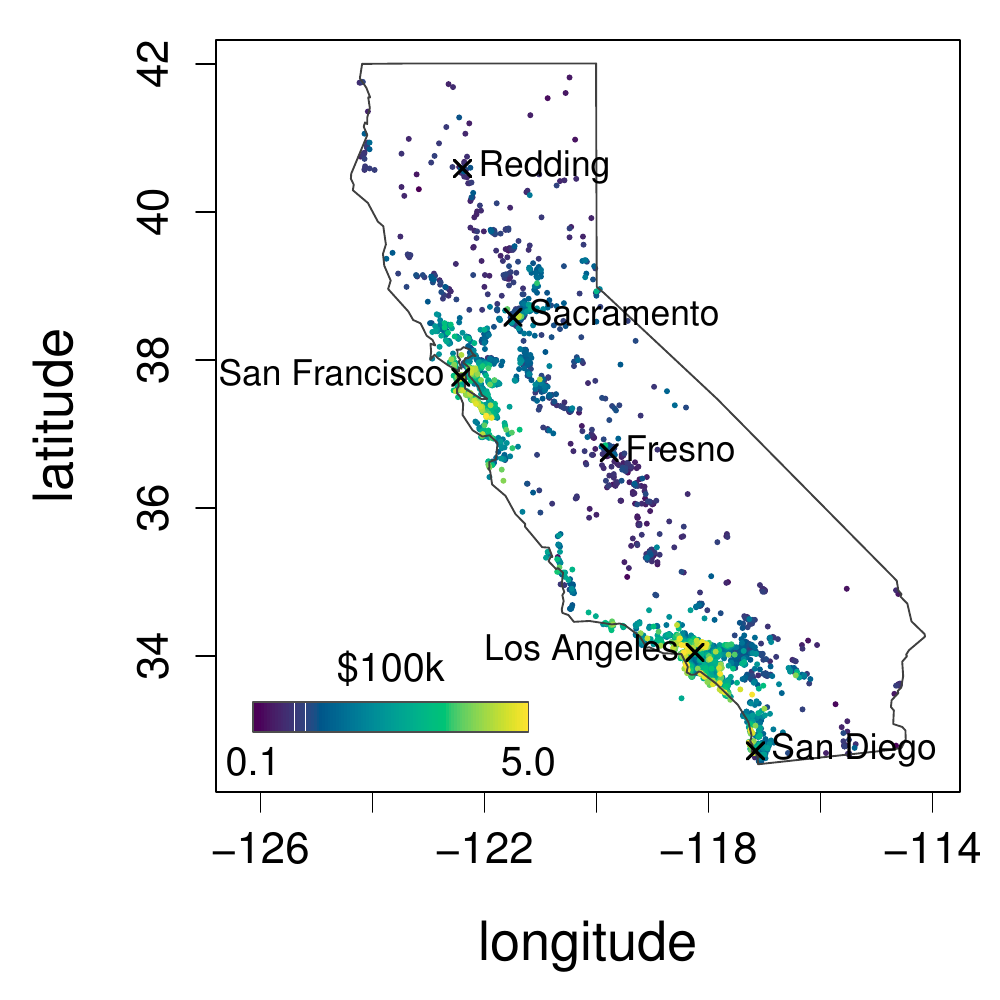}\caption{observed response}\label{fig:housing_map_observed}\end{subfigure}\hfill
    \begin{subfigure}[b]{0.32\textwidth}\centering\includegraphics[width=\linewidth]{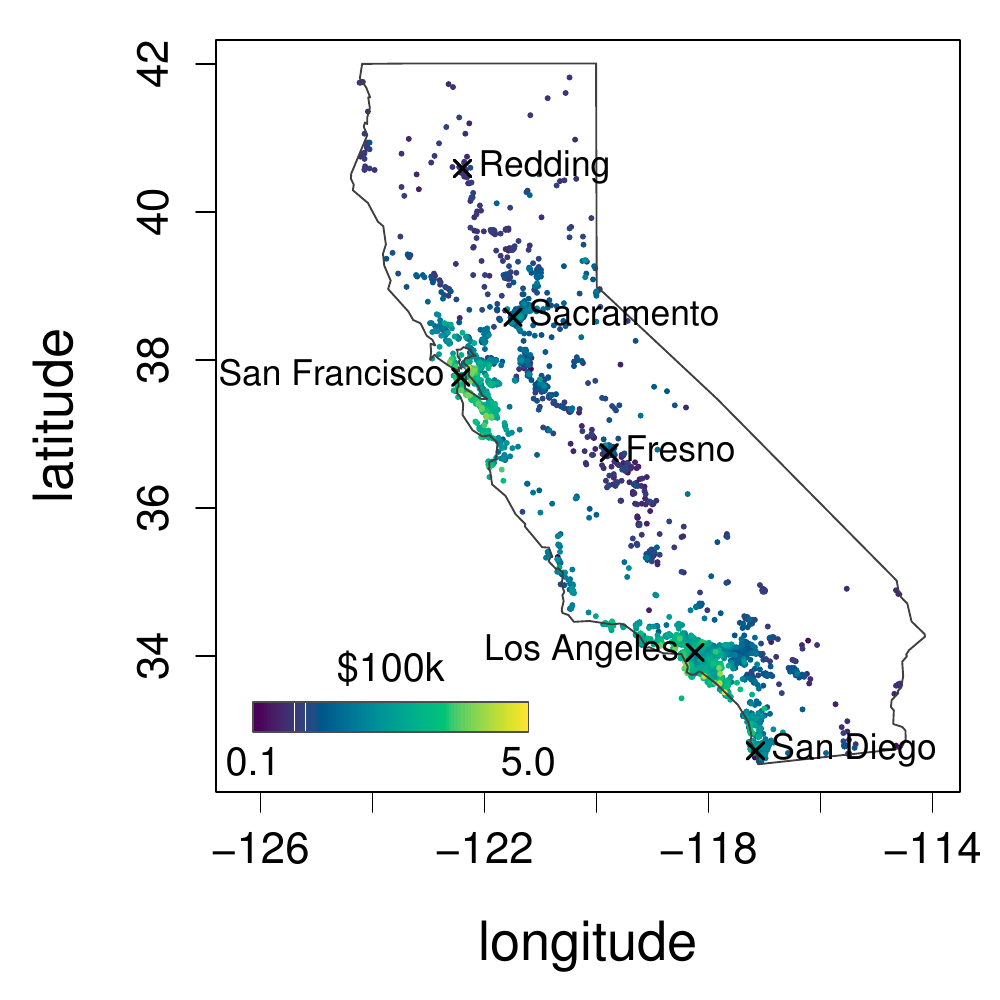}\caption{predictive mean}\label{fig:housing_map_predmean}\end{subfigure}\hfill
    \begin{subfigure}[b]{0.32\textwidth}\centering\includegraphics[width=\linewidth]{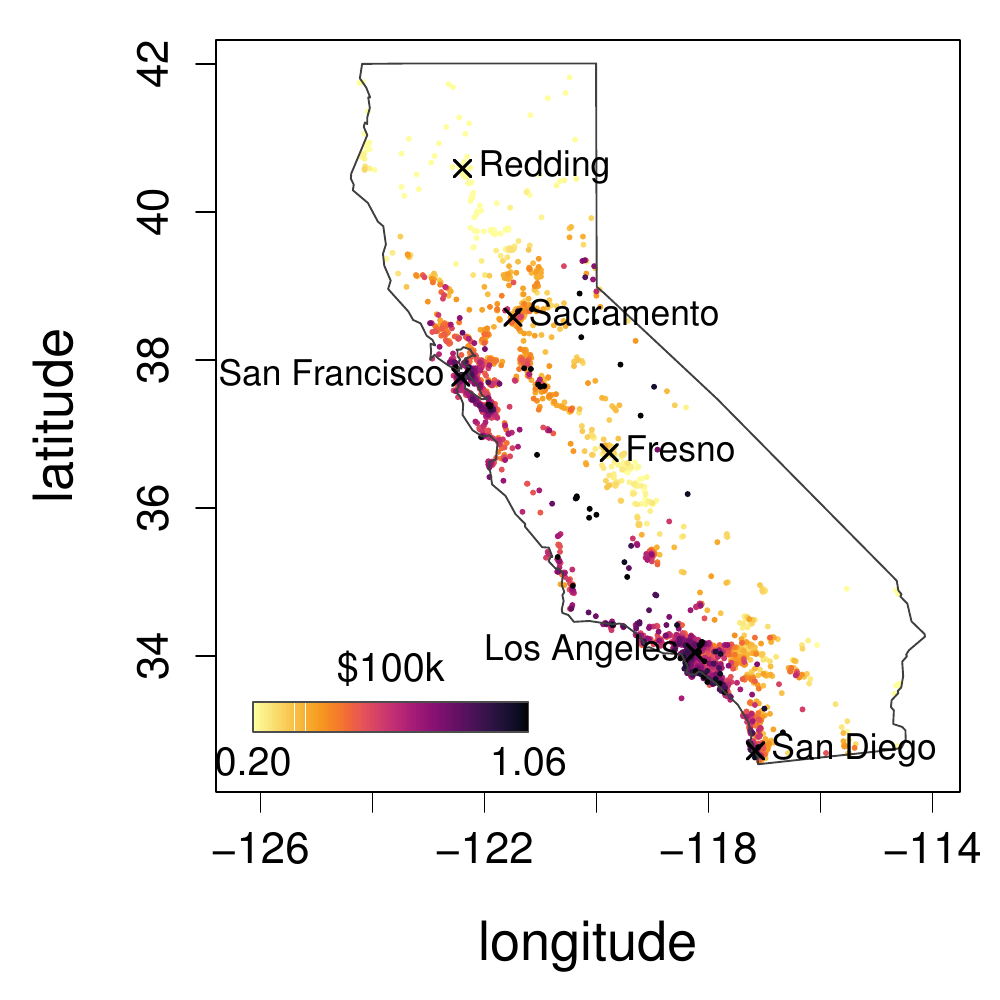}\caption{predictive standard deviation}\label{fig:housing_map_predsd}\end{subfigure}
    \caption{Illustration of prediction of test data ($3{,}935$ blocks) on the California state maps. Each map shows the observed median house value, the predictive mean, and the predictive standard deviation by its color, respectively. Six major cities are marked by crosses. }
    \label{fig:housing_maps}
\end{figure}

\FloatBarrier
\subsection{NHANES depression score dataset}
\label{app:nhanes_prior}

Table \ref{tab:nhanes_clusters} shows estimated $11$ support points with respect to $(\beta,\gamma)$. 
Since the conditional mean of $j$-th expert is given by
\begin{equation}
    \log \mu_i =  \beta_{j0} + \sum_{c=1}^{p-1} x_{ic}\beta_{jc}, 
\end{equation}
we show the regression coefficients on the exponential scale for interpretation.
In particular, $\exp(\hat{\beta}_{j0})$ represents the baseline of the depression score at sample mean $0$. The 6th, 7th, and 9th experts take values close to zero as the baseline, which would be zero-inflated components learned from the data. The $1$ and $10$-th experts show rather high baselines, and the large change of gates is driven by SleepHrs of $1$-st expert. Beyond these results, there is still room for discussion in this analysis.

\begin{table}[htbp]
    \centering
    \caption{NHANES depression score data. The $11$ estimated support points, ordered by the average gate weight $\bar{g}_j$ of \eqref{def:average_gate_weight}. For each support point, the first row gives the regression coefficients on the exponential scale $\exp(\hat{\beta}_j)$, and the second row gives the gate parameters with the intercept representation $\log\hat{\alpha}_j$.}
    \label{tab:nhanes_clusters}
    \scriptsize
    \setlength{\tabcolsep}{3pt}
    \begin{tabular}{rr|lrrrrrr}
    \hline
    $j$ & $\bar{g}_j$ & & intercept & Age & male & PIR & SleepHrs & BMI \\
    \hline
    \multirow{2}{*}{1} & \multirow{2}{*}{0.205} & $\exp(\hat{\beta}_j)$ & 6.213 & 0.976 & 0.956 & 0.895 & 0.954 & 1.030 \\
     & & $(\log\hat{\alpha}_j,\hat{\gamma}_j)$ & 0.000 & 0.199 & $-0.134$ & $-0.483$ & $-0.893$ & 0.048 \\[2pt]
    \multirow{2}{*}{2} & \multirow{2}{*}{0.130} & $\exp(\hat{\beta}_j)$ & 2.902 & 1.173 & 0.885 & 0.822 & 1.377 & 1.459 \\
     & & $(\log\hat{\alpha}_j,\hat{\gamma}_j)$ & $-0.321$ & $-0.058$ & $-0.453$ & 0.026 & 0.343 & $-0.268$ \\[2pt]
    \multirow{2}{*}{3} & \multirow{2}{*}{0.102} & $\exp(\hat{\beta}_j)$ & 1.630 & 0.795 & 0.910 & 0.958 & 1.564 & 0.568 \\
     & & $(\log\hat{\alpha}_j,\hat{\gamma}_j)$ & $-0.656$ & 0.304 & 0.106 & 0.326 & 0.232 & $-0.583$ \\[2pt]
    \multirow{2}{*}{4} & \multirow{2}{*}{0.101} & $\exp(\hat{\beta}_j)$ & 1.307 & 0.617 & 0.998 & 0.793 & 0.674 & 1.601 \\
     & & $(\log\hat{\alpha}_j,\hat{\gamma}_j)$ & $-0.622$ & $-0.421$ & $-0.105$ & 0.499 & 0.253 & 0.035 \\[2pt]
    \multirow{2}{*}{5} & \multirow{2}{*}{0.093} & $\exp(\hat{\beta}_j)$ & 2.528 & 1.396 & 0.776 & 0.791 & 0.905 & 1.207 \\
     & & $(\log\hat{\alpha}_j,\hat{\gamma}_j)$ & $-0.753$ & $-0.161$ & 0.455 & $-0.018$ & $-0.255$ & 0.626 \\[2pt]
    \multirow{2}{*}{6} & \multirow{2}{*}{0.093} & $\exp(\hat{\beta}_j)$ & 0.125 & 0.116 & 0.769 & 1.177 & 0.792 & 0.852 \\
     & & $(\log\hat{\alpha}_j,\hat{\gamma}_j)$ & $-0.893$ & $-0.838$ & 0.207 & $-0.350$ & $-0.268$ & $-0.142$ \\[2pt]
    \multirow{2}{*}{7} & \multirow{2}{*}{0.075} & $\exp(\hat{\beta}_j)$ & 0.016 & 14.930 & 0.687 & 2.217 & 0.670 & 1.468 \\
     & & $(\log\hat{\alpha}_j,\hat{\gamma}_j)$ & $-0.823$ & 0.391 & 0.163 & 0.035 & 0.213 & $-0.202$ \\[2pt]
    \multirow{2}{*}{8} & \multirow{2}{*}{0.071} & $\exp(\hat{\beta}_j)$ & 2.262 & 1.001 & 1.010 & 0.394 & 1.102 & 1.085 \\
     & & $(\log\hat{\alpha}_j,\hat{\gamma}_j)$ & $-1.124$ & 0.455 & $-0.139$ & $-0.278$ & 0.505 & 0.434 \\[2pt]
    \multirow{2}{*}{9} & \multirow{2}{*}{0.071} & $\exp(\hat{\beta}_j)$ & 0.007 & 0.973 & 0.007 & 2.165 & 0.470 & 1.284 \\
     & & $(\log\hat{\alpha}_j,\hat{\gamma}_j)$ & $-0.902$ & 0.490 & $-0.105$ & 0.264 & $-0.175$ & $-0.260$ \\[2pt]
    \multirow{2}{*}{10} & \multirow{2}{*}{0.030} & $\exp(\hat{\beta}_j)$ & 3.984 & 1.135 & 1.315 & 0.948 & 1.504 & 1.163 \\
     & & $(\log\hat{\alpha}_j,\hat{\gamma}_j)$ & $-1.621$ & $-0.007$ & 0.072 & 0.091 & 0.056 & $-0.032$ \\[2pt]
    \multirow{2}{*}{11} & \multirow{2}{*}{0.029} & $\exp(\hat{\beta}_j)$ & 0.600 & 1.946 & 0.828 & 3.720 & 3.641 & 1.091 \\
     & & $(\log\hat{\alpha}_j,\hat{\gamma}_j)$ & $-1.740$ & $-0.354$ & $-0.067$ & $-0.112$ & $-0.013$ & 0.344 \\
    \hline
    \end{tabular}
\end{table}

Figure \ref{fig:appx_nhanes_sleep} shows the marginal conditional prior $\hat{G}(\beta_c|z)$ of the coefficients of the fit of Section \ref{sec:nhanes} along the hours of sleep that are not shown in Figure \ref{fig:nhanes_prior}. Figures \ref{fig:appx_nhanes_age}--\ref{fig:appx_nhanes_bmi} show the marginal conditional prior of every regression coefficient along the gate covariates other than the hours of sleep in the same manner. In each figure one covariate varies over its $1\%$--$99\%$ range, or over both levels for the sex, while the other standardized covariates are fixed at sample mean 0.

\begin{figure}[H]
    \centering
    \begin{subfigure}[b]{0.24\textwidth}\centering\includegraphics[width=\linewidth]{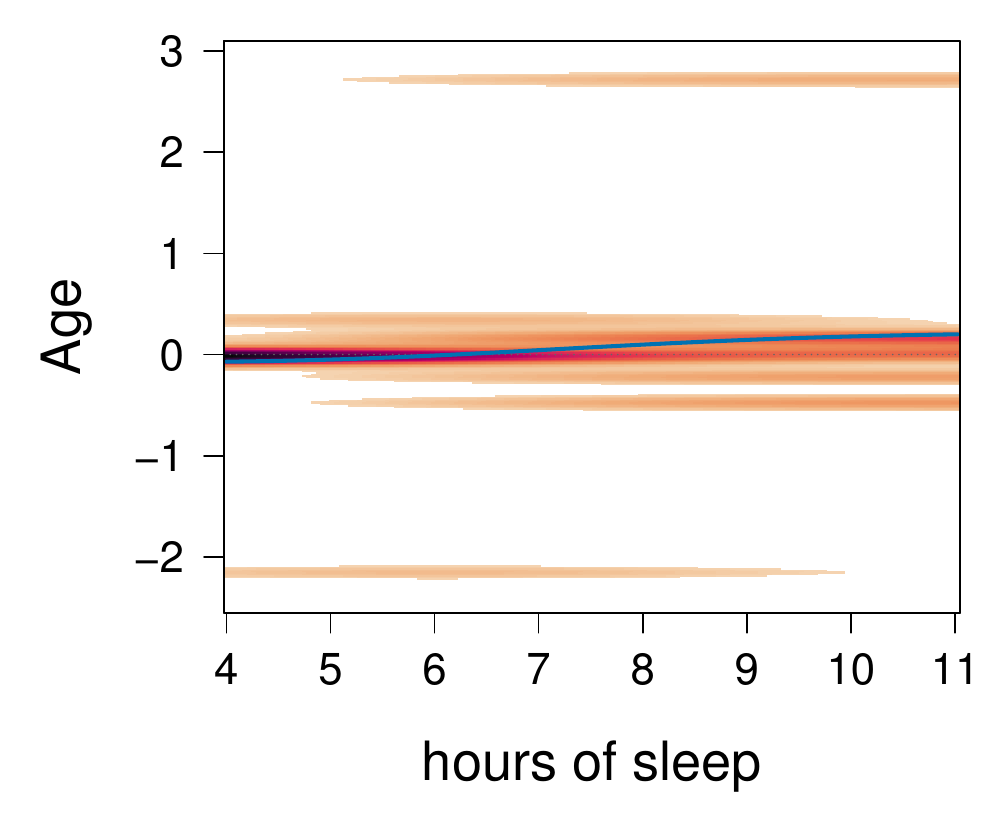}\caption{Age}\end{subfigure}\hfill
    \begin{subfigure}[b]{0.24\textwidth}\centering\includegraphics[width=\linewidth]{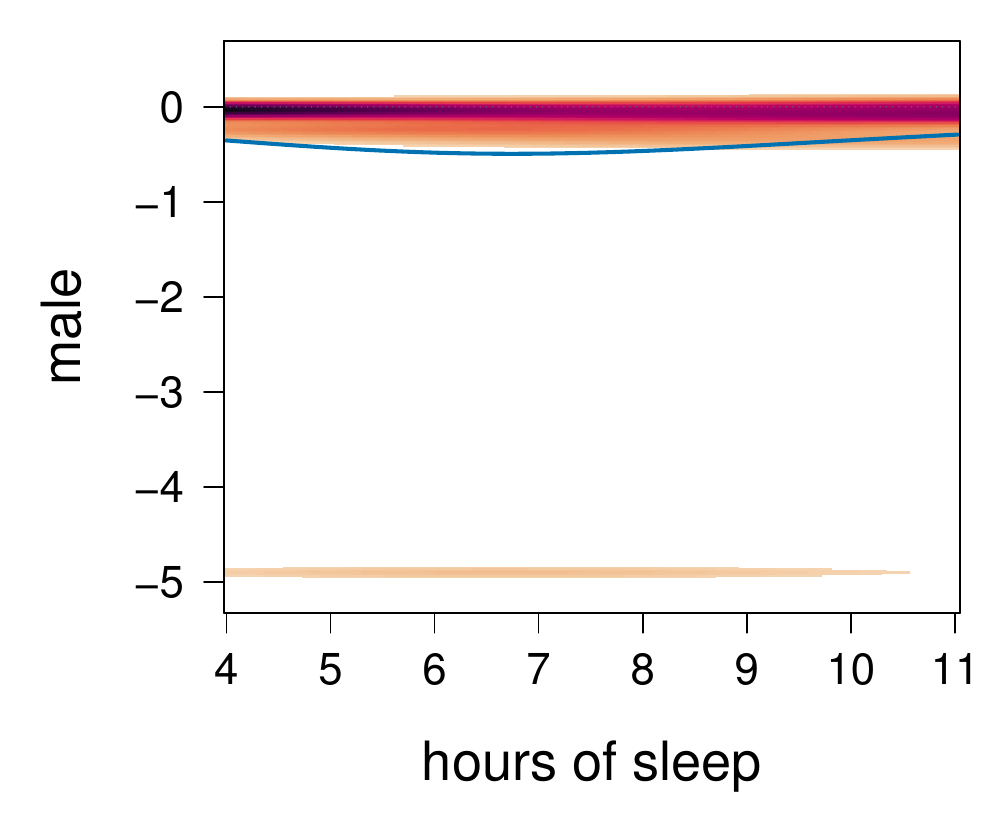}\caption{male}\end{subfigure}\hfill
    \begin{subfigure}[b]{0.24\textwidth}\centering\includegraphics[width=\linewidth]{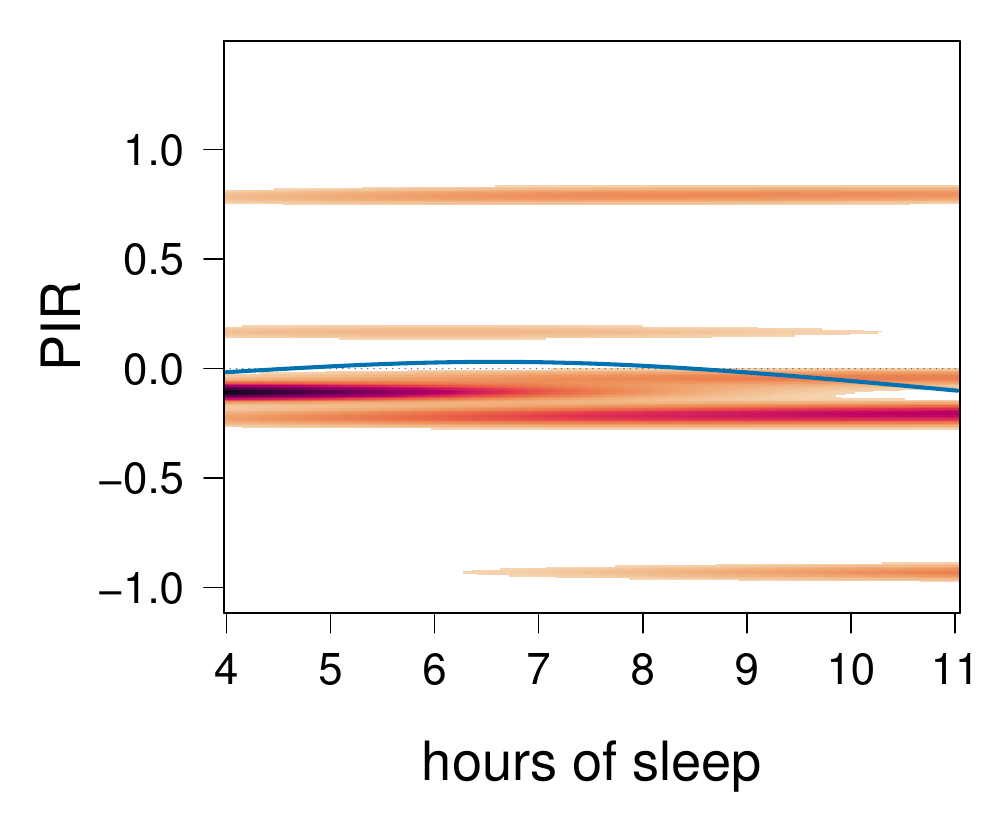}\caption{PIR}\end{subfigure}\hfill
    \begin{subfigure}[b]{0.24\textwidth}\centering\includegraphics[width=\linewidth]{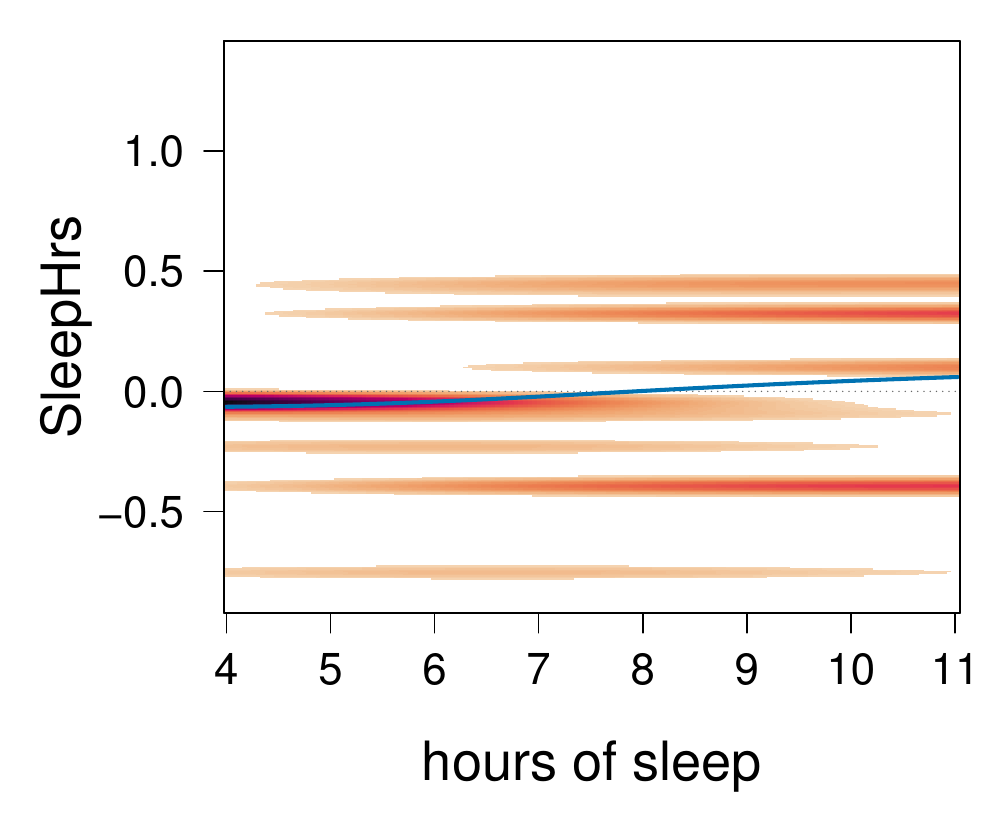}\caption{SleepHrs}\end{subfigure}
    \caption{NHANES depression score data. Marginal conditional prior $\hat{G}(\beta_c|z)$ of the coefficients of Age, male, PIR, and SleepHrs as the hours of sleep vary, in the same manner as Figure \ref{fig:nhanes_prior}.}
    \label{fig:appx_nhanes_sleep}
\end{figure}

\begin{figure}[H]
    \centering
    \begin{subfigure}[b]{0.32\textwidth}\centering\includegraphics[width=\linewidth]{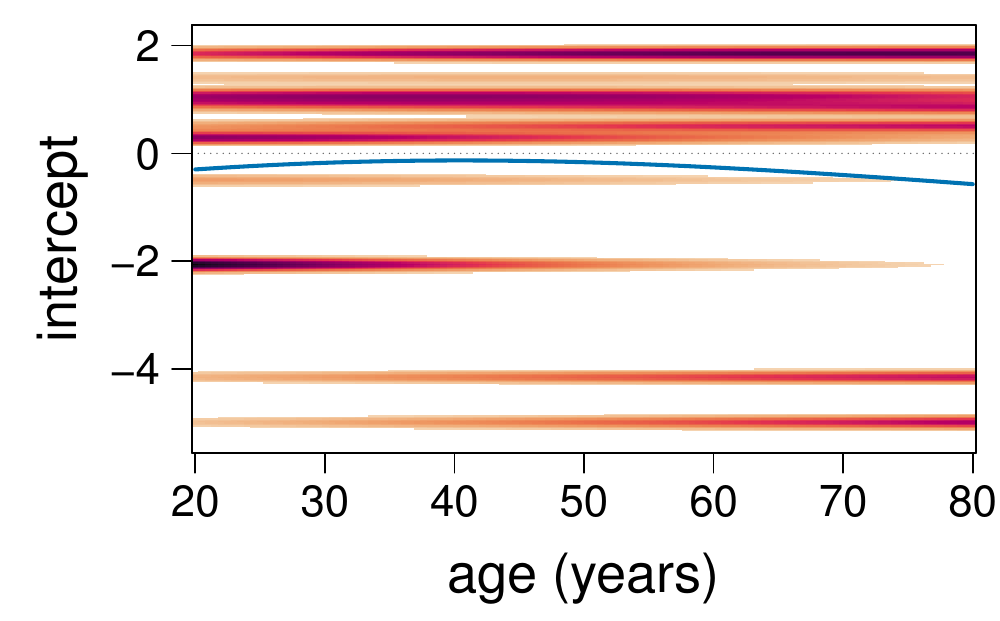}\caption{intercept}\end{subfigure}\hfill
    \begin{subfigure}[b]{0.32\textwidth}\centering\includegraphics[width=\linewidth]{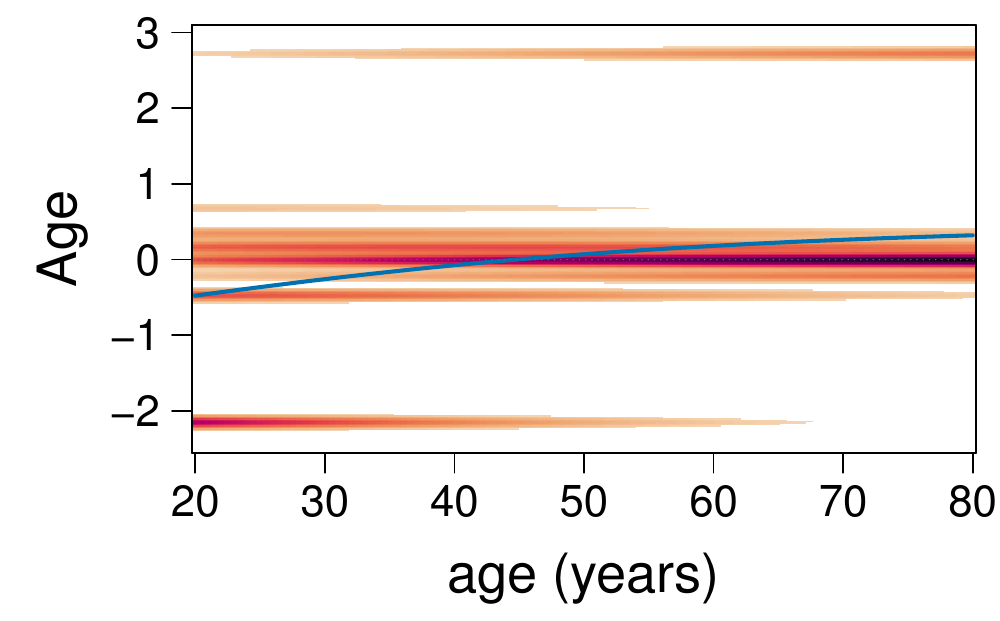}\caption{Age}\end{subfigure}\hfill
    \begin{subfigure}[b]{0.32\textwidth}\centering\includegraphics[width=\linewidth]{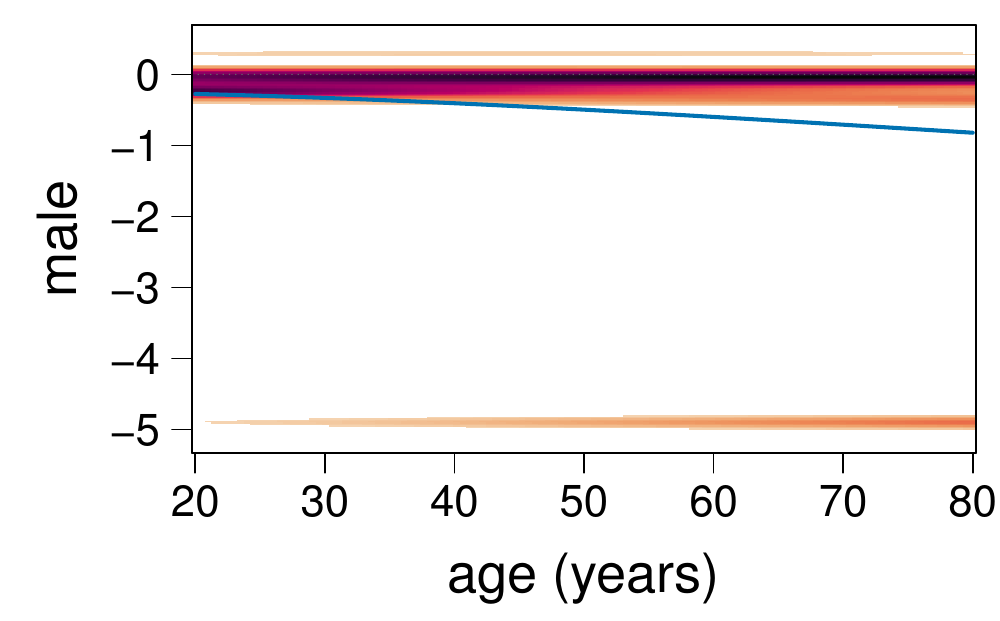}\caption{male}\end{subfigure}\\[1ex]
    \begin{subfigure}[b]{0.32\textwidth}\centering\includegraphics[width=\linewidth]{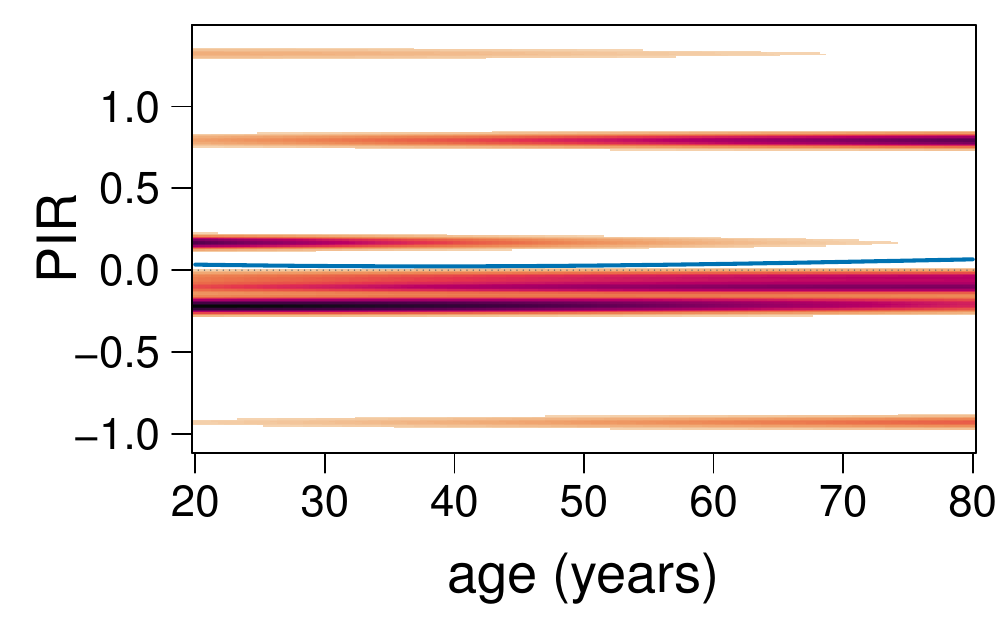}\caption{PIR}\end{subfigure}\hfill
    \begin{subfigure}[b]{0.32\textwidth}\centering\includegraphics[width=\linewidth]{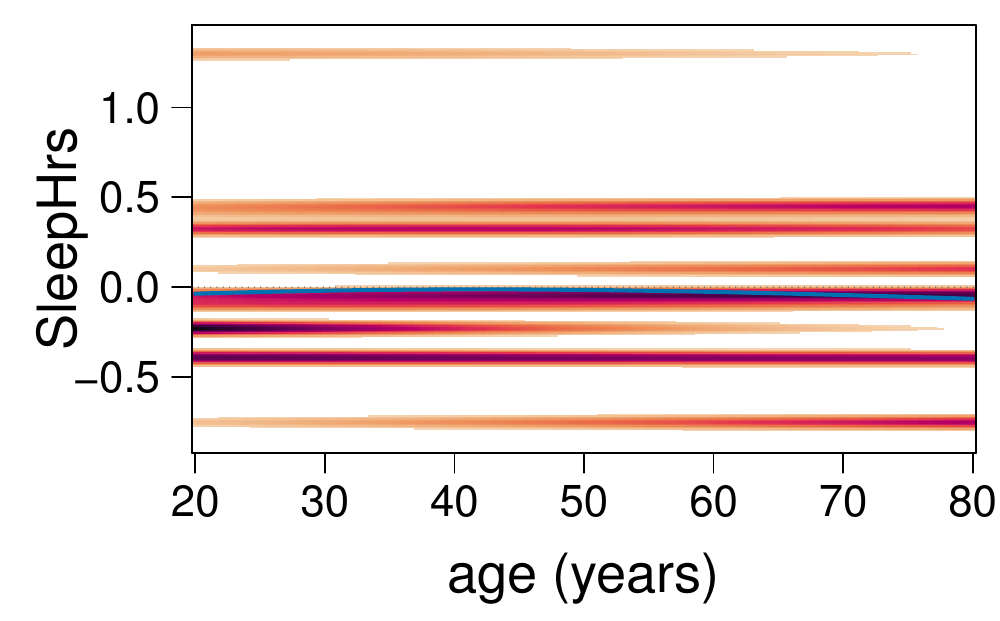}\caption{SleepHrs}\end{subfigure}\hfill
    \begin{subfigure}[b]{0.32\textwidth}\centering\includegraphics[width=\linewidth]{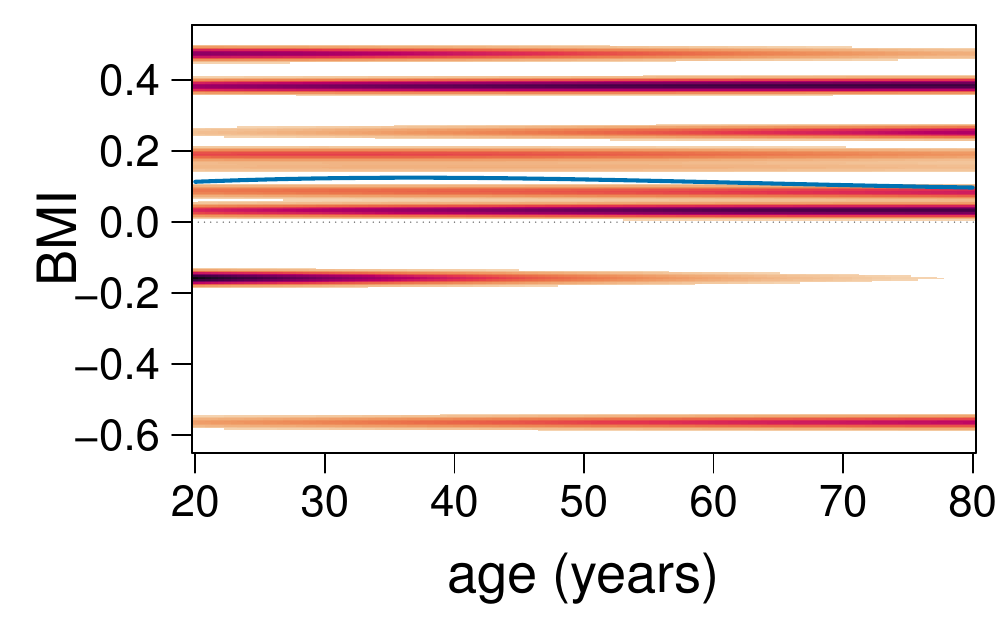}\caption{BMI}\end{subfigure}
    \caption{Conditional prior along the age.}
    \label{fig:appx_nhanes_age}
\end{figure}

\begin{figure}[H]
    \centering
    \begin{subfigure}[b]{0.32\textwidth}\centering\includegraphics[width=\linewidth]{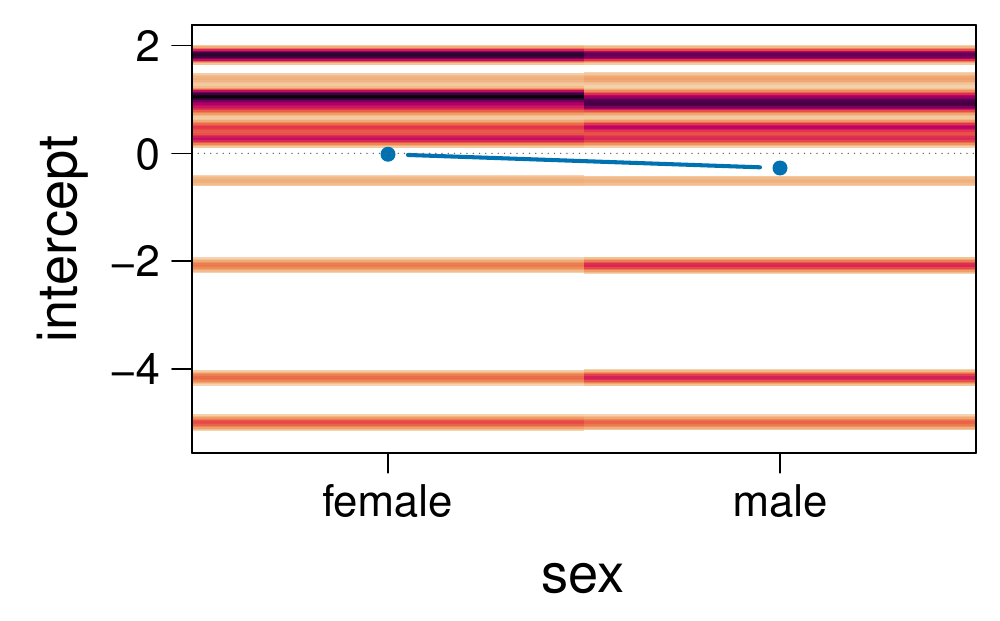}\caption{intercept}\end{subfigure}\hfill
    \begin{subfigure}[b]{0.32\textwidth}\centering\includegraphics[width=\linewidth]{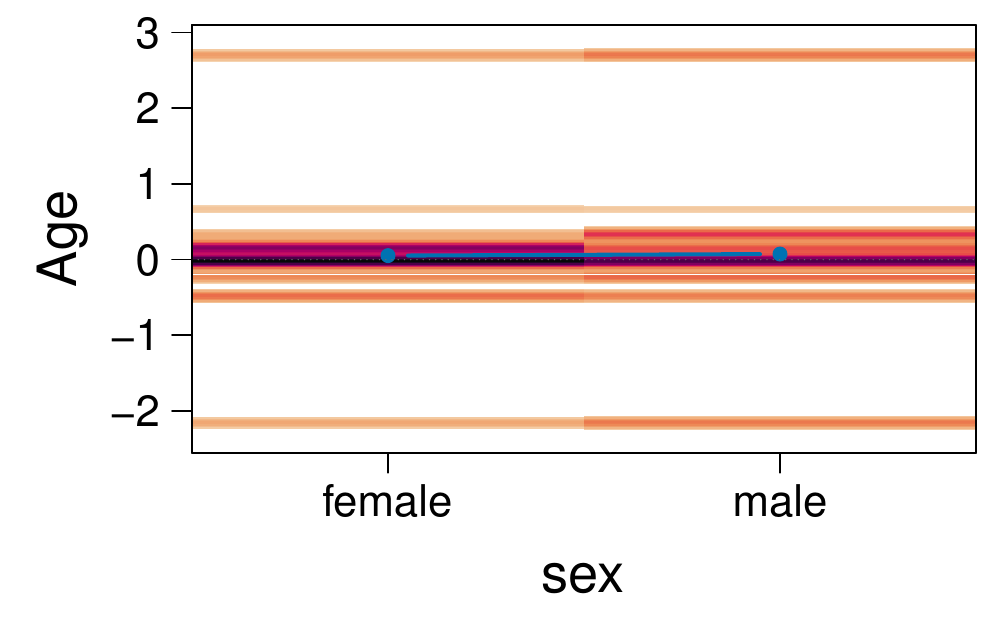}\caption{Age}\end{subfigure}\hfill
    \begin{subfigure}[b]{0.32\textwidth}\centering\includegraphics[width=\linewidth]{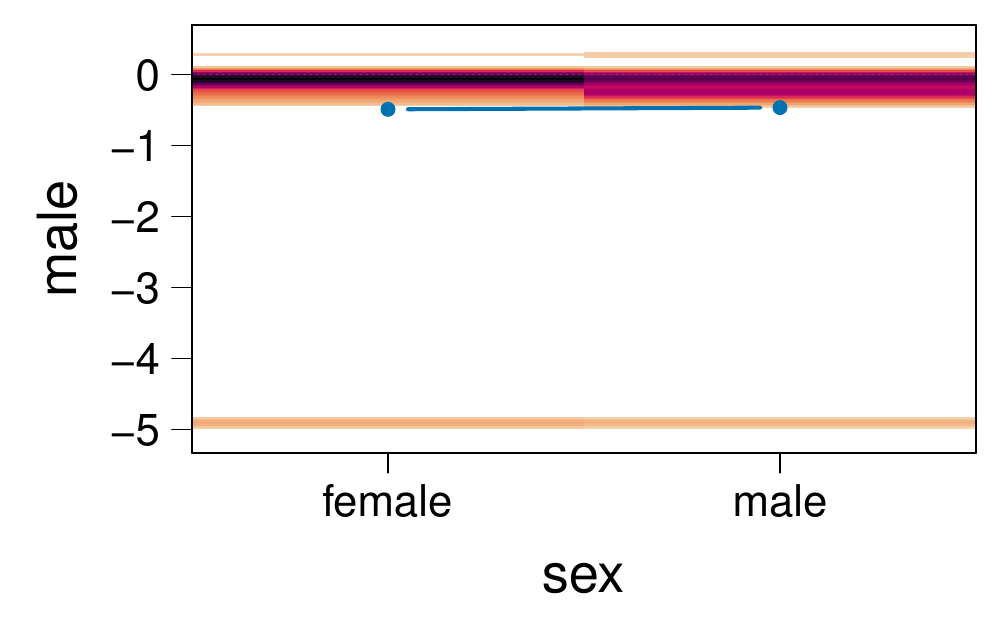}\caption{male}\end{subfigure}\\[1ex]
    \begin{subfigure}[b]{0.32\textwidth}\centering\includegraphics[width=\linewidth]{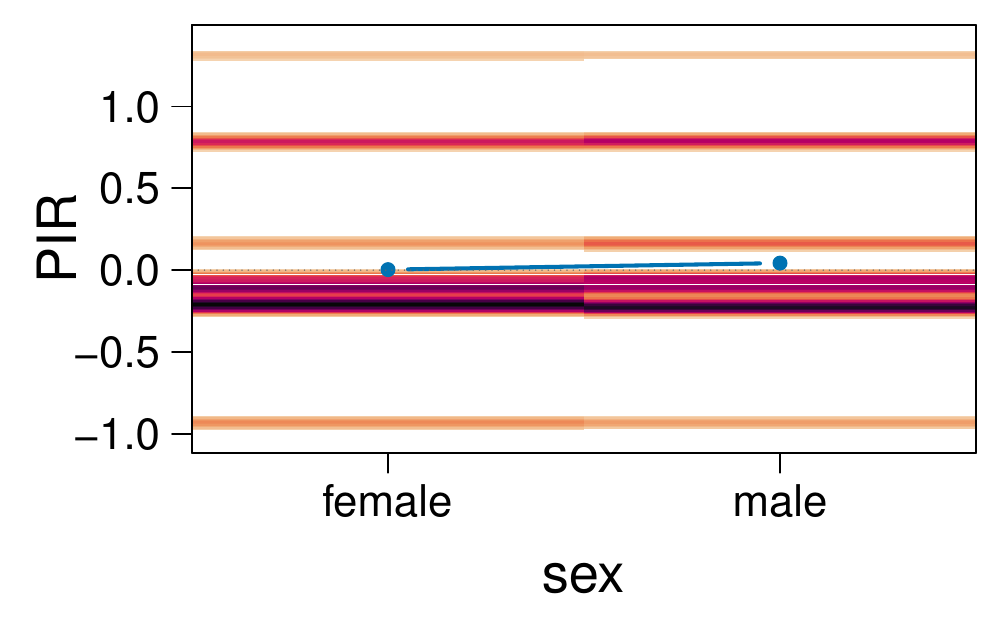}\caption{PIR}\end{subfigure}\hfill
    \begin{subfigure}[b]{0.32\textwidth}\centering\includegraphics[width=\linewidth]{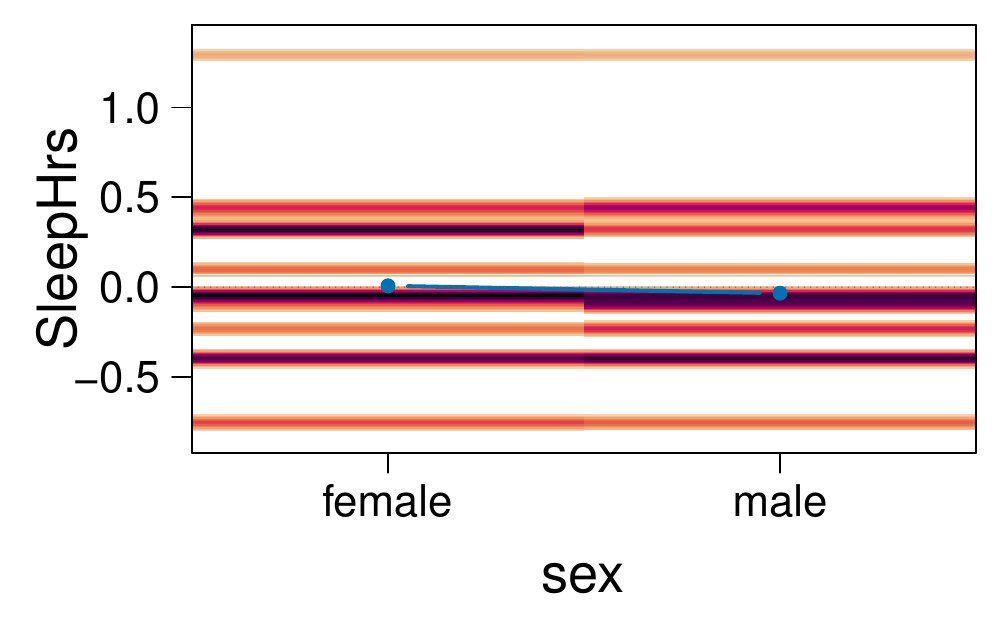}\caption{SleepHrs}\end{subfigure}\hfill
    \begin{subfigure}[b]{0.32\textwidth}\centering\includegraphics[width=\linewidth]{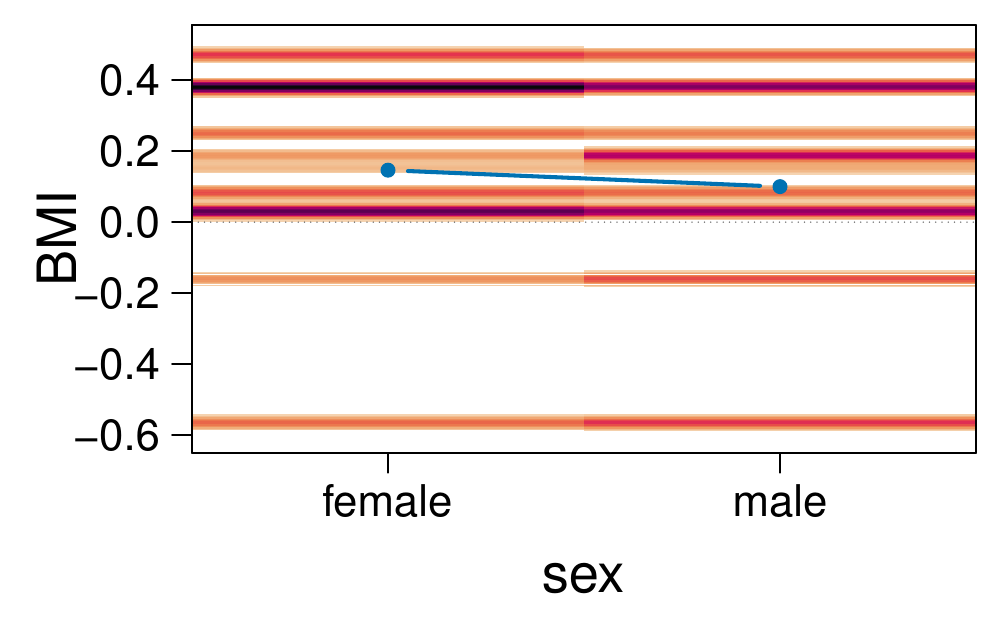}\caption{BMI}\end{subfigure}
    \caption{Conditional prior along the sex.}
    \label{fig:appx_nhanes_male}
\end{figure}

\begin{figure}[H]
    \centering
    \begin{subfigure}[b]{0.32\textwidth}\centering\includegraphics[width=\linewidth]{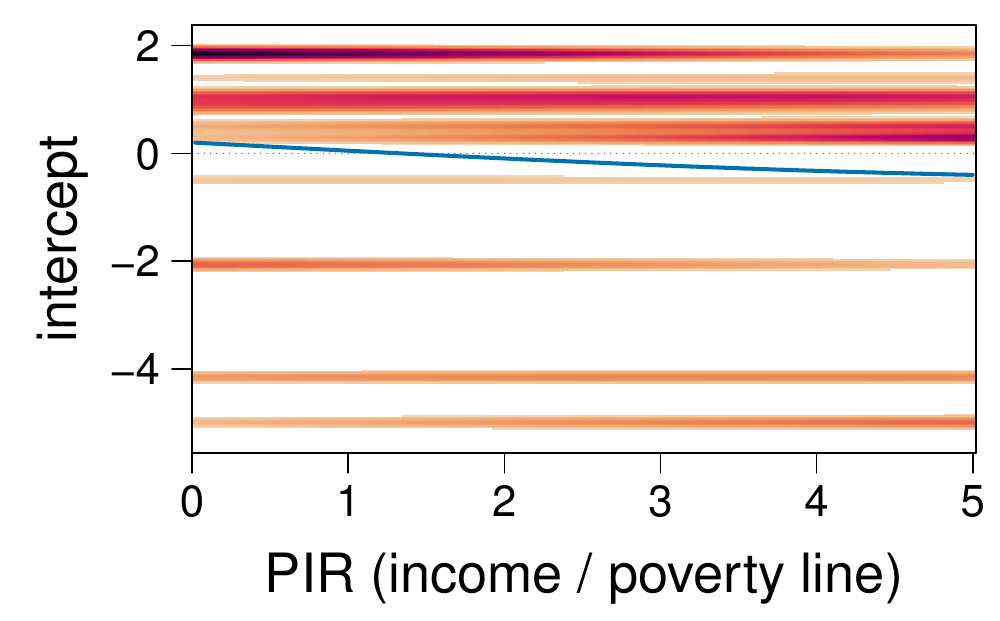}\caption{intercept}\end{subfigure}\hfill
    \begin{subfigure}[b]{0.32\textwidth}\centering\includegraphics[width=\linewidth]{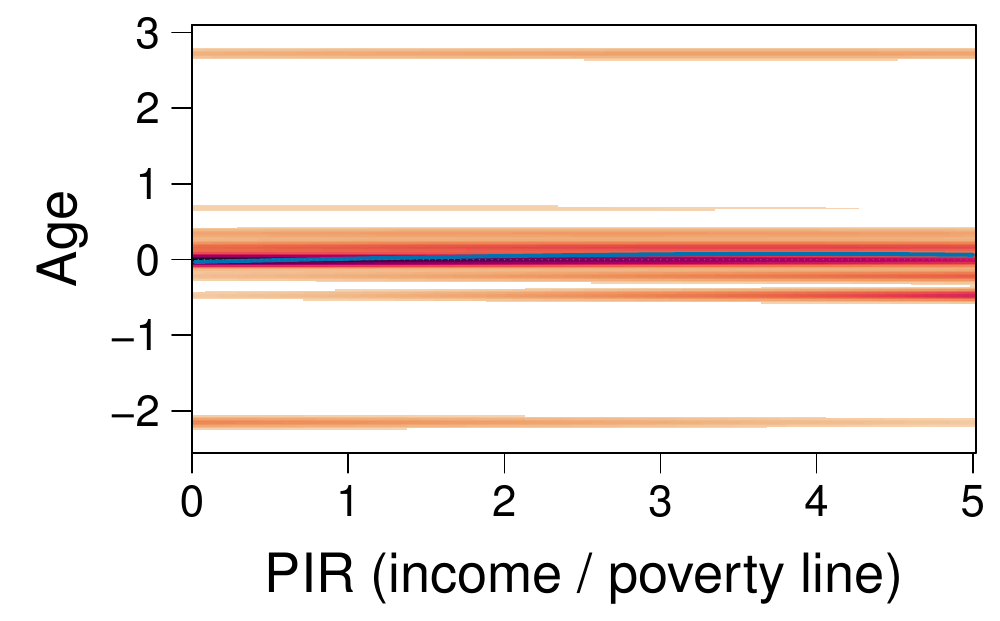}\caption{Age}\end{subfigure}\hfill
    \begin{subfigure}[b]{0.32\textwidth}\centering\includegraphics[width=\linewidth]{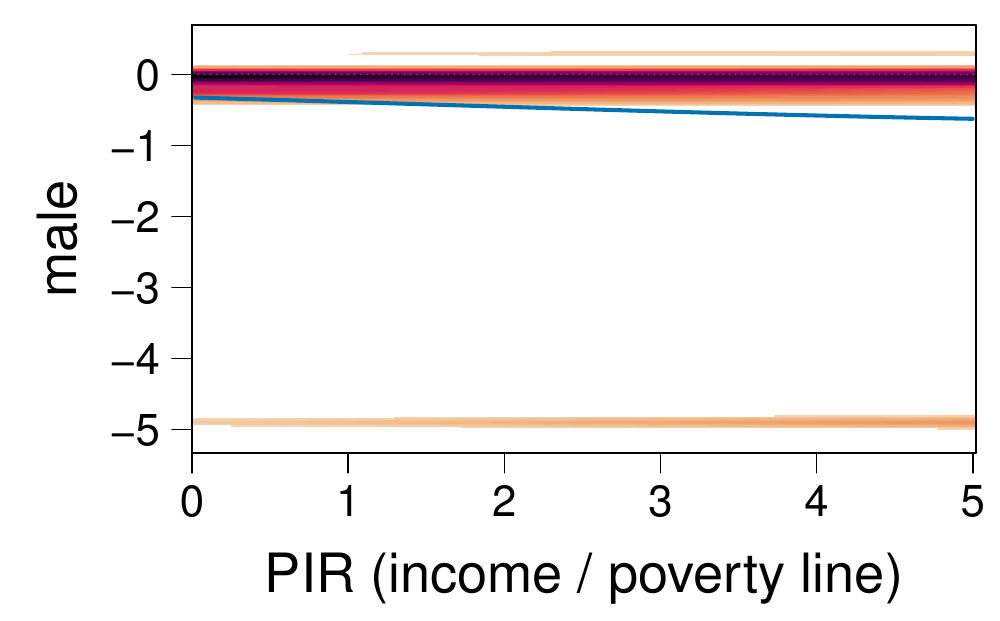}\caption{male}\end{subfigure}\\[1ex]
    \begin{subfigure}[b]{0.32\textwidth}\centering\includegraphics[width=\linewidth]{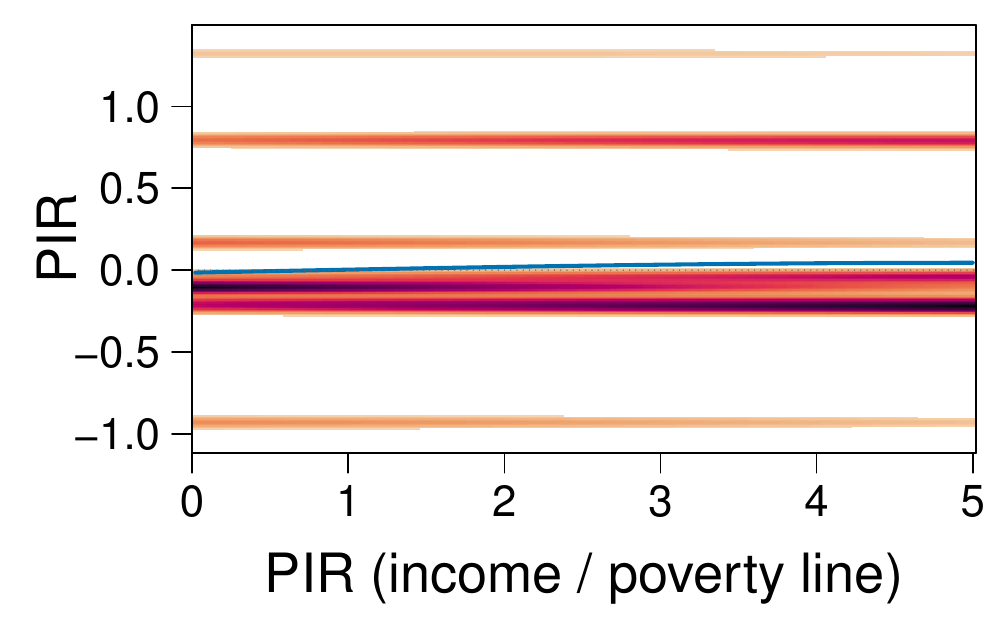}\caption{PIR}\end{subfigure}\hfill
    \begin{subfigure}[b]{0.32\textwidth}\centering\includegraphics[width=\linewidth]{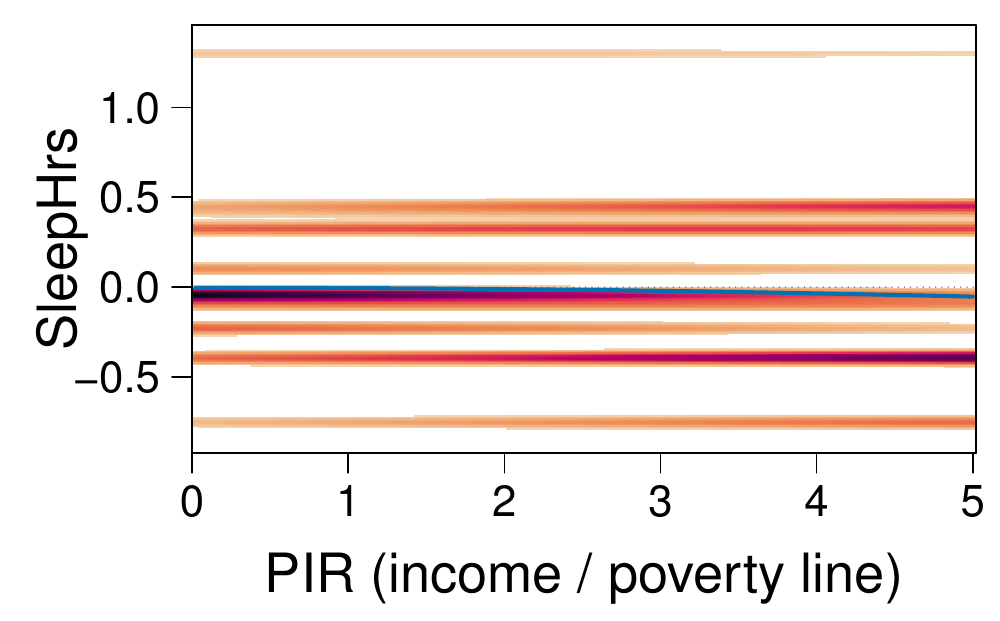}\caption{SleepHrs}\end{subfigure}\hfill
    \begin{subfigure}[b]{0.32\textwidth}\centering\includegraphics[width=\linewidth]{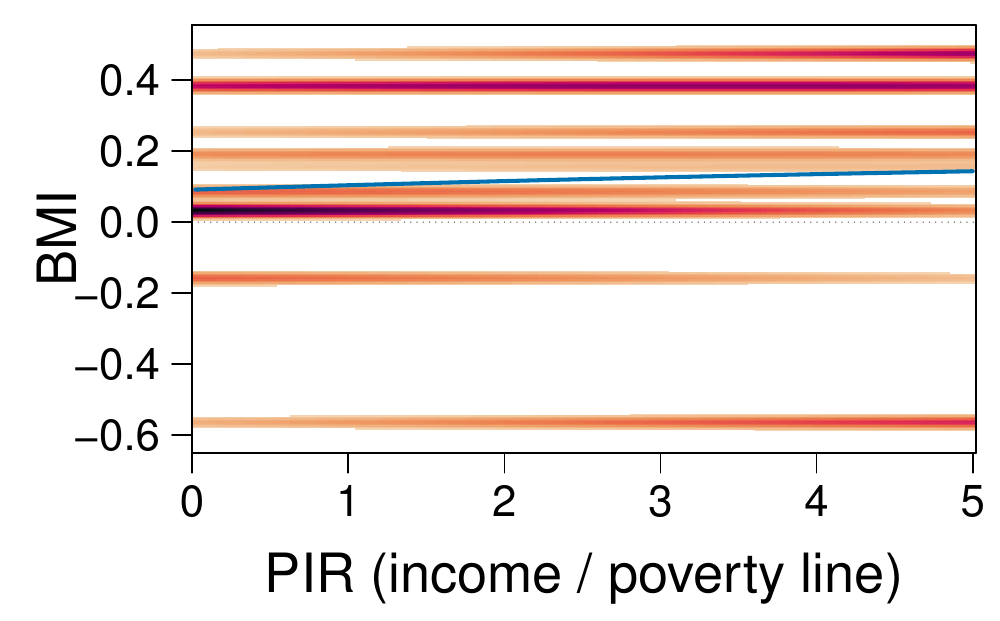}\caption{BMI}\end{subfigure}
    \caption{Conditional prior along the income.}
    \label{fig:appx_nhanes_pir}
\end{figure}

\begin{figure}[H]
    \centering
    \begin{subfigure}[b]{0.32\textwidth}\centering\includegraphics[width=\linewidth]{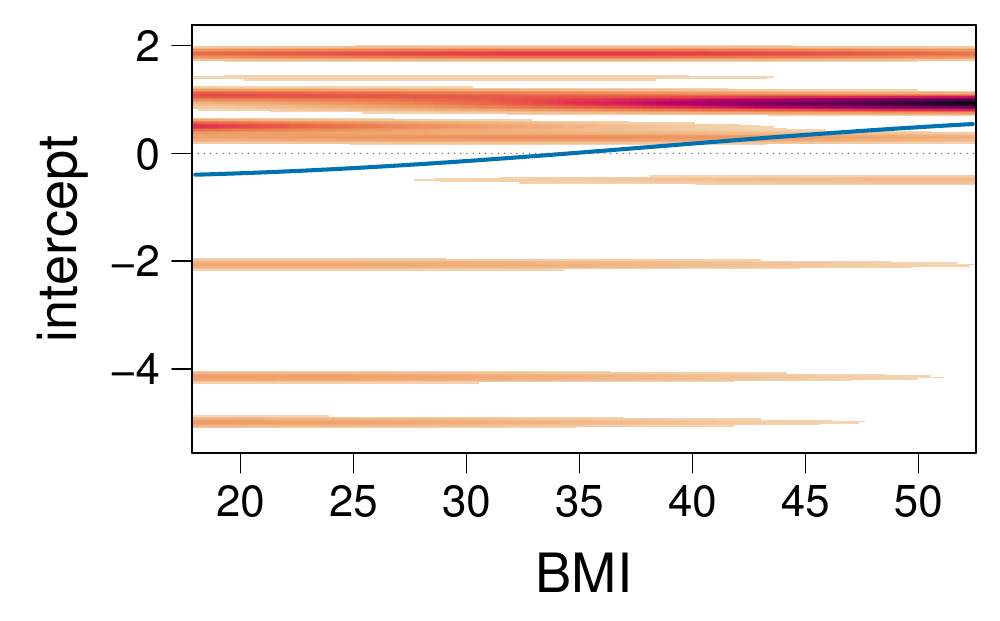}\caption{intercept}\end{subfigure}\hfill
    \begin{subfigure}[b]{0.32\textwidth}\centering\includegraphics[width=\linewidth]{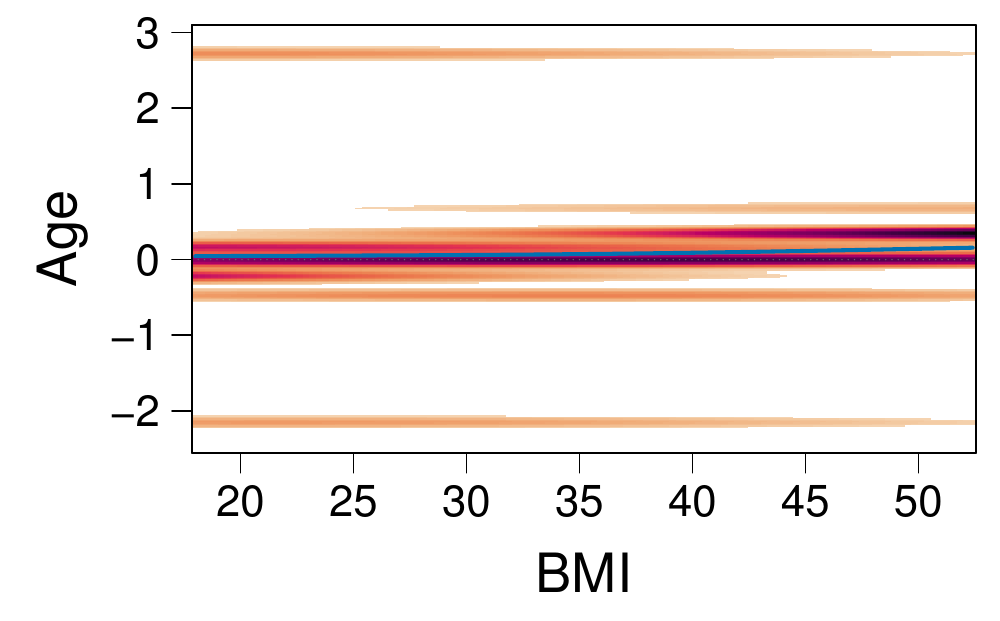}\caption{Age}\end{subfigure}\hfill
    \begin{subfigure}[b]{0.32\textwidth}\centering\includegraphics[width=\linewidth]{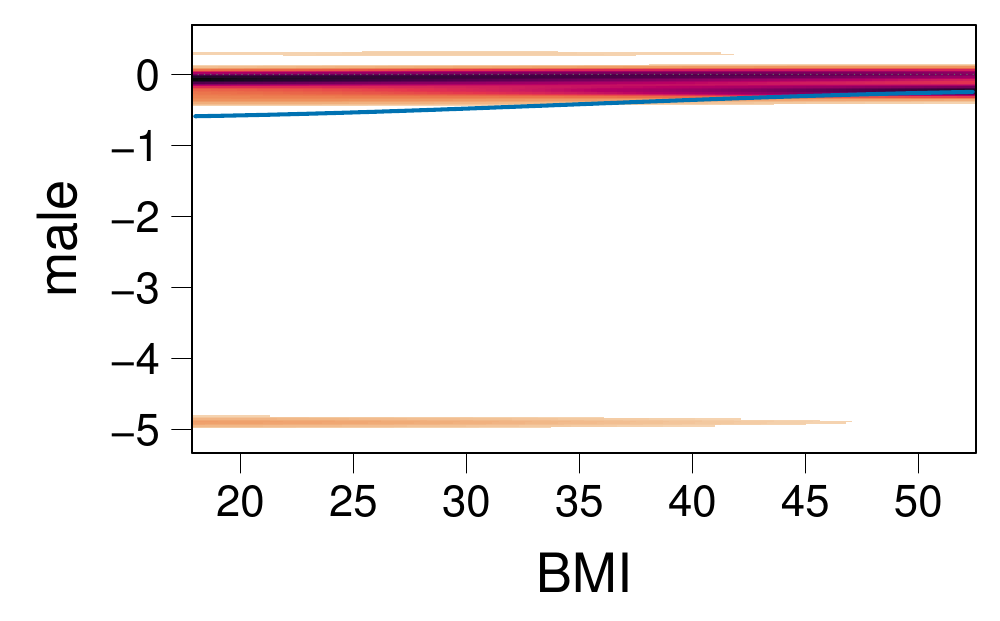}\caption{male}\end{subfigure}\\[1ex]
    \begin{subfigure}[b]{0.32\textwidth}\centering\includegraphics[width=\linewidth]{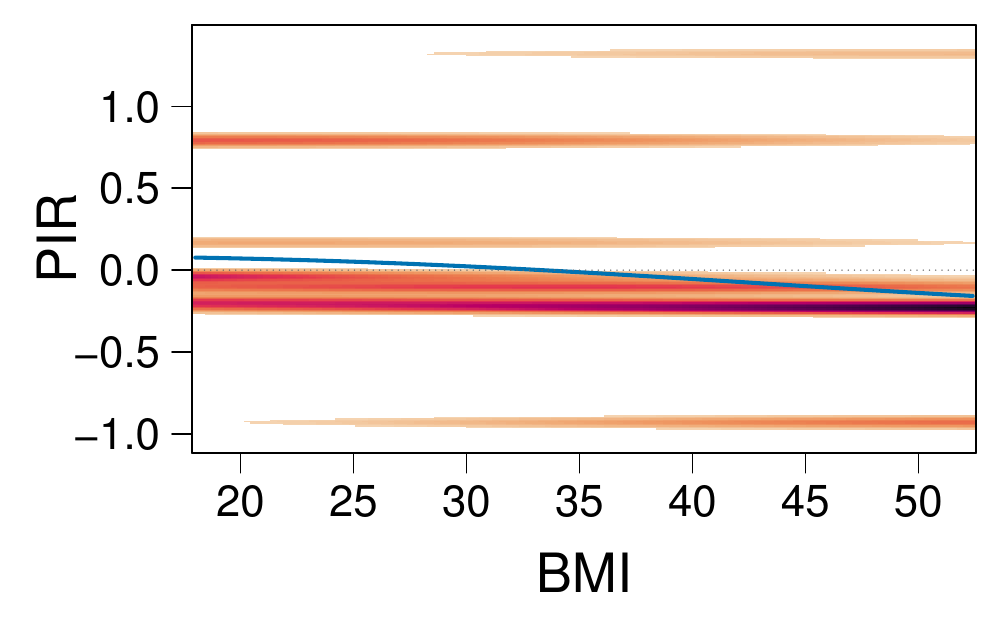}\caption{PIR}\end{subfigure}\hfill
    \begin{subfigure}[b]{0.32\textwidth}\centering\includegraphics[width=\linewidth]{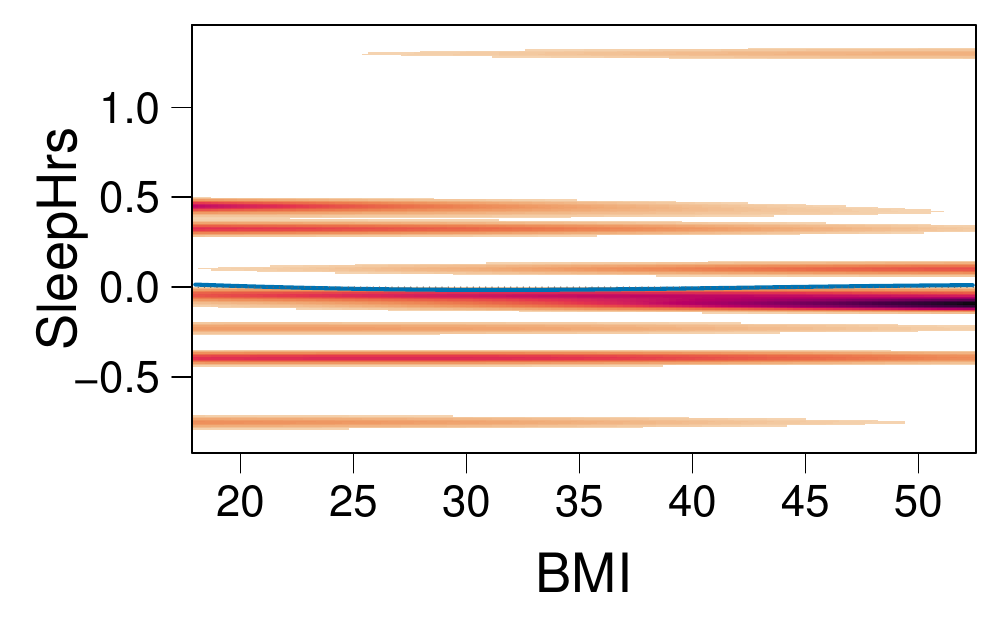}\caption{SleepHrs}\end{subfigure}\hfill
    \begin{subfigure}[b]{0.32\textwidth}\centering\includegraphics[width=\linewidth]{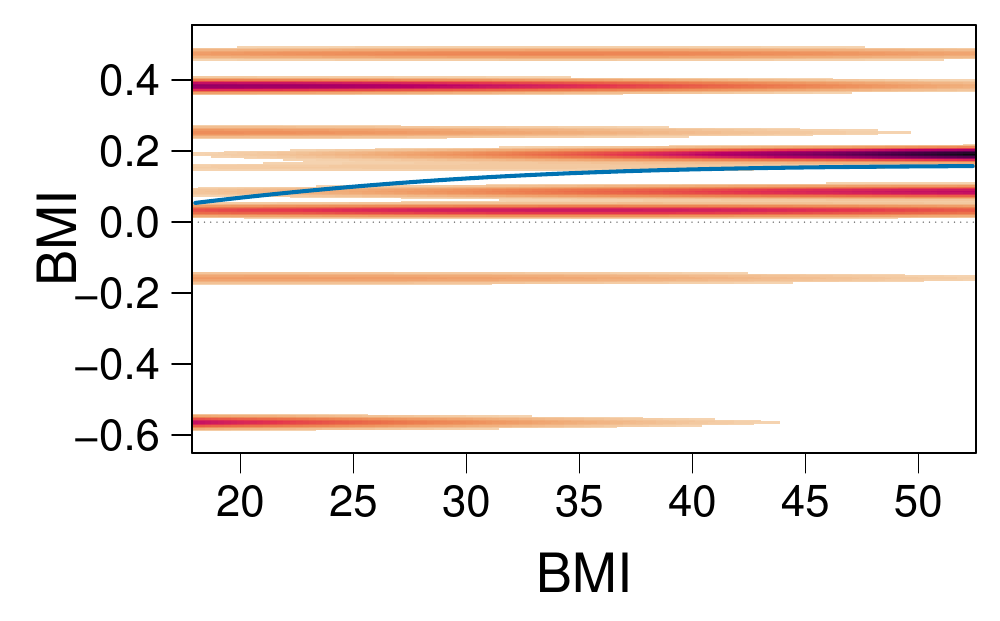}\caption{BMI}\end{subfigure}
    \caption{Conditional prior along the body mass index.}
    \label{fig:appx_nhanes_bmi}
\end{figure}

\FloatBarrier

\section{Comparison with Algorithm \ref{alg:exmoe} on a small set of exemplars}
\label{app:alg2_comparison}

This appendix compares the fits of Algorithm \ref{alg:fixed_dictionary} reported in Section \ref{sec:Simulation} with those of Algorithm \ref{alg:exmoe} on the same data. Algorithm \ref{alg:exmoe} starts from the $M=500$ exemplars $\beta^{(1)},\ldots,\beta^{(500)}$ of Algorithm \ref{alg:exemplar}, which is smaller than $2n$ in all three settings, and updates the gate parameters $\gamma_j$ with the ridge parameter $\lambda=5\times10^{-2}$. We run Algorithm \ref{alg:exmoe} with the gate covariate $z=s$ and $\sigma$ fixed at the $\hat{\sigma}$ selected for Algorithm \ref{alg:fixed_dictionary}, with the stopping rule of the final fits in Section \ref{sec:real_data}, that is, $R=3$, $T_{\max}=2000$, and $\epsilon_{\mathrm{tol}}=10^{-9}$, and with the other tuning parameters at the default values of Appendix \ref{app:tunings}. All three fits converged.

We provide a comparison between the results of Algorithm \ref{alg:exmoe} and \ref{alg:fixed_dictionary}. The compared values are the following: the training log-likelihood, log-likelihood on an independent test sample of the same size, and the root mean squared error (RMSE) of the estimated conditional mean $\hat{E}[y|s]$ against the true conditional mean on a grid of $s$. 

Figures \ref{fig:sim_alg2_gates_sin} and \ref{fig:sim_alg2_gates_sv} show the gates of these estimates for the sin and smooth Voronoi data in the format of Figure \ref{fig:sim_sin_gates}. For these estimates, the elbow method of Section \ref{sec:experts_aggregation} selects $3$, $4$, and $2$ clusters and BIC selects $11$, $6$, and $12$ clusters for the sin, smooth Voronoi, and switching circle data, against the true numbers $2$ and $6$ for the first two data sets. Figure \ref{fig:sim_alg2_circle_gate} shows estimated conditional prior for swithcing circle data as same manner of Figure \ref{fig:sim_threshold_circle}.

\begin{table}[htbp]
    \centering
    \caption{Algorithm \ref{alg:fixed_dictionary} (dictionary of $10{,}500$ points) and Algorithm \ref{alg:exmoe} ($500$ exemplars, updated $\gamma$ and $\beta$) on the three simulated data sets. The columns report the training and test log-likelihoods, the RMSE of the conditional mean, and the size of the estimated support. All estimates are at the cross-validated $\hat{\sigma}$ of Section \ref{sec:Simulation}.}
    \label{tab:sim_alg2_vs_alg3}
    \small
    \begin{tabular}{llrrrr}
        \hline
        Data & Algorithm & $\ell_{\mathrm{train}}$ & $\ell_{\mathrm{test}}$ & RMSE & $|\hat{\mathcal{S}}|$ \\
        \hline
        sin              & Alg.\ \ref{alg:fixed_dictionary} & $-891.1$  & $-972.9$   & $1.251$ & 43 \\
                         & Alg.\ \ref{alg:exmoe}            & $-850.1$  & $-942.2$  & $0.598$ & 15 \\
        smooth Voronoi   & Alg.\ \ref{alg:fixed_dictionary} & $-1360.0$ & $-1477.3$  & $0.101$ & 41 \\
                         & Alg.\ \ref{alg:exmoe}            & $-1339.8$ & $-1456.0$ & $0.112$ & 15 \\
        switching circle & Alg.\ \ref{alg:fixed_dictionary} & $-3345.2$ & $-3389.1$  & $0.217$ & 67 \\
                         & Alg.\ \ref{alg:exmoe}            & $-3324.4$ & $-3378.3$ & $0.207$ & 20 \\
        \hline
    \end{tabular}
\end{table}

\begin{figure}[htbp]
    \centering
    \begin{subfigure}[b]{0.32\textwidth}\centering\includegraphics[width=\linewidth]{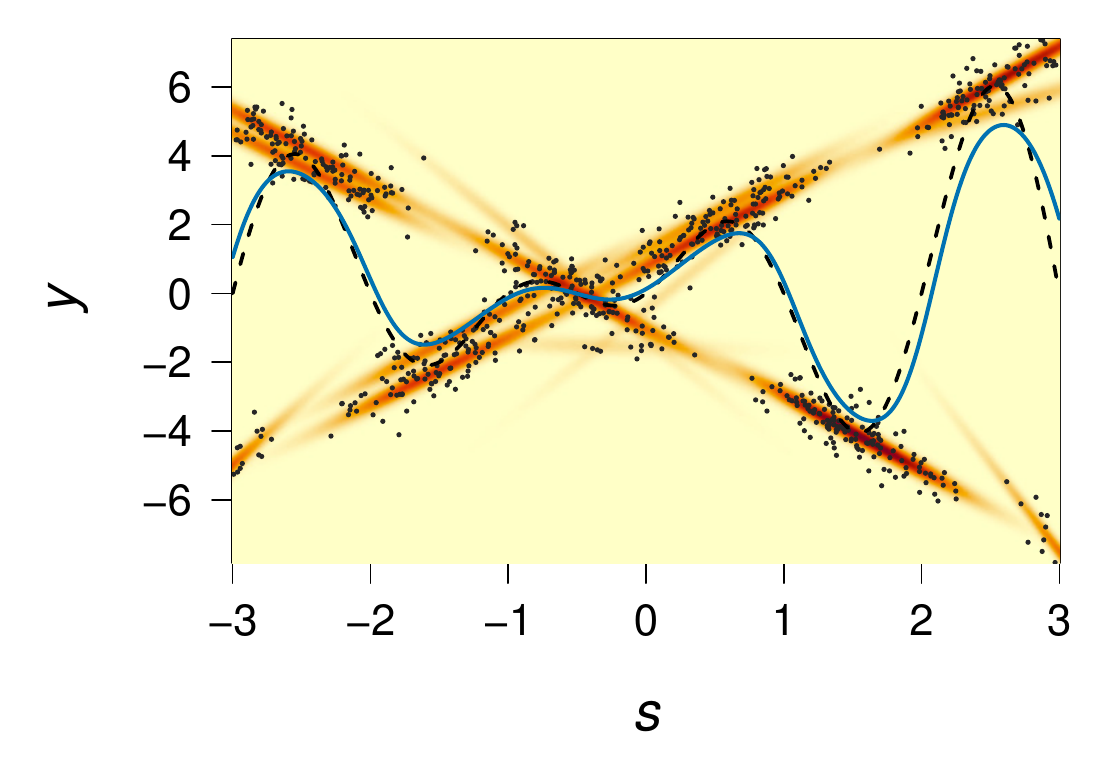}\caption{sin data}\end{subfigure}\hfill
    \begin{subfigure}[b]{0.32\textwidth}\centering\includegraphics[width=\linewidth]{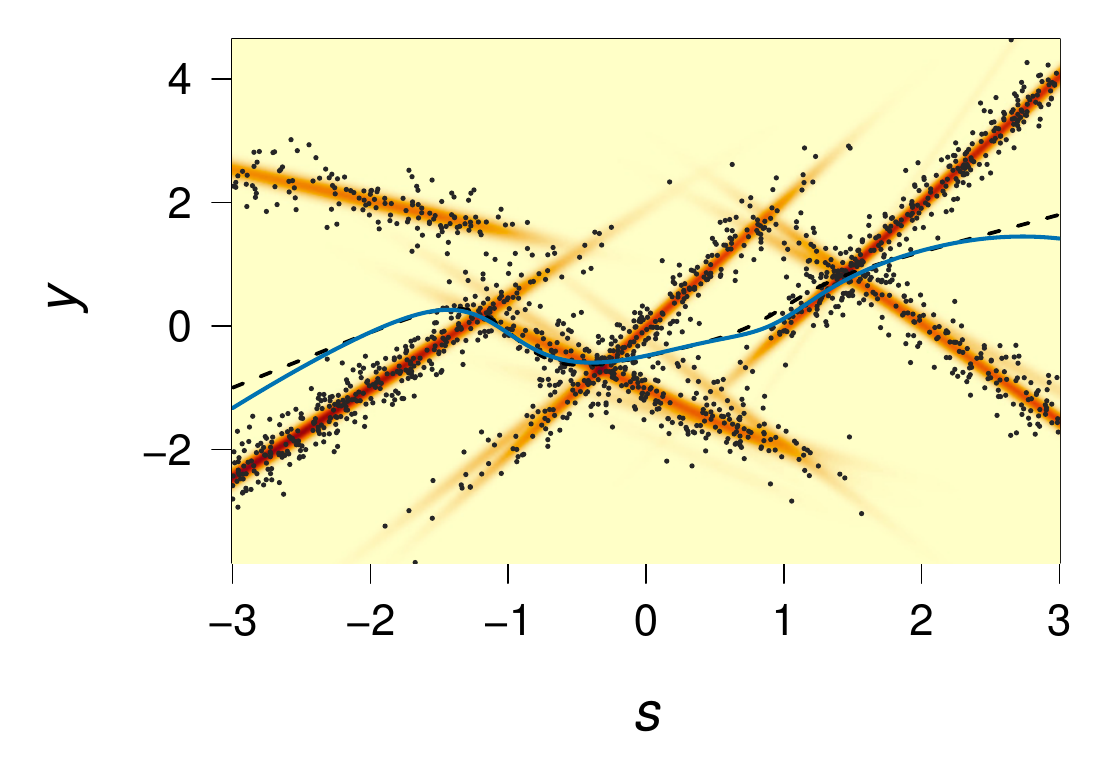}\caption{smooth Voronoi data}\end{subfigure}\hfill
    \begin{subfigure}[b]{0.32\textwidth}\centering\includegraphics[width=\linewidth]{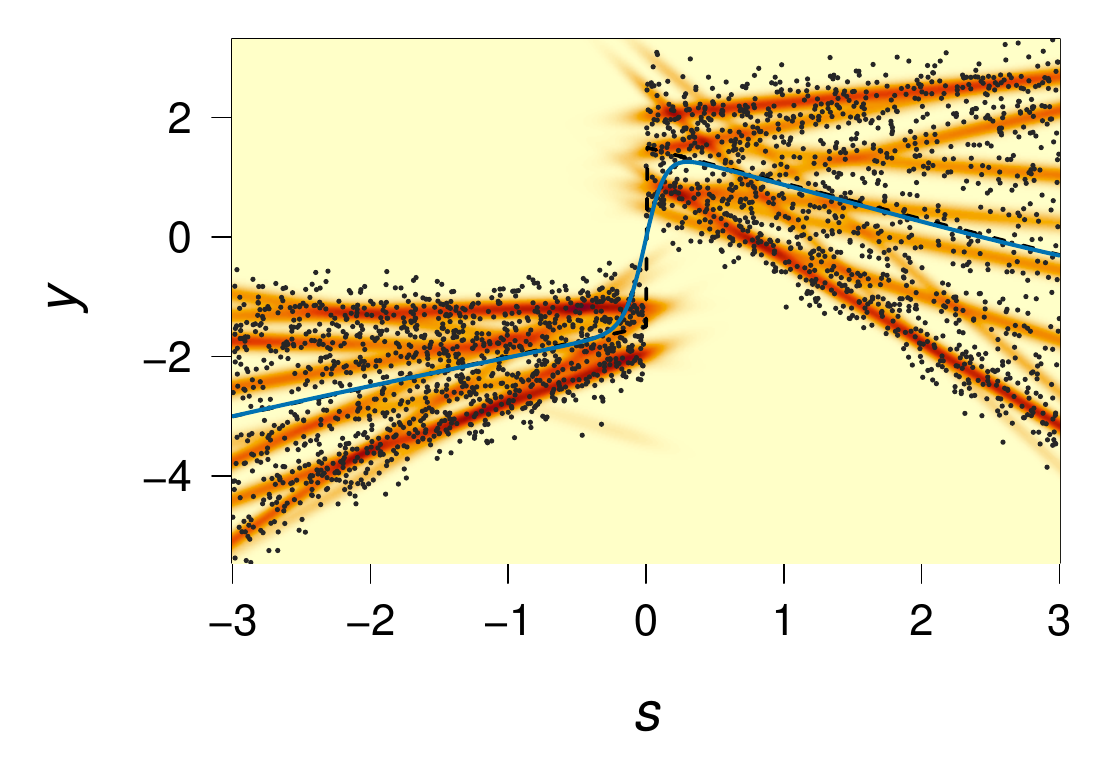}\caption{switching circle data}\end{subfigure}
    \caption{Estimates of Algorithm \ref{alg:exmoe} with $500$ exemplars and updated $\gamma$ and $\beta$ for the three simulated data sets, in the format of Figure \ref{fig:sim_heat}. Figure \ref{fig:sim_alg2_circle_gate} shows the gate-weighted conditional priors of this estimate for the switching circle data in the format of Figure \ref{fig:sim_threshold_circle}.}
    \label{fig:sim_alg2_fits}
\end{figure}

\begin{figure}[htbp]
    \centering
    \begin{subfigure}[b]{0.32\textwidth}\centering\includegraphics[width=\linewidth]{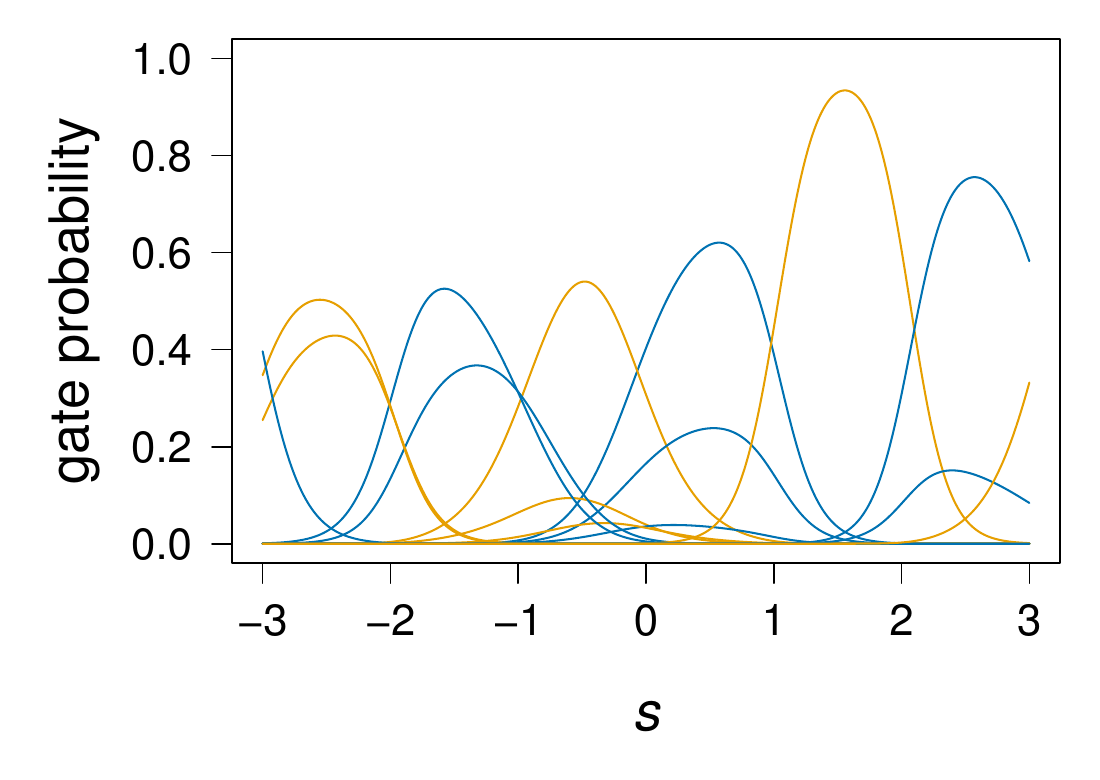}\caption{individual gates}\label{fig:sim_alg2_gates_sin_separate}\end{subfigure}\hfill
    \begin{subfigure}[b]{0.32\textwidth}\centering\includegraphics[width=\linewidth]{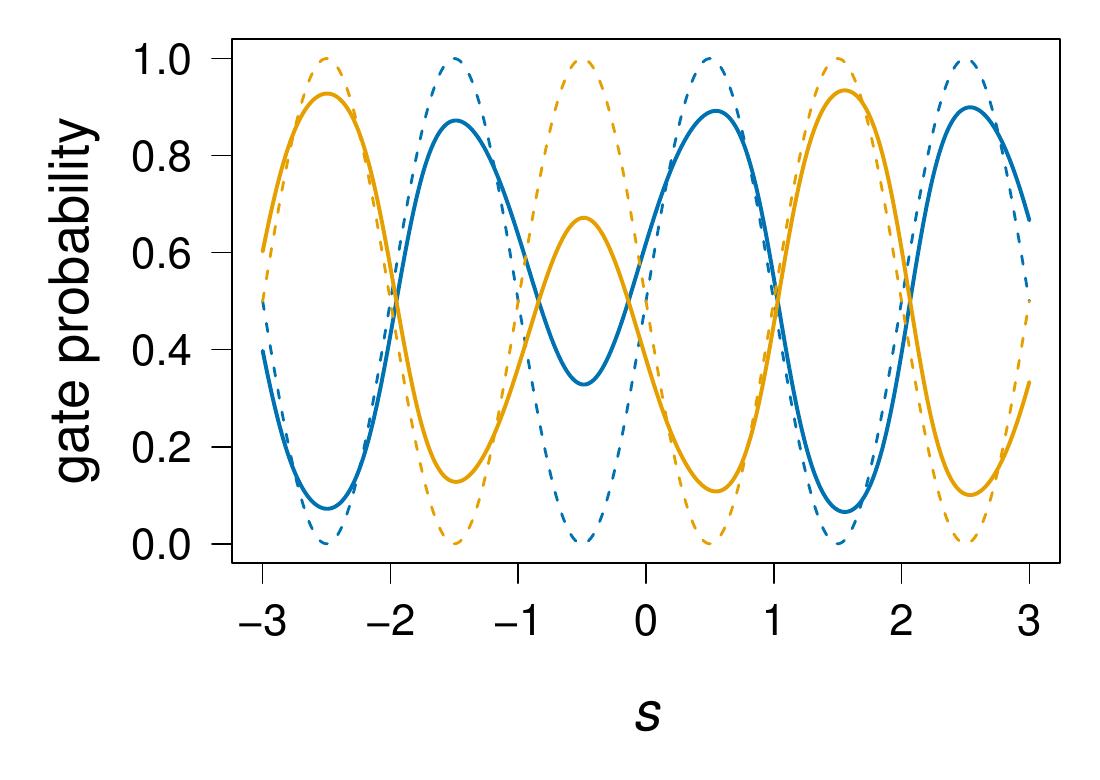}\caption{aggregation by nearest oracle}\label{fig:sim_alg2_gates_sin_aggregated}\end{subfigure}\hfill
    \begin{subfigure}[b]{0.32\textwidth}\centering\includegraphics[width=\linewidth]{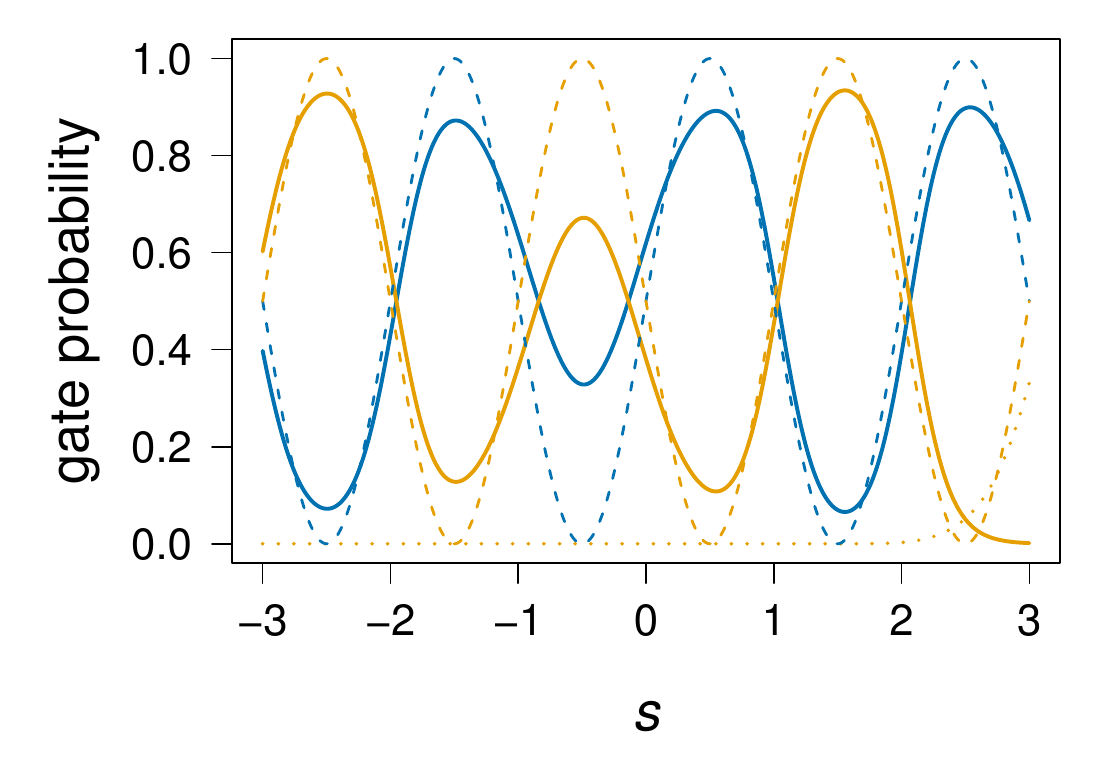}\caption{aggregation by the elbow rule}\label{fig:sim_alg2_gates_sin_elbow}\end{subfigure}
    \caption{Gates of the estimate of Algorithm \ref{alg:exmoe} for the sin data, in the format of Figure \ref{fig:sim_sin_gates}. (a): the individual gates of the $15$ support points. (b): the gates aggregated by the nearest true support point (solid) and the true gates (dashed). (c): the aggregated gates for the $\hat{K}=3$ clusters selected by the elbow method with the weights $\bar{g}_j$, where the third cluster consists of a support point of small weight far from both true coefficients and is drawn dotted.}
    \label{fig:sim_alg2_gates_sin}
\end{figure}

\begin{figure}[htbp]
    \centering
    \begin{subfigure}[b]{0.32\textwidth}\centering\includegraphics[width=\linewidth]{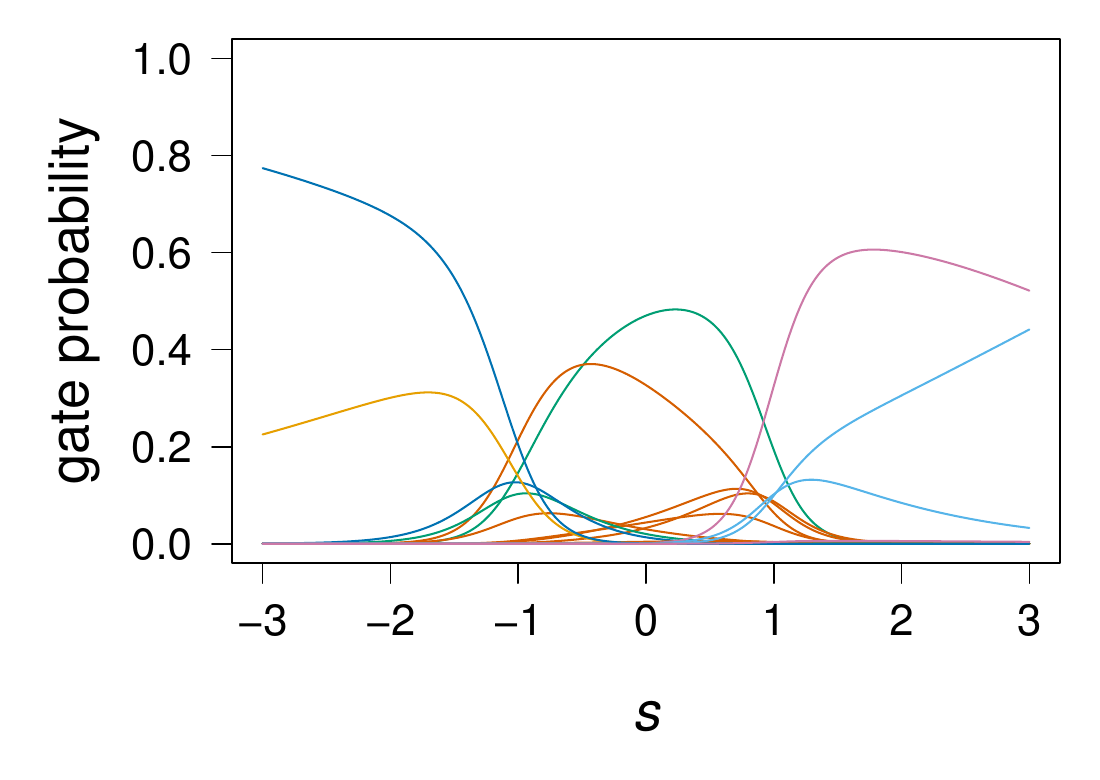}\caption{individual gates}\label{fig:sim_alg2_gates_sv_separate}\end{subfigure}\hfill
    \begin{subfigure}[b]{0.32\textwidth}\centering\includegraphics[width=\linewidth]{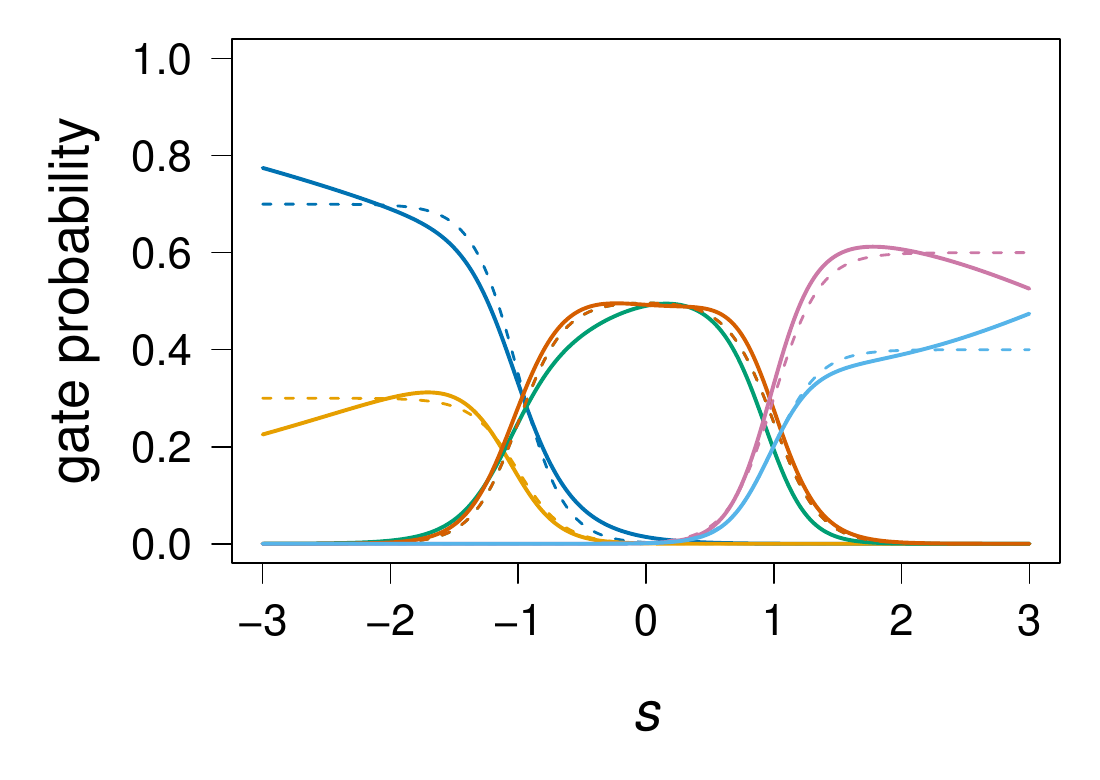}\caption{aggregation by nearest oracle}\label{fig:sim_alg2_gates_sv_aggregated}\end{subfigure}\hfill
    \begin{subfigure}[b]{0.32\textwidth}\centering\includegraphics[width=\linewidth]{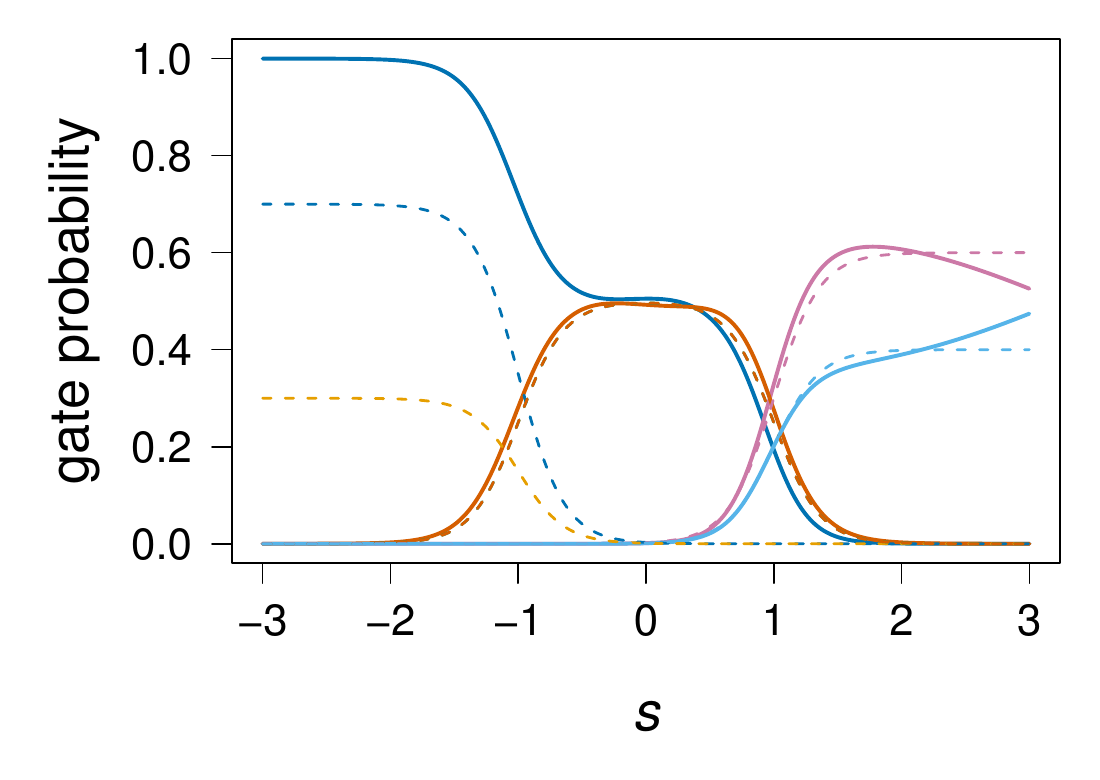}\caption{aggregation by the elbow rule}\label{fig:sim_alg2_gates_sv_elbow}\end{subfigure}
    \caption{Gates of the estimate of Algorithm \ref{alg:exmoe} for the smooth Voronoi data, in the format of Figure \ref{fig:sim_sv_gates}. (a): the individual gates of the $13$ support points. (b): the gates aggregated by the nearest true support point (solid) and the six true gates (dashed). (c): the aggregated gates for the $\hat{K}=4$ clusters selected by the elbow method with the weights $\bar{g}_j$ and the true gates (dashed).}
    \label{fig:sim_alg2_gates_sv}
\end{figure}

\begin{figure}[htbp]
    \centering
    \begin{subfigure}[b]{0.31\textwidth}\centering\includegraphics[width=\linewidth]{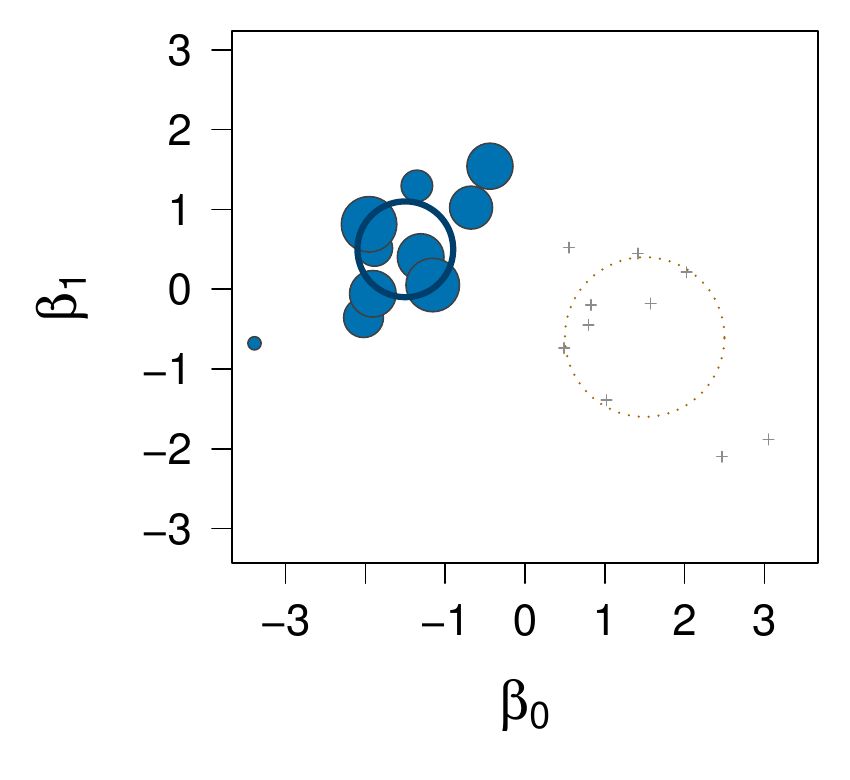}\caption{$s=-2$}\end{subfigure}\hspace{0.04\textwidth}
    \begin{subfigure}[b]{0.31\textwidth}\centering\includegraphics[width=\linewidth]{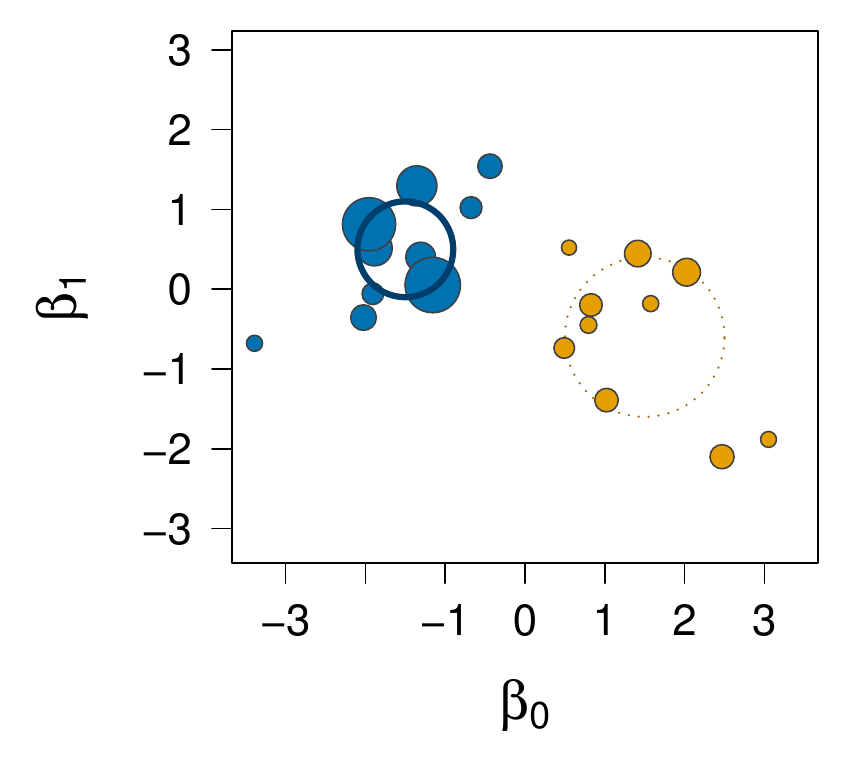}\caption{$s=-0.1$}\end{subfigure}\\[1ex]
    \begin{subfigure}[b]{0.31\textwidth}\centering\includegraphics[width=\linewidth]{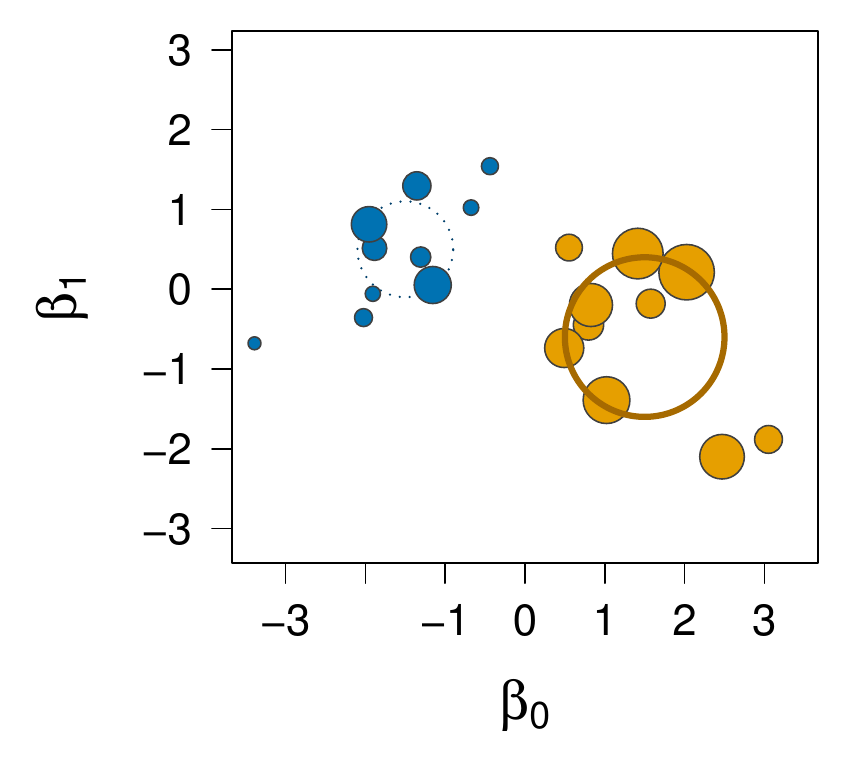}\caption{$s=0.1$}\end{subfigure}\hspace{0.04\textwidth}
    \begin{subfigure}[b]{0.31\textwidth}\centering\includegraphics[width=\linewidth]{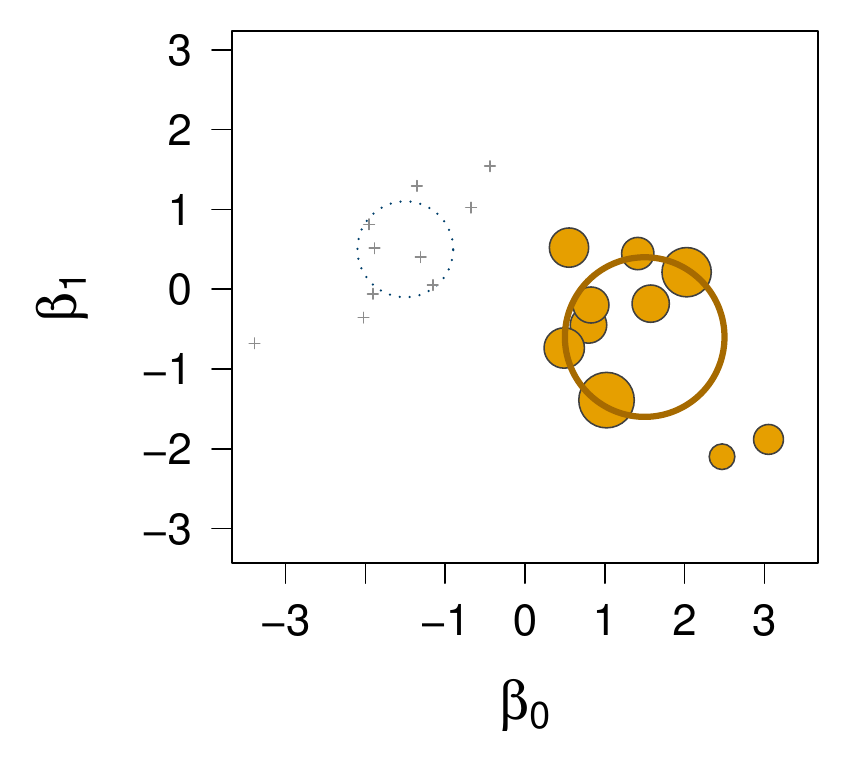}\caption{$s=2$}\end{subfigure}
    \caption{Switching structure of the gates for the switching circle data under Algorithm \ref{alg:exmoe} with $500$ exemplars and updated $\gamma$ and $\beta$, in the format of Figure \ref{fig:sim_threshold_circle}. The gate mass of the support points closer to the active circle is $1.00$, $0.80$, $0.79$, and $1.00$ at $s=-2$, $-0.1$, $0.1$, and $2$.}
    \label{fig:sim_alg2_circle_gate}
\end{figure}

\section{Kneedle rule of the elbow method}
\label{sec:keedle}
To determine the number of clusters $K$ by the elbow method, we use the following procedure.

Let $K_{\max}$ be the maximum number of clusters to be searched. For each sequence of partitions $K=1,\ldots,K_{\max}$, the weighted within-cluster sum of squares calculated
\begin{equation}
    \label{def:wss}
    \mathrm{WSS}(K)\coloneq\sum_{k=1}^{K}\sum_{j\in\mathcal{C}_k}\bar{g}_j\|\hat{\beta}_j-\bar{\beta}_k\|_2^{2},\qquad K=1,\ldots,K_{\max},
\end{equation}
where $\bar{\beta}_k$ is the weighted mean of the regression coefficients in the $k$-th cluster defined in \eqref{def:aggregated_gate}. Since the partitions are nested, $\mathrm{WSS}(K)$ is nonincreasing in $K$. The elbow is computed by the simplest form of the Kneedle rule of \cite{Satopaa2011-kn}:
\begin{equation}
    \label{def:elbow}
    \hat{K}_{\mathrm{elbow}}\coloneq\argmax_{1\leq K\leq K_{\max}}\left\{\left(1-\frac{K-1}{K_{\max}-1}\right)-\frac{\mathrm{WSS}(K)-\mathrm{WSS}(K_{\max})}{\mathrm{WSS}(1)-\mathrm{WSS}(K_{\max})}\right\},
\end{equation}
and $\hat{K}$ is chosen as the number of the aggregated linear relationships.

\section{Fused slope structure of the softmax gates}
\label{app:fused}

This appendix describes the structure of the extended softmax gates that is used in the proof of Theorem \ref{thm:uat_gate}. The following condition is necessary to approximate a step function in the simplex space by using a softmax function.

\begin{condition}[Fused slope condition]
    \label{cond:fusion}
    For a given positive integer $C$, the index set $\{1,\ldots,J\}$ of the softmax gates $\{g_j(z;\alpha,\gamma):j=1,\ldots,J\}$ is partitioned into $L$ groups $I_1,\ldots,I_L$ with $|I_l|=C$ for $l=1,\ldots,L$, so that $J=LC$, and the slope parameters are shared within each group, that is,
    \begin{equation}
        \gamma_j=\gamma_{j^{\prime}}\quad\text{for all}\;j,j^{\prime}\in I_l,\qquad l=1,\ldots,L.
    \end{equation}
\end{condition}

Under Condition \ref{cond:fusion}, 
we reparametrize $\alpha$ in the softmax gates as follows.
Write the common slope of the group $I_l$ as
\begin{equation}
    \frac{\xi_l}{\tau}\coloneq\gamma_j\quad\text{for}\;j\in I_l,\qquad l=1,\ldots,L,
\end{equation}
where $\xi_l\in\mathbb{R}^{d}$ and $\tau>0$ is a temperature parameter. 
For $\xi_{10},\ldots,\xi_{L0}\in\mathbb{R}$ and $w=(w_1,\ldots,w_{J})^{\top}\in\Delta^{J-1}$ with $w_j>0$, define
\begin{equation}
    \label{eq:alpha_reparametrization}
    \alpha_j\coloneq\frac{w_j\,e^{\xi_{l0}/\tau}}{\sum_{l^{\prime}=1}^{L}\bigl(\sum_{k\in I_{l^{\prime}}}w_k\bigr)e^{\xi_{l^{\prime}0}/\tau}},\qquad j\in I_l,\quad l=1,\ldots,L.
\end{equation}
Then $\alpha\in\Delta^{J-1}$ with $\alpha_j>0$, so that $(\alpha,\gamma)$ is an admissible parameter of the softmax gates. 
Conversely, every admissible $(\alpha,\gamma)$ satisfying Condition \ref{cond:fusion} with $\alpha_j>0$ is obtained from \eqref{eq:alpha_reparametrization} by taking $\xi_{l0}=0$ and $w=\alpha$.

Let us explain the limit behaviour with respect to $\tau$.
Writing the affine score of the group $l$ as
\begin{equation}
    \label{def:affine_score}
    h_l(z)\coloneq z^{\top}\xi_l+\xi_{l0},\qquad l=1,\ldots,L,
\end{equation}
the $j$-th softmax gate with $j\in I_l$ is expressed as
\begin{equation}
    \label{eq:fused_gate}
    \begin{split}
        g_j(z;\alpha,\gamma) & = \frac{\alpha_j e^{z^{\top}\gamma_j}}{\sum_{k=1}^{J}\alpha_k e^{z^{\top}\gamma_k}}
        = \frac{w_j\exp\{h_l(z)/\tau\}}{\sum_{l^{\prime}=1}^{L}\bigl(\sum_{k\in I_{l^{\prime}}}w_k\bigr)\exp\{h_{l^{\prime}}(z)/\tau\}} \\
                             & = \frac{w_j}{\bigl(\sum_{k\in I_l}w_k\bigr)+\sum_{l^{\prime}\neq l}\bigl(\sum_{k\in I_{l^{\prime}}}w_k\bigr)\exp\{(h_{l^{\prime}}(z)-h_l(z))/\tau\}},
    \end{split}
\end{equation}
where the normalizing constant of \eqref{eq:alpha_reparametrization} cancels between the numerator and the denominator. In the limit $\tau\rightarrow +0$, the gate converges to
\begin{equation}
    \label{eq:fused_gate_limit}
    g_j^{+0}(z)=\left\{
    \begin{array}{ll}
        \dfrac{w_j}{\sum_{k\in I_l}w_k},                                                          & h_l(z)>h_{l^{\prime}}(z)\;\text{for all}\;l^{\prime}\neq l, \\[2.5ex]
        \dfrac{w_j}{\sum_{l^{\prime}\in\mathcal{I}(z)}\sum_{k\in I_{l^{\prime}}}w_k},               & l\in\mathcal{I}(z)\;\text{and}\;|\mathcal{I}(z)|\geq2,       \\[2.5ex]
        0,                                                                                          & l\notin\mathcal{I}(z),
    \end{array}\right.
    \qquad j\in I_l,
\end{equation}
where $\mathcal{I}(z)\coloneq\argmax_{l^{\prime}=1,\ldots,L}h_{l^{\prime}}(z)$ is the set of groups attaining the maximal score at $z$. Thus, the temperature limit of the aggregated gate function $g^{\dagger}_c(z;\alpha,\gamma)$ behaves as a step function on the cells $\{z:h_l(z)>\max_{l^{\prime}\neq l}h_{l^{\prime}}(z)\}$, which can take an arbitrary value on the simplex in each cell, and the within-group weights $w_j$ determine these values.

\section{Proofs}
\label{app:proofs}

\paragraph{Proof of Theorem \ref{thm:necessary condition}}

\begin{proof}
    We first show (i). Since $\hat{H}$ is a global maximizer of the marginal log-likelihood $\ell(H)$, for any $H\in \mathcal{P}(B\times\Gamma)$ we have
    \begin{equation}
        \begin{split}
                    &  \ell(\hat{H})\geq \ell(\hat{H}_{\epsilon})=\ell((1-\epsilon)\hat{H}+\epsilon H) \\
                    \Leftrightarrow \quad & 0\geq \ell((1-\epsilon)\hat{H}+\epsilon H) - \ell(\hat{H}) \\
                    \Leftrightarrow \quad & 0\geq \frac{1}{\epsilon}[\ell((1-\epsilon)\hat{H}+\epsilon H) - \ell(\hat{H})].
        \end{split}
    \end{equation}
    Taking $H=\delta_{(\beta,\gamma)}$ and letting $\epsilon\to+0$, we obtain $D(\hat{H};(\beta,\gamma))\leq 0$ for every $(\beta,\gamma)\in B\times\Gamma$, which proves (i).

    Next, we show (ii). Integrating $D(\hat{H};(\beta,\gamma))$ with respect to $\hat{H}$ and using the definition \eqref{def:Lambda_Xi_tau}, we obtain
    \begin{equation}
        \label{eq:proof_necessary_integral}
        \int_{B\times\Gamma}D(\hat{H};(\beta,\gamma))\dd\hat{H}(\beta,\gamma)
        =\sum_{i=1}^{n}\left(\frac{\int_{B\times\Gamma} v_i\dd\hat{H}}{\Xi_i(\hat{H})}-\frac{\int_{B\times\Gamma} u_i\dd\hat{H}}{\Lambda_i(\hat{H})}\right)
        =\sum_{i=1}^{n}(1-1)=0.
    \end{equation}
    By (i), the integrand in \eqref{eq:proof_necessary_integral} is not positive on $B\times\Gamma$, and thus \eqref{eq:proof_necessary_integral} implies that $D(\hat{H};(\beta,\gamma))=0$ for $\hat{H}$-almost every $(\beta,\gamma)$.

    To extend this equality to the whole support, note that $(\beta,\gamma)\mapsto D(\hat{H};(\beta,\gamma))$ is continuous on $B\times\Gamma$, because $u_i(\gamma)=e^{\gamma^{\top}z_i}$ and $v_i(\beta,\gamma)=e^{\gamma^{\top}z_i}\psi(y_i;\theta(x_i^{\top}\beta),\phi)$ are continuous in $(\beta,\gamma)$ and the denominators $\Xi_i(\hat{H})$ and $\Lambda_i(\hat{H})$ do not depend on $(\beta,\gamma)$.
    Suppose that $D(\hat{H};(\beta_0,\gamma_0))<0$ for some $(\beta_0,\gamma_0)\in\mathrm{supp}(\hat{H})$. By continuity, $D(\hat{H};\cdot)<0$ on an open neighborhood $U$ of $(\beta_0,\gamma_0)$, and $\hat{H}(U)>0$ by the definition of the support. This contradicts the fact that $D(\hat{H};\cdot)=0$ $\hat{H}$-almost everywhere.
    Hence, we obtain $D(\hat{H};(\beta,\gamma))=0$ for all $(\beta,\gamma)\in\mathrm{supp}(\hat{H})$, which proves (ii).
\end{proof}

\paragraph{Proof of Proposition \ref{prop:necessary_condition}}

\begin{proof}
    By Theorem \ref{thm:necessary condition}, 
    we have $D(\hat{H};\cdot)\leq 0$ on $B\times\Gamma$ and $D(\hat{H};\cdot)=0$ on $\mathrm{supp}(\hat{H})$. Hence every $(\beta,\gamma)\in\mathrm{supp}(\hat{H})$ is a global maximizer of $D(\hat{H};\cdot)$ over $B\times\Gamma$.
    Under the assumptions on $b$ and $g$, the map $(\beta,\gamma)\mapsto D(\hat{H};(\beta,\gamma))$ is differentiable, and thus its gradient vanishes at every maximizer in the interior of $B\times\Gamma$.

    The explicit forms follow from \eqref{cal:directional_derivative} together with $\nabla_{\gamma}u_i(\gamma)=u_i(\gamma)z_i$, $\nabla_{\gamma}v_i(\beta,\gamma)=v_i(\beta,\gamma)z_i$, and, by the chain rule applied to $\theta(x_i^{\top}\beta)=(b')^{-1}(g^{-1}(x_i^{\top}\beta))$,
    \begin{equation}
        \nabla_{\beta}v_i(\beta,\gamma)
        =v_i(\beta,\gamma)\,\frac{y_i-b'(\theta(x_i^{\top}\beta))}{a(\phi)}\,\nabla_{\beta}\theta(x_i^{\top}\beta)
        =v_i(\beta,\gamma)\,\frac{y_i-\mu_i(\beta)}{a(\phi)\,V(\mu_i(\beta))\,g'(\mu_i(\beta))}\,x_i,
    \end{equation}
    where we used $\nabla_{\beta}\theta(x_i^{\top}\beta)=x_i/\{b''(\theta)\,g'(\mu)\}$ with $\theta=\theta(x_i^{\top}\beta)$ and $\mu=\mu_i(\beta)$, and $b''(\theta)=V(\mu)$.
    Thus we conclude the proof.
\end{proof}

\paragraph{Proof of Proposition \ref{prop:sparse_representation}}

\begin{proof}
    Write $s_i(\beta)\coloneq(y_i-\mu_i(\beta))x_i\in\mathbb{R}^{p}$ and $a_i\coloneq v_i(\beta,\gamma)/\{\Xi_i(\hat{H})\,a(\phi)\,V(\mu_i(\beta))\,g'(\mu_i(\beta))\}$.
    Then Equation \eqref{eq:gradient_condition_beta} becomes $\sum_{i=1}^{n}a_i s_i(\beta)=0$.
    Observe that all the $a_i$ have the same sign, because $v_i$, $\Xi_i(\hat{H})$, and $V$ are positive and $g'$ does not change sign.
    Then dividing $\sum_{i=1}^{n}a_i s_i(\beta)=0$ by $\sum_{i=1}^{n}a_i$, we see that $0$ should be a convex combination of the $n$ points $s_1(\beta),\ldots,s_n(\beta)$ in $\mathbb{R}^{p}$.
    By Carath\'{e}odory's theorem, $0$ is already a convex combination of at most $p+1$ of these points. The coefficients of this combination form the probability vector $w$, and $\sum_{i=1}^{n}w_i s_i(\beta)=0$ means $\beta\in S_{\beta}(w)$, which completes the proof.
\end{proof}

\paragraph{Proof of Proposition \ref{prop:covariate_representation}}

\begin{proof}
    Consider the first claim.
    Since $\Gamma$ is bounded, the denominator of $h$ in \eqref{def:covariate_construction} satisfies \[
    0<\int_{B\times\Gamma} e^{z^{\top}\gamma}\dd H(\beta,\gamma)
    \leq\exp(\|z\|\sup_{\gamma\in\Gamma}\|\gamma\|)<\infty\quad \text{for every}\quad z\in\mathbb{R}^{d},\]
    so $h$ is a well-defined nonnegative function and $c(\gamma)\in(0,\infty]$. By Tonelli's theorem, we have
    \begin{equation}
        \int_{B\times\Gamma} c(\gamma)\dd H(\beta,\gamma)
        =\int h(z)\int_{B\times\Gamma} e^{z^{\top}\gamma}\dd H(\beta,\gamma)\dd z
        =\int p^{\ast}(z)\dd z=1.
    \end{equation}
    Hence we ensure that $c(\gamma)<\infty$ for $H$-almost every $(\beta,\gamma)$, and thus $\tilde{H}$ in \eqref{def:covariate_construction} is a probability measure on $B\times\Gamma$. For every $\gamma$ with $c(\gamma)<\infty$, 
    letting $p(z|\gamma)=h(z)e^{z^{\top}\gamma}/c(\gamma)$ gives a probability density, so the covariate distribution \eqref{def:explanatory distribution} is well-defined for $\tilde{H}$-almost every $\gamma$. From the definition of marginal density, we have
    \begin{equation}
        \int_{B\times\Gamma} p(z|\gamma)\dd \tilde{H}(\beta,\gamma)
        =\int_{B\times\Gamma}\frac{h(z)e^{z^{\top}\gamma}}{c(\gamma)}\,c(\gamma)\dd H(\beta,\gamma)
        =h(z)\int_{B\times\Gamma} e^{z^{\top}\gamma}\dd H(\beta,\gamma)=p^{\ast}(z)
    \end{equation}
    for every $z\in\mathbb{R}^{d}$, which is the first claim.

    Consider the second claim. 
    Fix $z$ with $p^{\ast}(z)>0$ so that $h(z)>0$. Substituting $d\tilde{H}=c(\gamma)\,dH$ into \eqref{eq:posterior_interpretation_beta_gamma} and using the marginal density just computed as the denominator give
    \begin{equation}
        \dd G(\beta,\gamma|z)
        =\frac{h(z)e^{z^{\top}\gamma}\dd H(\beta,\gamma)}{p^{\ast}(z)}
        =\frac{e^{z^{\top}\gamma}\dd H(\beta,\gamma)}{\int_{B\times\Gamma} e^{z^{\top}\gamma}\dd H(\beta,\gamma)},
    \end{equation}
    and integrating this over $\gamma\in\Gamma$ yields \eqref{def:infinite_soft_max_prior}, that is, $\dd G(\beta|z)=\dd G_{H}(\beta|z)$. Since $\{z:p^{\ast}(z)=0\}$ is $p^{\ast}$-null, this proves the second claim.
\end{proof}

\paragraph{Proof of Proposition \ref{prop:lipschitz_gamma}}

\begin{proof}
    For a Borel set $A\subset B$, write
    \begin{equation}
        N_H(A,z)\coloneq\int_{A\times\Gamma}e^{\gamma^{\top}z}\dd H(\beta,\gamma),\qquad
        \Lambda_H(z)\coloneq N_H(B,z)=\int_{B\times\Gamma}e^{\gamma^{\top}z}\dd H(\beta,\gamma),
    \end{equation}
    so that $\Lambda_H(z)$ is the denominator of \eqref{def:infinite_soft_max_prior} and $G_H(A|z)=N_H(A,z)/\Lambda_H(z)$.

    We first show that, for every Borel set $A\subset B$,
    \begin{equation}
        \label{eq:lipschitz_ratio}
        e^{-2R_{\Gamma}\|z-z'\|_2}\,G_H(A|z')\leq G_H(A|z)\leq e^{2R_{\Gamma}\|z-z'\|_2}\,G_H(A|z').
    \end{equation}
    For every $(\beta,\gamma)\in B\times\Gamma$, the Cauchy--Schwarz inequality gives 
    \[|\gamma^{\top}(z-z')|\leq\|\gamma\|_2\|z-z'\|_2\leq R_{\Gamma}\|z-z'\|_2,\]
    and hence we get
    \begin{equation}
        e^{-R_{\Gamma}\|z-z'\|_2}\,e^{\gamma^{\top}z'}\leq e^{\gamma^{\top}z}\leq e^{R_{\Gamma}\|z-z'\|_2}\,e^{\gamma^{\top}z'}.
    \end{equation}
    Integrating this over $A\times\Gamma$ with respect to $H$ yields
    \begin{equation}
        e^{-R_{\Gamma}\|z-z'\|_2}\,N_H(A,z')\leq N_H(A,z)\leq e^{R_{\Gamma}\|z-z'\|_2}\,N_H(A,z'),
    \end{equation}
    and taking $A=B$ yields the same bounds between $\Lambda_H(z)$ and $\Lambda_H(z')$. Dividing the bound on $N_H(A,z)$ by the reversed bound on $\Lambda_H(z)$ yields \eqref{eq:lipschitz_ratio}.

    We next employ \eqref{eq:lipschitz_ratio} to bound the total variation distance.
    Write $r\coloneq 2R_{\Gamma}\|z-z'\|_2$. By \eqref{eq:lipschitz_ratio}, 
    we have
    \[G_H(A|z')-G_H(A|z)\leq(1-e^{-r})G_H(A|z')\leq 1-e^{-r}\] for every Borel set $A$, and the same bound holds with $z$ and $z'$ exchanged, which proves the bound on the total variation distance.

    We lastly bound the $1$-Wasserstein distance.
    Let $\mu\coloneq G_H(\cdot|z)$, $\nu\coloneq G_H(\cdot|z')$, and $\epsilon\coloneq\|\mu-\nu\|_{\mathrm{TV}}$. By \eqref{eq:lipschitz_ratio}, $\mu$ and $\nu$ are mutually absolutely continuous, so that $\kappa\coloneq\min\{d\mu/d\nu,1\}\,\nu$, the common part of $\mu$ and $\nu$, is well-defined and has total mass $1-\epsilon$. If $\epsilon=0$, then $\mu=\nu$ and there is nothing to prove. Otherwise, the coupling that leaves $\kappa$ on the diagonal and couples the remainders $\mu-\kappa$ and $\nu-\kappa$, each of mass $\epsilon$, by their product divided by $\epsilon$ transports mass $\epsilon$ over a distance at most $\mathrm{diam}(B)$, so that we have
    \[W_1(\mu,\nu)\leq\mathrm{diam}(B)\,\epsilon\leq\mathrm{diam}(B)(1-e^{-r})
    \leq\mathrm{diam}(B) r
    =2R_{\Gamma}\mathrm{diam}(B) \|z-z'\|_{2},
    \]
    which completes the proof.
\end{proof}

\paragraph{Proof of Theorem \ref{thm:uat_gate}}

\begin{proof}
    The proof of Theorem \ref{thm:uat_gate} consists of the following steps:
    \begin{enumerate}
    \item First, we divide the covariate space $\mathcal{Z}$ into finitely many Voronoi cells by using a $\delta$-net. These cells are represented by the affine scores \eqref{def:affine_score}, whose slopes are realized by the fused slopes of Condition \ref{cond:fusion} and whose intercepts are realized by the weights $\alpha$ through the reparametrization \eqref{eq:alpha_reparametrization}. \item Second, we show that the aggregated softmax gates $g^{\dagger}_c$ uniformly approximate a Voronoi cell-wise step function that takes a constant arbitrary probability vector in the interior of each Voronoi cell. 
    \item Finally, by refining the $\delta$-net to reduce the diameter of the Voronoi cells, the uniformly continuous function $\pi^{\ast}(z)$ is approximated by a Voronoi cell-wise step function. As a result, we can approximate the function $\pi^{\ast}(z)$ through a Voronoi cell-wise step function and its approximation by the aggregated softmax gates.
    \end{enumerate}

    \paragraph{Step 1.}
    Because the covariate space $\mathcal{Z}$ is a compact set,
    we can take a $\delta$-net $\{z_1,\ldots,z_L\}\subset \mathcal{Z}$ of distinct points so that $\mathcal{Z}\subset\cup_{l=1}^{L}\mathrm{Ball}(z_l,\delta)$. 
    Then, we partition the covariate space $\mathcal{Z}$ into $L$ Voronoi cells using the $\delta$-net $\{z_l\}_{l=1,\ldots,L}$ defined as
    \begin{equation}
        \label{def:voronoi_cell}
        V_l \coloneq\left\{z\in\mathcal{Z}: \|z-z_l\|^{2}\leq \|z-z_{l^{\prime}}\|^{2},\quad \text{for all}\;l^{\prime}\neq l \right\},\quad l=1,\ldots, L.
    \end{equation}
    Expanding the squared norms, we can write each cell as
    \begin{equation}
        V_l = \left\{z\in\mathcal{Z}: 2z^{\top}z_l-\|z_l\|^2\geq 2z^{\top}z_{l^{\prime}}-\|z_{l^{\prime}}\|^2,\quad \text{for all}\;l^{\prime}\neq l \right\},\quad l=1,\ldots, L.
    \end{equation}
    By setting $\xi_l\coloneq 2z_l$ and $\xi_{l0}\coloneq-\|z_l\|^2$ in the affine score \eqref{def:affine_score}, that is, $h_l(z)=2z^{\top}z_l-\|z_l\|^{2}$, we can express the Voronoi cells \eqref{def:voronoi_cell} by affine Voronoi cells
    \begin{equation}
        \label{def:affine_voronoi_cell}
        V_l = \left\{z\in\mathcal{Z}: h_l(z)\geq h_{l^{\prime}}(z)\quad  \text{for all}\;l^{\prime}\neq l \right\},\quad l=1,\ldots, L.
    \end{equation}
    The slopes $\xi_l$ in \eqref{def:affine_voronoi_cell} are realized by the softmax gates through the fused slopes \[\gamma_j\coloneq\xi_l/\tau=2z_l/\tau
    \quad\text{for}\quad j\in I_l\]
    that satisfy Condition \ref{cond:fusion}, and the intercepts $\xi_{l0}$ are realized through the weights $\alpha$ by reparametrization \eqref{eq:alpha_reparametrization}. 
    Hence the partition $\{V_l\}$ is realizable by the softmax gates. 
    
    \paragraph{Step 2.}    
    We will show that $g^{\dagger}_{c}$ uniformly converges to an arbitrary probability vector in the interior of each affine Voronoi cell \eqref{def:affine_voronoi_cell}.
    
    Let a gap function for $z\in \mathcal{Z}$ be
    \begin{equation}
        \Delta_l(z)\coloneq \min_{l^{\prime}\neq l}\{h_l(z)-h_{l^{\prime}}(z)\},\qquad l=1,\ldots,L,
    \end{equation}
    where this is positive in the interior of $V_l$.
    We then construct the index subsets $\mathcal{S}_{c}$. Following the fused slope structure of Appendix \ref{app:fused}, we take $J=LC$ gates and partition the index set $\{1,\ldots,J\}$ into the $L$ groups $I_1,\ldots,I_L$ with $|I_l|=C$ for each $l$, so that Condition \ref{cond:fusion} holds with the slopes assigned in Step 1. For each $l=1,\ldots,L$, fix an arbitrary ordering of the elements of $I_l$ and write
    \begin{equation}
        I_l = \{j_{l,1},j_{l,2},\ldots,j_{l,C}\}.
    \end{equation}
    Define
    \begin{equation}
        \mathcal{S}_{c}\coloneq\{j_{l,c}: l=1,\ldots,L\},\quad c=1,\ldots,C.
    \end{equation}
    By construction, $|\mathcal{S}_{c}|=L$, the sets $\mathcal{S}_{1},\ldots,\mathcal{S}_{C}$ are pairwise disjoint with $\cup_{c=1}^{C}\mathcal{S}_{c}=\{1,\ldots,J\}$, and each $\mathcal{S}_{c}$ intersects each $I_l$ in exactly one element, namely $j_{l,c}$.

    With this notation and the fused expression \eqref{eq:fused_gate}, the $c$-th element of the vectorized and aggregated softmax gate $g^{\dagger}_c$ is written, for a fixed $l$, as
    \begin{equation}
        \label{eq:aggregated_gate_fused}
        \begin{split}
            g^{\dagger}_{c}(z;\alpha,\gamma) & = \frac{\sum_{j\in\mathcal{S}_c}\alpha_{j}e^{z^{\top}\gamma_j}}{\sum_{k=1}^{J}\alpha_{k}e^{z^{\top}\gamma_{k}}}
            = \frac{\sum_{l^{\prime}=1}^{L}w_{j_{l^{\prime},c}}\exp\{h_{l^{\prime}}(z)/\tau\}}{\sum_{l^{\prime}=1}^{L}(\sum_{k\in I_{l^{\prime}}}w_{k})\exp\{h_{l^{\prime}}(z)/\tau\}} \\
                                                     & = \frac{w_{j_{l,c}}+\sum_{l^{\prime}\neq l}w_{j_{l^{\prime},c}}\exp\{(h_{l^{\prime}}(z)-h_{l}(z))/\tau\}}
                                                     {(\sum_{k\in I_{l}}w_{k})+ \sum_{l^{\prime}\neq l}(\sum_{k\in I_{l^{\prime}}}w_{k})\exp\left\{(h_{l^{\prime}}(z)-h_{l}(z))/\tau\right\}}
                                                     =\frac{N_{lc}(z)}{D_{l}(z)}
                                                     ,
        \end{split}
    \end{equation}
    where 
    \begin{equation}
    \begin{split}
    N_{lc}(z)&:=w_{j_{l,c}}+\sum_{l^{\prime}\neq l}w_{j_{l^{\prime},c}}\exp\{(h_{l^{\prime}}(z)-h_{l}(z))/\tau\}, \quad\text{and}\\
    D_{l}(z)&:=(\sum_{k\in I_{l}}w_{k})+ \sum_{l^{\prime}\neq l}(\sum_{k\in I_{l^{\prime}}}w_{k})\exp\left\{(h_{l^{\prime}}(z)-h_{l}(z))/\tau\right\}.
    \end{split}
    \end{equation}

    We proceed to bounding $\pi_l-g^{\dagger}(z;\alpha,\gamma)$.
    Let $a_{lc}\coloneq w_{j_{l,c}}$, $s_l\coloneq\sum_{k\in I_l}w_k$, and $e_l(z)\coloneq\exp(-\Delta_l(z)/\tau)$. Since $h_{l^{\prime}}(z)-h_l(z)\leq-\Delta_l(z)$ for every $l^{\prime}\neq l$ and $\sum_{j=1}^{J}w_j=1$, the numerator $N_{lc}(z)$ and the denominator $D_l(z)$ of the last expression in \eqref{eq:aggregated_gate_fused} satisfy
    \begin{equation}
        \label{eq:num_den_bounds}
        a_{lc}\leq N_{lc}(z)\leq a_{lc}+e_l(z),\qquad s_l\leq D_l(z)\leq s_l+e_l(z).
    \end{equation}
    Now, for arbitrary probability vectors $\pi_l\coloneq(\pi_{l1},\ldots, \pi_{lC})^{\top}\in\mathring\Delta^{C-1}$, $l=1,\ldots,L$, we choose the within-group weights as
    \begin{equation}
        \label{eq:w_choice}
        w_{j_{l,c}}\coloneq\frac{\pi_{lc}}{L},\qquad c=1,\ldots,C,\quad l=1,\ldots,L.
    \end{equation}
    Then $w_j>0$ for every $j$, $\sum_{j=1}^{J}w_j=\sum_{l=1}^{L}\sum_{c=1}^{C}\pi_{lc}/L=1$, $s_l=1/L$, and $a_{lc}/s_l=\pi_{lc}$, so that $w\in\Delta^{J-1}$ is admissible in \eqref{eq:alpha_reparametrization}. By \eqref{eq:num_den_bounds}, 
    we have
    \[\pi_{lc}D_l(z)-N_{lc}(z)\leq \pi_{lc}(s_l+e_l(z))-a_{lc}=\pi_{lc}e_l(z)\leq e_l(z)\]
    and 
    \[\pi_{lc}D_l(z)-N_{lc}(z)\geq \pi_{lc}s_l-a_{lc}-e_l(z)=-e_l(z).
    \]
    Therefore, for every $c=1,\ldots,C$ and $z\in\mathcal{Z}$, we have
    \begin{equation}
        \label{eq:gate_pointwise_convergence}
        |\pi_{lc}- g^{\dagger}_c(z;\alpha,\gamma)|
        = \frac{|\pi_{lc}D_l(z)-N_{lc}(z)|}{D_l(z)}
        \leq \frac{e_l(z)}{s_l}
        = L\exp\left(-\frac{\Delta_l(z)}{\tau}\right),
    \end{equation}
    and hence $\|\pi_l-g^{\dagger}(z;\alpha,\gamma)\|_{\infty}\leq L\exp(-\Delta_l(z)/\tau)$. For $z$ in the interior of $V_l$, we have $\Delta_l(z)>0$ and the right-hand side vanishes as $\tau\rightarrow+0$, which gives pointwise convergence.

    We will see uniform convergence in the interior of $V_{l}$, and evaluate $\Delta_{l}(z)$ from below.
    We define the boundary hyperplanes of the affine Voronoi cells as
    \begin{equation}
        \mathcal{N}_{l,l^{\prime}}\coloneq \left\{z\in\mathbb{R}^{d}: h_l(z)=h_{l^{\prime}}(z)\right\}, \quad l,l^{\prime}=1,\ldots,L,\;l\neq l^{\prime},
    \end{equation}
    and the union of all boundary hyperplanes as $\mathcal{N}=\cup_{l\neq l^{\prime}}\mathcal{N}_{l,l^{\prime}}$. Let the non-near-boundary subset be $\mathcal{Z}^{(\eta)}\coloneq\{z\in\mathcal{Z}: d(z,\mathcal{N})\geq \eta \}$ with $\eta>0$, where $d(z,\mathcal{N})$ denotes the Euclidean distance from $z$ to $\mathcal{N}$.
    Then, for $z\in V_{l}$ and $l^{\prime}\neq l$, we have
    \begin{equation}
        \begin{split}
            h_l(z)-h_{l^{\prime}}(z) & = \|z-z_{l^{\prime}}\|^{2}-\|z-z_{l}\|^2                               \\
                                     & = (\|z-z_{l^{\prime}}\|+\|z-z_{l}\|)(\|z-z_{l^{\prime}}\|-\|z-z_{l}\|) \\
                                     & = 2\|z_{l}-z_{l^{\prime}}\|\,d(z,\mathcal{N}_{l,l^{\prime}})             \\
                                     & \geq 2\|z_l-z_{l^{\prime}}\|\, d(z,\mathcal{N}).
        \end{split}
    \end{equation}
    In the third equality, we used the formula for the distance from a point to the hyperplane $\mathcal{N}_{l,l^{\prime}}$, together with $\xi_l-\xi_{l^{\prime}}=2(z_l-z_{l^{\prime}})$ and $\|z-z_{l}\|\leq\|z-z_{l^{\prime}}\|$ for $z\in V_l$, namely
    \begin{equation}
        \begin{split}
            d(z,\mathcal{N}_{l,l^{\prime}}) & = \frac{|h_l(z)-h_{l^{\prime}}(z)|}{\|\xi_{l}-\xi_{l^{\prime}} \|}
            =\frac{(\|z-z_{l^{\prime}}\|+\|z-z_{l}\|)\cdot(\|z-z_{l^{\prime}}\|-\|z-z_{l}\|)}{2\|z_{l^{\prime}}-z_{l}\|}.
        \end{split}
    \end{equation}
    Let $d_{\min}\coloneq\min_{l\neq l^{\prime}}\|z_l-z_{l^{\prime}}\|$, where it is positive because the points of the $\delta$-net are distinct.
    Therefore, we have $\Delta_{l}(z)\geq 2d_{\min}\eta$ for all $z\in\mathcal{Z}^{(\eta)}\cap V_l$ and all $l$, 
    and the inequality \eqref{eq:gate_pointwise_convergence} yields
    \begin{equation}
        \sup_{z\in \mathcal{Z}^{(\eta)}\cap V_{l}}\|\pi_{l}-g^{\dagger}(z;\alpha,\gamma)\|_{\infty} \leq L\exp\left(-\frac{2d_{\min}\eta}{\tau}\right),\quad l=1,\ldots,L,
    \end{equation}
    whose right-hand side does not depend on $z$ or $l$. As a result, for fixed $\delta$, $\eta$ and
    all $\epsilon>0$, there exists $\tau>0$ such that
    \begin{equation}
        \label{eq:uniform_cell}
        \sup_{z\in \mathcal{Z}^{(\eta)}\cap V_{l}}\|\pi_{l}-g^{\dagger}(z;\alpha,\gamma)\|_{\infty} <\frac{\epsilon}{2},\quad l=1,\cdots,L.
    \end{equation}

    \paragraph{Step 3.}

    Since $\pi^{\ast}(z)$ is a continuous function on the compact set $\mathcal{Z}$, it is also a uniformly continuous function. Thus, for all $\epsilon>0$, there exists $\delta^{\prime}>0$ such that
    \begin{equation}
        \label{eq:unif_cont}
        \sup_{\|z-z^{\prime}\|<\delta^{\prime}}\|\pi^{\ast}(z)-\pi^{\ast}(z^{\prime})\|_{\infty} <\frac{\epsilon}{2}.
    \end{equation}
    Thus, if we take $\delta>0$ such that $\mathrm{diam}(V_l)<\delta^{\prime}$,
    we have
    \begin{equation}
        \label{eq:within_cell_target}
        \sup_{z\in V_l}\|\pi^{\ast}(z)-\pi^{\ast}(z_{l})\|_{\infty} <\frac{\epsilon}{2}.
    \end{equation}
    Note that $\mathrm{diam}(V_l)\leq2\delta$, since any $z\in V_l$ satisfies $\|z-z_l \|\leq\delta$. Setting $\pi_l\coloneq\pi^{\ast}(z_l)\in\mathring\Delta^{C-1}$ in \eqref{eq:w_choice} and \eqref{eq:uniform_cell}, we have
    \begin{equation}
        \begin{split}
            \sup_{z\in V_l\cap\mathcal{Z}^{(\eta)}} \|\pi^{\ast}(z)-g^{\dagger}(z;\alpha,\gamma) \|_{\infty} & \leq \sup_{z\in V_l\cap\mathcal{Z}^{(\eta)}} \|\pi^{\ast}(z)- \pi^{\ast}(z_l) \|_{\infty} + \sup_{z\in V_l\cap\mathcal{Z}^{(\eta)}} \|\pi^{\ast}(z_l) - g^{\dagger}(z;\alpha,\gamma) \|_{\infty} \\
                                                                                                             & < \frac{\epsilon}{2}+ \frac{\epsilon}{2}=\epsilon.
        \end{split}
    \end{equation}
    Splitting the integral over $\mathcal{Z}$ into the integrals over $\mathcal{Z}^{(\eta)}$ and over $\mathcal{Z}\backslash\mathcal{Z}^{(\eta)}$, we obtain
    \begin{equation}
        \label{eq:integral_decomp}
        \begin{split}
            \int_{\mathcal{Z}}\|g^{\dagger}(z;\alpha,\gamma)-\pi^{\ast}(z)\|_\infty^{q}\dd z & = \int_{\mathcal{Z}^{(\eta)}}\|g^{\dagger}(z;\alpha,\gamma)-\pi^{\ast}(z)\|_\infty^{q}\dd z+\int_{\mathcal{Z}\backslash\mathcal{Z}^{(\eta)}}\|g^{\dagger}(z;\alpha,\gamma)-\pi^{\ast}(z)\|_\infty^{q}\dd z \\
                                                                                          & < \epsilon^{q}\mathrm{Leb}(\mathcal{Z}^{(\eta)}) + \mathrm{Leb}(\mathcal{Z}\backslash \mathcal{Z}^{(\eta)}),
        \end{split}
    \end{equation}
    where  $\mathrm{Leb}(\cdot)$ denotes the Lebesgue measure.
    In this inequality, we used the trivial bound 
    \[\sup_{z\in\mathcal{Z}} \|g^{\dagger}(z;\alpha,\gamma)-\pi^{\ast}(z)\|_{\infty}\leq1 \]
    on $\mathcal{Z}\backslash\mathcal{Z}^{(\eta)}$.

    We now bound each term in \eqref{eq:integral_decomp} and specify the order in which the parameters $\epsilon$, $\delta$, $\eta$, and $\tau$ are chosen. For the first term, $\mathrm{Leb}(\mathcal{Z}^{(\eta)})\leq\mathrm{Leb}(\mathcal{Z})<\infty$ holds, since $\mathcal{Z}$ is bounded. For the second term, the boundary $\mathcal{N}$ is contained in the union of at most $\binom{L}{2}$ affine hyperplanes, and the $\eta$-neighborhood of each hyperplane intersected with $\mathrm{ball}(0,\rho)$ has Lebesgue measure at most $2\eta\,\omega_{d-1}\rho^{d-1}$, where $\omega_{d-1}$ denotes the volume of the unit ball in $\mathbb{R}^{d-1}$. Hence
    \begin{equation}
        \label{eq:boundary_measure}
        \mathrm{Leb}(\mathcal{Z}\backslash\mathcal{Z}^{(\eta)})\leq \binom{L}{2}\cdot2\eta\, \omega_{d-1} \rho^{d-1}\leq \eta L^{2}\omega_{d-1}\rho^{d-1},
    \end{equation}
    since $2\binom{L}{2}=L(L-1)\leq L^{2}$.

    Given a target accuracy $\epsilon_0>0$, the parameters are chosen in the following order. First, choose $\epsilon>0$ so that $\epsilon^{q}\mathrm{Leb}(\mathcal{Z})<\epsilon_0^{q}/2$. Second, this choice of $\epsilon$ determines $\delta^{\prime}>0$ through the uniform continuity bound \eqref{eq:unif_cont}, and hence determines $\delta$ and the net size $L=L(\delta)$ through \eqref{eq:within_cell_target}. Third, with $L$ fixed, choose $\eta>0$ so that $\eta L^{2}\omega_{d-1}\rho^{d-1}<\epsilon_0^{q}/2$, which by \eqref{eq:boundary_measure} guarantees $\mathrm{Leb}(\mathcal{Z}\backslash\mathcal{Z}^{(\eta)})<\epsilon_0^{q}/2$. Finally, with $\delta$ and $\eta$ fixed, choose $\tau>0$ so that the uniform bound \eqref{eq:uniform_cell} holds with the chosen $\epsilon$. Substituting these bounds into \eqref{eq:integral_decomp} yields
    \begin{equation}
        \int_{\mathcal{Z}}\|g^{\dagger}(z;\alpha,\gamma)-\pi^{\ast}(z)\|_\infty^{q}\dd z < \frac{\epsilon_0^{q}}{2}+\frac{\epsilon_0^{q}}{2}=\epsilon_0^{q},
    \end{equation}
    and taking the $q$-th root gives
    \begin{equation}
        \begin{split}
            \left(\int_{\mathcal{Z}}\|g^{\dagger}(z;\alpha,\gamma)-\pi^{\ast}(z)\|_\infty^{q}\dd z\right)^{1/q} <  \epsilon_0.
        \end{split}
    \end{equation}
    The parameters of the theorem are the slopes $\gamma_j=2z_l/\tau$ for $j\in I_l$, which satisfy Condition \ref{cond:fusion} and carry no intercept, and the weights $\alpha\in\Delta^{J-1}$ given by \eqref{eq:alpha_reparametrization} with the within-group weights \eqref{eq:w_choice}, the intercepts $\xi_{l0}=-\|z_l\|^{2}$, and the chosen $\tau$.
    This completes the proof of Theorem \ref{thm:uat_gate}.
\end{proof}

\end{document}